\newcommand*{\addFileDependency}[1]{
	\typeout{(#1)}
	%
	%
	\@addtofilelist{#1}
	%
	\IfFileExists{#1}{}{\typeout{No file #1.}}
}\makeatother
\numberwithin{equation}{section}
\newtheorem{lem}{Lemma}[section]
\newtheorem{thm}{Theorem}[section]
\newtheorem{cor}{Corollary}[section]
\newtheorem{ass}{Assumption}
\renewcommand{\citep}[1]{\citeauthor{#1}, \citeyear{#1}}
\newcommand{\diag}{\text{diag}}
\newcommand{\convP}{\stackrel{p}{\longrightarrow}}
\newcommand{\convD}{\rightsquigarrow}
\newcommand{\N}{\mathcal{N}}
\newcommand{\eps}{\varepsilon}
\renewcommand{\epsilon}{\varepsilon}
\DeclareMathOperator*{\argmin}{arg\,min}
\newcommand*{\rom}
[1]{\expandafter\@slowromancap\romannumeral #1@}
\title{A Conditional Linear Combination Test with Many Weak Instruments\thanks{We thank the editor, the associate editor, and two referees for extensive comments which have led to many improvements. We are grateful to Andrii Babii, Harold D. Chiang, Keisuke Hirano, Ming Li, Adam Rosen, and all participants at the SH3 conference on econometrics 2023 and the University of Macau Econometrics Seminar for their valuable comments. Wenjie Wang acknowledges the financial support from Singapore Ministry of Education Tier 1 grants RG53/20 and RG104/21. Yichong Zhang acknowledges the financial support from a Lee Kong Chian fellowship and the NSFC under the grant No. 72133002. Any and all errors are our own.\vspace{1.3mm}} 
	\\ \vspace{2mm}
}
\author{Dennis Lim\thanks{Singapore Management University.\ E-mail~address: dennis.lim.2019@phdecons.smu.edu.sg.} \and Wenjie Wang\thanks{Division of Economics, School of Social Sciences, Nanyang Technological University.
		HSS-04-65, 14 Nanyang Drive, Singapore 637332. 
		E-mail address: wang.wj@ntu.edu.sg.}   \and Yichong Zhang\thanks{
		Singapore Management University.\ E-mail~address: yczhang@smu.edu.sg.}
	\date{}
}
\begin{document}
	
	\maketitle

	\begin{abstract}
		We consider a linear combination of jackknife Anderson-Rubin (AR), jackknife Lagrangian multiplier (LM), and orthogonalized jackknife LM tests for inference in IV regressions with many weak instruments and heteroskedasticity. 
		Following I.\cite{Andrews(2016)},
		we choose the weights in the linear combination based on a decision-theoretic rule that is adaptive to the identification strength. Under both weak and strong identifications, the proposed test controls asymptotic size and is admissible among certain class of tests. Under strong identification, our linear combination test has optimal power against local alternatives among the class of invariant or unbiased tests which are constructed based on jackknife AR and LM tests. Simulations and an empirical application to \citeauthor{Angrist-Krueger(1991)}'s (\citeyear{Angrist-Krueger(1991)}) dataset confirm the good power properties of our test.

		\bigskip 
		
		\noindent \textbf{Keywords:} Many instruments, power, size, weak identification \bigskip
		
		\noindent \textbf{JEL codes:}  C12, C36, C55
	\end{abstract}
	
	\newpage
	
	\section{Introduction}
	
	Various recent surveys in leading economics journals suggest that weak instruments remain important concerns for empirical practice. For instance, I.\cite{Andrews-Stock-Sun(2019)} survey 230 instrumental variable (IV) regressions from 17 papers published in the \textit{American Economic Review} (AER). They find that many of the first-stage F-statistics (and non-homoskedastic generalizations) are in a range that raises such concerns, and virtually all of these papers report at least one first-stage F with a value smaller than 10.
	Similarly, in \citeauthor{lee2021}'s (\citeyear{lee2021}) survey of 123 AER articles involving IV regressions, 105 out of 847 specifications have first-stage Fs smaller than 10. 
	Moreover, many IV applications involve a large number of instruments. 
	For example, in their seminal paper, \cite{Angrist-Krueger(1991)} study the effect of schooling on wages by interacting three base instruments (dummies for the quarter of birth) with state and year of birth, resulting in 180 instruments. \cite{Hansen-Hausman-Newey(2008)} show that using the 180 instruments gives tighter confidence intervals than using the base instruments even after adjusting for the effect of many instruments. 
	In addition, as pointed out by \cite{MS22},  in empirical papers that employ the ``judge design" (e.g., see \cite{maestas2013}, \cite{sampat2019}, and \cite{dobbie2018}), the number of instruments (the number of judges) is typically proportional to the sample size, and the famous Fama-MacBeth two-pass regression in empirical asset pricing (e.g., see \cite{fama1973}, \cite{shanken1992}, and \cite{anatolyev2022}) is equivalent to IV estimation with the number of instruments proportional to the number of assets. 
	Similarly, \cite{belloni2012} consider an IV application involving more than one hundred instruments for the study of the effect of judicial eminent domain decisions on economic outcomes. 
	\cite{carrasco2015} used many instruments in the estimation of the elasticity of intertemporal substitution in consumption.
	Furthermore, as pointed out by \cite{Goldsmith(2020)}, the shift-share or Bartik 
	instrument (e.g., see \cite{bartik1991} and \cite{blanchard1992}), which has been widely applied in many fields such as labor, public, development, macroeconomics, international trade, and finance, can be considered as a particular way of combining many instruments. For example, in the canonical setting of estimating the labor supply elasticity, the corresponding number of instruments is equal to the number of industries, which is also typically proportional to the sample size. 
	

	In this paper, following the seminal study by I.\cite{Andrews(2016)}, we propose a jackknife conditional linear combination (CLC) test that is robust to weak identification, many instruments, and heteroskedasticity. The proposed test also achieves efficiency under strong identification against local alternatives. The starting point of our analysis is the observation that, under strong identification, an orthogonalized jackknife Lagrangian multiplier (LM) test is the uniformly most powerful (UMP) test against local alternatives among the class of tests that are constructed based on jackknife LM and Anderson-Rubin (AR) tests and are either unbiased or invariant to sign changes. However, the orthogonalized LM test may not have good power under weak identification or against certain fixed alternatives. Therefore, we consider a linear combination of jackknife AR, jackknife LM, and orthogonalized LM tests. Specifically, we follow I.\cite{Andrews(2016)} and determine the linear combination weights by minimizing the maximum power loss, which can be viewed as a maximum regret and 
	is further calibrated based on the limit experiment of interest and a sufficient statistic for the identification strength under many instruments. Then, similar to I.\cite{Andrews(2016)}, we show such a jackknife CLC test is adaptive to the identification strength in the sense that (1) it achieves correct asymptotic size, 
	(2) it is asymptotically and conditionally admissible under weak identification among some class of tests, (3) it converges to the UMP test mentioned above under strong identification against local alternatives,\footnote{We emphasize that the UMP property of our CLC test under strong identification holds within the class of sign-invariant or unbiased tests that are constructed based on jackknife AR and LM tests only. It may be possible to construct more efficient tests using test statistics besides the jackknife AR and LM. How to construct a globally optimal test under strong identification with many IVs and heteroskedastic errors is a topic that remains to be explored in future research. } and (4) it has asymptotic power equal to 1 under strong identification against fixed alternatives.   The properties of jackknife AR, jackknife LM, orthogonalized LM, and our CLC tests are summarized in Table \ref{tab:clc_weights}.  Simulations based on the limit experiment as well as calibrated data confirm the good power properties of our test. Then, we apply the new jackknife CLC test to \citeauthor{Angrist-Krueger(1991)}'s (\citeyear{Angrist-Krueger(1991)}) dataset with the specifications of 180 and 1,530 instruments. We find that, in both specifications, our confidence intervals (CIs) are the shortest 
	among those constructed by weak identification robust tests, namely, the jackknife AR, LM, and CLC tests, and the two-step procedure. Furthermore, our CIs are found to be even shorter than the non-robust Wald test CIs based on the jackknife IV estimator (JIVE) proposed by \cite{Angrist(1999)}, which is in line with the theoretical result that the jackknife CLC test is adaptive to the identification strength and is efficient under strong identification. 
	
	\begin{table}[H]
		\adjustbox{max width=\textwidth}{
			\centering
			\begin{tabular}{l|l|l|l}
				& Weak ID, fixed alternative & Strong ID, local alternative & Strong ID, fixed alternative \\ \hline 
				Jackknife AR & Admissible & Not UMP & Power 1 \\
				Jackknife LM & Admissible & Not UMP & Power 1 \\
				Orthogonalized LM & Admissible & UMP   & Non-monotonic power \\ 
				CLC & Admissible & UMP & Power 1 \\
		\end{tabular}}
		\label{tab:clc_weights}%
		\caption{Power Comparision of the Tests}
	\end{table}%
	

	\textbf{Relation to the literature.} The contributions in the present paper relate to two strands of literature.
	First, it is related to the literature on many instruments; see, for example, \cite{Kunitomo1980}, \cite{morimune1983}, \cite{Bekker(1994)}, \cite{donald2001}, \cite{chamberlain2004}, \cite{Chao-Swanson(2005)}, \cite{stock2005}, \cite{han2006}, D.\cite{Andrews-Stock(2007)}, \cite{Hansen-Hausman-Newey(2008)}, \cite{Newey-Windmeijer(2009)}, \cite{anderson2010}, \cite{kuersteiner2010}, \cite{anatolyev2011}, \cite{belloni2011}, \cite{okui2011}, \cite{belloni2012}, \cite{carrasco2012}, 
	\cite{Chao(2012)}, \cite{Haus2012}, \cite{hansen2014}, \cite{carrasco2015}, \cite{Wang_Kaffo_2016}, \cite{kolesar2018}, \cite{Matsushita2020}, \cite{solvsten2020},  \cite{crudu2021}, and \cite{MS22}, among others. 
	In the context of many instruments and heteroskedasticity, \cite{Chao(2012)} and \cite{Haus2012} provide standard errors for Wald-type inferences that are based on JIVE and jackknifed versions of the limited information maximum likelihood (LIML) and \citeauthor{Fuller(1977)}'s (\citeyear{Fuller(1977)}) estimators (HLIM and HFUL). These estimators are more robust to many instruments than the commonly used two-stage least squares (TSLS) estimator because they can correct the bias caused by the high dimension of IVs.\footnote{Specifically, the rate of growth of the concentration parameter, which measure the overal instrument strength, is denoted as $\mu_n^2$. JIVE, HLIM, and HFUL remain consistent with heteroskedastic errors even when instrument weakness is such that $\mu_n^2$ is slower than the number of instruments $K$, provided that $\mu_n^2/\sqrt{K} \rightarrow \infty$ as the number of observations $n \rightarrow \infty$ (\citealp{Chao(2012), Haus2012}). In contrast, TSLS is less robust to instrument weakness  as it is shown to be consistent only under homoskedasticity if $\mu_n^2/K \rightarrow \infty$ (\citep{Chao-Swanson(2005)}).} In simulations derived from the data in \cite{Angrist-Krueger(1991)}, which is representative of empirical labor studies with many instrument concerns, \citet[Section IV]{Angrist-Frandsen2022} show that such bias-corrected estimators outperform the TSLS that is based on the instruments selected by the least absolute shrinkage and selection operator (LASSO) introduced in \cite{belloni2012} or the random forest-fitted first stage introduced in \cite{athey2019}. 
	Furthermore, under many weak moment asymptotics, \cite{Newey-Windmeijer(2009)} provide new variance estimators for the jackknife GMM and the class of generalized empirical likelihood (GEL) estimators, which includes the continuous updating estimator (CUE) and EL estimator as special cases. In the linear heteroskedastic IV model, consistency and asymptotic normality of CUE require 
	$m^2/n \rightarrow 0$ and $m^3/n \rightarrow 0$, respectively, where $m$ and $n$ denote the number of moment conditions and the sample size (e.g., see p.689 of \cite{Newey-Windmeijer(2009)}). Such conditions are needed to simultaneously control the estimation error for all the elements of the heteroskedasticity consistent weighting matrix. Somewhat stronger rate conditions are required for other GEL estimators.
	
	However, the Wald-type inference methods are invalid under weak identification, which occurs when the ratio of the concentration parameter over the square root of the number of instruments remains bounded as the sample size increases to infinity. In this case,  all the estimators mentioned earlier become inconsistent, and there is no consistent test for the structural parameter of interest (see Section 3 of \cite{MS22}).  For weak identification robust inference under many instruments, D.\cite{Andrews-Stock(2007)} consider
	the AR test, the score test introduced in \cite{Kleibergen(2002)}, 
	and the conditional likelihood ratio test introduced in \cite{Moreira(2003)}.  
	Their IV model is homoskedastic and requires the number of instruments to diverge slower than the cube root of the sample size ($K^3/n \rightarrow 0$, where $K$ denotes the number of instruments). 
	\cite{anatolyev2011} propose a modified AR test that allows for the number of instruments to be proportional to the sample size but still require homoskedastic errors. Recently, \cite{crudu2021} and \cite{MS22} propose jackknifed versions of the AR test in a model with many instruments and heteroskedasticity. Both tests are robust to weak identification, but \citeauthor{MS22}'s (\citeyear{MS22}) jackknife AR test has better power properties due to the use of a cross-fit variance estimator. However, the jackknife AR tests may be inefficient under strong identification. To address this issue, \cite{MS22} also propose a new pre-test for weak identification under many instruments and apply it to form a two-stage testing procedure with a Wald test based on the JIVE introduced in \cite{Angrist(1999)}. The JIVE-Wald test is more efficient than the jackknife AR under strong identification. Therefore, an empirical researcher can employ the jackknife AR if the pre-test suggests weak identification and the JIVE-Wald if the pre-test suggests strong identification. In addition to the jackknife AR, \cite{Matsushita2020} propose a jackknife LM test, which is also robust to weak identification, many instruments, and heteroskedastic errors. However, the jackknife CLC test introduced in our paper is more efficient than 
	the jackknife AR, the jackknife LM, and the two-step test under strong identification and local alternatives, while still being robust to weak identification.  
	
	Second, our paper is related to the literature on weak identification under the framework of a fixed number of instruments or moment conditions, 
	in which various robust inference methods are available for non-homoskedastic errors; see, for example, \cite{Stock-Wright(2000)}, \cite{Kleibergen(2005)}, D.\cite{Andrews-Cheng(2012)}, I.\cite{Andrews(2016)}, I.\cite{Andrews-Mikusheva(2016)}, I.\cite{Andrews(2018)}, \cite{Moreira-Moreira(2019)}, D.\cite{Andrews-Guggenberger(2019)}, 
	and \cite{lee2021}.
	In particular, our jackknife CLC test extends the work of I.\cite{Andrews(2016)} to the framework with many weak instruments. I.\cite{Andrews(2016)} considers the convex combination between the generalized AR statistic (S statistic) introduced by \cite{Stock-Wright(2000)} and the score statistic (K statistic) introduced by \cite{Kleibergen(2005)}. We find that under many weak instruments, the orthogonalized jackknife LM statistic plays a role similar to the K statistic.  
	However, the trade-off between the jackknife AR and orthogonalized LM statistics turns out to be rather different from that between the S and K statistics. As pointed out by I.\cite{Andrews(2016)}, in the case with a fixed number of weak instruments (or moment conditions), the K statistic picks out a particular (random) direction corresponding to the span of a conditioning statistic that measures the identification strength and restricts attention to deviations from the null along this specific direction. In contrast to the K statistic, the S statistic treats all deviations from the null equally. Therefore, the trade-off between the K and S statistics is mainly from the difference in attention to deviation directions. We find that with many weak instruments, the jackknife AR and orthogonalized LM tests do not have such difference in deviation directions. Instead, their trade-off is mostly between local and non-local alternatives. 
	Furthermore, although the standard LM test (without orthogonalization) is not weak identification robust under I.\cite{Andrews(2016)}'s framework, the jackknife LM test is under many instruments.  
	Therefore, we consider a linear combination of jackknife AR, jackknife LM, and orthogonalized jackknife LM tests and find that the resulting CLC test has good power properties in a variety of scenarios.   

\textbf{Notation.} We denote $\mathcal{Z}(\mu)$ as the normal random variable with unit variance and expectation $\mu$ and $[n] = \{1,2,\cdots, n\}.$ We further simplify $\mathcal{Z}(0)$ as $\mathcal{Z}$, which is just a standard normal random variable. We denote $z_{\alpha}$ as the $(1-\alpha)$ quantile of a standard normal random variable and $\mathbb{C}_{\alpha}(a_1,a_2;\rho)$ as the $(1-\alpha)$ quantile of random variable $a_1 \mathcal{Z}_1^2 + a_2(\rho \mathcal{Z}_1 + (1-\rho^2)^{1/2} \mathcal{Z}_2)^2 + (1-a_1 - a_2)\mathcal{Z}_2^2$ where $\mathcal{Z}_1$ and $\mathcal{Z}_2$ are two independent standard normal random variables, $\alpha$ is the significance level, $\rho$ is a constant in $(-1,1)$, and  $a_{1}$ and $a_{2}$ are the weights of the first and second components in the random variable. We further simplify $\mathcal{C}_{0,0;\rho}$ as $\mathbb{C}_{\alpha}$, which is just the $1-\alpha$ quantile of $\mathcal{Z}^2$. We let $\mathbb{C}_{\alpha,\max}(\rho) = \sup_{(a_1,a_2) \in \mathbb{A}_0} \mathbb{C}_{\alpha}(a_1,a_2;\rho)$, where $\mathbb{A}_0 = \{(a_1,a_2) \in [0,1] \times [0,1], a_1+a_2\leq \overline{a}\}$ for some $\overline{a}<1$. We suppress the dependence of $\mathbb{C}_{\alpha,\max}(\rho)$ on $\overline{a}$ for simplicity of notation. The operators $\mathbb{E}^*$ and $\mathbb{P}^*$ are expectation and probability taken conditionally on data, respectively. For example, $\mathbb{E}^*1\{\mathcal{Z}^2(\hat{\mu}) \geq \mathbb{C}_{\alpha}\}$, in which $\hat{\mu}$ is some estimator of the expectation $\mu$ based on data, means the expectation is taken over the normal random variable by treating $\hat{\mu}$ as deterministic. We use $\convD$ to denote convergence in distribution, $U \stackrel{d}{=} V$ to denote that $U$ and $V$ share the same distribution, and $\text{maxeig}(\mathcal{V})$ and $\text{mineig}(\mathcal{V})$ to denote maximum and minimum eigenvalues of a positive semidefinite matrix $\mathcal{V}$. For two sequences of random variables $U_n$ and $V_n$, we write $U_n \stackrel{d}{=}V_n +o_P(1)$ if there exist $\tilde U_n \stackrel{d}{=}U_n$ and $\tilde V_n \stackrel{d}{=}V_n$ such that $\tilde U_n - \tilde V_n = o_P(1)$.

\section{Setup and Limit Problems}\label{sec:limit}
We consider the linear IV regression with a scalar outcome $Y_i$, a scalar endogenous variable $X_i$, and a $K \times 1$ vector of instruments $Z_i$ such that 
\begin{align}
	Y_i = X_i \beta + e_i, \quad  X_i = \Pi_i + V_i, \quad \forall i \in [n],
	\label{eq:1}
\end{align}
where $\Pi_i = \mathbb{E}X_i$ and $\{Z_i\}_{i\in [n]}$ is treated as fixed, following the many-instrument literature. We let $K$ diverge with sample size $n$, allowing for the case that $K$ is of the same order of magnitude as $n$. We further have $\mathbb{E}V_i = 0$ by construction, and $\mathbb{E}e_i = 0$ by IV exogeneity.  We allow $(e_i, V_i)$ to be heteroskedastic across $i$.
Also, following the literature on many instruments (e.g., \cite{MS22}), we assume that there are no controls included in our model as they can be partialled out from $(Y_i,X_i,Z_i)$. We provide more discussions about the effect of partialling out the covariates after Assumption \ref{ass:weak_convergence} below. 


We are interested in testing $\beta = \beta_0$. Let $e_i(\beta_0) = Y_i - X_i \beta_0 = e_i + X_i \Delta$, where $\Delta = \beta- \beta_0$. We collect the transpose of $Z_i$ in each row of $Z$, an $n \times K$ matrix of instruments, and denote $P = Z (Z^\top Z)^{-1}Z^\top$. In addition, Let $Q_{a,b} = \frac{\sum_{i \in [n]}\sum_{j \neq i}a_i P_{ij}b_j}{\sqrt{K}}$ and $\mathcal{C} =Q_{\Pi,\Pi}$. Then, as pointed out by \cite{MS22}, the rescaled $\mathcal{C}$ is the concentration parameter that measures the strength of identification in the heteroskedastic IV model with many instruments. Specifically, the parameter $\beta$ is weakly identified if $\mathcal{C}$ is bounded and strongly identified if $|\mathcal{C}|\rightarrow \infty$. We consider drifting sequence asymptotics so that all quantities are implicitly indexed by the sample size $n$ except specified otherwise. We omit such dependence for notation simplicity.

Throughout the paper, we consider three scenarios: (1) weak identification and fixed alternatives in which $\mathcal{C} \rightarrow \widetilde{\mathcal{C}}$ for some fixed constant $\widetilde{\mathcal{C}} \in \Re$ and $\Delta$ is fixed and bounded, (2) strong identification and local alternatives in which $\mathcal{C} = \widetilde{\mathcal{C}}/d_n $, $\Delta  = \widetilde{\Delta}d_n $, $\widetilde{\mathcal{C}}$ and $\widetilde{\Delta}$ are bounded constants independent of $n$, and $d_n  \rightarrow 0$ is a deterministic sequence, and (3) strong identification and fixed alternatives in which $\mathcal{C} = \widetilde{\mathcal{C}}/d_n $ for the same $\widetilde{\mathcal{C}}$ and $d_n$ defined in case (2) and $\Delta$ is fixed and bounded.\footnote{If we follow the setup in \cite{Chao(2012)} and \cite{Haus2012} and assume $\Pi_i = \mu_n \pi_i/\sqrt{n}$ so that $\infty>C\geq \sum_{i \in [n]}\sum_{j \neq i}\pi_iP_{ij}\pi_j/n \geq c>0$ for some constants $c,C$, then $ \mathcal{C} = \frac{\mu_n^2}{\sqrt{K}}\frac{\sum_{i \in [n]}\sum_{j \neq i}\pi_iP_{ij}\pi_j}{n}$, 
	implying that $d_n = \sqrt{K}/\mu_n^2$. Then, our definition of strong identification ($d_n \rightarrow 0$) is equivalent to that defined in \cite{Chao(2012)} and \cite{Haus2012} ($\mu_n^2/\sqrt{K} \rightarrow \infty$).} Many weak identification robust tests proposed in the literature (namely, the jackknife AR tests proposed by \cite{crudu2021} and \cite{MS22} and the jackknife LM test proposed by \cite{Matsushita2020}) depend on a subset of the following three quantities: $(Q_{e(\beta_0),e(\beta_0)},Q_{X,e(\beta_0)},Q_{X,X})$. Throughout the paper, we maintain the following high-level assumption.

\begin{ass}
	Under both weak and strong identification, the following weak convergence holds:  
	\begin{align}
		\begin{pmatrix}
			& Q_{e,e} \\
			& Q_{X,e} \\
			& Q_{X,X} - \mathcal{C}
		\end{pmatrix} \convD \N\left(\begin{pmatrix}
			0 \\
			0 \\
			0
		\end{pmatrix},\begin{pmatrix}
			\Phi_1 & \Phi_{12} & \Phi_{13} \\
			\Phi_{12} & \Psi & \tau \\
			\Phi_{13} & \tau & \Upsilon
		\end{pmatrix}\right), 
		\label{eq:limittrue}
	\end{align}
	for some $(\Phi_1, \Phi_{12}, \Phi_{13}, \Psi, \tau, \Upsilon)$. 
	\label{ass:weak_convergence}
\end{ass}

Although there are no controls in the model (\ref{eq:1}), we further verify Assumption \ref{ass:weak_convergence} in Section \ref{sec:W0} of the Online Supplement for a proper linear IV regression that includes a fixed  dimension of exogenous control variables, 
which are then partialled out from the original outcome variable, endogenous variable, and instruments.\footnote{
	Here, we focus on the case where the number of exogenous control variables is treated as fixed. In the case where the dimension of the exogenous variables is also large and assumed to diverge to infinity with the sample size,  \cite{CNT23} propose new versions of various jackknife IV estimators and show they are consistent and asymptotically normal under strong identification. We conjecture that it is possible to replace our jackknife construct (i.e. $Q_{a,b}$) by the new version and consider weak identification robust tests and their linear combinations in the same manner as studied in this paper. This is left as a topic for future research.}

Assumption \ref{ass:weak_convergence}  implies that,\footnote{Note that $\begin{pmatrix}
		Q_{e(\beta_0),e(\beta_0)} \\
		Q_{X,e(\beta_0)} \\
		Q_{X,X}
	\end{pmatrix} = \begin{pmatrix}
		1 & 2 \Delta & \Delta^2 \\
		0 & 1 & \Delta \\
		0 & 0 &  1
	\end{pmatrix} \begin{pmatrix}
		Q_{e,e} \\
		Q_{X,e} \\
		Q_{X,X}
	\end{pmatrix}.$} under both strong and weak identification, 
\begin{align}
	\begin{pmatrix}
		& Q_{e(\beta_0),e(\beta_0)} - \Delta^2 \mathcal{C} \\
		& Q_{X,e(\beta_0)} - \Delta \mathcal{C}\\
		& Q_{X,X} - \mathcal{C}
	\end{pmatrix} \stackrel{d}{=} \N\left(\begin{pmatrix}
		0 \\
		0 \\
		0
	\end{pmatrix},\begin{pmatrix}
		\Phi_1(\beta_0) & \Phi_{12}(\beta_0) & \Phi_{13}(\beta_0) \\
		\Phi_{12}(\beta_0) & \Psi(\beta_0) & \tau(\beta_0) \\
		\Phi_{13}(\beta_0) & \tau(\beta_0) & \Upsilon
	\end{pmatrix}\right) + o_p(1), 
	\label{eq:limit}
\end{align}
where 
\begin{align}
	\Phi_1(\beta_0) & = \Delta^4 \Upsilon + 4\Delta^3 \tau + \Delta^2 (4 \Psi + 2 \Phi_{13}) + 4\Delta \Phi_{12} + \Phi_1, \notag \\
	\Phi_{12}(\beta_0) & = \Delta^3 \Upsilon + 3\Delta^2 \tau + \Delta(2 \Psi + \Phi_{13}) + \Phi_{12}, \notag \\
	\Phi_{13}(\beta_0) & = \Delta^2 \Upsilon + 2\Delta  \tau + \Phi_{13}, \notag \\
	\Psi(\beta_0) & = \Delta^2 \Upsilon + 2\Delta \tau + \Psi, \notag \\
	\tau(\beta_0) & = \Delta \Upsilon + \tau. 
	\label{eq:phipsi}
\end{align} 
In particular, under strong identification, we have $Q_{X,X}d_n  \convP \widetilde{\mathcal{C}}$, which has a degenerate distribution. Also, under local alternatives, we have $\Delta = o(1)$ so that 
\begin{align*}
	(\Phi_1(\beta_0),\Phi_{12}(\beta_0), \Phi_{13}(\beta_0),\Psi(\beta_0),\tau(\beta_0)) \rightarrow (\Phi_1,\Phi_{12}, \Phi_{13},\Psi,\tau).
\end{align*}

To describe a feasible version of the test, we assume we have consistent estimates for all the variance components. 

\begin{ass}
	Let $\rho(\beta_0) = \frac{\Phi_{12}(\beta_0)}{\sqrt{\Phi_1(\beta_0)\Psi(\beta_0)}}$, $\widehat{\gamma}(\beta_0) = (\widehat{\Phi}_1(\beta_0), \widehat{\Phi}_{12}(\beta_0), \widehat{\Phi}_{13}(\beta_0), \widehat{\Psi}(\beta_0), \widehat{\tau}(\beta_0), \widehat{\Upsilon}, \widehat{\rho}(\beta_0))$ be an estimator, and $\mathcal{B} \in \Re$ be a compact parameter space. Then, we have $\inf_{\beta_0 \in \mathcal{B}}\Phi_1(\beta_0)>0$, $\inf_{\beta_0 \in \mathcal{B}}\Psi(\beta_0)>0$, $\Upsilon >0$, and  for $\beta_0 \in \mathcal{B}$,
	\begin{align*}
		||\widehat{\gamma}(\beta_0) - \gamma(\beta_0)||_2 = o_p(1),
	\end{align*}
	where $\gamma(\beta_0) \equiv (\Phi_1(\beta_0), \Phi_{12}(\beta_0),\Phi_{13}(\beta_0),\Psi(\beta_0), \tau(\beta_0),\Upsilon,\rho(\beta_0))$.
	\label{ass:variance_est}
\end{ass}

Several remarks on Assumption \ref{ass:variance_est} are in order. First, \cite{Chao(2012)} propose a consistent estimator for $\Psi$ where there is strong identification and many instruments.  
It is possible to compute $\widehat{\gamma}(\beta_0)$ based on \citeauthor{Chao(2012)}'s (\citeyear{Chao(2012)}) estimator with their JIVE-based residuals $\hat{e}_i$ from the structural equation replaced by $e_i(\beta_0)$. 
Under weak identification and $\beta_0 = \beta$, \cite{crudu2021} and \cite{Matsushita-Otsu2021} establish the consistency of such estimators for $\Phi_1(\beta_0)$ and $\Psi(\beta_0)$, respectively.
Similar arguments can be used to show the consistency of the rest of the elements in $\widehat{\gamma}(\beta_0)$ under both weak and strong identification. 
In addition, the consistency can be established under both local and fixed alternatives.
We provide more details in Section \ref{sec:var1} in the Online Supplement. Second, motivated by \cite{KSS2020}, \cite{MS22} propose cross-fit estimators $\widehat{\Phi}_1(\beta_0)$ and $\widehat{\Upsilon}$, which are consistent under both weak and strong identification and lead to better power properties. Following their lead, one can write down the cross-fit estimators for the rest of the elements in $\gamma(\beta_0)$ and show they are consistent.\footnote{For example, \citet[p.22]{MS22} establish the limit of their cross-fit estimator $\widehat{\Psi}$ under weak identification and many instruments when the residual $\hat{e}_i$ from the structural equation is computed based on the JIVE estimator. We can construct $\widehat{\Psi}(\beta_0)$ by replacing $\hat{e}_i$ by $e_i(\beta_0)$. Then, the argument, as theirs with $Q_{X,e}/Q_{X,X}$ replaced by $\Delta$, establishes that $\widehat{\Psi}(\beta_0) \convP \Psi(\beta_0)$.} We provide more details in Section \ref{sec:var2} in the Online Supplement. Note that both \citeauthor{crudu2021}'s (\citeyear{crudu2021}) and \citeauthor{MS22}'s (\citeyear{MS22}) estimators are consistent under heteroskedasticity and allow for $K$ to be of the same order of $n$. Third, the consistency of $\widehat \gamma(\beta_0)$ over the entire parameter space under both strong and weak identifications is more than necessary and maintained mainly for simplicity of presentation. In fact, in order for our jackknife CLC test proposed below to control size under both weak and strong identification, it suffices to require $\widehat{\gamma}(\beta_0)$ to be consistent under the null only. The power analyses in Lemmas \ref{lem:strongID} and \ref{lem:weakID} below, and subsequently, Theorems \ref{thm:weakid} and \ref{thm:strongid}, only require the consistency of $\widehat{\gamma}(\beta_0)$ under strong identification with local alternatives and weak identification with fixed alternatives, respectively.

Under this framework, \cite{crudu2021} and \cite{MS22} consider the jackknife AR test 
\begin{align}
	1\{AR(\beta_0) \geq z_{\alpha}\}, \quad AR(\beta_0) = \frac{Q_{e(\beta_0),e(\beta_0)} }{\widehat{\Phi}_1^{1/2}(\beta_0)},    
	\label{eq:AR}
\end{align}
and \cite{Matsushita2020}  consider the jackknife LM test 
\begin{align}
	1\{LM^2(\beta_0) \geq \mathbb{C}_{\alpha}\}, \quad LM(\beta_0) = \frac{Q_{X,e(\beta_0)} }{\widehat{\Psi}^{1/2}(\beta_0)}.
	\label{eq:LM}
\end{align}
Both tests are robust to weak identification, many instruments, and heteroskedasticity. 
Lemma \ref{lem:strongID}  below characterizes the joint limit distribution of $(AR(\beta_0),LM(\beta_0))^\top$ under strong identification and local alternatives.

\begin{lem}
	Suppose Assumptions \ref{ass:weak_convergence} and \ref{ass:variance_est} hold and we are under strong identification with local alternatives, that is, there exists a deterministic sequence $d_n  \rightarrow 0$ such that $\mathcal{C} = \widetilde{\mathcal{C}}/d_n $ and $\Delta  = \widetilde{\Delta}d_n $, where $\widetilde{\mathcal{C}}$ and $\widetilde{\Delta}$ are bounded constants independent of $n$. Then, we have
	\begin{align*}
		\begin{pmatrix}
			AR(\beta_0) \\
			LM(\beta_0)
		\end{pmatrix} \convD \begin{pmatrix}
			\N_1 \\
			\N_2
		\end{pmatrix} \stackrel{d}{=} \N\left(\begin{pmatrix}
			0 \\
			\frac{\widetilde{\Delta} \widetilde{\mathcal{C}}}{\Psi^{1/2}} 
		\end{pmatrix},\begin{pmatrix}
			1 & \rho  \\
			\rho & 1 
		\end{pmatrix}\right)
	\end{align*}
	where $\rho = \Phi_{12}/\sqrt{\Phi_1\Psi}$.
	\label{lem:strongID}
\end{lem}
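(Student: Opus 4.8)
The plan is to read the result directly off the joint asymptotic normality in \eqref{eq:limit}, substitute the strong-identification, local-alternative rates, and finish with Slutsky's theorem. First I would write $(AR(\beta_0),LM(\beta_0))^\top$ as the coordinatewise ratio of the numerator vector $(Q_{e(\beta_0),e(\beta_0)},Q_{X,e(\beta_0)})^\top$ and the estimated standard deviations $(\widehat{\Phi}_1^{1/2}(\beta_0),\widehat{\Psi}^{1/2}(\beta_0))^\top$. By \eqref{eq:limit} the two numerators are jointly asymptotically normal with means $\Delta^2\mathcal{C}$ and $\Delta\mathcal{C}$ and with covariance block determined by $\Phi_1(\beta_0)$, $\Psi(\beta_0)$, and $\Phi_{12}(\beta_0)$, up to an $o_p(1)$ term.

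Second, I would evaluate the mean terms under $\mathcal{C}=\widetilde{\mathcal{C}}/d_n$ and $\Delta=\widetilde{\Delta}d_n$: the AR numerator mean is $\Delta^2\mathcal{C}=\widetilde{\Delta}^2\widetilde{\mathcal{C}}d_n\to 0$, while the LM numerator mean is $\Delta\mathcal{C}=\widetilde{\Delta}\widetilde{\mathcal{C}}$, a fixed constant. Since $\Delta=o(1)$ under local alternatives, the expansions in \eqref{eq:phipsi} give $(\Phi_1(\beta_0),\Psi(\beta_0),\Phi_{12}(\beta_0))\to(\Phi_1,\Psi,\Phi_{12})$, so the limiting covariance of the numerator vector is $\begin{pmatrix}\Phi_1 & \Phi_{12}\\ \Phi_{12} & \Psi\end{pmatrix}$. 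Hence the numerator vector converges in distribution to $\N\!\left(\begin{pmatrix}0\\ \widetilde{\Delta}\widetilde{\mathcal{C}}\end{pmatrix},\begin{pmatrix}\Phi_1 & \Phi_{12}\\ \Phi_{12} & \Psi\end{pmatrix}\right)$.

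Third, I would dispatch the denominators. Assumption \ref{ass:variance_est} provides $\widehat{\Phi}_1(\beta_0)\convP\Phi_1(\beta_0)\to\Phi_1$ and $\widehat{\Psi}(\beta_0)\convP\Psi(\beta_0)\to\Psi$, together with the positivity $\inf_{\beta_0}\Phi_1(\beta_0)>0$ and $\inf_{\beta_0}\Psi(\beta_0)>0$, so $(\widehat{\Phi}_1^{1/2}(\beta_0),\widehat{\Psi}^{1/2}(\beta_0))\convP(\Phi_1^{1/2},\Psi^{1/2})$ with strictly positive limits. Applying Slutsky's theorem to the ratio then yields
\[
\begin{pmatrix}AR(\beta_0)\\ LM(\beta_0)\end{pmatrix}\convD \N\!\left(\begin{pmatrix}0\\ \frac{\widetilde{\Delta}\widetilde{\mathcal{C}}}{\Psi^{1/2}}\end{pmatrix},\begin{pmatrix}1 & \rho\\ \rho & 1\end{pmatrix}\right),
\]
where dividing the diagonal entries by $\Phi_1$ and $\Psi$ produces unit variances and the off-diagonal becomes $\Phi_{12}/\sqrt{\Phi_1\Psi}=\rho$.

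The only genuinely delicate points, rather than true obstacles, are: (i) keeping track that the covariance parameters are evaluated at $\beta_0$ and coincide with their null counterparts $(\Phi_1,\Psi,\Phi_{12})$ only in the limit as $\Delta\to 0$, so that the displayed correlation is $\rho=\Phi_{12}/\sqrt{\Phi_1\Psi}$ and not $\rho(\beta_0)$; and (ii) handling the $o_p(1)$ term in \eqref{eq:limit} and the \emph{joint} (not merely marginal) nature of the convergence, which is why I state everything at the level of the numerator vector before passing to the ratio. Both are routine given \eqref{eq:limit}, \eqref{eq:phipsi}, and Assumption \ref{ass:variance_est}.
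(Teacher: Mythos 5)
Your proposal is correct and follows essentially the same route as the paper's proof: both rest on the decomposition of $(Q_{e(\beta_0),e(\beta_0)},Q_{X,e(\beta_0)})$ into the null quantities plus drifting means (the paper expands $e_i(\beta_0)=e_i+X_i\Delta$ directly, while you invoke \eqref{eq:limit}, which encodes exactly that expansion), then evaluate $\Delta^2\mathcal{C}\to 0$ and $\Delta\mathcal{C}=\widetilde{\Delta}\widetilde{\mathcal{C}}$, use \eqref{eq:phipsi} and Assumption \ref{ass:variance_est} to pass $(\Phi_1(\beta_0),\Psi(\beta_0),\Phi_{12}(\beta_0))$ and their estimators to $(\Phi_1,\Psi,\Phi_{12})$, and finish with Slutsky. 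Your attention to the two delicate points (evaluating the covariance at $\beta_0$ versus at $\beta$, and keeping the convergence joint) matches what the paper's argument implicitly handles.
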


Two remarks are in order. First, under strong identification, we consider local alternatives so that $ \beta-\beta_0 \rightarrow 0$. This is why we have $(\Psi(\beta_0),\Phi_1(\beta_0),\Phi_{12}(\beta_0))$ converge to $(\Psi,\Phi_1,\Phi_{12})$, which are just the counterparts of $(\Psi(\beta_0),\Phi_1(\beta_0),\Phi_{12}(\beta_0))$ when $\beta_0$ is replaced by $\beta$. Second, although $AR(\beta_0)$ has zero mean, and hence, no power in this case, it is correlated with $LM(\beta_0)$. It is therefore possible to use $AR(\beta_0)$ to reduce the variance of $LM(\beta_0)$ and obtain a test that is more powerful than the LM test. 

\begin{lem}
	Consider the limit experiment in which researchers observe $(\N_1,\N_2)$ with
	\begin{align*}
		\begin{pmatrix}
			\N_1 \\
			\N_2
		\end{pmatrix} \stackrel{d}{=} \N\left(\begin{pmatrix}
			0 \\
			\theta
		\end{pmatrix},\begin{pmatrix}
			1 & \rho  \\
			\rho & 1 
		\end{pmatrix}\right),  
	\end{align*}
	know the value of $\rho$ and that $\mathbb{E}\N_1 = 0$, and want to test for $\theta=0$ versus the two-sided alternative. In this case, $1\{\N_2^{*2} \geq \mathbb{C}_{\alpha}\}$ is UMP among level-$\alpha$ tests that are either invariant to sign changes or unbiased, where 
	\begin{align*}
		\N_2^* = (1-\rho^2)^{-1/2}(\N_2 - \rho \N_1)    
	\end{align*}
	is the normalized residual from the projection of $\N_2$ on $\N_1$. 
	\label{lem:ump}
\end{lem}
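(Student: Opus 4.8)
The plan is to exploit the fact that $\N_2^*$ carries all the information about $\theta$ while $\N_1$ is ancillary, and then to invoke the classical optimality theory for the one-parameter Gaussian location problem in its two guises (invariance and unbiasedness). First I would record the joint law of $(\N_1,\N_2^*)$: a direct computation gives $\mathrm{Var}(\N_2-\rho\N_1)=1-\rho^2$ and $\mathrm{Cov}(\N_2-\rho\N_1,\N_1)=0$, so by joint normality $\N_2^*\sim\N\big((1-\rho^2)^{-1/2}\theta,\,1\big)$ is independent of $\N_1\sim\N(0,1)$. In particular, under $H_0:\theta=0$ we have $\N_2^{*2}\stackrel{d}{=}\mathcal Z^2$, whence $\mathbb{P}_0(\N_2^{*2}\ge\mathbb C_\alpha)=\alpha$ and the proposed test has exact level $\alpha$. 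Writing the bivariate normal density of $(\N_1,\N_2)$ and collecting the terms that involve $\theta$, I would identify the exponential-family structure $p_\theta\propto h(\N_1,\N_2)\exp(\theta T-A(\theta))$ with natural statistic $T=(1-\rho^2)^{-1}(\N_2-\rho\N_1)=(1-\rho^2)^{-1/2}\N_2^*$, a strictly increasing transform of $\N_2^*$; here $\N_1$ enters only through the carrier $h$, so $T$ (equivalently $\N_2^*$) is complete and sufficient for $\theta$.

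For the unbiasedness claim I would apply the standard construction of the two-sided UMP unbiased test in a one-parameter exponential family: the UMPU test is a function of $T$ alone and rejects when $T\le c_1$ or $T\ge c_2$, with $(c_1,c_2)$ pinned down by the size constraint $\mathbb{E}_0\phi=\alpha$ together with the unbiasedness (zero-derivative) condition $\mathbb{E}_0[T\phi]=\alpha\,\mathbb{E}_0[T]$. Since the null law of $T$ is $\N(0,(1-\rho^2)^{-1})$, symmetric about $0$, the equal-tailed symmetric choice $c_2=-c_1$ satisfies both equations, and by uniqueness of the UMPU test the acceptance region must collapse to $|T|\ge c$, i.e. $\N_2^{*2}\ge\mathbb C_\alpha$.

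For the invariance claim I would take the sign-change group $G=\{\mathrm{id},g\}$ with $g:(\N_1,\N_2)\mapsto(-\N_1,-\N_2)$, which maps the testing problem to itself while acting on the parameter as $\theta\mapsto-\theta$ and, in the coordinates $(\N_1,\N_2^*)$, acts as $(\N_1,\N_2^*)\mapsto(-\N_1,-\N_2^*)$. Because $G$ is finite, the density of a maximal invariant is proportional to the orbit sum $p_\theta(\N_1,\N_2^*)+p_\theta(-\N_1,-\N_2^*)$; dividing by its null counterpart and cancelling the common factor $e^{-\N_1^2/2}$, I would obtain the invariant likelihood ratio $e^{-\mu^2/2}\cosh(\mu\,\N_2^*)$ with $\mu=(1-\rho^2)^{-1/2}\theta$. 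This ratio depends on the data only through $\N_2^{*2}$, is strictly increasing in $\N_2^{*2}$, and induces the same ordering of the sample space for every $\theta\neq0$; Neyman--Pearson applied to the maximal invariant then shows that rejecting for large $\N_2^{*2}$ is UMP among invariant tests, and matching the level fixes the cutoff at $\mathbb C_\alpha$.

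The routine steps (the variance/covariance algebra and the exponential-family bookkeeping) are mechanical. The one place that needs genuine care is the invariance argument: I must correctly identify the group action on the nuisance coordinate $\N_1$, justify reducing to the orbit-averaged density for a finite group, and verify that the resulting ratio is monotone in $\N_2^{*2}$ and free of $\theta$, so that a single rejection region is simultaneously optimal against all alternatives. That monotonicity-in-the-maximal-invariant step, rather than any delicate estimate, is the crux of the proof.
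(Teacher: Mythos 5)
Your proposal is correct and follows essentially the same route as the paper's proof: reduce to the independent pair $(\N_1,\N_2^*)$, establish the unbiasedness claim via the one-parameter exponential-family UMPU theory (the paper cites Lehmann--Romano, Section 4.2, which is exactly your two-sided construction with the size and zero-derivative side conditions resolved by symmetry), and establish the invariance claim by passing to a maximal invariant whose likelihood ratio reduces to the monotone function $e^{-\mu^2/2}\cosh(\mu \N_2^*)$ of $\N_2^{*2}$, then applying Neyman--Pearson. The only cosmetic difference is the choice of sign-change group: the paper flips the sign of $\N_2^*$ alone, so its maximal invariant is $(\N_1,\N_2^{*2})$ and the ratio is computed from the noncentral chi-square density, while you flip both coordinates and compute the orbit-averaged density; both yield the identical $\cosh$ likelihood ratio and the same conclusion.
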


Let the orthogonalized jackknife LM statistic be $LM^*(\beta_0) = (1-\widehat{\rho}(\beta_0)^2)^{-1/2} (LM(\beta_0)- \widehat{\rho}(\beta_0) AR(\beta_0))$. Then, Lemma \ref{lem:strongID} implies, under strong identification and local alternatives,  
\begin{align}
	\begin{pmatrix}
		AR(\beta_0) \\
		LM^*(\beta_0)
	\end{pmatrix} \convD \begin{pmatrix}
		\N_1 \\
		\N_2^*
	\end{pmatrix} \stackrel{d}{=} \N\left(\begin{pmatrix}
		0 \\
		\frac{\widetilde{\Delta} \widetilde{\mathcal{C}}}{[(1-\rho^2)\Psi]^{1/2}} 
	\end{pmatrix},\begin{pmatrix}
		1 & 0 \\
		0 & 1 
	\end{pmatrix}\right).  
	\label{eq:lmstar_str}
\end{align}
Lemma \ref{lem:ump} with $\theta = \widetilde{\Delta} \widetilde{\mathcal{C}}\Psi^{-1/2}$ implies, in this case, that the test $1\{LM^{*2}(\beta_0) \geq \mathbb{C}_{\alpha}\}$ is asymptotically strictly more powerful than the jackknife AR and LM tests based on $AR(\beta_0)$ and $LM(\beta_0)$ against local alternatives as long as $\rho \neq 0$. In addition, under strong identification and local alternatives, \citeauthor{MS22}'s (\citeyear{MS22}) two-step test statistic is asymptotically equivalent to $LM(\beta_0)$, and thus, is less powerful than $LM^*(\beta_0)$ too. 

Next, we compare the behaviors of $AR(\beta_0)$, $LM(\beta_0)$, and $LM^*(\beta_0)$ under strong identification and fixed alternatives. 
\begin{lem}
	Suppose  Assumption \ref{ass:variance_est} holds, $( Q_{e(\beta_0),e(\beta_0)} - \Delta^2 \mathcal{C}, Q_{X,e(\beta_0)} - \Delta \mathcal{C},Q_{X,X} - \mathcal{C})^\top
	= O_p(1)$, and we are under strong  identification so that $d_n \mathcal{C} \rightarrow \widetilde{\mathcal{C}}$ for some $d_n \rightarrow 0$. Then, we have, for any fixed $\Delta \neq 0$, 
	\begin{align*}
		d_n^2 \begin{pmatrix}
			AR^2(\beta_0) \\
			LM^2(\beta_0) \\
			LM^{*2}(\beta_0)
		\end{pmatrix} \convP \begin{pmatrix}
			\Phi_1^{-1}(\beta_0) \Delta^4 \widetilde{\mathcal{C}}^2 \\
			\Psi^{-1}(\beta_0) \Delta^2 \widetilde{\mathcal{C}}^2\\
			(1-\rho^2(\beta_0))^{-1}(\Psi^{-1/2}(\beta_0)  - \rho(\beta_0) \Phi_1^{-1/2}(\beta_0)\Delta)^2 \Delta^2 \widetilde{\mathcal{C}}^2
		\end{pmatrix}.
	\end{align*}
	\label{lem:strongID2}
\end{lem}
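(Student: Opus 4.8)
The plan is to reduce each of the three squared statistics to a product of a ratio whose limit is deterministic and the scaled concentration parameter, and then invoke the continuous mapping theorem. The key observation is that multiplying by $d_n$ annihilates every $O_p(1)$ remainder while promoting the leading terms $\Delta^2\mathcal{C}$, $\Delta\mathcal{C}$, and $\mathcal{C}$ to their deterministic limits through $d_n\mathcal{C}\rightarrow\widetilde{\mathcal{C}}$. Because the statistics are ratios with consistent, strictly positive denominators, Slutsky's theorem then delivers everything.

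First I would handle the numerators. Writing $Q_{e(\beta_0),e(\beta_0)} = \Delta^2\mathcal{C} + O_p(1)$ and $Q_{X,e(\beta_0)} = \Delta\mathcal{C} + O_p(1)$ from the maintained $O_p(1)$ condition, I multiply through by $d_n$. Since $d_n\rightarrow 0$, each $d_n\cdot O_p(1)=o_p(1)$, while $d_n\Delta^2\mathcal{C}\rightarrow\Delta^2\widetilde{\mathcal{C}}$ and $d_n\Delta\mathcal{C}\rightarrow\Delta\widetilde{\mathcal{C}}$ because $\Delta$ is fixed and $d_n\mathcal{C}\rightarrow\widetilde{\mathcal{C}}$. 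Hence $d_n Q_{e(\beta_0),e(\beta_0)}\convP\Delta^2\widetilde{\mathcal{C}}$ and $d_n Q_{X,e(\beta_0)}\convP\Delta\widetilde{\mathcal{C}}$.

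Next I would combine these with the variance estimators. By Assumption \ref{ass:variance_est}, $\widehat{\Phi}_1(\beta_0)\convP\Phi_1(\beta_0)$ and $\widehat{\Psi}(\beta_0)\convP\Psi(\beta_0)$, both limits strictly positive, so Slutsky's theorem gives $d_n AR(\beta_0)\convP\Phi_1^{-1/2}(\beta_0)\Delta^2\widetilde{\mathcal{C}}$ and $d_n LM(\beta_0)\convP\Psi^{-1/2}(\beta_0)\Delta\widetilde{\mathcal{C}}$; squaring yields the first two coordinates. For the third, I use the definition $LM^*(\beta_0)=(1-\widehat{\rho}(\beta_0)^2)^{-1/2}(LM(\beta_0)-\widehat{\rho}(\beta_0)AR(\beta_0))$, so that $d_n LM^*(\beta_0)=(1-\widehat{\rho}(\beta_0)^2)^{-1/2}(d_n LM(\beta_0)-\widehat{\rho}(\beta_0)\,d_n AR(\beta_0))$. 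With $\widehat{\rho}(\beta_0)\convP\rho(\beta_0)$ and $|\rho(\beta_0)|<1$, a further application of Slutsky's theorem and the continuous mapping theorem, after factoring out $\Delta\widetilde{\mathcal{C}}$, gives $d_n LM^*(\beta_0)\convP(1-\rho^2(\beta_0))^{-1/2}\Delta\widetilde{\mathcal{C}}(\Psi^{-1/2}(\beta_0)-\rho(\beta_0)\Phi_1^{-1/2}(\beta_0)\Delta)$, whose square is the stated third coordinate.

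I expect no genuine obstacle here: the argument is essentially Slutsky plus continuous mapping. The only points requiring care are (i) confirming that $d_n\cdot O_p(1)=o_p(1)$, which is immediate from the maintained boundedness of the remainder, and (ii) ensuring the denominators are nondegenerate in the limit, namely that $\Phi_1(\beta_0)$ and $\Psi(\beta_0)$ remain bounded away from zero (guaranteed by Assumption \ref{ass:variance_est}) and that $1-\rho^2(\beta_0)>0$, which holds because $\rho(\beta_0)$ is a correlation coefficient arising from a nondegenerate limiting covariance matrix.
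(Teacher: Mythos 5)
Your proposal is correct and follows essentially the same route as the paper's proof: multiply by $d_n$ so that the $O_p(1)$ remainders vanish while $d_n\mathcal{C}\rightarrow\widetilde{\mathcal{C}}$ identifies the limits of $d_nAR(\beta_0)$ and $d_nLM(\beta_0)$ via Slutsky and Assumption \ref{ass:variance_est}, and then the limit of $d_nLM^*(\beta_0)$ follows from its definition with $\widehat{\rho}(\beta_0)\convP\rho(\beta_0)$. Your write-up simply makes explicit the intermediate steps (the $d_n\cdot O_p(1)=o_p(1)$ argument and the nondegeneracy of the denominators) that the paper leaves implicit.
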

Given $d_n \rightarrow 0$ and both $\Phi_1^{-1}(\beta_0) \Delta^4 \widetilde{\mathcal{C}}^2>0$ and $\Phi_1^{-1}(\beta_0) \Delta^2 \widetilde{\mathcal{C}}^2>0$, $AR^2(\beta_0)$ and $LM^2(\beta_0)$ have power $1$ against fixed alternatives asymptotically. By contrast, $LM^{*2}(\beta_0)$ may not have power if $\Delta = \Delta_*(\beta_0) \equiv  \Phi_1^{1/2}(\beta_0)\Psi^{-1/2}(\beta_0)\rho^{-1}(\beta_0)$.


Next, we compare the performance of $AR(\beta_0)$ and $LM^*(\beta_0)$ under weak identification and fixed alternatives. 
\begin{lem}
	Suppose Assumptions \ref{ass:weak_convergence} and \ref{ass:variance_est} hold and we are under weak identification so that 
	$\mathcal{C} \rightarrow \widetilde{\mathcal{C}} \in \Re$. Then, we have, for any fixed $\Delta \neq 0$, 
	\begin{align}
		\begin{pmatrix}
			AR(\beta_0) \\
			LM^*(\beta_0)
		\end{pmatrix} \convD \begin{pmatrix}
			\N_1 \\
			\N_2^*
		\end{pmatrix} \stackrel{d}{=} \N\left(\begin{pmatrix}
			m_1(\Delta) \\
			m_2(\Delta)
		\end{pmatrix},\begin{pmatrix}
			1 & 0 \\
			0 & 1 
		\end{pmatrix}\right),      
		\label{eq:lmstar_wk}
	\end{align}
	where $\rho(\beta_0) = \frac{\Phi_{12}(\beta_0)}{\sqrt{\Psi(\beta_0)\Phi_1(\beta_0) }}$ and 
	\begin{align*}
		\begin{pmatrix}
			m_1(\Delta) \\
			m_2(\Delta)
		\end{pmatrix} & =  \begin{pmatrix}
			\Phi_1^{-1/2}(\beta_0)\Delta^2\widetilde{\mathcal{C}}\\
			(1-\rho^2(\beta_0))^{-1/2}\Psi^{-1/2}(\beta_0)\Delta \widetilde{\mathcal{C}}-\rho(\beta_0) (1-\rho^2(\beta_0))^{-1/2}\Phi_1^{-1/2}(\beta_0)\Delta^2\widetilde{\mathcal{C}}
		\end{pmatrix}.
	\end{align*}
	
	In particular, as $\Delta \rightarrow \infty$, we have 
	\begin{align*}
		m_1(\Delta) & \rightarrow \frac{\widetilde{\mathcal{C}}}{\Upsilon^{1/2}} \quad \text{and} \quad m_2(\Delta) \rightarrow \frac{\widetilde{\mathcal{C}}}{\Upsilon^{1/2}} \frac{\rho_{23}}{(1-\rho_{23}^2)^{1/2}},
	\end{align*}
	where $\rho_{23} = \frac{\tau}{(\Psi \Upsilon)^{1/2}}$ is the correlation between $Q_{X,e}$ and $Q_{X,X}$.\footnote{We suppress the dependence of $m_1(\Delta)$ and $m_2(\Delta)$ on $\gamma(\beta_0)$ and $\widetilde{\mathcal{C}}$ for notation simplicity. } 
	\label{lem:weakID}
\end{lem}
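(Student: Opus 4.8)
The plan is to read off the joint limit of $(AR(\beta_0),LM(\beta_0))$ directly from \eqref{eq:limit}, pass to $LM^*(\beta_0)$ by the continuous mapping theorem, and then evaluate the $\Delta\to\infty$ limits of the resulting means by a polynomial expansion in $\Delta$ based on \eqref{eq:phipsi}. First I would specialize \eqref{eq:limit} to weak identification: since $\mathcal{C}\to\widetilde{\mathcal{C}}$ with $\Delta$ fixed, the centerings $\Delta^2\mathcal{C}$ and $\Delta\mathcal{C}$ converge to $\Delta^2\widetilde{\mathcal{C}}$ and $\Delta\widetilde{\mathcal{C}}$, so $(Q_{e(\beta_0),e(\beta_0)},Q_{X,e(\beta_0)})^\top$ is asymptotically bivariate normal with mean $(\Delta^2\widetilde{\mathcal{C}},\Delta\widetilde{\mathcal{C}})^\top$ and covariance equal to the upper-left $2\times 2$ block of the matrix in \eqref{eq:limit}, namely diagonal entries $\Phi_1(\beta_0),\Psi(\beta_0)$ and off-diagonal entry $\Phi_{12}(\beta_0)$. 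By Assumption \ref{ass:variance_est}, under weak identification with fixed alternatives the estimators $\widehat{\Phi}_1(\beta_0),\widehat{\Psi}(\beta_0),\widehat{\rho}(\beta_0)$ are consistent (and $\inf_{\beta_0}\Phi_1(\beta_0)>0$, $\inf_{\beta_0}\Psi(\beta_0)>0$), so Slutsky's theorem applied to the ratios defining $AR(\beta_0)$ and $LM(\beta_0)$ gives $(AR(\beta_0),LM(\beta_0))^\top \convD (\N_1,\N_2)^\top$, bivariate normal with unit variances, correlation $\rho(\beta_0)$, and means $\Phi_1^{-1/2}(\beta_0)\Delta^2\widetilde{\mathcal{C}}$ and $\Psi^{-1/2}(\beta_0)\Delta\widetilde{\mathcal{C}}$.

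Next I would treat $LM^*(\beta_0)=(1-\widehat{\rho}^2(\beta_0))^{-1/2}(LM(\beta_0)-\widehat{\rho}(\beta_0)AR(\beta_0))$. Because $\widehat{\rho}(\beta_0)\convP\rho(\beta_0)\in(-1,1)$, the continuous mapping theorem yields $(AR(\beta_0),LM^*(\beta_0))^\top \convD (\N_1,\N_2^*)^\top$ with $\N_2^*=(1-\rho^2(\beta_0))^{-1/2}(\N_2-\rho(\beta_0)\N_1)$. A direct computation gives $\mathrm{Cov}(\N_1,\N_2^*)=0$ and $\mathrm{Var}(\N_2^*)=1$, so by joint Gaussianity $\N_1$ and $\N_2^*$ are independent standard normals; taking expectations produces $m_1(\Delta)=\Phi_1^{-1/2}(\beta_0)\Delta^2\widetilde{\mathcal{C}}$ and $m_2(\Delta)=(1-\rho^2(\beta_0))^{-1/2}(\Psi^{-1/2}(\beta_0)\Delta\widetilde{\mathcal{C}}-\rho(\beta_0)\Phi_1^{-1/2}(\beta_0)\Delta^2\widetilde{\mathcal{C}})$, which is exactly the display \eqref{eq:lmstar_wk}.

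For the $\Delta\to\infty$ statements I would substitute the polynomial expressions \eqref{eq:phipsi}. For $m_1$, the leading term of $\Phi_1(\beta_0)$ is $\Upsilon\Delta^4$, so $\Phi_1^{-1/2}(\beta_0)\Delta^2\to\Upsilon^{-1/2}$ and $m_1(\Delta)\to\widetilde{\mathcal{C}}/\Upsilon^{1/2}$. For $m_2$, writing $A=\Phi_1(\beta_0)$, $B=\Psi(\beta_0)$, $C=\Phi_{12}(\beta_0)$ and $\rho(\beta_0)=C/\sqrt{AB}$, I would first simplify algebraically to $m_2(\Delta)=\widetilde{\mathcal{C}}\,\Delta(A-C\Delta)/\sqrt{A(AB-C^2)}$, then extract leading terms from \eqref{eq:phipsi}: $A-C\Delta\sim\tau\Delta^3$, $AB-C^2\sim(\Upsilon\Psi-\tau^2)\Delta^4$, and $A\sim\Upsilon\Delta^4$, so that $m_2(\Delta)\to\widetilde{\mathcal{C}}\tau/\sqrt{\Upsilon(\Upsilon\Psi-\tau^2)}$. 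Rewriting with $\rho_{23}=\tau/(\Psi\Upsilon)^{1/2}$ gives $1-\rho_{23}^2=(\Psi\Upsilon-\tau^2)/(\Psi\Upsilon)$ and hence $\widetilde{\mathcal{C}}\tau/\sqrt{\Upsilon(\Upsilon\Psi-\tau^2)}=\Upsilon^{-1/2}\widetilde{\mathcal{C}}\,\rho_{23}(1-\rho_{23}^2)^{-1/2}$, as claimed.

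The main obstacle is the $m_2$ limit. Both the $\Delta^6$ and $\Delta^5$ coefficients of $AB-C^2$ vanish identically, reflecting that $\rho(\beta_0)\to 1$ as $\Delta\to\infty$ (intuitively, $Q_{e(\beta_0),e(\beta_0)}$ and $Q_{X,e(\beta_0)}$ are each dominated by $Q_{X,X}$ times a power of $\Delta$, making $AR(\beta_0)$ and $LM(\beta_0)$ asymptotically perfectly correlated). Consequently the limit is controlled by the $\Delta^4$ coefficient $\Upsilon\Psi-\tau^2$, and one must carry the subleading coefficients of \eqref{eq:phipsi} accurately through the products $AB$, $C^2$, and the difference $A-C\Delta$ rather than stopping at leading order; verifying the two cancellations and the surviving coefficient is where the bookkeeping lies.
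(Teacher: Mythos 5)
Your proposal is correct and follows essentially the same route as the paper's proof: the paper likewise reads the bivariate limit of $(Q_{e(\beta_0),e(\beta_0)},Q_{X,e(\beta_0)})$ off \eqref{eq:limit}, passes to $(AR(\beta_0),LM^*(\beta_0))$ via consistency of the variance estimators, and evaluates the $\Delta\to\infty$ limit of $m_2(\Delta)$ through the identical simplification $m_2(\Delta)=\widetilde{\mathcal{C}}\left(\Delta \Phi_1(\beta_0)-\Delta^2\Phi_{12}(\beta_0)\right)/\left(\Phi_1(\beta_0)(\Phi_1(\beta_0)\Psi(\beta_0)-\Phi_{12}^2(\beta_0))\right)^{1/2}$ together with the same three polynomial limits $\Phi_1(\beta_0)/\Delta^4\to\Upsilon$, $(\Phi_1(\beta_0)\Psi(\beta_0)-\Phi_{12}^2(\beta_0))/\Delta^4\to\Upsilon\Psi-\tau^2$, and $(\Phi_1(\beta_0)-\Delta\Phi_{12}(\beta_0))/\Delta^3\to\tau$. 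Your explicit verification of the vanishing $\Delta^6$ and $\Delta^5$ coefficients in $\Phi_1(\beta_0)\Psi(\beta_0)-\Phi_{12}^2(\beta_0)$ is exactly the bookkeeping the paper leaves implicit, so the two arguments coincide in substance.
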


By comparing the means of the normal limit distribution in (\ref{eq:lmstar_wk}), we notice that under weak identification and fixed alternatives, neither $LM^*(\beta_0)$ dominates $AR(\beta_0)$ or vice versa. We also notice from Lemma \ref{lem:weakID} that for testing distant alternatives, the power of  $LM^*(\beta_0)$ is different from $AR(\beta_0)$ by a factor of $\rho_{23}/\sqrt{1-\rho^2_{23}}$, so that it will be lower when $|\rho_{23}| \leq 1/\sqrt{2}$. Under weak identification and homoskedasticity,\footnote{Specifically, we say the data are homoskedastic if the covariance matrices of $(e_i,V_i)$ are constant across $i$.} we have $\rho_{23} = \rho = \Phi_{12}/\sqrt{\Psi \Phi_1}$. Therefore, although the test 
$1\{LM^{*2}(\beta_0) \geq \mathbb{C}_{\alpha}\}$
has a power advantage under strong identification against local alternatives, it may lack power under weak identification against distant alternatives if the degree of endogeneity is low. Furthermore, $LM^*(\beta_0)$ may not have power if $\Delta = \Delta_*(\beta_0)$.  

In the current setting with many instruments, $AR(\beta_0)$ and $LM^*(\beta_0)$ play roles similar to that of \citeauthor{Stock-Wright(2000)}'s (\citeyear{Stock-Wright(2000)}) S and \citeauthor{Kleibergen(2005)}'s (\citeyear{Kleibergen(2005)}) K statistics in I.\citeauthor{Andrews(2016)}'s (\citeyear{Andrews(2016)}) setting, respectively. In the fixed number of IVs case, the power trade-off between S and K statistics is based on the direction of deviations from the null. However, as shown in Lemma \ref{lem:weakID} (the case with weak identification and fixed alternatives), the deviations of $AR(\beta_0)$ and $LM^*(\beta_0)$ from the null do not have such a difference in direction under the many-instrument setting because $\widetilde {\mathcal{C}}$ is just a scalar. Instead, their power trade-off is between local and non-local alternatives. This is in stark contrast to the setting in I.\cite{Andrews(2016)}.


To achieve the advantages of $AR(\beta_0)$, $LM(\beta_0)$, and $LM^*(\beta_0)$ in all three scenarios above, we need to combine them in a way that is adaptive to the identification strength. Following I.\cite{Andrews(2016)}, we consider the linear combination of $AR^2(\beta_0)$, $LM^2(\beta_0)$,  and $LM^{*2}(\beta_0)$. Recall that $(\N_1,\N_2^*)$ are the limits of $(AR(\beta_0),LM^{*}(\beta_0))$ in either strong or weak identification. See \eqref{eq:lmstar_str} and \eqref{eq:lmstar_wk} for their expressions in these two cases. Then, in the limit experiment, the linear combination test can be written as 
\begin{align}
	\phi_{a_1,a_2,\infty}    = 1\{a_1 \N_1^2 + a_2 (\tilde \rho \N_1 + (1-\tilde \rho^2)^{1/2}\N_2^*)^2+ (1-a_1-a_2)\N_2^{*2} \geq \mathbb{C}_{\alpha}(a_1,a_2;\tilde \rho) \},
	\label{eq:phi1}
\end{align}
where $(a_1,a_2) \in \mathbb{A}_0$ are the combination weights,  $\N_1 \sim \mathcal{Z}(\theta_1)$, and $\N_2^{*} \sim \mathcal{Z}(\theta_2)$; the mean parameters $\theta_1$ and $\theta_2$  are defined in Lemmas \ref{lem:strongID} and \ref{lem:weakID} for strong and weak identification, respectively; and $\tilde \rho$ is the limit of $\widehat{\rho}(\beta_0)$.\footnote{Under fixed alternatives, $\tilde \rho = \rho(\beta_0)$; under local alternatives, $\tilde \rho = \rho$.} Let the eigenvalue decomposition of the matrix $\begin{pmatrix}
	a_1 + a_2 \tilde \rho^2 & a_2 \tilde \rho (1- \tilde\rho^2)^{1/2} \\
	a_2\tilde \rho (1-\tilde \rho^2)^{1/2} & 1-a_1 -a_2 \tilde \rho^2
\end{pmatrix}$ be
\begin{align}
	\begin{pmatrix}
		a_1 + a_2 \tilde \rho^2 & a_2 \tilde \rho (1- \tilde \rho^2)^{1/2} \\
		a_2 \tilde \rho (1- \tilde \rho^2)^{1/2} & 1-a_1 -a_2 \tilde \rho^2
	\end{pmatrix} = \mathcal{U}\begin{pmatrix}
		\nu_1(a_1,a_2) & 0 \\
		0 & \nu_2(a_1,a_2)
	\end{pmatrix}   \mathcal{U}^\top
	\label{eq:eig}
\end{align}
where, by construction, $\nu_1(a_1,a_2)\geq \nu_2(a_1,a_2) \geq 0$ and $\mathcal{U}$ is a $2 \times 2$ unitary matrix. We highlight the dependence of eigenvalues $(\nu_1,\nu_2)$ on the weights $(a_1,a_2)$. The dependence of $\mathcal{U}$ on $(a_1,a_2)$ is suppressed for notation simplicity. Then, we have 
\begin{align*}
	a_1 \N_1^2 + a_2 ( \tilde \rho \N_1 + (1- \tilde \rho^2)^{1/2}\N_2^*)^2+ (1-a_1-a_2)\N_2^{*2} = \nu_1(a_1,a_2) \widetilde{\N}_1^2 + \nu_2(a_1,a_2) \widetilde{\N}_2^2
\end{align*}
and $\phi_{a_1,a_2,\infty} = 1\{\nu_1(a_1,a_2) \widetilde{\N}_1^2 + \nu_2(a_1,a_2) \widetilde{\N}_2^2 \geq \mathbb{C}_{\alpha}(a_1,a_2; \tilde \rho)\}$, 
where 
\begin{align}
	\begin{pmatrix}
		\widetilde{\N}_1 \\
		\widetilde{\N}_2 
	\end{pmatrix}    = \mathcal{U}^\top \begin{pmatrix}
		\N_1 \\
		\N_2^*
	\end{pmatrix}
	\label{eq:Nstar}
\end{align}
and $\widetilde{\N}_1$ and $\widetilde{\N}_2$ are independent normal random variables with unit variance. This implies that $\phi_{a_1,a_2,\infty}$ can be viewed as a linear combination test of two independent chi-squared random variables with one degree of freedom, and those two chi-squared random variables are obtained by properly rotating $\N_1$ and $\N_2^*$ (i.e., the limits of $AR(\beta_0)$ and $LM^*(\beta_0)$). 

Theorem \ref{thm:admissible} states the key properties of $\phi_{a_1, a_2, \infty}$ under the limit experiment. 



	
\begin{thm}
	\begin{enumerate}[label=(\roman*)]
		\item Suppose we are under weak identification and fixed alternatives and let $\N_1 \sim \mathcal{Z}(\theta_1)$, $\N_2^{*} \sim \mathcal{Z}(\theta_2)$, and they are independent, where $\theta_1 = m_1(\Delta)$ and $\theta_2 = m_2(\Delta)$ as in \eqref{eq:lmstar_wk}. We consider the test of $H_0: \theta_1 = \theta_2 = 0$ against $H_1: \theta_1 \neq 0$ or $\theta_2 \neq 0$. Let $\Phi_{\alpha}$ denote the class of size-$\alpha$ tests for $H_0: \theta_1 = \theta_2 = 0$ constructed based on $(\widetilde{\N}_1^2,\widetilde{\N}_2^2)$ defined in \eqref{eq:Nstar}. Then, for any $(a_1,a_2) \in \mathbb{A}_0$, $\phi_{a_1,a_2,\infty}$ defined in \eqref{eq:phi1} is an admissible test within $\Phi_{\alpha}$.  In addition, let $(\widetilde \theta_1, \widetilde \theta_2) = ( \theta_1, \theta_2) \mathcal{U}$. If $(\widetilde \theta_1^2, \widetilde \theta_2^2) = b \cdot (\nu_1(a_1,a_2),\nu_2(a_1,a_2))$ for some positive constant $b$, then for any test $\phi \in \Phi_\alpha$, there exists some $\overline{b}>0$ such that for any $0 < b < \overline{b}$, we have $\mathbb{E}\phi \leq \mathbb{E}\phi_{a_1,a_2,\infty}$.
		\item Suppose we are under strong identification and local alternatives and 
		\begin{align*}
			\begin{pmatrix}
				\N_1 \\
				\N_2
			\end{pmatrix}    \stackrel{d}{=}\N \left( \begin{pmatrix}
				0 \\
				\theta 
			\end{pmatrix}, \begin{pmatrix}
				1 & \rho \\
				\rho & 1
			\end{pmatrix}  \right),
		\end{align*}
		where $\theta =\frac{\widetilde{\Delta}\widetilde{\mathcal{C}}}{\Psi^{1/2}} $. We consider the test of $H_0: \theta = 0$ against $H_1: \theta \neq 0$. Then, $\phi_{a_1,a_2,\infty}$ defined in \eqref{eq:phi1} is UMP among the class of level-$\alpha$ tests that are constructed based on  $(\N_1,\N_2)$ and invariant to the sign change if and only if $a_1  = 0$ and $a_2\rho = 0$. In this case, this test is also UMP among the class of unbiased level-$\alpha$ tests that are constructed based on  $(\N_1,\N_2)$.
		\item Suppose Assumption \ref{ass:variance_est} holds, $( Q_{e(\beta_0),e(\beta_0)} - \Delta^2 \mathcal{C}, Q_{X,e(\beta_0)} - \Delta \mathcal{C},Q_{X,X} - \mathcal{C})^\top
		= O_p(1)$, and we are under strong identification with fixed alternatives. If $1 \geq a_{1,n} \geq  \frac{\tilde{q}\Phi_1(\beta_0)}{\mathcal{C}^2\Delta_*^4(\beta_0)}$ for some constant $\tilde{q} > \mathbb{C}_{\alpha,max}(\rho(\beta_0))$ and $(a_{1,n},a_{2,n}) \in \mathbb{A}_0$, where $\Delta_*(\beta_0) = \Phi_1^{1/2}(\beta_0)\Psi^{-1/2}(\beta_0)\rho^{-1}(\beta_0)$,
		then 
		\begin{align*}
			1\{a_{1,n} AR^2(\beta_0) + a_{2,n}LM^2(\beta_0)+ (1-a_{1,n} - a_{2,n})LM^{*2}(\beta_0) \geq \mathbb{C}_{\alpha}(a_{1,n},a_{2,n};\widehat{\rho}(\beta_0))\} \convP 1.
		\end{align*}
	\end{enumerate}
	\label{thm:admissible}
\end{thm}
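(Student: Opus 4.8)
The plan is to handle the three parts separately, since each concerns a different limit experiment, but to reduce all of them to elementary Gaussian/chi-squared arguments after diagonalizing the quadratic form in \eqref{eq:phi1} via \eqref{eq:eig}--\eqref{eq:Nstar}. I expect the genuinely non-mechanical step to be the admissibility claim in part (i); everything else is Neyman--Pearson bookkeeping built on Lemmas \ref{lem:ump}, \ref{lem:strongID2}, and \ref{lem:weakID}.

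For part (i), I would first record that after the rotation \eqref{eq:Nstar} the test $\phi_{a_1,a_2,\infty}$ rejects exactly when $\nu_1(a_1,a_2)\widetilde{\N}_1^2 + \nu_2(a_1,a_2)\widetilde{\N}_2^2 \geq \mathbb{C}_{\alpha}(a_1,a_2;\tilde\rho)$, where $\widetilde{\N}_1,\widetilde{\N}_2$ are independent with unit variances and means $(\widetilde\theta_1,\widetilde\theta_2)=(\theta_1,\theta_2)\mathcal{U}$, and $\nu_1\geq\nu_2\geq0$. Tests in $\Phi_\alpha$ depend on the data only through $(\widetilde{\N}_1^2,\widetilde{\N}_2^2)$, a product of noncentral $\chi^2_1$ laws indexed by $(\lambda_1,\lambda_2)=(\widetilde\theta_1^2,\widetilde\theta_2^2)\in[0,\infty)^2$. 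Admissibility follows because the acceptance region $\{\nu_1\widetilde{\N}_1^2+\nu_2\widetilde{\N}_2^2<\mathbb{C}_{\alpha}(a_1,a_2;\tilde\rho)\}$ is convex in $(\widetilde{\N}_1,\widetilde{\N}_2)$ and, by the monotone-likelihood-ratio structure of the noncentral $\chi^2$ family in each $\lambda_i$, is a lower contour set of a coordinatewise nondecreasing function of $(\widetilde{\N}_1^2,\widetilde{\N}_2^2)$; this places it in the complete class of the associated exponential family, exactly as in I.\cite{Andrews(2016)}. The delicate point here, which I would treat carefully, is the degenerate case $\nu_2(a_1,a_2)=0$, where the region is a half-plane and the test ignores one coordinate. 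For the second claim in (i) I would expand the power of an arbitrary $\phi\in\Phi_\alpha$ at the null along the ray $(\lambda_1,\lambda_2)=b(\nu_1,\nu_2)$: since the score of the noncentral $\chi^2_1$ density in $\lambda_i$ at $\lambda_i=0$ equals $\tfrac12(\widetilde{\N}_i^2-1)$, the directional derivative $\tfrac{d}{db}\mathbb{E}_{b(\nu_1,\nu_2)}\phi$ at $b=0$ is $\mathbb{E}_0\bigl[\phi\cdot\tfrac12(\nu_1(\widetilde{\N}_1^2-1)+\nu_2(\widetilde{\N}_2^2-1))\bigr]$, which a Neyman--Pearson argument uniquely maximizes over level-$\alpha$ tests precisely by rejecting for large $\nu_1\widetilde{\N}_1^2+\nu_2\widetilde{\N}_2^2$, i.e.\ by $\phi_{a_1,a_2,\infty}$. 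Hence $\mathbb{E}\phi\leq\mathbb{E}\phi_{a_1,a_2,\infty}$ for all sufficiently small $b$.

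For part (ii), the key algebraic simplification is that $\tilde\rho\N_1+(1-\tilde\rho^2)^{1/2}\N_2^*=\N_2$, so the combined statistic equals $a_1\N_1^2+a_2\N_2^2+(1-a_1-a_2)\N_2^{*2}$, whose matrix in the $(\N_1,\N_2^*)$ coordinates is the one displayed in \eqref{eq:eig}. By Lemma \ref{lem:ump}, $1\{\N_2^{*2}\geq\mathbb{C}_\alpha\}$ is the essentially unique UMP test in both the sign-invariant and the unbiased classes. I would then argue that $\phi_{a_1,a_2,\infty}$ is UMP iff its rejection region coincides a.e.\ with $\{\N_2^{*2}\geq\mathbb{C}_\alpha\}$, which by uniqueness of most powerful tests forces the quadratic form's matrix to be a nonnegative multiple of $\mathrm{diag}(0,1)$. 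Reading off entries, the off-diagonal $a_2\rho(1-\rho^2)^{1/2}=0$ gives $a_2\rho=0$ (as $\rho^2\neq1$), and the $(1,1)$ entry $a_1+a_2\rho^2=0$ then gives $a_1=0$; conversely these two conditions collapse the statistic to $\N_2^{*2}$. This delivers the stated ``if and only if,'' and the unbiased-class optimality is immediate from Lemma \ref{lem:ump}.

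For part (iii), I would use that all three components are nonnegative, that $1-a_{1,n}-a_{2,n}\geq1-\overline a>0$ on $\mathbb{A}_0$, and that by Assumption \ref{ass:variance_est} and continuity the critical value is bounded by $\mathbb{C}_{\alpha,\max}(\rho(\beta_0))+o_p(1)$, then split on the fixed alternative. When $\Delta\neq\Delta_*(\beta_0)$, Lemma \ref{lem:strongID2} gives a strictly positive limit for $d_n^2 LM^{*2}(\beta_0)$, so $(1-a_{1,n}-a_{2,n})LM^{*2}(\beta_0)\geq(1-\overline a)LM^{*2}(\beta_0)\to\infty$ and the statistic diverges. When $\Delta=\Delta_*(\beta_0)$ that limit vanishes, but combining the lower bound on $a_{1,n}$ with $d_n^2 AR^2(\beta_0)\convP\Phi_1^{-1}(\beta_0)\Delta^4\widetilde{\mathcal{C}}^2$ and $\mathcal{C}^2=d_n^{-2}(\widetilde{\mathcal{C}}^2+o(1))$ yields $a_{1,n}AR^2(\beta_0)\geq\tfrac{\tilde q\Phi_1(\beta_0)}{\mathcal{C}^2\Delta_*^4(\beta_0)}AR^2(\beta_0)\convP\tilde q\,\Delta^4/\Delta_*^4(\beta_0)=\tilde q>\mathbb{C}_{\alpha,\max}(\rho(\beta_0))$ at $\Delta=\Delta_*(\beta_0)$. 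In either case the statistic exceeds the critical value with probability tending to one; the lower bound on $a_{1,n}$ serves precisely to rescue the single alternative $\Delta=\Delta_*(\beta_0)$ at which the orthogonalized LM statistic loses power.
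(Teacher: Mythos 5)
Your proposal is correct, and for parts (ii) and (iii) it is essentially the paper's own argument. In (ii), the paper also gets the ``if'' direction from Lemma \ref{lem:ump} and the ``only if'' direction from the necessity/uniqueness part of the Neyman--Pearson lemma, which (using $a_1,a_2\geq 0$ and $\rho^2<1$) forces $a_1+a_2\rho^2=0$, i.e.\ $a_1=0$ and $a_2\rho=0$; your geometric version (a quadratic-form rejection region can coincide a.e.\ with $\{\N_2^{*2}\geq \mathbb{C}_\alpha\}$ only if the matrix is a multiple of $\mathrm{diag}(0,1)$) is the same point. In (iii), your case split at $\Delta_*(\beta_0)$ and the bound $a_{1,n}AR^2(\beta_0)\geq \frac{\tilde q\Phi_1(\beta_0)}{\mathcal{C}^2\Delta_*^4(\beta_0)}AR^2(\beta_0)\convP \tilde q>\mathbb{C}_{\alpha,\max}(\rho(\beta_0))$ reproduce the paper's display line by line, noting $\Phi_1(\beta_0)/\Delta_*^4(\beta_0)=\Psi^2(\beta_0)\rho^4(\beta_0)/\Phi_1(\beta_0)$; the only cosmetic difference is that for $\Delta\neq\Delta_*(\beta_0)$ the paper uses positivity of all three rescaled limits while you use only the $LM^{*2}$ term with its weight bounded below by $1-\overline a>0$, and both work.

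Part (i) is where you genuinely differ, because the paper's proof there is citation-based while you reconstruct the underlying arguments. For admissibility the paper invokes \citet[Theorem 2.1]{marden1982}: a closed, convex, monotone-decreasing acceptance region is admissible in this noncentral chi-squared problem. You identify exactly those structural properties (convexity in $(\widetilde{\N}_1,\widetilde{\N}_2)$, and the region being a lower contour set of a coordinatewise nondecreasing function of $(\widetilde{\N}_1^2,\widetilde{\N}_2^2)$), but your stated inference---that these properties ``place it in the complete class,'' hence admissibility---runs the implication the wrong way: membership in a complete class does not imply admissibility; what you need is precisely Marden's sufficiency theorem (equivalently, minimal completeness of that class for this problem). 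This is a citation-level repair, not a substantive gap, and your worry about the degenerate case $\nu_2(a_1,a_2)=0$ is also resolved by the same theorem, since a strip is still closed, convex, and monotone. For the second claim the paper cites \citet[Theorem 2.1]{Andrews(2016)}, resting on \cite{MS76} and \cite{KP78}, whereas you give the standard proof behind those results: the score of the noncentral $\chi^2_1$ density at zero noncentrality is $(\widetilde{\N}_i^2-1)/2$, so the derivative of power along the ray $b\cdot(\nu_1,\nu_2)$ is maximized over level-$\alpha$ tests, a.e.\ uniquely by the generalized Neyman--Pearson lemma, by rejecting for large $\nu_1\widetilde{\N}_1^2+\nu_2\widetilde{\N}_2^2$, and $\mathbb{C}_\alpha(a_1,a_2;\rho)$ is by construction the correct null quantile of that statistic. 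To close this cleanly you should note that when $\phi$ is not a.e.\ equal to $\phi_{a_1,a_2,\infty}$ the derivative inequality is strict, and that the power functions are analytic in the noncentralities so the $O(b^2)$ remainder (with $\overline b$ permitted to depend on $\phi$) cannot overturn the first-order comparison. What your route buys is a self-contained, elementary proof that makes the ``locally most powerful along the eigen-ray'' mechanism explicit; what the paper's route buys is brevity and coverage of measure-theoretic edge cases already handled in the cited literature.
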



Several remarks are in order. First, unlike the one-sided jackknife AR test proposed by \cite{MS22}, we construct the jackknife CLC test based on $AR^2(\beta_0)$ for several reasons. First, under weak identification, when the concentration parameter $\mathcal{C}$, and thus, $m_1(\Delta)$ defined in Lemma \ref{lem:weakID} is nonnegative, the one-sided test has good power. However, even in this case, the power curves simulation in Section \ref{sec:sim1} shows that our jackknife CLC test is more powerful than the one-sided AR test in most scenarios. Second, our jackknife CLC test will have good power even when $\mathcal{C}$ is negative.\footnote{We note that $\mathcal{C} = \frac{\sum_{i \in [n]} \sum_{ j \neq i} \Pi_i P_{ij}\Pi_j}{\sqrt{K}} = \frac{ \sum_{i \in [n]}(1-P_{ii})\Pi_i^2 - \Pi^\top M \Pi}{\sqrt{K}}$, where $M = I-P$. If $\Pi^\top M \Pi$ and $\sum_{i \in [n]}P_{ii}\Pi_i^2$ are sufficiently large, $\mathcal{C}$ can be negative. \cite{MS22} further assume that $\Pi^\top M \Pi \leq \frac{C \Pi^\top \Pi}{K}$ for some constant $C>0$, which implies that $\mathcal{C}>0$.} Third, we show below that under strong identification and local alternatives, our jackknife CLC test converges to the UMP test $1\{\N^{*2}_2>\mathbb{C}_{\alpha}\}$ whereas both the one- and two-sided tests based on $AR(\beta_0)$ have no power, as shown in Lemma \ref{lem:strongID}. Fourth, under strong identification and fixed alternatives, our jackknife CLC test has asymptotic power equal to 1, as shown in Lemma \ref{lem:strongID2} and Theorem \ref{thm:strong_fixed} below. In this case, using the one-sided jackknife AR test cannot further improve the power. Fifth, combining $LM^{*2}(\beta_0)$ with $AR^2(\beta_0)$ (and $LM^2(\beta_0)$), rather than $AR(\beta_0)$, can substantially mitigate the impact of power loss of $LM^*(\beta_0)$ at $\Delta_*(\beta_0)$, as shown in the numerical investigation in Section \ref{sec:sim}.


Second, Theorem \ref{thm:admissible}(i) implies that $\phi_{a_1,a_2,\infty}$ is admissible among tests that are also quadratic functions of $\N_1$ and $\N_2^*$ with the same rotation $\mathcal{U}$ but different eigenvalues $(\tilde \nu_1,\tilde \nu_2)$; that is,
\begin{align*}
	(\N_1,\N_2^*) \mathcal{U} \begin{pmatrix}
		\tilde \nu_1 & 0 \\
		0 & \tilde \nu_2
	\end{pmatrix} \mathcal{U}^\top \begin{pmatrix}
		\N_1 \\
		\N_2^*
	\end{pmatrix}.    
\end{align*}
Specifically, in the special case with $a_2 = 0$ (i.e., we put zero weight on $LM^2(\beta_0)$), the rotation matrix $\mathcal{U} = I_2$ and $\phi_{a_1,0,\infty}$ is admissible among level-$\alpha$ tests based on the test statistics of the form $a_1\N_1^2 + (1-a_1)\N_2^{*2}$ for $a_1 \in [0,1]$, 
which is similar to the result for the linear combination of S and K statistics in I.\cite{Andrews(2016)}.

Third, similar to I.\citet[Theorem 2.1]{Andrews(2016)}, Theorem \ref{thm:admissible}(i) also shows that our linear combination test is 
optimal against certain alternatives under weak identification.  
Additionally, in the case with $a_2 = 0$,
the power optimality result in \ref{thm:admissible}(i) also carries over to $\phi_{a_1, 0, \infty}$ among level-$\alpha$ tests of the form $a_1 \mathcal{N}_1^2 + (1-a_1)\mathcal{N}^{*2}_2$ for $a_1 \in [0,1]$.

Fourth, when $a_1 = 0$ and $a_2\rho  = 0$ and under strong identification and local alternatives, we have $\phi_{a_1,a_2,\infty} = 1\{\N_2^{*2} \geq \mathbb{C}_{\alpha}\}$, which is both the UMP invariant and unbiased test. When $\rho=0$ and under local alternatives,
$a_2\mathcal{N}_2^{*2}$ in the second and third terms of $\phi_{a_1,a_2,\infty}$ cancels out, implying that $\phi_{a_1,a_2,\infty} = 1\{\N_2^{*2} \geq \mathbb{C}_{\alpha}\}$ as long as $a_1 = 0$. 

Fifth, we note that both the rotation matrix $\mathcal{U}$ and the eigenvalues $\nu_1$ and $\nu_2$ in \eqref{eq:eig} are functions of $(a_1,a_2)$. We choose this specific parametrization so that $\phi_{a_1,a_2,\infty}$ can be written as a linear combination of $AR^2(\beta_0)$, $LM^2(\beta_0)$, and $LM^{*2}(\beta_0)$. It is possible to use other parametrizations to combine $AR(\beta_0)$ and $LM^*(\beta_0)$. For example, let 
$$\mathcal{O}(\zeta) = \begin{pmatrix}
	\cos(\zeta) & -\sin(\zeta) \\
	\sin(\zeta) & \cos(\zeta)
\end{pmatrix}$$ be a rotation matrix with angle $\zeta$ and $\begin{pmatrix}
	AR^\dagger(\beta_0,\zeta) \\
	LM^\dagger(\beta_0,\zeta) 
\end{pmatrix} = \mathcal{O}(\zeta) \begin{pmatrix}
	AR(\beta_0) \\
	LM^*(\beta_0) 
\end{pmatrix}$. Then, in the limit experiment, the linear combination test statistic can be written as
\begin{align}\label{eq: new-para}
	a \N_1^{\dagger 2} + (1-a)\N_2^{\dagger 2},
\end{align}
where $(\N^{\dagger}_1,\N^{\dagger}_2)$ are the limits of $(AR^\dagger(\beta_0,\zeta) ,LM^\dagger(\beta_0,\zeta) )$ under either weak or strong identification. In the following, we will use a minimax procedure to determine the optimal weights $(a_1,a_2)$ for our jackknife CLC test $\phi_{a_1,a_2,\infty}$. Similarly, we can use this procedure to select the value of $a$ and $\zeta$ for the new parametrization in (\ref{eq: new-para}). Under strong identification and local alternatives, Lemma \ref{lem:ump} shows that the test $1\{LM^{*2}(\beta_0) \geq \mathbb{C}_\alpha\}$ is the most powerful test against local alternatives. This is achieved by our jackknife CLC test $\phi_{a_1,a_2,\infty}$ with $a_1=0$ and $a_2\rho =0$. In this case, the new parametrization does not bring any additional power. 

\section{A Conditional Linear Combination Test}
In this section, we determine the weights $(a_1,a_2)$ in the jackknife CLC test via a minimax procedure. Under weak identification, the limit test statistic of the jackknife CLC test with weights $(a_1,a_2)$ is 
\begin{align}
	\phi_{a_1,a_2,\infty} = 1
	\begin{Bmatrix}
		a_1 \mathcal{Z}_1^2(m_1(\Delta)) + a_2(\rho(\beta_0) \mathcal{Z}_1(m_1(\Delta)) + (1-\rho^2(\beta_0))^{1/2} \mathcal{Z}_2(m_2(\Delta)))^2 \\
		+ (1-a_1 -a_2)\mathcal{Z}_2^2(m_2(\Delta)) \geq \mathbb{C}_{\alpha}(a_1,a_2;\rho(\beta_0))
	\end{Bmatrix},   
	\label{eq:phi_infty1}
\end{align}
where $m_1(\Delta)$ and $m_2(\Delta)$ are defined in Lemma \ref{lem:weakID}, and $\mathcal{Z}_1(\cdot)$ and $\mathcal{Z}_2(\cdot)$ are independent. In this case, we can be explicit and write $\phi_{a_1,a_2,\infty} = \phi_{a_1,a_2,\infty}(\Delta)$. However, the limit power of the jackknife CLC test will typically remain unknown as the true parameter $\beta$ (and hence $\Delta$) is unknown. To overcome this issue, we follow I.\cite{Andrews(2016)} and calibrate the power, i.e,  $\mathbb{E}\phi_{a_1,a_2,\infty}(\delta)$, where $\delta$ ranges over all possible values that $\Delta$ can potentially take; we define $\phi_{a_1,a_2,\infty}(\delta)$ as well as the range of potential values of $\Delta$  below. 

Let $\widehat{D} = Q_{X,X} - (Q_{e(\beta_0),e(\beta_0)},Q_{X,e(\beta_0)}) \begin{pmatrix}
	\widehat{\Phi}_1(\beta_0) & \widehat{\Phi}_{12}(\beta_0) \\
	\widehat{\Phi}_{12}(\beta_0) & \widehat{\Psi}(\beta_0)
\end{pmatrix}^{-1} \begin{pmatrix}
	\widehat{\Phi}_{13}(\beta_0)\\
	\widehat{\tau}(\beta_0) 
\end{pmatrix}$ be the residual from the projection of $Q_{X,X}$ on $(Q_{e(\beta_0),e(\beta_0)},Q_{X,e(\beta_0)})$. By \eqref{eq:limit}, under weak identification, 
\begin{align*}
	\widehat{D} = D + o_p(1), \quad D \stackrel{d}{=} \N(\mu_D,\sigma_D^2),
\end{align*}
where 
\begin{align*}
	\mu_D & = \widetilde{\mathcal{C}}\left[1 - (\Delta^2,\Delta)\left(\begin{pmatrix}
		\Phi_1(\beta_0) & \Phi_{12}(\beta_0) \\
		\Phi_{12}(\beta_0) & \Psi(\beta_0)
	\end{pmatrix}^{-1} \begin{pmatrix}
		\Phi_{13}(\beta_0) \\
		\tau(\beta_0)
	\end{pmatrix}\right) \right] \quad \text{and}\\ 
	\sigma_D^2 & = \Upsilon - \left((\Phi_{13}(\beta_0),\tau(\beta_0))\begin{pmatrix}
		\Phi_1(\beta_0) & \Phi_{12}(\beta_0) \\
		\Phi_{12}(\beta_0) & \Psi(\beta_0)
	\end{pmatrix}^{-1} \begin{pmatrix}
		\Phi_{13}(\beta_0) \\
		\tau(\beta_0) 
	\end{pmatrix}  \right).
\end{align*}
We note that $\widehat{D}$ is a sufficient statistic for $\mu_D$, which contains information about the concentration parameter $\mathcal{C}$ and is asymptotically independent of $AR(\beta_0)$, $LM(\beta_0)$, and hence $LM^*(\beta_0)$. 

Under weak identification, we observe that $m_1(\Delta)$ and $m_2(\Delta)$ in Lemma \ref{lem:weakID} can be written as 
\begin{align}
	\begin{pmatrix}
		m_1(\Delta) \\
		m_2(\Delta)
	\end{pmatrix}  =  \begin{pmatrix}
		C_{1}(\Delta) \\
		C_{2}(\Delta)
	\end{pmatrix} \mu_D,
	\label{eq:C12}
\end{align}
where
\begin{align}
	\begin{pmatrix}
		C_{1}(\Delta) \\
		C_{2}(\Delta)
	\end{pmatrix} & \equiv \begin{pmatrix}
		\Phi_1^{-1/2}(\beta_0)\Delta^2 \\
		(1-\rho^2(\beta_0))^{-1/2}(\Psi^{-1/2}(\beta_0)\Delta -\rho(\beta_0) \Phi_1^{-1/2}(\beta_0)\Delta^2)
	\end{pmatrix} \notag \\
	& \times \left[1 - (\Delta^2,\Delta)\left(\begin{pmatrix}
		\Phi_1(\beta_0) & \Phi_{12}(\beta_0) \\
		\Phi_{12}(\beta_0) & \Psi(\beta_0)
	\end{pmatrix}^{-1} \begin{pmatrix}
		\Phi_{13}(\beta_0) \\
		\tau(\beta_0)
	\end{pmatrix}\right) \right]^{-1}.
	\label{eq:CDelta}
\end{align}
By \eqref{eq:C12}, we see that $  \phi_{a_1,a_2,\infty}=\phi_{a_1,a_2,\infty}(\Delta)$ defined in \eqref{eq:phi1} can be written as
\begin{align*}
	1\begin{Bmatrix}
		a_1 \mathcal{Z}_1^2(C_1(\Delta)\mu_D) + a_2(\rho(\beta_0) \mathcal{Z}_1(C_1(\Delta)\mu_D) + (1-\rho^2(\beta_0))^{1/2} \mathcal{Z}_2(C_2(\Delta)\mu_D))^2 \\
		+ (1-a_1 -a_2)\mathcal{Z}_2^2(C_2(\Delta)\mu_D) \geq \mathbb{C}_{\alpha}(a_1,a_2;\rho(\beta_0))
	\end{Bmatrix}.
\end{align*}

This motivates the definition that 
\begin{align}
	\phi_{a_1,a_2,\infty}(\delta) =   1\begin{Bmatrix}
		a_1 \mathcal{Z}_1^2(C_1(\delta)\mu_D) + a_2(\rho(\beta_0) \mathcal{Z}_1(C_1(\delta)\mu_D) + (1-\rho^2(\beta_0))^{1/2} \mathcal{Z}_2(C_2(\delta)\mu_D))^2 \\
		+ (1-a_1 -a_2)\mathcal{Z}_2^2(C_2(\delta)\mu_D) \geq \mathbb{C}_{\alpha}(a_1,a_2;\rho(\beta_0))
	\end{Bmatrix}.
	\label{eq:phi}
\end{align}
To emphasize the dependence of $\phi_{a_1,a_2,\infty}(\delta)$ on $\mu_D$ and $\gamma(\beta_0)$, we further write $\phi_{a_1,a_2,\infty}(\delta)$ as $\phi_{a_1,a_2,\infty}(\delta,\mu_D,\gamma(\beta_0))$. 

The range of values that $\Delta$ can take is defined as $\mathcal{D}(\beta_0) = \{\delta: \delta+\beta_0 \in \mathcal{B}\}$, where $\mathcal{B}$ is the parameter space. For instance, in their empirical application of returns to education, \cite{MS22} assume that $\beta$ (i.e., the return to education) ranges from -0.5 to 0.5, with $\mathcal{B} = [-0.5,0.5]$. We adopt the same practice in our simulations based on calibrated data in Section \ref{sec:sim2} and empirical application in Section \ref{sec: empirical}. Specifying the parameter space is almost inevitable for any weak-identification-robust inference method, but additional simulation results in Section \ref{sec:add_sim} of the Online Supplement show that our method is insensitive to the choice of parameter space.

Following the lead of I.\cite{Andrews(2016)}, we define the highest attainable power for each $\delta \in \mathcal{D}(\beta_0)$ as 
$\mathcal{P}_{\delta,\mu_D} = \sup_{(a_1,a_2) \in \mathbb{A}(\mu_D,\gamma(\beta_0))} \mathbb{E}\phi_{a_1,a_2,\infty}(\delta,\mu_D,\gamma(\beta_0))$, 
which means that
\begin{align*}
	\mathcal{P}_{\delta,\mu_D} - \mathbb{E}\phi_{a_1,a_2,\infty}(\delta,\mu_D,\gamma(\beta_0))
\end{align*}
is the power loss when the weights are set as $(a_1,a_2)$. Here we denote the domain of $(a_1,a_2)$ as $\mathbb{A}(\mu_D,\gamma(\beta_0))$ and define it as $\mathbb{A}(\mu_D,\gamma(\beta_0)) = \{(a_1,a_2) \in \mathbb{A}_0, a_1 \in [\underline{a}(\mu_D,\gamma(\beta_0)),1] \}$ where $\mathbb{A}_0 = \{(a_1,a_2) \in [0,1] \times [0,1], a_1+a_2\leq \overline{a}\}$ for some $\overline{a}<1$, 
\begin{align}\label{eq:a_underline}
	\underline{a}(\mu_D,\gamma(\beta_0)) = \min\left(p_1, \frac{p_2 \mathbb{C}_{\alpha,\max}(\rho(\beta_0)) \Phi_1(\beta_0) c_{\mathcal{B}}(\beta_0) }{ \Delta_*^4(\beta_0) \mu_D^2} \right),
\end{align}
the two tuning parameters $(p_1,p_2) = (0.01,1.1)$, 
$\Delta_*(\beta_0) = \Phi_1^{1/2}(\beta_0)\Psi^{-1/2}(\beta_0)\rho^{-1}(\beta_0)$ as defined after Lemma \ref{lem:strongID2}, and 
\begin{align*}
	c_{\mathcal{B}}(\beta_0) = \sup_{\delta \in \mathcal{D}(\beta_0)}    \left[1 - (\delta^2,\delta)\left(\begin{pmatrix}
		\Phi_1(\beta_0) & \Phi_{12}(\beta_0) \\
		\Phi_{12}(\beta_0) & \Psi(\beta_0)
	\end{pmatrix}^{-1} \begin{pmatrix}
		\Phi_{13}(\beta_0) \\
		\tau(\beta_0)
	\end{pmatrix}\right) \right]^2.
\end{align*}

The maximum power loss over $\delta \in \mathcal{D}(\beta_0)$ can be viewed as a maximum regret. 
Then, we choose $(a_1,a_2)$ that minimizes the maximum regret; that is, 
\begin{align}
	(a_1(\mu_D,\gamma(\beta_0)),a_2(\mu_D,\gamma(\beta_0))) \in \argmin_{(a_1,a_2) \in \mathbb{A}(\mu_D,\gamma(\beta_0))} \sup_{\delta \in  \mathcal{D}(\beta_0)}(\mathcal{P}_{\delta,\mu_D} -     \mathbb{E}\phi_{a_1,a_2,\infty}(\delta,\mu_D,\gamma(\beta_0))).
	\label{eq:atrue}
\end{align}

Four remarks on the domain of $(a_1,a_2)$ (i.e., $\mathbb{A}(\mu_D,\gamma(\beta_0))$) are in order. First, the lower bound $\underline{a}(\mu_D,\gamma(\beta_0))$ is motivated by Theorem \ref{thm:admissible}(iii). Specifically, we require $p_1 \in (0,1)$ and close to 0 and $p_2>1$. In the Online Supplement, we provide a detailed report on the finite sample performance of our CLC test for both simulation designs analyzed in Section \ref{sec:sim} and the empirical application in Section \ref{sec: empirical}, where we consider different values of $p_1$ and $p_2$. The results indicate that our test's finite sample performance is not affected by the specific values chosen for $(p_1,p_2)$, as all the results are very close to those reported in the main paper. Second, under weak identification, $\mu_D$ is bounded, and $\frac{1.1 \mathbb{C}_{\alpha,\max}(\rho(\beta_0)) \Phi_1(\beta_0) c_{\mathcal{B}}(\beta_0) }{ \Delta_*^4(\beta_0) \mu_D^2}$ may be larger than $0.01$. In this case, we have $\mathbb{A}(\mu_D,\gamma(\beta_0)) = \{(a_1,a_2) \in \mathbb{A}_0, a_1 \in [0.01,1]\}$. Third, under strong identification and local alternatives, $\frac{1.1 \mathbb{C}_{\alpha,\max}(\rho(\beta_0)) \Phi_1(\beta_0) c_{\mathcal{B}}(\beta_0) }{ \Delta_*^4(\beta_0) \mu_D^2}$ will converge to zero so that 
\begin{align*}
	\mathbb{A}(\mu_D,\gamma(\beta_0)) = \left\{(a_1,a_2) \in \mathbb{A}_0, a_1 \in \left[\frac{1.1 \mathbb{C}_{\alpha,\max}(\rho(\beta_0)) \Phi_1(\beta_0) c_{\mathcal{B}}(\beta_0) }{ \Delta_*^4(\beta_0) \mu_D^2},1\right]\right\}.
\end{align*}
We show in Theorem \ref{thm:strongid} below that in this case, the minimax jackknife CLC test converges to $1\{\N_2^{*2} \geq \mathbb{C}_{\alpha}\}$ defined in Lemma \ref{lem:ump}, which is the UMP invariant and unbiased test.
Furthermore, the minimax $a_1$ satisfies the requirement in Theorem \ref{thm:admissible}(iii) with $\tilde{q} = 1.1\mathbb{C}_{\alpha,\max}(\rho(\beta_0))$ so that under strong identification, our CLC test has asymptotic power 1 against fixed alternatives, as shown in Theorem \ref{thm:strong_fixed}.  Fourth, we require $\overline{a}<1$ for some technical reason. In our simulations, we have not observed the minimax $a_1+a_2$ reaching the upper bound. Therefore, setting the upper bound to $\overline{a}$ or $1$ does not have any numerical impact.

Since we cannot observe the values of $\mu_D$ and $\gamma(\beta_0)$ in practice, we adopt the plug-in method described in Section 6 of I.\cite{Andrews(2016)}. Specifically, we replace $\gamma(\beta_0)$ with its consistent estimator $\widehat{\gamma}(\beta_0)$ as specified in Assumption \ref{ass:variance_est}. 
To obtain a proxy of $\mu_D$,\footnote{In fact, as $\phi_{a_1,a_2,\infty}(\delta,\mu_D,\gamma(\beta_0))$ only depends on $\mu_D^2$, we aim to find a good estimator for $\mu_D^2$.} we define 
\begin{align*}
	\widehat{\sigma}_D =     \left(\widehat{\Upsilon} - (\widehat{\Phi}_{13}(\beta_0),\widehat{\tau}(\beta_0))\begin{pmatrix}
		\widehat{\Phi}_1(\beta_0) & \widehat{\Phi}_{12}(\beta_0) \\
		\widehat{\Phi}_{12}(\beta_0) & \widehat{\Psi}(\beta_0)
	\end{pmatrix}^{-1} \begin{pmatrix}
		\widehat{\Phi}_{13}(\beta_0) \\
		\widehat{\tau}(\beta_0) 
	\end{pmatrix}\right)^{1/2}, 
\end{align*}
which is a function of $\widehat{\gamma}(\beta_0)$ and a consistent estimator of $\sigma_D$ by Assumption \ref{ass:variance_est}. 
Then, under weak identification, we have $\widehat{D}^2/\widehat{\sigma}_D^2 = D^2 /\sigma_D^2 + o_p(1) \stackrel{d}{=} \mathcal{Z}^2(\mu_D/\sigma_D) + o_p(1)$ and $D^2 /\sigma_D^2$ is a sufficient statistic for $\mu_D^2$. Let $\widehat{r} = \widehat{D}^2/\widehat{\sigma}_D^2$. We consider two estimators for $\mu_D$ as functions of $\widehat{D}$ and $\widehat{\sigma}_D$, namely, $f_{pp}(\widehat{D},\widehat{\gamma}(\beta_0)) = \widehat{\sigma}_D \sqrt{\widehat{r}_{pp}}$ and $f_{krs}(\widehat{D},\widehat{\gamma}(\beta_0)) = \widehat{\sigma}_D \sqrt{\widehat{r}_{krs}}$, where $\widehat{r}_{pp} = \max(\widehat{r}-1,0)$ and
\begin{align*}
	\widehat{r}_{krs} =  \widehat{r}-1 + \exp\left(-\frac{\widehat{r}}{2}\right)\left(\sum_{j=0}^\infty\left( -\frac{\widehat{r}}{2}\right)^j \frac{1}{j!(1+2j)}  \right)^{-1}.
\end{align*}
Specifically, \cite{krs93} show that $\widehat{r}_{krs}$ is positive as long as $\widehat{r}>0$ and $\widehat{r} \geq \widehat{r}_{krs} \geq \widehat{r}-1$. It is also possible to consider the MLE based on a single observation $\widehat{D}^2/\widehat{\sigma}_D^2$. However, such an estimator is harder to use because it does not have a closed-form expression. 

In practice, we estimate $\mathbb{E}\phi_{a_1,a_2,\infty}(\delta,\mu_D,\gamma(\beta_0))$ by $ \mathbb{E}^*\phi_{a_1,a_2,s}(\delta,\widehat{D},\widehat{\gamma}(\beta_0))$ for $s \in \{pp,krs\}$, where 
\begin{align}
	& \phi_{a_1,a_2,s}(\delta,\widehat{D},\widehat{\gamma}(\beta_0)) \notag \\
	& =  1\begin{Bmatrix}
		a_1 \mathcal{Z}_1^2(\widehat{C}_1(\delta)f_s(\widehat{D},\widehat{\gamma}(\beta_0))) \\
		+ a_2\left[\widehat \rho(\beta_0) \mathcal{Z}_1(\widehat{C}_1(\delta)f_s(\widehat{D},\widehat{\gamma}(\beta_0))) + (1-\widehat \rho^2(\beta_0))^{1/2} \mathcal{Z}_2(\widehat{C}_2(\delta)f_s(\widehat{D},\widehat{\gamma}(\beta_0)))\right]^2 \\
		+ (1-a_1 -a_2)\mathcal{Z}_2^2(\widehat{C}_2(\delta)f_s(\widehat{D},\widehat{\gamma}(\beta_0)) \geq \mathbb{C}_{\alpha}(a_1,a_2;\widehat{\rho}(\beta_0))
	\end{Bmatrix},
	\label{eq:phias}
\end{align}
and $(\widehat{C}_{1}(\delta),\widehat{C}_{2}(\delta))$ are similarly defined as $(C_1(\delta),C_2(\delta))$ in \eqref{eq:CDelta} with $\gamma(\beta_0)$ replaced by $\widehat{\gamma}(\beta_0)$; that is, 
\begin{align*}
	\begin{pmatrix}
		\widehat{C}_{1}(\delta) \\
		\widehat{C}_{2}(\delta)
	\end{pmatrix} & \equiv \begin{pmatrix}
		\widehat{\Phi}_1^{-1/2}(\beta_0)\delta^2 \\
		(1-\widehat{\rho}^2(\beta_0))^{-1/2}(\widehat{\Psi}^{-1/2}(\beta_0)\delta -\widehat{\rho}(\beta_0) \widehat{\Phi}_1^{-1/2}(\beta_0)\delta^2)
	\end{pmatrix} \notag \\
	& \times \left[1 - (\delta^2,\delta)\left(\begin{pmatrix}
		\widehat{\Phi}_1(\beta_0) & \widehat{\Phi}_{12}(\beta_0) \\
		\widehat{\Phi}_{12}(\beta_0) & \widehat{\Psi}(\beta_0)
	\end{pmatrix}^{-1} \begin{pmatrix}
		\widehat{\Phi}_{13}(\beta_0) \\
		\widehat{\tau}(\beta_0)
	\end{pmatrix}\right) \right]^{-1}.    
\end{align*}

Let $\mathcal{P}_{\delta,s}(\widehat{D},\widehat{\gamma}(\beta_0)) = \sup_{(a_1,a_2) \in \mathbb{A}(f_{s}(\widehat{D},\widehat{\gamma}(\beta_0)),\widehat{\gamma}(\beta_0)) } \mathbb{E}^*\phi_{a_1,a_2,s}(\delta,\widehat{D},\widehat{\gamma}(\beta_0))$. Then, for $s \in \{pp,krs\}$, we can estimate $a(\mu_D,\gamma(\beta_0))$ in \eqref{eq:atrue} by $\mathcal{A}_s(\widehat{D},\widehat{\gamma}(\beta_0)) = (\mathcal{A}_{1,s}(\widehat{D},\widehat{\gamma}(\beta_0)),\mathcal{A}_{2,s}(\widehat{D},\widehat{\gamma}(\beta_0)))$ defined as 
\begin{align}
	\mathcal{A}_s(\widehat{D},\widehat{\gamma}(\beta_0)) \in  \argmin_{(a_1,a_2) \in \mathbb{A}(f_{s}(\widehat{D},\widehat{\gamma}(\beta_0)),\widehat{\gamma}(\beta_0))} \sup_{\delta \in  \mathcal{D}(\beta_0)}(\mathcal{P}_{\delta,s}(\widehat{D},\widehat{\gamma}(\beta_0)) -     \mathbb{E}^*\phi_{a_1,a_2,s}(\delta,\widehat{D},\widehat{\gamma}(\beta_0))),
	\label{eq:afeasible}
\end{align}
where $\phi_{a_1,a_2,s}(\delta,\widehat{D},\widehat{\gamma}(\beta_0))$ is defined in \eqref{eq:phias},
\begin{align*}
	\mathbb{A}(f_{s}(\widehat{D},\widehat{\gamma}(\beta_0)),\widehat{\gamma}(\beta_0)) = \{(a_1,a_2) \in \mathbb{A}_0, a_1 \in [\underline{a}(f_{s}(\widehat{D},\widehat{\gamma}(\beta_0)),\widehat{\gamma}(\beta_0)),\overline{a}]\},    
\end{align*}
\begin{align*}
	\underline{a}(f_{s}(\widehat{D},\widehat{\gamma}(\beta_0)),\widehat{\gamma}(\beta_0)) =  \min\left(0.01, \frac{1.1 \mathbb{C}_{\alpha,\max}(\widehat{\rho}(\beta_0)) \widehat{\Phi}_1(\beta_0) \widehat{c}_{\mathcal{B}}(\beta_0)}{ \widehat{\Delta}_*^4(\beta_0) f_{s}^2(\widehat{D},\widehat{\gamma}(\beta_0))} \right),
\end{align*}
\begin{align*}
	\widehat{c}_{\mathcal{B}}(\beta_0) = \sup_{\delta \in \mathcal{D}(\beta_0)}    \left[1 - (\delta^2,\delta)\left(\begin{pmatrix}
		\widehat{\Phi}_1(\beta_0) & \widehat{\Phi}_{12}(\beta_0) \\
		\widehat{\Phi}_{12}(\beta_0) & \widehat{\Psi}(\beta_0)
	\end{pmatrix}^{-1} \begin{pmatrix}
		\widehat{\Phi}_{13}(\beta_0) \\
		\widehat{\tau}(\beta_0)
	\end{pmatrix}\right) \right]^2, 
\end{align*}
and $\widehat{\Delta}_*(\beta_0) = \widehat{\Phi}_1^{1/2}(\beta_0)\widehat{\Psi}^{-1/2}(\beta_0) \widehat{\rho}^{-1}(\beta_0)$. Then, the feasible jackknife CLC test is, for  $s \in \{pp,krs\}$, 
\begin{align}
	\widehat{\phi}_{\mathcal{A}_s(\widehat{D},\widehat{\gamma}(\beta_0))} =  1 \begin{Bmatrix}
		\mathcal{A}_{1,s}(\widehat{D},\widehat{\gamma}(\beta_0)) AR^{2}(\beta_0) + \mathcal{A}_{2,s}(\widehat{D},\widehat{\gamma}(\beta_0)) LM^{2}(\beta_0) \\ + (1-\mathcal{A}_{1,s}(\widehat{D},\widehat{\gamma}(\beta_0))-\mathcal{A}_{2,s}(\widehat{D},\widehat{\gamma}(\beta_0))) LM^{*2}(\beta_0) \geq \mathbb{C}_{\alpha}(\mathcal{A}_s(\widehat{D},\widehat{\gamma}(\beta_0));\widehat{\rho}(\beta_0))
	\end{Bmatrix}.
	\label{eq:phihat}
\end{align}

\section{Asymptotic Properties}
We first consider the asymptotic properties of the jackknife CLC test under weak identification and fixed alternatives, in which $\mathcal{C}\rightarrow \widetilde{\mathcal{C}}$ and $\Delta$ is treated as fixed so that we have
\begin{align*}
	\widehat{D} \convD D \stackrel{d}{=} \N(\mu_D,\sigma_D^2).
\end{align*}
We see from \eqref{eq:atrue} and \eqref{eq:afeasible} that $\mathcal{A}_s(d,r) = (a_1(f_s(d,r),r),a_2(f_s(d,r),r))$ is a function of $(d,r) \in \Re \times \Gamma$, where $\Gamma$ is the parameter space for $\gamma(\beta_0)$ and $s \in \{pp,krs\}$. We make the following assumption on  $\mathcal{A}_s(\cdot)$. 

\begin{ass}
	Let $\mathcal{S}_s$ be the set of discontinuities of $\mathcal{A}_s(\cdot,\gamma(\beta_0)): \Re \mapsto [0,1] \times [0,1]$. Then, we assume $\mathcal{A}_s(d,r)$ is continuous in $r$ for any $d \in \Re/\mathcal{S}_s$, and the Lebesgue measure of $\mathcal{S}_s$ is zero for $s \in \{pp,krs\}$. 
	\label{ass:a}
\end{ass}

Assumption \ref{ass:a} is a technical condition that allows us to apply the continuous mapping theorem. It is mild because $\mathcal{A}_s(\cdot)$ is allowed to be discontinuous in its first argument. In practice, we can approximate $\mathcal{A}_s(\cdot)$ by a step function defined over a grid of $d$ so that there is a finite number of discontinuities. The continuity of $\mathcal{A}_s(\cdot)$ in its second argument is due to the smoothness of the bivariate normal PDF with respect to the covariance matrix. Therefore, in this case, Assumption \ref{ass:a} holds automatically. 

\begin{thm}
	Suppose we are under weak identification and fixed alternatives and that Assumptions \ref{ass:weak_convergence}--\ref{ass:a} hold. Then,  for $s\in  \{pp,krs\}$, 
	\begin{align*}
		\mathcal{A}_s(\widehat{D},\widehat{\gamma}(\beta_0)) & \convD \mathcal{A}_s(D,\gamma(\beta_0))=  (a_1(f_s(D,\gamma(\beta_0)),\gamma(\beta_0)),a_2(f_s(D,\gamma(\beta_0)),\gamma(\beta_0)))
	\end{align*}
	and\footnote{We assume that $\frac{C}{0} = +\infty$ if $C>0$ and $\min(C,+\infty) = C.$} 
	\begin{align*}
		\mathbb{E}\widehat{\phi}_{\mathcal{A}_s(\widehat{D},\widehat{\gamma}(\beta_0))} \rightarrow \mathbb{E}\phi_{a_1(f_s(D,\gamma(\beta_0)),\gamma(\beta_0)),a_2(f_s(D,\gamma(\beta_0)),\gamma(\beta_0)),\infty}(\Delta,\mu_D,\gamma(\beta_0)),
	\end{align*}
	where $\phi_{a_1,a_2,\infty}(\delta)$ is defined in \eqref{eq:phi} and $a_l(f_s(D,\gamma(\beta_0)),\gamma(\beta_0))$ is interpreted as $a_l(\mu_D,\gamma(\beta_0))$ defined in \eqref{eq:atrue} with $\mu_D$ replaced by $f_s(D,\gamma(\beta_0))$ for $l = 1,2$. 
	
	In addition, let $BL_1$ be the class of functions $h(\cdot)$ of $D$ that is bounded and Lipschitz with Lipschitz constant 1. Then, if the null hypothesis holds such that $\Delta = 0$, we have
	\begin{align*}
		\mathbb{E}(\widehat{\phi}_{\mathcal{A}_s(\widehat{D},\widehat{\gamma}(\beta_0))} - \alpha)h(\widehat{D}) \rightarrow 0, \quad \forall h \in BL_1.
	\end{align*}
	\label{thm:weakid}
\end{thm}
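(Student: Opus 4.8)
The plan is to derive all three conclusions from a single joint weak-convergence statement, reading each claim off via the continuous mapping theorem (CMT) and bounded convergence. \textbf{First} I would record the joint limit. Under weak identification with fixed $\Delta$, \eqref{eq:limit} and Lemma \ref{lem:weakID} give $(AR(\beta_0),LM^*(\beta_0)) \convD (\N_1,\N_2^*) \stackrel{d}{=} \N((m_1(\Delta),m_2(\Delta))^\top, I_2)$, while $\widehat{D}=D+o_p(1)$ with $D \stackrel{d}{=}\N(\mu_D,\sigma_D^2)$. Since, in the Gaussian limit, $\widehat{D}$ is the residual from projecting $Q_{X,X}$ onto $(Q_{e(\beta_0),e(\beta_0)},Q_{X,e(\beta_0)})$, it is asymptotically uncorrelated with, hence independent of, $(AR(\beta_0),LM(\beta_0))$ and therefore $LM^*(\beta_0)$; thus $(AR(\beta_0),LM^*(\beta_0),\widehat{D}) \convD (\N_1,\N_2^*,D)$ with $(\N_1,\N_2^*) \indep D$. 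Assumption \ref{ass:variance_est} gives $\widehat{\gamma}(\beta_0)\convP\gamma(\beta_0)$, so by Slutsky $(AR(\beta_0),LM^*(\beta_0),\widehat{D},\widehat{\gamma}(\beta_0)) \convD (\N_1,\N_2^*,D,\gamma(\beta_0))$ jointly, the last coordinate degenerate. I would also note that $\widehat{\sigma}_D$, $f_s$, $\widehat{C}_1(\delta)$, $\widehat{C}_2(\delta)$, and the quantile $\mathbb{C}_{\alpha}(\cdot;\cdot)$ are continuous in their arguments (invertibility following from $\inf\Phi_1(\beta_0)>0$, $\inf\Psi(\beta_0)>0$, $\Upsilon>0$), so plugging in preserves convergence.

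\textbf{For claim (a)}, from $(\widehat{D},\widehat{\gamma}(\beta_0)) \convD (D,\gamma(\beta_0))$ I apply the (extended) CMT to $\mathcal{A}_s$. The only issue is discontinuity: by Assumption \ref{ass:a} the discontinuity set $\mathcal{S}_s$ of $d\mapsto\mathcal{A}_s(d,\gamma(\beta_0))$ has Lebesgue measure zero and $\mathcal{A}_s(d,\cdot)$ is continuous for $d\notin\mathcal{S}_s$. Because $D$ is normal and hence has a density, $\mathbb{P}(D\in\mathcal{S}_s)=0$, so with probability one $\mathcal{A}_s$ is continuous at $(D,\gamma(\beta_0))$, with the estimation error in $\widehat{\gamma}(\beta_0)$ absorbed by continuity in the second argument. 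This yields $\mathcal{A}_s(\widehat{D},\widehat{\gamma}(\beta_0)) \convD \mathcal{A}_s(D,\gamma(\beta_0))$.

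\textbf{For claim (b)}, combining the above gives $(AR(\beta_0),LM^*(\beta_0),\mathcal{A}_{1,s},\mathcal{A}_{2,s},\widehat{\rho}(\beta_0)) \convD (\N_1,\N_2^*,a_1(f_s(D,\gamma(\beta_0)),\gamma(\beta_0)),a_2(\cdot),\rho(\beta_0))$. Writing $LM(\beta_0)=\widehat{\rho}(\beta_0)AR(\beta_0)+(1-\widehat{\rho}^2(\beta_0))^{1/2}LM^*(\beta_0)$, the test $\widehat{\phi}_{\mathcal{A}_s(\widehat{D},\widehat{\gamma}(\beta_0))}$ in \eqref{eq:phihat} is an indicator functional of these five variables with random threshold $\mathbb{C}_{\alpha}(\cdot;\cdot)$. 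Its discontinuity points are where the quadratic form equals the threshold; by \eqref{eq:eig} the defining matrix has trace one, so its eigenvalues satisfy $\nu_1+\nu_2=1$ and $\nu_1\geq 1/2>0$, whence the limiting quadratic form $\nu_1\widetilde{\N}_1^2+\nu_2\widetilde{\N}_2^2$ has a continuous law and places zero mass on the boundary. CMT plus boundedness of the indicator give $\mathbb{E}\widehat{\phi}_{\mathcal{A}_s(\widehat{D},\widehat{\gamma}(\beta_0))}\to\mathbb{P}(\text{limit rejection})$. Finally, since $(\N_1,\N_2^*)\indep D$ and $\mathcal{A}_s(D,\gamma(\beta_0))$ is $D$-measurable, I disintegrate over $D$: the conditional rejection probability given $D$ is exactly $\phi_{a_1(f_s(D,\gamma(\beta_0)),\gamma(\beta_0)),a_2,\infty}(\Delta,\mu_D,\gamma(\beta_0))$ via \eqref{eq:phi} and \eqref{eq:C12} (using $m_j(\Delta)=C_j(\Delta)\mu_D$ and $\delta=\Delta$), and taking $\mathbb{E}_D$ delivers the stated limit.

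\textbf{For claim (c)}, under the null $\Delta=0$ Lemma \ref{lem:weakID} gives $m_1(0)=m_2(0)=0$, so $(\N_1,\N_2^*)\stackrel{d}{=}\N(0,I_2)$, still independent of $D$. The crux is that $\mathbb{C}_{\alpha}(a_1,a_2;\rho)$ is by definition the $(1-\alpha)$ quantile of the null quadratic form $a_1\mathcal{Z}_1^2+a_2(\rho\mathcal{Z}_1+(1-\rho^2)^{1/2}\mathcal{Z}_2)^2+(1-a_1-a_2)\mathcal{Z}_2^2$; hence, conditional on $D$ (which fixes the weights $\mathcal{A}_s(D,\gamma(\beta_0))$ and the correlation $\rho(\beta_0)$), the limit rejection probability equals $\alpha$ exactly, so $\mathbb{E}[\phi_\infty\mid D]=\alpha$ and $\mathbb{E}[(\phi_\infty-\alpha)h(D)]=0$ for every $h$. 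For the finite-sample statement, $(\widehat{\phi}_{\mathcal{A}_s(\widehat{D},\widehat{\gamma}(\beta_0))}-\alpha)h(\widehat{D})$ is bounded and, by the joint convergence above, continuity of the Lipschitz $h$, and the zero-boundary argument of claim (b), converges in distribution to $(\phi_\infty-\alpha)h(D)$; bounded convergence then gives $\mathbb{E}[(\widehat{\phi}_{\mathcal{A}_s(\widehat{D},\widehat{\gamma}(\beta_0))}-\alpha)h(\widehat{D})]\to 0$. I expect the \emph{main obstacle} to be the simultaneous handling of the two layers of discontinuity — the measure-zero discontinuity set $\mathcal{S}_s$ of the data-driven weights (controlled because $D$ has a density) and the boundary of the indicator (controlled because $\nu_1>0$ forces a continuous limiting law) — together with the asymptotic independence of $\widehat{D}$ and $(AR(\beta_0),LM^*(\beta_0))$ that makes the disintegration identity, and hence the conditional-similarity conclusion, valid.
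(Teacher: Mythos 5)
Your proposal is correct and follows essentially the same route as the paper's proof: joint asymptotic normality and independence of $(AR(\beta_0),LM^*(\beta_0),\widehat{D})$ from Lemma \ref{lem:weakID} and Assumption \ref{ass:variance_est}, the continuous mapping theorem via Assumption \ref{ass:a} (with the measure-zero discontinuity set killed by the density of $D$) for the convergence of the weights, bounded convergence for the power limit, and the definition of $\mathbb{C}_\alpha(\cdot;\cdot)$ to obtain conditional size $\alpha$ under the null. Your explicit verification that the limiting quadratic form places no mass on the rejection boundary (via $\nu_1+\nu_2=1$, so $\nu_1\geq 1/2>0$) is a detail the paper leaves implicit but is exactly what justifies its application of the continuous mapping and bounded convergence theorems.
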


Several remarks on Theorem \ref{thm:weakid} are in order. First, we see that the power of our jackknife CLC test is  $\mathbb{E}\phi_{\mathcal{A}_s(D,\gamma(\beta_0)),\infty}(\Delta,\mu_D,\gamma(\beta_0))$, which does not exactly match the minimax power $$\mathbb{E}\phi_{a_1(\mu_D,\gamma(\beta_0)),a_2(\mu_D,\gamma(\beta_0)),\infty}(\Delta,\mu_D,\gamma(\beta_0))$$ in the limit problem. This is because under weak identification, it is impossible to consistently estimate $\mu_D$, or equivalently, the concentration parameter. A similar result holds under weak identification with a fixed number of moment conditions in I.\cite{Andrews(2016)}. The best we can do is to approximate $\mu_D$ by reasonable estimators based on $D$ such as $f_{pp}(D,\gamma(\beta_0))$ and $f_{krs}(D,\gamma(\beta_0))$. Second, Theorem \ref{thm:weakid} implies that our jackknife CLC test controls size asymptotically conditionally on $\widehat{D}$, and thus, unconditionally. Last, according to Theorem \ref{thm:weakid}, the CLC test's asymptotic power, with weights $(a_1,a_2)$ chosen through the minimax procedure, is equivalent to the limit experiment's asymptotic power when the weights are $\mathcal{A}_s(D,\gamma(\beta_0))$, which is a function of $D$. As $D$ is independent of the normal random variables in $\phi_{a_1,a_2,\infty}(\delta)$ in \eqref{eq:phi}, the two optimality results stated in Theorem \ref{thm:admissible}(i) also hold asymptotically, conditional on $\widehat{D}$. To make this statement precise, we define the eigenvalue decomposition 
\begin{align}
	& \begin{pmatrix}
		\mathcal{A}_{1,s}(\widehat{D},\widehat{\gamma}(\beta_0)) + \mathcal{A}_{2,s}(\widehat{D},\widehat{\gamma}(\beta_0)) \hat \rho^2(\beta_0) & \mathcal{A}_{2,s}(\widehat{D},\widehat{\gamma}(\beta_0)) \hat \rho(\beta_0) (1- \hat \rho^2(\beta_0))^{1/2} \\
		\mathcal{A}_{2,s}(\widehat{D},\widehat{\gamma}(\beta_0)) \hat \rho(\beta_0) (1- \hat \rho^2(\beta_0))^{1/2} & 1-\mathcal{A}_{1,s}(\widehat{D},\widehat{\gamma}(\beta_0)) - \mathcal{A}_{2,s}(\widehat{D},\widehat{\gamma}(\beta_0)) \hat \rho^2(\beta_0)
	\end{pmatrix} \notag \\
	& = \mathcal{U}_s(\widehat{D},\widehat{\gamma}(\beta_0))\begin{pmatrix}
		\nu_{1,s}(\widehat{D},\widehat{\gamma}(\beta_0)) & 0 \\
		0 & \nu_{2,s}(\widehat{D},\widehat{\gamma}(\beta_0))
	\end{pmatrix}   \mathcal{U}_s(\widehat{D},\widehat{\gamma}(\beta_0))^\top. 
	\label{eq:eig'}
\end{align}
Define a class of tests 
\begin{align*}
	\Phi_{\alpha} = \begin{Bmatrix}
		\tilde \phi(\mathcal Z_1^2,\mathcal Z_2^2,d,r): \mathbb{E}\tilde \phi(\mathcal{Z}_1^2,\mathcal{Z}_2^2, d,  r) \leq \alpha, \text{ for any } (d,r) \in \Re \times \Gamma, \\
		\tilde \phi(\mathcal Z_1^2,\mathcal Z_2^2,d,r) \text{ is continuous in $r$}, \\
		\text{the discontinuities of $\tilde \phi(\mathcal Z_1^2,\mathcal Z_2^2,d,r)$ w.r.t. }\\
		\text{the first three arguments have zero Lebesgue measure}
	\end{Bmatrix},
\end{align*}
where $(\mathcal{Z}_1,\mathcal{Z}_2)$ are two independent standard normal random variables. Further define, for $s \in \{pp,krs\}$, 
\begin{align*}
	\begin{pmatrix}
		\widetilde {AR}_s(\beta_0) \\
		\widetilde {LM}^{*}_s(\beta_0)
	\end{pmatrix} =   \mathcal{U}_s(\widehat{D},\widehat{\gamma}(\beta_0))^\top  \begin{pmatrix}
		AR(\beta_0) \\
		LM^{*}(\beta_0)
	\end{pmatrix}.
\end{align*}

\begin{ass}
	Suppose $\mathcal{U}_s(d,r)$ is continuous in $r$ and 
	the set of discontinuities of $\mathcal{U}_s(\cdot)$ w.r.t. its first argument has zero Lebesgue measure.
	\label{ass:V}
\end{ass}

\begin{cor}
	Suppose we are under weak identification and fixed alternatives and that Assumptions \ref{ass:weak_convergence}--\ref{ass:V} hold. Let $\tilde \phi(\cdot) \in \Phi_\alpha$ and for any $d \in \Re$, denote $(\theta_1,\theta_2) = (m_1(\Delta),m_2(\Delta))\mathcal{U}_s(d,\gamma(\beta_0))$. Then, the following two optimality results hold. 
	\begin{enumerate}[label=(\roman*)]
		\item If for some $d \in \Re$ and $s \in \{pp,krs\}$, we have  
		\begin{align*}
			\lim_{\eps \rightarrow 0} \lim_{n\rightarrow \infty} &  \frac{\mathbb{E} \tilde \phi(\widetilde {AR}_s^2(\beta_0),
				\widetilde {LM}^{*2}_s(\beta_0),\widehat{D},\widehat{\gamma}(\beta_0)) 1\{ |\widehat D -d| \leq \eps \} }{\mathbb{E}  1\{ |\widehat D -d| \leq \eps \} } \\
			& \geq     \lim_{\eps \rightarrow 0} \lim_{n\rightarrow \infty}   \frac{\mathbb{E} \hat \phi_{\mathcal{A}_s(\widehat D, \widehat \gamma(\beta_0))}1\{ |\widehat D -d| \leq \eps \} }{\mathbb{E}  1\{ |\widehat D -d| \leq \eps \} },
		\end{align*} 
		for all $(\theta_1,\theta_2) \in \Re^2$, 
		then 
		\begin{align*}
			\lim_{\eps \rightarrow 0} \lim_{n\rightarrow \infty}  & \frac{\mathbb{E} \tilde \phi(\widetilde {AR}_s^2(\beta_0),
				\widetilde {LM}^{*2}_s(\beta_0),\widehat{D},\widehat{\gamma}(\beta_0)) 1\{ |\widehat D -d| \leq \eps \} }{\mathbb{E}  1\{ |\widehat D -d| \leq \eps \} } \\
			& =     \lim_{\eps \rightarrow 0} \lim_{n\rightarrow \infty}   \frac{\mathbb{E} \hat \phi_{\mathcal{A}_s(\widehat D, \widehat \gamma(\beta_0))}1\{ |\widehat D -d| \leq \eps \} }{\mathbb{E}  1\{ |\widehat D -d| \leq \eps \} },
		\end{align*} 
		for all $(\theta_1,\theta_2) \in \Re^2$.
		\item If $( \theta_1^2,  \theta_2^2) = b \cdot (\nu_1(d,\gamma(\beta_0)),\nu_2(d,\gamma(\beta_0)))$ for some positive constant $b$, then there exists $\overline{b}>0$ such that if $0<b<\overline{b}$, we have
		\begin{align*}
			\lim_{\eps \rightarrow 0} \lim_{n\rightarrow \infty}  & \frac{\mathbb{E} \tilde \phi(\widetilde {AR}_s^2(\beta_0),
				\widetilde {LM}^{*2}_s(\beta_0),\widehat{D},\widehat{\gamma}(\beta_0)) 1\{ |\widehat D -d| \leq \eps \} }{\mathbb{E}  1\{ |\widehat D -d| \leq \eps \} } \\
			& \leq    \lim_{\eps \rightarrow 0} \lim_{n\rightarrow \infty}   \frac{\mathbb{E} \hat \phi_{\mathcal{A}_s(\widehat D, \widehat \gamma(\beta_0))}1\{ |\widehat D -d| \leq \eps \} }{\mathbb{E}  1\{ |\widehat D -d| \leq \eps \} },
		\end{align*} 
	\end{enumerate}
	\label{cor:weakid}
\end{cor}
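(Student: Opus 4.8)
The plan is to reduce both conditional asymptotic statements to the finite-dimensional limit experiment, in which Theorem \ref{thm:admissible}(i) applies directly, and then transfer the optimality back. The central reduction is to show that, for every bounded function $g$ whose set of discontinuities has zero Lebesgue measure,
\begin{align*}
	\lim_{\eps \rightarrow 0} \lim_{n \rightarrow \infty} \frac{\mathbb{E}\, g(\widetilde{AR}_s^2(\beta_0), \widetilde{LM}^{*2}_s(\beta_0), \widehat{D}, \widehat{\gamma}(\beta_0)) 1\{|\widehat{D} - d| \leq \eps\}}{\mathbb{E}\, 1\{|\widehat{D} - d| \leq \eps\}} = \mathbb{E}\, g(\widetilde{\N}_1^2, \widetilde{\N}_2^2, d, \gamma(\beta_0)),
\end{align*}
where $(\widetilde{\N}_1, \widetilde{\N}_2) = \mathcal{U}_s(d, \gamma(\beta_0))^\top (\N_1, \N_2^*)^\top$. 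By Lemma \ref{lem:weakID}, $(\N_1, \N_2^*) \stackrel{d}{=} \N((m_1(\Delta), m_2(\Delta)), I_2)$ and is asymptotically independent of $\widehat{D}$, so the rotated pair $(\widetilde{\N}_1, \widetilde{\N}_2)$ has independent unit-variance components with mean $(\theta_1, \theta_2) = (m_1(\Delta), m_2(\Delta)) \mathcal{U}_s(d, \gamma(\beta_0))$, matching the corollary's definition.

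To establish this reduction I would first combine the joint weak convergence of $(AR(\beta_0), LM^*(\beta_0), \widehat{D}, \widehat{\gamma}(\beta_0))$ to $(\N_1, \N_2^*, D, \gamma(\beta_0))$, furnished by Lemma \ref{lem:weakID} and Assumption \ref{ass:variance_est}, with the asymptotic independence of $\widehat{D}$ from $(AR(\beta_0), LM^*(\beta_0))$ recorded after the definition of $\widehat{D}$. The continuity of $\mathcal{U}_s(\cdot)$ in its second argument together with the zero-measure discontinuity condition of Assumption \ref{ass:V} then yields, via the continuous mapping theorem, joint convergence of $(\widetilde{AR}_s(\beta_0), \widetilde{LM}^*_s(\beta_0), \widehat{D})$ to $(\widetilde{\N}_1, \widetilde{\N}_2, D)$, since on $\{|\widehat{D} - d| \leq \eps\}$ the rotation $\mathcal{U}_s(\widehat{D}, \widehat{\gamma}(\beta_0))$ converges to $\mathcal{U}_s(d, \gamma(\beta_0))$ for a.e.\ $d$. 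Passing $n \rightarrow \infty$ first converts the indicator-weighted ratio into a conditional expectation in the limit experiment over $\{|D - d| \leq \eps\}$; letting $\eps \rightarrow 0$ then isolates the event $\{D = d\}$, which by asymptotic independence leaves $(\N_1, \N_2^*)$ distributed as $\N((m_1(\Delta), m_2(\Delta)), I_2)$ regardless of $d$. Boundedness of $g$ and the zero-measure discontinuity conditions in Assumptions \ref{ass:a} and \ref{ass:V} permit the use of the dominated convergence theorem to justify both limits.

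Given the reduction, both parts follow from Theorem \ref{thm:admissible}(i). Conditional on $D = d$ the weights $\mathcal{A}_s(d, \gamma(\beta_0))$ and the rotation $\mathcal{U}_s(d, \gamma(\beta_0))$ are deterministic, so in the limit $\widehat{\phi}_{\mathcal{A}_s(\widehat{D}, \widehat{\gamma}(\beta_0))}$ coincides with $\phi_{a_1, a_2, \infty}$ evaluated at $(a_1, a_2) = \mathcal{A}_s(d, \gamma(\beta_0))$, while each $\tilde{\phi} \in \Phi_\alpha$ reduces to a size-$\alpha$ test based on $(\widetilde{\N}_1^2, \widetilde{\N}_2^2)$. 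For part (i), the admissibility clause of Theorem \ref{thm:admissible}(i) states that no such test can have weakly larger power than $\phi_{a_1, a_2, \infty}$ for all $(\theta_1, \theta_2) \in \Re^2$ without matching its power everywhere, which is exactly the claimed implication. For part (ii), the local-power clause of Theorem \ref{thm:admissible}(i) gives, whenever $(\theta_1^2, \theta_2^2) = b \cdot (\nu_1(d, \gamma(\beta_0)), \nu_2(d, \gamma(\beta_0)))$ with $b$ below the threshold $\overline{b}$, the inequality $\mathbb{E}\tilde{\phi} \leq \mathbb{E}\phi_{a_1, a_2, \infty}$ in the limit experiment, which transfers verbatim to the conditional asymptotic inequality.

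The main obstacle is the rigorous justification of the interchange of the $n$ and $\eps$ limits and the passage from the indicator-weighted ratio to a conditional expectation in the limit experiment. Because the conditioning event $\{|\widehat{D} - d| \leq \eps\}$ has probability shrinking to zero as $\eps \rightarrow 0$, this amounts to a conditional convergence-in-distribution argument that requires controlling the limiting conditional density of $\widehat{D}$ near $d$ and ruling out mass escaping through the discontinuities of the integrands. The smoothness of the bivariate normal density in the covariance matrix, the boundedness of all tests in $\Phi_\alpha$, and the asymptotic independence of $\widehat{D}$ from $(AR(\beta_0), LM^*(\beta_0))$ are the ingredients that make this controllable, but verifying uniform integrability and the continuity of the limiting conditional law at $d$ is where the care is needed.
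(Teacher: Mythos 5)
Your proposal follows essentially the same route as the paper's proof: pass to the limit experiment via the continuous mapping theorem as $n\rightarrow\infty$, let $\eps\rightarrow 0$ to turn the indicator-weighted ratio into a conditional expectation given $D=d$, note that conditional on $D=d$ the CLC test is a linear combination of $(\widetilde{\N}_1^2,\widetilde{\N}_2^2)$ with deterministic weights $(\nu_{1,s}(d,\gamma(\beta_0)),\nu_{2,s}(d,\gamma(\beta_0)))$ while any $\tilde\phi\in\Phi_\alpha$ reduces to a level-$\alpha$ test of $(\theta_1,\theta_2)=(0,0)$, and then invoke Theorem \ref{thm:admissible}(i). The technical concern you raise about the $\eps\rightarrow 0$ passage is resolved in the paper exactly by the ingredient you identify---the independence of $D$ from the limiting normals, which makes the conditional expectation a well-behaved function of $d$---so your argument is sound as stated.
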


Corollary \ref{cor:weakid} shows that under weak identification and fixed alternatives, our jackknife CLC test is asymptotically admissible and optimal against certain alternatives conditional on $\widehat D$.

Next, we consider the performance of $\widehat{\phi}_{\mathcal{A}_s(\widehat{D},\widehat{\gamma}(\beta_0))} $ defined in \eqref{eq:phihat} under strong identification and local alternatives. To precisely state the optimality result, we further consider the class of level-$\alpha$ tests against $\theta =0$ v.s. the two-sided alternative that are constructed based on one observation of $(\N_1,\N_2)$, where $\theta = \widetilde \Delta \widetilde {\mathcal{C}} \Psi^{-1/2}$ and 
\begin{align*}
	\begin{pmatrix}
		\N_1 \\
		\N_2
	\end{pmatrix}    \stackrel{d}{=}\N \left( \begin{pmatrix}
		0 \\
		\theta 
	\end{pmatrix}, \begin{pmatrix}
		1 & \rho \\
		\rho & 1
	\end{pmatrix}  \right),
\end{align*}
Specifically, denote 
\begin{align*}
	\Phi_{\alpha}^I = \begin{Bmatrix}
		\phi(\cdot): \mathbb{E}\phi(\N_1,\N_2) \leq \alpha \quad \text{under the null}, \\
		\phi(\N_1,\N_2) = \phi(\N_1,-\N_2), \\
		\text{the discontinuities of $\phi(\cdot)$ has zero Lebesgue measure} 
	\end{Bmatrix}    
\end{align*}
and 
\begin{align*}
	\Phi_{\alpha}^U = \begin{Bmatrix}
		\phi(\cdot): \mathbb{E}\phi(\N_1,\N_2) \leq \alpha \quad \text{under the null}, \\
		\mathbb{E}\phi(\N_1,\N_2) \geq \alpha \quad \text{under the alternative}, \\
		\text{the discontinuities of $\phi(\cdot)$ has zero Lebesgue measure} 
	\end{Bmatrix}    
\end{align*}
as the classes of sign-invariant and unbiased tests, respectively. 


\begin{thm}
	Suppose that Assumptions \ref{ass:weak_convergence} and \ref{ass:variance_est} hold. 
	Further suppose that we are under strong identification and local alternatives as described in Lemma \ref{lem:strongID}. Then,  for $s \in \{pp,krs\}$, we have
	\begin{align*}
		\mathcal{A}_{1,s}(\widehat{D},\widehat{\gamma}(\beta_0)) \convP 0, \quad \mathcal{A}_{2,s}(\widehat{D},\widehat{\gamma}(\beta_0))\rho \convP 0, \quad \text{and} \quad   \widehat{\phi}_{\mathcal{A}_s(\widehat{D},\widehat{\gamma}(\beta_0))}  \convD 1\{\N_2^{*2} \geq \mathbb{C}_{\alpha}\},  
	\end{align*}
	where $\N_2^{*} \stackrel{d}{=} \N\left(\frac{\widetilde{\Delta} \widetilde{\mathcal{C}}}{[(1-\rho^2)\Psi]^{1/2}},1\right)$. 
	In addition, suppose $\breve{\phi}_n = \phi(AR(\beta_0),LM(\beta_0)) +o_P(1)$ for some $\phi \in \Phi_{\alpha}^I \cup \Phi_{\alpha}^U$ and the sequence $\{\breve{\phi}_n \}_{n \geq 1}$ is uniformly integrable. Then,  we have
	\begin{align*}
		\lim_{n \rightarrow \infty}    \mathbb{E}\widehat{\phi}_{\mathcal{A}_s(\widehat{D},\widehat{\gamma}(\beta_0))} = \sup_{\phi \in \Phi_{\alpha}^I \cup \Phi_{\alpha}^U } \lim_{n \rightarrow \infty} \mathbb{E}\phi(AR(\beta_0),LM(\beta_0)) \geq  \lim_{n \rightarrow \infty} \mathbb{E}\breve{\phi}_n. 
	\end{align*}
	\label{thm:strongid}
\end{thm}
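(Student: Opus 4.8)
The plan is to first show that the data-driven weights $\mathcal{A}_{s}(\widehat D,\widehat\gamma(\beta_0))$ collapse to the corner where $a_1\to 0$ and $a_2\rho\to 0$ under strong identification, and then transfer the UMP optimality of the limit-experiment test $1\{\N_2^{*2}\ge\mathbb{C}_\alpha\}$ from Lemma \ref{lem:ump} to the feasible test. I would begin by showing $f_s(\widehat D,\widehat\gamma(\beta_0))\convP\infty$. Under strong identification with local alternatives, $\Delta=\widetilde\Delta d_n$ gives $Q_{e(\beta_0),e(\beta_0)}=\Delta^2\mathcal{C}+O_p(1)=O_p(1)$ and $Q_{X,e(\beta_0)}=\Delta\mathcal{C}+O_p(1)=\widetilde\Delta\widetilde{\mathcal{C}}+O_p(1)=O_p(1)$, while $Q_{X,X}=\mathcal{C}+O_p(1)$ with $\mathcal{C}=\widetilde{\mathcal{C}}/d_n$ diverging; since the projection coefficients converge by Assumption \ref{ass:variance_est}, $\widehat D=\mathcal{C}+O_p(1)$, so $|\widehat D|\convP\infty$ and $\widehat r=\widehat D^2/\widehat\sigma_D^2\convP\infty$. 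Because $\widehat r_{pp},\widehat r_{krs}\to\infty$ whenever $\widehat r\to\infty$, both $f_{pp}$ and $f_{krs}$ diverge, and hence the constraint lower bound $\underline a(f_s,\widehat\gamma(\beta_0))=\min\{0.01,\,1.1\,\mathbb{C}_{\alpha,\max}(\widehat\rho(\beta_0))\widehat\Phi_1(\beta_0)\widehat c_{\mathcal{B}}(\beta_0)/(\widehat\Delta_*^4(\beta_0)f_s^2)\}\convP0$, since its numerator converges to a finite limit while $f_s^2\to\infty$.

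Next I would show that the minimax objective in \eqref{eq:afeasible}, which attains its minimum on the compact set $\mathbb{A}(f_s,\widehat\gamma(\beta_0))$ by continuity, is $o_p(1)$, by exhibiting the feasible weight $(\underline a,0)$ with asymptotically negligible maximum regret. I would split $\mathcal{D}(\beta_0)$ into a local region $\{|\delta|\le C/f_s\}$ and its complement. On the complement, at least one of $\widehat C_1(\delta)f_s,\widehat C_2(\delta)f_s$ diverges, so $\mathbb{E}^*\phi_{\underline a,0,s}(\delta)\to1$; the only place this could fail is $\delta=\Delta_*(\beta_0)$, where $\widehat C_2(\Delta_*)=0$, but there the computation behind Theorem \ref{thm:admissible}(iii) gives $\underline a\,(\widehat C_1(\Delta_*)f_s)^2\ge 1.1\,\mathbb{C}_{\alpha,\max}(\widehat\rho(\beta_0))\ge\mathbb{C}_\alpha(\underline a,0;\widehat\rho(\beta_0))$, forcing power to one. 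On the local region, reparametrizing $\delta=c/f_s$ sends the AR mean $\widehat C_1(\delta)f_s\to0$ and the $LM^*$ mean $\widehat C_2(\delta)f_s$ to a finite limit, so the problem degenerates to a one-dimensional testing problem in the $LM^*$ coordinate; since $\underline a\to0$, the critical value $\mathbb{C}_\alpha(\underline a,0;\widehat\rho(\beta_0))\to\mathbb{C}_\alpha$ and $\mathbb{E}^*\phi_{\underline a,0,s}(\delta)$ converges to the highest attainable power $\mathcal{P}_{\delta,s}$, so the regret tends to zero. Combining the two regions gives $\sup_{\delta}(\mathcal{P}_{\delta,s}-\mathbb{E}^*\phi_{\underline a,0,s}(\delta))\convP0$, whence the minimax value is $o_p(1)$.

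I would then obtain $\mathcal{A}_{1,s}\convP0$ and $\mathcal{A}_{2,s}\rho\convP0$ by contradiction. If along a subsequence $\mathcal{A}_{1,s}$ or $\mathcal{A}_{2,s}\rho$ stayed bounded away from $0$, pass to a further subsequence with $(\mathcal{A}_{1,s},\mathcal{A}_{2,s})\to(a_1^\circ,a_2^\circ)$ satisfying $a_1^\circ>0$ or $a_2^\circ\rho\ne0$. By Theorem \ref{thm:admissible}(ii) such a limit test is not UMP, so by Lemma \ref{lem:ump} it is strictly dominated at some local alternative; evaluating the regret at the corresponding local $\delta=c^\ast/f_s$, where only the $LM^*$ coordinate is informative, yields a limiting regret bounded away from $0$, contradicting the minimax value being $o_p(1)$. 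With $\mathcal{A}_{1,s}\convP0$ and $\mathcal{A}_{2,s}\rho\convP0$, Slutsky's theorem together with \eqref{eq:lmstar_str} gives $\mathcal{A}_{1,s}AR^2(\beta_0)+\mathcal{A}_{2,s}LM^2(\beta_0)+(1-\mathcal{A}_{1,s}-\mathcal{A}_{2,s})LM^{*2}(\beta_0)\convD\N_2^{*2}$ and $\mathbb{C}_\alpha(\mathcal{A}_s(\widehat D,\widehat\gamma(\beta_0));\widehat\rho(\beta_0))\convP\mathbb{C}_\alpha$ by continuity; since the boundary $\{\N_2^{*2}=\mathbb{C}_\alpha\}$ has probability zero, the continuous mapping theorem yields $\widehat\phi_{\mathcal{A}_s(\widehat D,\widehat\gamma(\beta_0))}\convD1\{\N_2^{*2}\ge\mathbb{C}_\alpha\}$ with $\N_2^*\sim\N(\widetilde\Delta\widetilde{\mathcal{C}}[(1-\rho^2)\Psi]^{-1/2},1)$.

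Finally, for the optimality statement, uniform integrability of the bounded indicator upgrades the convergence in distribution to $\mathbb{E}\widehat\phi_{\mathcal{A}_s(\widehat D,\widehat\gamma(\beta_0))}\to\mathbb{E}\,1\{\N_2^{*2}\ge\mathbb{C}_\alpha\}$, which by Lemma \ref{lem:ump} equals the supremum of $\mathbb{E}\phi(\N_1,\N_2)$ over $\Phi_{\alpha}^I\cup\Phi_{\alpha}^U$. For any competitor $\phi\in\Phi_{\alpha}^I\cup\Phi_{\alpha}^U$, Lemma \ref{lem:strongID}, the continuous mapping theorem (using the zero-measure discontinuity set of $\phi$), and uniform integrability give $\mathbb{E}\phi(AR(\beta_0),LM(\beta_0))\to\mathbb{E}\phi(\N_1,\N_2)\le\mathbb{E}\,1\{\N_2^{*2}\ge\mathbb{C}_\alpha\}$, and the same convergence applied to $\breve\phi_n=\phi(AR(\beta_0),LM(\beta_0))+o_P(1)$, which is uniformly integrable by assumption, delivers the final inequality. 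I expect the main obstacle to be the uniform control of the regret over $\delta\in\mathcal{D}(\beta_0)$ in the second step: matching the shrinking $1/f_s$ transition region against the fixed parameter range while simultaneously handling the non-identification of $LM^*$ at $\delta=\Delta_*(\beta_0)$, where one must show that the lower bound $\underline a$ compensates exactly so that power still tends to one.
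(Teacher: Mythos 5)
Your proposal is correct and follows essentially the same route as the paper's proof: divergence of $\widehat D$ and $f_s$ so that $\underline a \convP 0$, a decomposition of the alternatives into a local region, a blind-spot region near $\Delta_*(\beta_0)$ handled by the lower bound $\underline a$, and the remainder where the noncentrality diverges, then vanishing minimax regret at $(\underline a,0)$, corner convergence of the weights via Theorem \ref{thm:admissible}(ii), and the uniform-integrability step combined with Lemma \ref{lem:ump} for the final power comparison. The only refinements in the paper's version are that the blind-spot argument is carried out on a shrinking neighborhood $\{|\delta/\widehat\Delta_*(\beta_0)-1|\le\eps\}$ rather than at the single point $\Delta_*(\beta_0)$ (which your $\underline a$-bound extends to by continuity), and the corner convergence is obtained from a quantitative uniform lower bound $\inf_{a_1+a_2\rho^2\ge C}\sup_{\delta}Q(a_1,a_2,\delta)\ge c>0$ rather than your subsequence/contradiction argument; both are equivalent in substance.
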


Five remarks are in order. First, under strong identification, $\mu_D$, and thus, $D$ approaches infinity, and so does our estimator $\widehat D$. This is how our estimator $\widehat D$ can detect the identification strength. 
In addition, we show in the proof of Theorem \ref{thm:strongid}
that under strong identification, the calibrated power gap $\mathcal{P}_{\delta,s}(\widehat{D},\widehat{\gamma}(\beta_0)) -     \mathbb{E}^*\phi_{a_1,a_2,s}(\delta,\widehat{D},\widehat{\gamma}(\beta_0))$ is maximized when $\delta$ is in the region of local alternatives. However, in this region, as shown by Lemma \ref{lem:ump}, the maximum power gap can achieve zero if all the weights are put on $LM^*(\beta_0)$, which leads to the first result in Theorem \ref{thm:strongid}. 
Second, our jackknife CLC test is adaptive to identification strength. In practice, econometricians do not know whether the true value $\beta$ is close to the null $\beta_0$. Therefore, our jackknife CLC test calibrates power across all possible values of $\delta$ (i.e., $\delta \in \mathcal{D}(\beta_0)$), which include both local and fixed alternatives. Yet, Theorem \ref{thm:strongid} shows that the minimax procedure can produce the most powerful test as if it is known that $\beta$ belongs to the region of local alternatives. Third, Theorem \ref{thm:strongid} shows that under strong identification and local alternatives, our jackknife CLC test converges to the UMP level-$\alpha$ test that is either invariant to the sign change or unbiased and constructed based on $AR(\beta_0)$ and $LM(\beta_0)$. Therefore, it is more powerful than the jackknife AR and LM tests. Fourth, under strong identification and local alternatives, the JIVE-based Wald test proposed by \cite{Chao(2012)} is asymptotically equivalent to the jackknife LM test, which implies that the jackknife AR and JIVE-Wald-based two-step test in \cite{MS22} is also dominated by the jackknife CLC test. Fifth, consider the HLIM based Wald test statistic proposed by \cite{Haus2012}, which is denoted as $W_h(\beta_0)$. In Section \ref{sec:HLIM} in the Online Supplement, we show that, under local alternative and strong identification, 
\begin{align*}
	W_h(\beta_0) =  \frac{\Psi^{1/2}}{\Psi_h^{1/2}}LM(\beta_0)- \frac{\tilde{\rho}\Phi_1^{1/2} }{\Psi_h^{1/2}}AR(\beta_0) + o_P(1), 
\end{align*}
where $\tilde{\rho} = plim_{n\rightarrow \infty} X^\top e(\beta_0)/(e(\beta_0)^\top e(\beta_0))$ and $\Psi_h = \Psi - 2 \tilde \rho \Phi_{12} + \tilde \rho^2 \Phi_1$ is the corresponding asymptotic variance.  Then, by letting $\breve{\phi}_n = 1\{W_h^2(\beta_0) \geq \mathbb{C}_{\alpha}\}$ and 
\begin{align*}
	\phi(AR(\beta_0),LM(\beta_0)) = 1\left\{\left[\frac{\Psi^{1/2}}{\Psi_h^{1/2}}LM(\beta_0)- \frac{\tilde{\rho}\Phi_1^{1/2} }{\Psi_h^{1/2}}AR(\beta_0)\right]^2 \geq \mathbb{C}_{\alpha}\right\},    
\end{align*}
Theorem \ref{thm:strongid} implies our jackknife CLC test is more powerful than the HLIM based Wald test under strong identification against local alternatives. In fact, by direct calculation, we can see that, for $\theta = \widetilde \Delta \widetilde{\mathcal{C}}\Psi^{-1/2}$,   
\begin{align*}
	\frac{\Psi^{1/2}}{\Psi_h^{1/2}}LM(\beta_0)- \frac{\tilde{\rho}\Phi_1^{1/2} }{\Psi_h^{1/2}}AR(\beta_0) \convD \mathcal{Z}(\tilde{\theta}),\quad \text{where} \quad \tilde \theta^2 = \frac{\theta^2}{1-\rho^2 + \left(\tilde \rho \Phi_1^{1/2}\Psi^{-1/2} - \rho\right)^2} \leq \frac{\theta^2}{(1-\rho^2)}. 
\end{align*}
The noncentrality parameter for the HLIM based Wald test is weakly smaller than that of the CLC test, which explains the power comparison. The equality holds if $\tilde \rho \Phi_1^{1/2}\Psi^{-1/2} = \rho$, which further holds in the special case of many weak IVs and homoskedasticity in the sense that $\Pi^\top \Pi/K = o(1)$ and $\mathbb{E}(V_i,e_i)^\top (V_i,e_i)$ does not vary across $i$. 

Combining Theorems \ref{thm:weakid} and \ref{thm:strongid}, we can show the uniform size control of our jackknife CLC test no matter the identification is strong or weak. Let $\lambda_n \in \Lambda_n$ be the data generating process of $n$ observations of $(e,V,Z)$.
Under $\lambda_n$, the covariance matrix of $(Q_{e,e},Q_{X,e},Q_{X,X})$ is denoted as $\mathbb V_n$. 
We impose the following restriction on the sequence of classes of DGPs ($\{\Lambda_n \}_{n \geq 1}$):\footnote{In \eqref{eq:Lambda}, we focus on the model without exogenous control variables. The independence and moment conditions for $(e_i,V_i)$  are sufficient for Assumption \ref{ass:weak_convergence}. We further verify in Section \ref{sec:W0} of the Online Supplement that the joint asymptotic normality (Assumption 1) holds in the case with exogenous controls. } 
\begin{align}\label{eq:Lambda}
\begin{pmatrix}
	&  \{ V_i, e_i \}_{i \in [n]} \; \text{are independent}, \; \mathbb{E}e_i = \mathbb{E}V_i = 0,\\
	& \max_i \mathbb{E} e_i^4 + \max_i \mathbb{E} V_i^4 \leq C_1 <\infty, \\
	&       \mathcal{C}_n = \frac{1}{\sqrt{K}}\sum_{i \in [n]}\sum_{j \neq i} \Pi_i P_{ij} \Pi_j \in \Re, \\
	& P_{ii} \leq C_2 < 1, \\
	& 0 < \kappa_1 \leq \text{mineig}(\mathbb V_n) \leq \text{maxeig}(\mathbb V_n) \leq \kappa_2 < \infty, \\
	& \text{where $C_1$, $C_2$, $\kappa_1$, and $\kappa_2$ are some fixed constants},  \\
	& \text{and Assumption \ref{ass:variance_est} holds for $\beta_0=\beta$.}
\end{pmatrix}
\end{align}
In Sections \ref{sec:var1} and \ref{sec:var2} of the Online Supplement, we further verify that Assumption \ref{ass:variance_est} holds, respectively,  
for the standard variance estimators, which follow the construction in \cite{crudu2021}, and the cross-fit variance estimators, which follow \cite{MS22}. 
Theorem \ref{thm:uniform_size} shows that our jackknife CLC test has correct asymptotic size, under similar arguments as those in \cite{ACG2020} and I.\cite{Andrews(2016)}.

\begin{thm}
Suppose Assumption \ref{ass:a} holds, $\{\Lambda_n \}_{n \geq 1}$ satisfies \eqref{eq:Lambda}, and we are under the null hypothesis that $\beta_0 = \beta$. Then, we have
\begin{align*}
	\liminf_{n \rightarrow \infty} \inf_{\lambda_n \in \Lambda_n} \mathbb{E}_{\lambda_n}(\widehat{\phi}_{\mathcal{A}_s(\widehat{D},\widehat{\gamma}(\beta_0))}) =     \limsup_{n \rightarrow \infty} \sup_{\lambda_n \in \Lambda_n} \mathbb{E}_{\lambda_n}(\widehat{\phi}_{\mathcal{A}_s(\widehat{D},\widehat{\gamma}(\beta_0))}) = \alpha.
\end{align*}
\label{thm:uniform_size}
\end{thm}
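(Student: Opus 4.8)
The plan is to follow the generic uniformity (subsequencing) argument of \cite{ACG2020} and I.\cite{Andrews(2016)}, which reduces the uniform statement to a statement along drifting subsequences. Write $s_n = \sup_{\lambda_n \in \Lambda_n}\mathbb{E}_{\lambda_n}(\widehat{\phi}_{\mathcal{A}_s(\widehat{D},\widehat{\gamma}(\beta_0))})$ and $i_n = \inf_{\lambda_n \in \Lambda_n}\mathbb{E}_{\lambda_n}(\widehat{\phi}_{\mathcal{A}_s(\widehat{D},\widehat{\gamma}(\beta_0))})$, and note $i_n \le s_n$. It suffices to prove the claim that for every sequence $\{\lambda_n\}$ with $\lambda_n \in \Lambda_n$ and every subsequence, there is a further subsequence along which $\mathbb{E}_{\lambda_n}(\widehat{\phi}_{\mathcal{A}_s(\widehat{D},\widehat{\gamma}(\beta_0))}) \to \alpha$. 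A standard selection argument then upgrades this claim to $\limsup_n s_n \le \alpha$ (choosing near-maximizers $\lambda_n$ and passing to the further subsequence forces the value down to $\alpha$) and $\liminf_n i_n \ge \alpha$ (choosing near-minimizers); since $i_n \le s_n$, these two bounds sandwich both quantities to $\alpha$.

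To establish the claim I would first extract a further subsequence along which the summary statistics of the limit problem converge. The eigenvalue bounds $0<\kappa_1 \le \text{mineig}(\mathbb{V}_n) \le \text{maxeig}(\mathbb{V}_n)\le \kappa_2<\infty$ in \eqref{eq:Lambda} confine $\mathbb{V}_n$ to a compact set of positive-definite matrices, so along a subsequence $\mathbb{V}_n$ converges to a positive-definite limit; this pins down the limits of $\gamma(\beta_0)$ and, by Assumption \ref{ass:variance_est} (which holds under the null by \eqref{eq:Lambda}), of $\widehat{\gamma}(\beta_0)$. Simultaneously I would extract a subsequence along which $\mathcal{C}_n$ converges in the extended reals to some $\widetilde{\mathcal{C}} \in \Re \cup \{\pm\infty\}$. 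On this subsequence the independence, uniformly bounded fourth moments of $e_i,V_i$, and $P_{ii}\le C_2<1$ in \eqref{eq:Lambda} are precisely the primitive conditions under which the joint central limit theorem in Assumption \ref{ass:weak_convergence} holds for the jackknife quadratic forms $(Q_{e,e},Q_{X,e},Q_{X,X}-\mathcal{C})$; I would supply this via the quadratic-form/martingale central limit theorem employed in \cite{Chao(2012)}, \cite{crudu2021}, and \cite{MS22}.

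With Assumptions \ref{ass:weak_convergence} and \ref{ass:variance_est} in force along the subsequence, I split into two regimes according to $\widetilde{\mathcal{C}}$. If $\widetilde{\mathcal{C}}$ is finite, the subsequence is weakly identified, and since we test under the null $\beta_0=\beta$, i.e.\ $\Delta=0$, Theorem \ref{thm:weakid} applies; taking the constant function $h\equiv 1 \in BL_1$ in its last display gives $\mathbb{E}_{\lambda_n}(\widehat{\phi}_{\mathcal{A}_s(\widehat{D},\widehat{\gamma}(\beta_0))}) - \alpha \to 0$. If $|\widetilde{\mathcal{C}}|=\infty$, the subsequence is strongly identified; setting $d_n = \widetilde{\mathcal{C}}_0/\mathcal{C}_n \to 0$ for a fixed nonzero $\widetilde{\mathcal{C}}_0$ of the same sign places us in the local-alternative framework of Lemma \ref{lem:strongID} with $\widetilde{\Delta}= \Delta/d_n = 0$, so Theorem \ref{thm:strongid} yields $\widehat{\phi}_{\mathcal{A}_s(\widehat{D},\widehat{\gamma}(\beta_0))} \convD 1\{\N_2^{*2}\ge \mathbb{C}_\alpha\}$ with $\N_2^* \stackrel{d}{=} \mathcal{Z}$ standard normal. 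Since the test is an indicator and $\mathcal{Z}^2$ places no mass at $\mathbb{C}_\alpha$, bounded convergence gives $\mathbb{E}_{\lambda_n}(\widehat{\phi}_{\mathcal{A}_s(\widehat{D},\widehat{\gamma}(\beta_0))}) \to \mathbb{P}(\mathcal{Z}^2 \ge \mathbb{C}_\alpha)=\alpha$. In both regimes the rejection probability converges to $\alpha$, which proves the claim.

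The main obstacle is the uniform verification of Assumption \ref{ass:weak_convergence} along \emph{arbitrary} drifting subsequences of DGPs in $\Lambda_n$: one must show that the same central limit theorem governs $(Q_{e,e},Q_{X,e},Q_{X,X}-\mathcal{C})$ regardless of how $\mathcal{C}_n$, the heteroskedasticity pattern, and the design matrix $P$ drift, using only the primitive bounds in \eqref{eq:Lambda}. A secondary but delicate point is the clean dichotomy between the weak and strong regimes: the subsequencing must be arranged so that every further subsequence has $\mathcal{C}_n$ converging in $\Re \cup\{\pm\infty\}$, thereby ruling out intermediate behaviour and letting exactly one of Theorems \ref{thm:weakid} and \ref{thm:strongid} apply. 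Continuity of $\mathcal{A}_s$ in its second argument (Assumption \ref{ass:a}) is what guarantees the data-driven weights and the test statistic behave continuously in these limits, so that the convergences above can be read off directly.
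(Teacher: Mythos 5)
Your proposal is correct and follows essentially the same route as the paper: under the null, split drifting (sub)sequences of DGPs according to whether $\mathcal{C}_n$ converges to a finite constant or diverges, apply Theorem \ref{thm:weakid} (via a bounded Lipschitz $h$) in the weak-identification case and Theorem \ref{thm:strongid} in the strong-identification case, and obtain the uniform statement through the subsequencing machinery of \cite{ACG2020}. The only difference is presentational: the paper delegates the subsequence-extraction and selection step to Corollary 2.1 of \cite{ACG2020} (after verifying their Assumption B$^*$), whereas you spell that reduction out by hand; incidentally, your choice $h \equiv 1$ is cleaner than the paper's reference to an ``identity function,'' which is not an element of $BL_1$.
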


Last, we show that, under strong identification, the jackknife CLC test $\widehat{\phi}_{\mathcal{A}_s(\widehat{D},\widehat{\gamma}(\beta_0))}$ defined in \eqref{eq:phihat} has asymptotic power 1 against fixed alternatives. 
\begin{thm}
Suppose Assumption \ref{ass:variance_est} holds, and $( Q_{e(\beta_0),e(\beta_0)} - \Delta^2 \mathcal{C}, Q_{X,e(\beta_0)} - \Delta \mathcal{C},Q_{X,X} - \mathcal{C})^\top
= O_p(1)$. Further suppose that we are under strong identification with fixed alternatives so that $\Delta = \beta - \beta_0$ is nonzero and fixed. Then, we have
\begin{align*}
	\widehat{\phi}_{\mathcal{A}_s(\widehat{D},\widehat{\gamma}(\beta_0))} \convP 1.     
\end{align*}
\label{thm:strong_fixed}
\end{thm}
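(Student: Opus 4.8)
\emph{Proof plan.} The plan is to recognize that the assertion $\widehat{\phi}_{\mathcal{A}_s(\widehat{D},\widehat{\gamma}(\beta_0))}\convP 1$ is precisely the conclusion of Theorem \ref{thm:admissible}(iii), except that it must be evaluated at the \emph{random} minimax weights $(\mathcal{A}_{1,s},\mathcal{A}_{2,s})=\mathcal{A}_s(\widehat{D},\widehat{\gamma}(\beta_0))$ rather than at a deterministic sequence $(a_{1,n},a_{2,n})$. I would therefore proceed in two steps: first, verify that these random weights satisfy the lower-bound hypothesis of Theorem \ref{thm:admissible}(iii) with probability approaching one (w.p.a.1); second, reproduce the rejection argument behind Theorem \ref{thm:admissible}(iii) directly, using Lemma \ref{lem:strongID2}, so as to accommodate the randomness of the weights.

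\textbf{Step 1 (lower bound on $\mathcal{A}_{1,s}$).} First I would set
\begin{align*}
g(\delta) = 1 - (\delta^2,\delta)\begin{pmatrix}\Phi_1(\beta_0) & \Phi_{12}(\beta_0)\\ \Phi_{12}(\beta_0) & \Psi(\beta_0)\end{pmatrix}^{-1}\begin{pmatrix}\Phi_{13}(\beta_0)\\ \tau(\beta_0)\end{pmatrix},
\end{align*}
so that $c_{\mathcal{B}}(\beta_0)=\sup_{\delta\in\mathcal{D}(\beta_0)}g(\delta)^2$ and, since $g(0)=1$ with $0\in\mathcal{D}(\beta_0)$, one has $c_{\mathcal{B}}(\beta_0)\geq g(\Delta)^2$ for the true fixed $\Delta\in\mathcal{D}(\beta_0)$. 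By construction $\mathcal{A}_{1,s}\geq\underline{a}(f_s(\widehat{D},\widehat{\gamma}(\beta_0)),\widehat{\gamma}(\beta_0))$ and $(\mathcal{A}_{1,s},\mathcal{A}_{2,s})\in\mathbb{A}_0$. Substituting $Q_{X,X}=\mathcal{C}+O_p(1)$, $Q_{e(\beta_0),e(\beta_0)}=\Delta^2\mathcal{C}+O_p(1)$, $Q_{X,e(\beta_0)}=\Delta\mathcal{C}+O_p(1)$ into the definition of $\widehat{D}$ and using the consistency of $\widehat{\gamma}(\beta_0)$ (Assumption \ref{ass:variance_est}) yields $\widehat{D}=\mathcal{C}\,g(\Delta)+\mathcal{C}\cdot o_p(1)+O_p(1)$, hence $\widehat{D}^2=\mathcal{C}^2 g(\Delta)^2(1+o_p(1))$ when $g(\Delta)\neq 0$ and $\widehat{D}^2=o_p(\mathcal{C}^2)$ when $g(\Delta)=0$. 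Since $\widehat{r}_s\leq\widehat{r}$ for both $s\in\{pp,krs\}$, I have $f_s^2(\widehat{D},\widehat{\gamma}(\beta_0))\leq\widehat{D}^2$. Combining these with the consistency of $\widehat{\gamma}(\beta_0)$ and the continuity of $\rho\mapsto\mathbb{C}_{\alpha,\max}(\rho)$ shows that the second argument of the $\min$ defining $\underline{a}$ dominates $\tilde q\Phi_1(\beta_0)/(\mathcal{C}^2\Delta_*^4(\beta_0))$ w.p.a.1 with $\tilde q=1.05\,\mathbb{C}_{\alpha,\max}(\rho(\beta_0))>\mathbb{C}_{\alpha,\max}(\rho(\beta_0))$; here the ratio $c_{\mathcal{B}}(\beta_0)/g(\Delta)^2\geq 1$ absorbs the gap between the built-in factor $1.1$ and $1.05$, and the degenerate case $g(\Delta)=0$ is even easier since then $\mathcal{C}^2/\widehat{D}^2\to\infty$. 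As $\tilde q\Phi_1(\beta_0)/(\mathcal{C}^2\Delta_*^4(\beta_0))\to 0<0.01$, the $\min$ is itself bounded below by this threshold, giving $\mathcal{A}_{1,s}\geq\tilde q\Phi_1(\beta_0)/(\mathcal{C}^2\Delta_*^4(\beta_0))$ w.p.a.1.

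\textbf{Step 2 (the test rejects).} Then I would write $T=\mathcal{A}_{1,s}AR^2(\beta_0)+\mathcal{A}_{2,s}LM^2(\beta_0)+(1-\mathcal{A}_{1,s}-\mathcal{A}_{2,s})LM^{*2}(\beta_0)$ with all three summands nonnegative, and note the critical value satisfies $\mathbb{C}_{\alpha}(\mathcal{A}_s;\widehat{\rho}(\beta_0))\leq\mathbb{C}_{\alpha,\max}(\widehat{\rho}(\beta_0))\convP\mathbb{C}_{\alpha,\max}(\rho(\beta_0))$. By Lemma \ref{lem:strongID2}, $d_n^2 LM^{*2}(\beta_0)$ converges to a limit that is strictly positive unless $\Delta=\Delta_*(\beta_0)$, which motivates a two-case split. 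If $\Delta\neq\Delta_*(\beta_0)$, then $LM^{*2}(\beta_0)\to\infty$ and $T\geq(1-\mathcal{A}_{1,s}-\mathcal{A}_{2,s})LM^{*2}(\beta_0)\geq(1-\overline{a})LM^{*2}(\beta_0)\to\infty$, exceeding the bounded critical value w.p.a.1. If instead $\Delta=\Delta_*(\beta_0)$, then $LM^{*2}(\beta_0)=O_p(1)$ and I would invoke the $AR$ term: Lemma \ref{lem:strongID2} with $d_n\mathcal{C}\to\widetilde{\mathcal{C}}$ gives $AR^2(\beta_0)=\Phi_1^{-1}(\beta_0)\Delta_*^4(\beta_0)\mathcal{C}^2(1+o_p(1))$, so by Step 1,
\begin{align*}
T\geq\mathcal{A}_{1,s}AR^2(\beta_0)\geq\frac{\tilde q\Phi_1(\beta_0)}{\mathcal{C}^2\Delta_*^4(\beta_0)}\cdot\Phi_1^{-1}(\beta_0)\Delta_*^4(\beta_0)\mathcal{C}^2(1+o_p(1))=\tilde q(1+o_p(1)).
\end{align*}
Since $\tilde q>\mathbb{C}_{\alpha,\max}(\rho(\beta_0))$ while the critical value converges to $\mathbb{C}_{\alpha,\max}(\rho(\beta_0))$, there is a fixed positive margin and $T$ again exceeds the critical value w.p.a.1. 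In both cases $\widehat{\phi}_{\mathcal{A}_s(\widehat{D},\widehat{\gamma}(\beta_0))}\convP 1$.

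The hard part will be Step 1: controlling the stochastic order of $\widehat{D}$ (and hence of $f_s$) under strong identification, in particular handling the degenerate direction $g(\Delta)=0$ where the leading $\mathcal{C}$-term of $\widehat{D}$ cancels, and then checking that the data-driven lower bound $\underline{a}$ dominates the deterministic threshold of Theorem \ref{thm:admissible}(iii) with enough slack to survive the estimation error $(1+o_p(1))$ — the factor $1.1$ baked into $\underline{a}$ together with $c_{\mathcal{B}}(\beta_0)/g(\Delta)^2\geq 1$ is exactly what supplies this slack. Step 2 is comparatively routine once Step 1 is available, and its two-case split makes transparent that the delicate bound on $\mathcal{A}_{1,s}$ is needed only at the single alternative $\Delta=\Delta_*(\beta_0)$, where $LM^*$ loses all power and $AR$ must carry the rejection.
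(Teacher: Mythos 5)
Your proposal is correct and follows essentially the same route as the paper: the paper likewise reduces the claim to verifying, w.p.a.1, the weight lower bound $\mathcal{A}_{1,s}(\widehat{D},\widehat{\gamma}(\beta_0)) \geq \tilde{q}\Phi_1(\beta_0)/(\mathcal{C}^2\Delta_*^4(\beta_0))$ required by Theorem \ref{thm:admissible}(iii), using consistency of $\widehat{\gamma}(\beta_0)$ together with the bound $d_n^2 f_s^2(\widehat{D},\widehat{\gamma}(\beta_0)) \leq (c_{\mathcal{B}}(\beta_0)+\eps)\widetilde{\mathcal{C}}^2$ and the slack in the factor $1.1$ (your $f_s^2 \leq \widehat{D}^2$ shortcut and the choice $\tilde q = 1.05\,\mathbb{C}_{\alpha,\max}(\rho(\beta_0))$ play exactly the role of the paper's $d_n f_s = d_n\widehat{D}+o_p(1)$ and $\tilde q = (1.1-c\eps)\mathbb{C}_{\alpha,\max}(\rho(\beta_0))$). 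Your Step 2 merely inlines the paper's own two-case proof of Theorem \ref{thm:admissible}(iii) (divergence of $LM^{*2}(\beta_0)$ when $\Delta \neq \Delta_*(\beta_0)$, and the $AR^2(\beta_0)$ term carrying the rejection when $\Delta = \Delta_*(\beta_0)$), which the paper invokes by citation rather than reproducing.
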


\section{Simulation}
\label{sec:sim}
\subsection{Power Curve Simulation for the Limit Problem}
\label{sec:sim1}

In this section, we present simulation results to compare the power performance of various tests under the limit problem described in Section \ref{sec:limit}. We consider  the following tests with a nominal rate of $5\%$: (i) our jackknife CLC test, where $\mu_D$ is estimated using either $pp$ or $krs$ method, (ii) the one-sided jackknife AR test defined in (\ref{eq:AR}), (iii) the jackknife LM test defined in (\ref{eq:LM}), 
and (iv) the test that is based on the orthogonalized jackknife LM statistic $LM^{*2}(\beta_0)$ defined in this paper. We conduct 5,000 simulation replications to obtain stable simulation results. 

We set the parameter space for $\beta$ as $\mathcal{B} = [-6/\mathcal{C},6/\mathcal{C}]$, where $\mathcal{C} = 3$ and $6$. 
The choice of parameter space follows that in I.\citet[Section 7.2]{Andrews(2016)}. We set $\beta_0 = 0$, and the values of the covariance matrix in (\ref{eq:limittrue}) are set as follows:
$\Phi_1 = \Psi = \Upsilon =1$, and $\Phi_{12} = \Phi_{13} = \tau = \rho$, 
where $\rho \in \{0.2, 0.4, 0.7, 0.9\}$. We then compute $\gamma(\beta_0)$ based on \eqref{eq:phipsi} as $\beta$ ranges over $\mathcal{B}$ and generate $AR(\beta_0)$ and $LM(\beta_0)$ based on  \eqref{eq:limit}. Last, we implement our CLC test purely based on $AR(\beta_0)$, $LM(\beta_0)$, $\gamma(\beta_0)$, and $\mathcal{B}$ without assuming the knowledge of $(\mathcal{C},\beta)$. We have tried to simulate under alternative settings of the covariance matrix, and the obtained patterns of the power behavior are very similar.  



Figures \ref{limit_fig1}--\ref{limit_fig4} plot the power curves for $\rho=0.2, 0.4, 0.7,$ and $0.9$. In each figure, we report the results under both $\mathcal{C}=3$ and $6$. 
We observe that overall, the two jackknife CLC tests have the best power properties in terms of minimizing the maximum regret. Especially when the identification is relatively strong ($\mathcal{C}=6$) and/or the degree of endogeneity is not very low ($\rho =0.4, 0.7$, or $0.9$), the jackknife CLC tests outperform their AR and LM counterparts by a large margin.  
In addition, we notice that when $\mathcal{C}=3$, for some parameter values $LM^*(\beta_0)$ can suffer from substantial declines in power relative to the other tests, 
which is in line with our theoretical predictions. 
By contrast, our jackknife CLC tests are able to guard against such substantial power loss because of the adaptive nature of their minimax procedure.
In Section \ref{sec: further_simu_limit} of the Online Supplement, we further report power curves for alternative values of the tuning parameters $(p_1,p_2)$ in (\ref{eq:a_underline}) and  of $\mathcal{C}$, and find that the overall patterns remain very similar.


\begin{figure}[h]
\centering
\includegraphics[width=0.9\textwidth,height = 6cm]{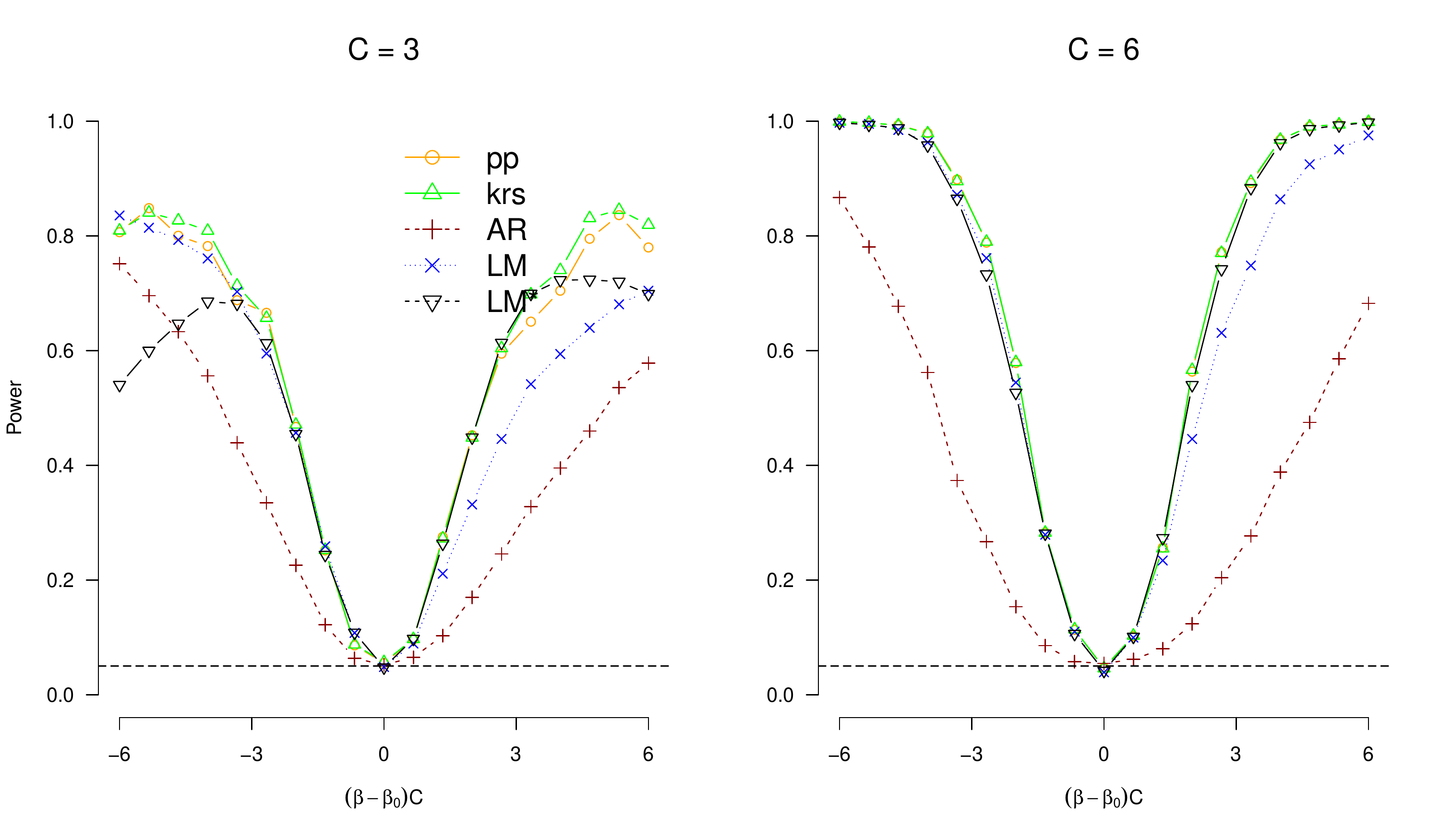}
\caption{Power Curve for $\rho = 0.2$}
\label{limit_fig1}
\end{figure}

\begin{figure}[h]
\centering
\includegraphics[width=0.9\textwidth,height = 5.85cm]{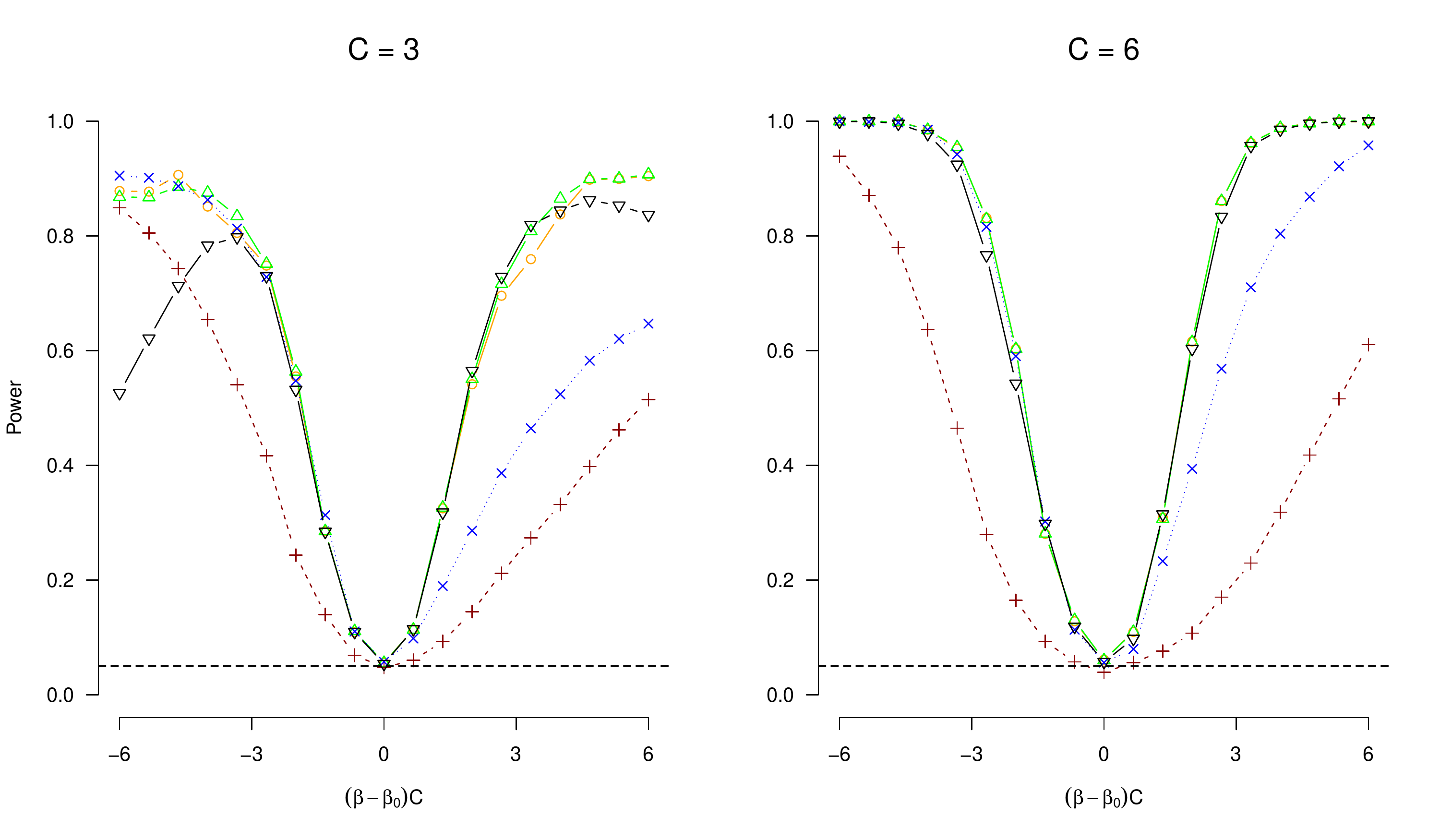}
\caption{Power Curve for $\rho=0.4$}
\label{limit_fig2}
\end{figure}

\begin{figure}[h]
\centering
\includegraphics[width=0.9\textwidth,height = 5.85cm]{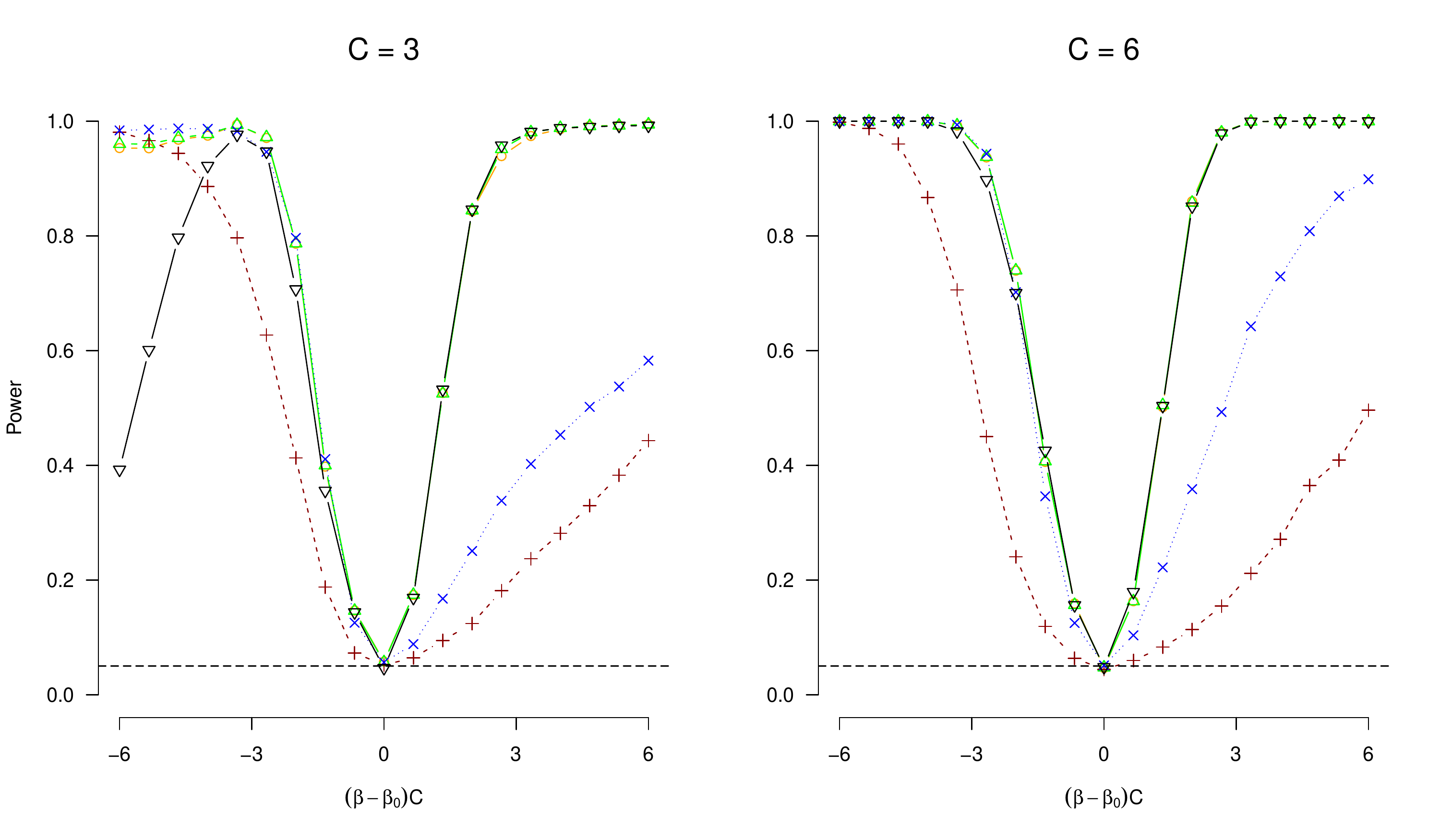}
\caption{Power Curve for $\rho=0.7$}
\label{limit_fig3}
\end{figure}

\begin{figure}[h]
\centering
\includegraphics[width=0.9\textwidth,height = 5.85cm]{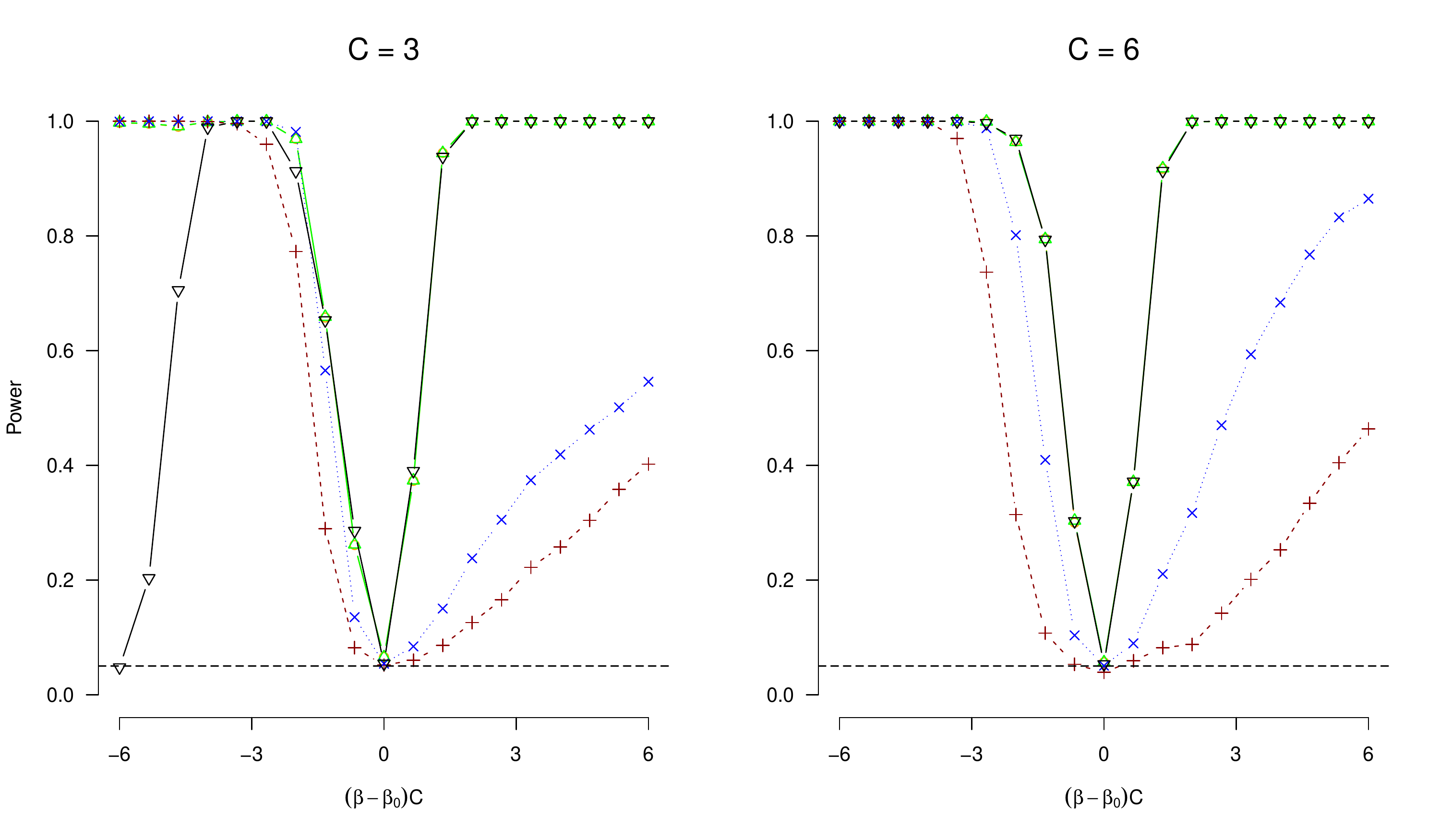}
\caption{Power Curve for $\rho=0.9$}
\label{limit_fig4}
\end{figure}

\subsection{Simulation Based on Calibrated Data}
\label{sec:sim2}
We follow the approach of \cite{Angrist-Frandsen2022} and \cite{MS22} and use a data generating process (DGP) calibrated based on the  1980 census  dataset from \cite{Angrist-Krueger(1991)}. We define the instruments as  
$$\tilde Z_i = \big( (1 \{Q_i = q, C_i = c\})_{q \in \{2,3,4\}, c \in \{31,\cdots,39\} }, (1 \{Q_i = q, P_i = p \})_{q \in \{2,3,4\}, p \in \{\text{51 states}\} } \big),$$
where $Q_i, C_i, P_i$ are individual $i$'s quarter of birth (QOB), year of birth (YOB) and place of birth (POB), respectively, so that there are 180 instruments. Note that the dummy with $q = 1$ and $c = 30$ is omitted in $Z_i$. We denote  $\tilde Y_i$ as income, $\tilde X_i$ as the highest grade completed,  and $\tilde W_i$ as the full set of YOB-POB interactions; that is,
$$\tilde W_i = \big( 1 \{C_i = c, P_i = p\}_{c \in \{30,...,39\}, p \in \{\text{51 states}\} }  \big),$$
which is a $510 \times 1$ matrix. 

As in \cite{Angrist-Frandsen2022}, using the full 1980 sample (consisting of 329,509 individuals), we first obtain the average $\tilde X_i$ for each QOB-YOB-POB cell; we call this $\bar{s}(q,c,p)$. Next we use LIML to estimate the structural parameters in the following linear IV regression: 
$$\tilde Y_i = \tilde X_i \beta_X + \tilde W_i^\top \beta_W +e_i,$$
$$\tilde X_i = \tilde Z_i^\top \Gamma_Z + \tilde W_i^\top \Gamma_W + V_i,$$
where $\tilde X$ is endogenous and instrumented by $\tilde Z_i$ and $\tilde W_i$ is the exogenous control variable. 
Denote the LIML estimate for $\beta_{X,W} \equiv (\beta_X^\top, \beta_W^\top)^\top$ as $\widehat\beta_{LIML}^\top = (\widehat\beta_{LIML, X}^\top, \widehat\beta_{LIML,W}^\top)$.
We let $\widehat{y}(C_i,P_i)  = \tilde W_i^\top \widehat{\beta}_{LIML,W}$ and 
$$\omega(Q_i,C_i,P_i) = \tilde Y_i - \tilde X_i \widehat{\beta}_{LIML,X} - \tilde W_i^\top \widehat{\beta}_{LIML,W}.$$

Based on the LIML estimate and the calibrated $\omega(Q_i,C_i,P_i)$, we simulate the following two DGPs:
\begin{enumerate}
\item DGP 1: 
\begin{align}
	\widetilde{y}_i = \bar{y} + \beta \widetilde{s}_i + \omega(Q_i,C_i,P_i)(\nu_i +\kappa_2 \xi_i)   \label{eq:DGP1} 
\end{align}
$$\widetilde{s}_i \sim Poisson(\mu_i),$$
where $\beta$ is the parameter of interest, $\nu_i$ and $\xi_i$ are independent standard normal, $\bar{y} = \frac{1}{n}\sum_{i=1}^n \widehat{y}(C_i,P_i)$, $\mu_i \equiv max\{1,\gamma_0+ \gamma_Z^\top \tilde Z_i + \kappa_1 \nu_i \}$, and $\gamma_0 +\gamma^\top_Z \tilde Z_i$ is the projection of $\bar{s}_i(q,c,p)$ onto a constant and $\tilde Z_i$. We set $\kappa_1 = 1.7$ and $\kappa_2 = 0.1$ as in \cite{MS22}.
\item DGP 2: Same as DGP 1 except that $\kappa_1 = 2.7$ and 
$$\widetilde{s}_i \sim \lfloor Poisson(2\mu_i)/2 \rfloor.$$
\end{enumerate}
We consider sample sizes of  0.5\%, 1\%, and 1.5\% of the full sample size. Upon obtaining $n$ observations, we exclude instruments with $\sum_{i=1}^n \tilde Z_{ij} <5$. This results in three different sample sizes: small, medium, and large, with 1,648, 3,296, and 4,943 observations, respectively. The number of instruments also varies across sample sizes, with 119, 142, and 150 instruments for small, medium, and large samples, respectively. Our DGP 1 is exactly the same as that in \cite{MS22}, with the correlation parameter of $\rho =0.41$. DGP 2 has a higher correlation parameter of $\rho =0.7$. The identification strength increases with the sample size. For DGP 1, the concentration parameters  $\mathcal{C}/\Upsilon^{1/2}$ for small, medium, and large samples are 2.15, 3.62, and 4.85, respectively. For DGP 2, they are 2.38, 3.97, 5.28, respectively. 

We emphasize that following \cite{Angrist-Frandsen2022} and \cite{MS22}, we only use $\tilde W_i$ to compute the LIML estimator and calibrate $\omega(Q_i,C_i,P_i)$, but do not use it to generate new data. Therefore, for the simulated data, the outcome variable is $\tilde{y}_i$, the endogenous variable is $\tilde s_i$, the IV $\tilde Z_i$ is viewed to be fixed, and the exogenous control variable is just an intercept. We then denote the demeaned versions of $\tilde{y}_i$, $\tilde{s}_i$, and $\tilde Z_i$ as $Y_i$, $X_i$, and $Z_i$, respectively, in \eqref{eq:1} and implement various inference methods described below. Following \cite{MS22}, we test the null hypothesis that $\beta = \beta_0$ for $\beta_0 =0.1$ while varying the true value $\beta \in \mathcal{B}$. The parameter space is set as $\mathcal{B} = [-0.5,0.5]$, which is consistent with the choice of parameter space for the empirical application below. The results below are based on 1,000 simulation repetitions. We provide more details about the implementation in Section \ref{sec:imp_sim} in the Online Supplement. We set $(p_1,p_2) = (0.01,1.1)$ in \eqref{eq:a_underline}. Additional simulation results using other choices of $(p_1,p_2)$ and $\mathcal{B}$ are reported in Section \ref{sec:add_sim_2} in the Online Supplement. All of them are very close to what we report here. 

We compare the following tests with a nominal rate of $5\%$:
\begin{enumerate}
\item pp: our jackknife CLC test when $\mu_D$ is estimated by the method $pp$. 
\item krs: our jackknife CLC test when $\mu_D$ is estimated by the method $krs$. 
\item AR: the one-sided jackknife AR test with the cross-fit variance estimator proposed by \cite{MS22}. 
\item LM\_CF: \citeauthor{Matsushita-Otsu2021}'s (\citeyear{Matsushita-Otsu2021}) jackknife LM test, but with a cross-fit variance estimator (details are given in Section \ref{sec:var2} in the Online Supplement).
\item 2-step: \citeauthor{MS22}'s (\citeyear{MS22}) two-step estimator in which the overall size is set at $5\%$.
\item LM$^*$: LM$^*$ test defined in this paper. 
\item LM\_MO: \citeauthor{Matsushita-Otsu2021}'s (\citeyear{Matsushita-Otsu2021}) original jackknife LM test. 
\end{enumerate}

\begin{figure}[h]
\centering
\includegraphics[width=1\textwidth,height = 5.85cm]{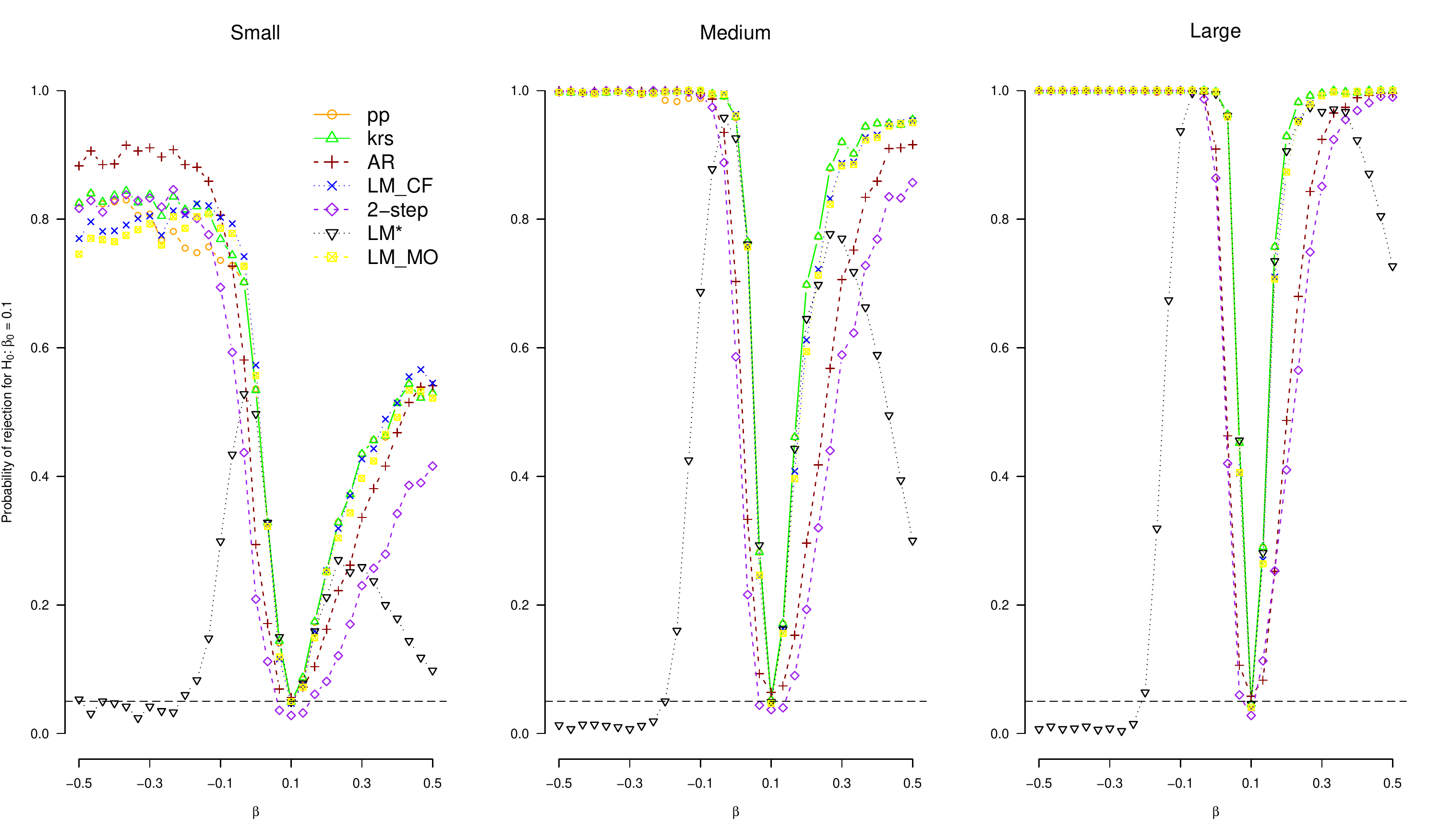}
\caption{Power Curve for DGP 1}
\label{fig1}
\end{figure}

\begin{figure}[h]
\centering
\includegraphics[width=1\textwidth,height = 5.85cm]{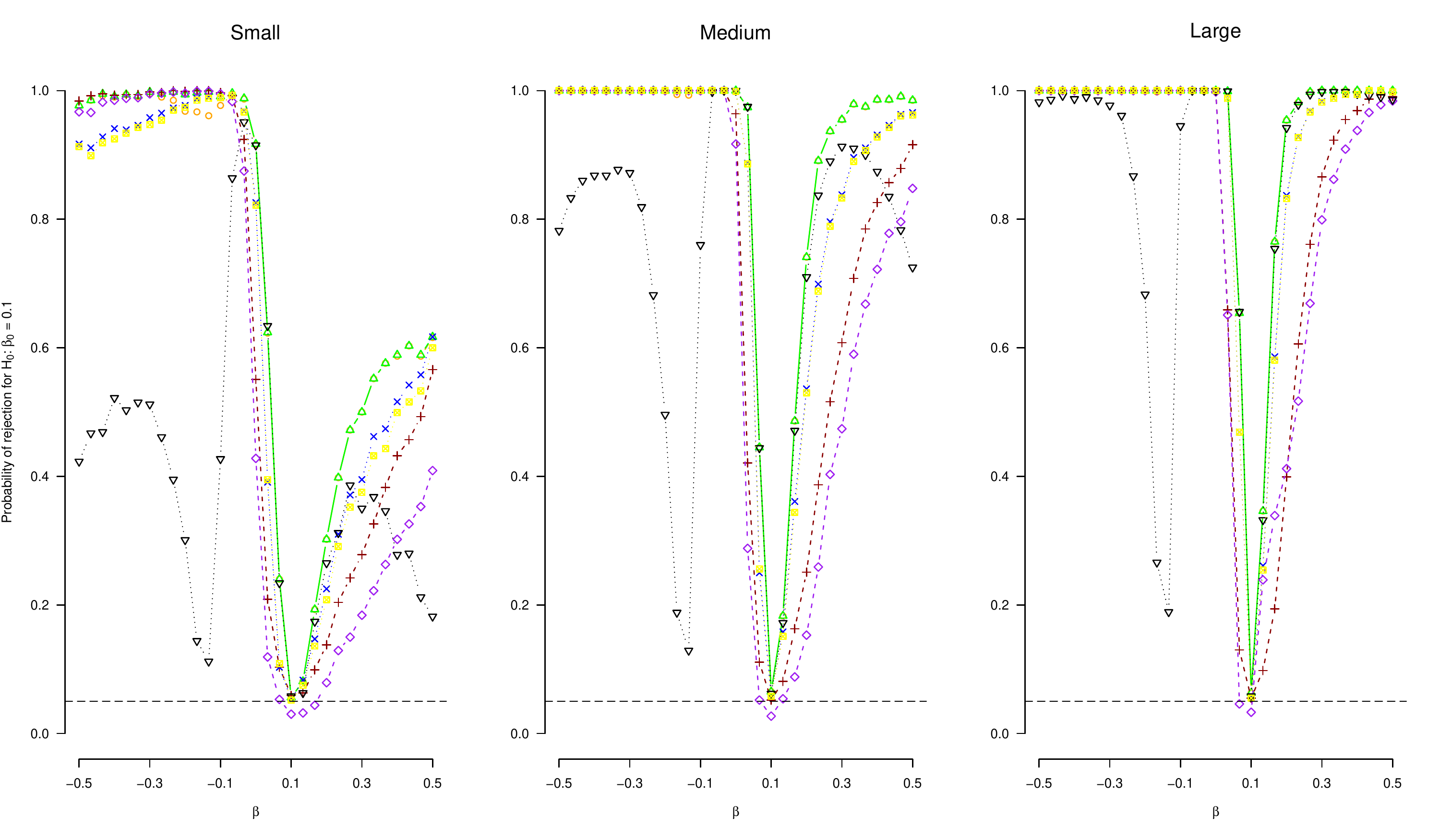}
\caption{Power Curve for DGP 2}
\label{fig2}
\end{figure}

Figures \ref{fig1} and \ref{fig2} plot the power curves of the aforementioned tests. We can make four observations. First, all methods control size well because they are all weak identification robust. Second, the performance of the jackknife CLC test with $krs$ is slightly better than that with $pp$, which is consistent with the power curve simulation in Section \ref{sec:sim1}. Third, in DGP 1 with a small sample size, the power of the jackknife AR test is at most about 9.2\% higher than that of the $krs$ test when $\beta$ is around -0.3. However, for alternatives close to the null (e.g., when $\beta$ is around 0), the power of the $krs$ test is 24\% higher, which implies that the power of the $krs$ test is still better than that for the jackknife AR test in the minimax sense. The power of the jackknife LM tests is similar to that of the $krs$ test in DGP 1 with a small sample size.
Fourth, for the rest of the scenarios, the power of the $krs$ test is the highest in most regions of the parameter space. The power of the jackknife AR and LM is at most 0.7\% higher than that of the $krs$ test at some point. For DGP 1 with medium and large sample sizes, the maximum power gaps between our $krs$ test and the jackknife LM are about 8.6\% and 5.6\%, and about 43.2\% and 50\% compared with the jackknife AR. Furthermore, they are 23.3\%, 19.5\%, and 18.5\% compared with the jackknife LM for DGP 2 with small, medium, and large sample sizes, respectively, and about 41.5\%, 55.3\%, and 55.85\% compared with the jackknife AR.

Figures \ref{fig3} and \ref{fig4} show the average values of $(a_1,a_2)$, which represents the weights assigned to $AR(\beta_0)$ and $LM(\beta_0)$ in our CLC tests, under DGPs 1 and 2, respectively. The weight assigned to $LM^*(\beta_0)$ is simply $1-a_1-a_2$. As shown in Table \ref{tab:clc_weights}, under weak identification and fixed alternatives, there is no clear winner among $AR(\beta_0)$, $LM(\beta_0)$, and $LM^*(\beta_0)$, and thus, our CLC test assigns weights to all the three tests. However, under strong identification and local alternative, $LM^*(\beta_0)$ is the UMP test and should carry all the weights, which means $a_1+a_2$ should be minimum. 
On the other hand, under strong identification and for some fixed alternatives, $LM^*(\beta_0)$ may lack power while both $AR(\beta_0)$ and $LM(\beta_0)$ have power 1. In this case, as long as we do not assign all weights on $LM^*(\beta_0)$, our CLC test should also have power 1. We observe that our simulation results are consistent with these theoretical predictions. First, when $\beta_0$ is close to the null 0.1, both $a_1$ and $a_2$ are small, indicating that most of the weights are put on $LM^*(\beta_0)$. Second, we observe from Figures \ref{fig1} and \ref{fig2} that the power of $LM^*(\beta_0)$ drops rapidly when $\beta$ is smaller than around zero. Therefore, our CLC test assigns more weights on $AR(\beta_0)$ and $LM(\beta_0)$. Third, for distant alternatives, significant weights are assigned to $AR(\beta_0)$ and $LM(\beta_0)$, which ensures the good power of our CLC test. Additionally, we note that the weights assigned to $AR(\beta_0)$ ($a_1$) are higher on the left side of the parameter space relative to the right, since $AR(\beta_0)$ is more powerful on the left.  

\begin{figure}[h]
\centering
\includegraphics[width=1\textwidth,height = 5.85cm]{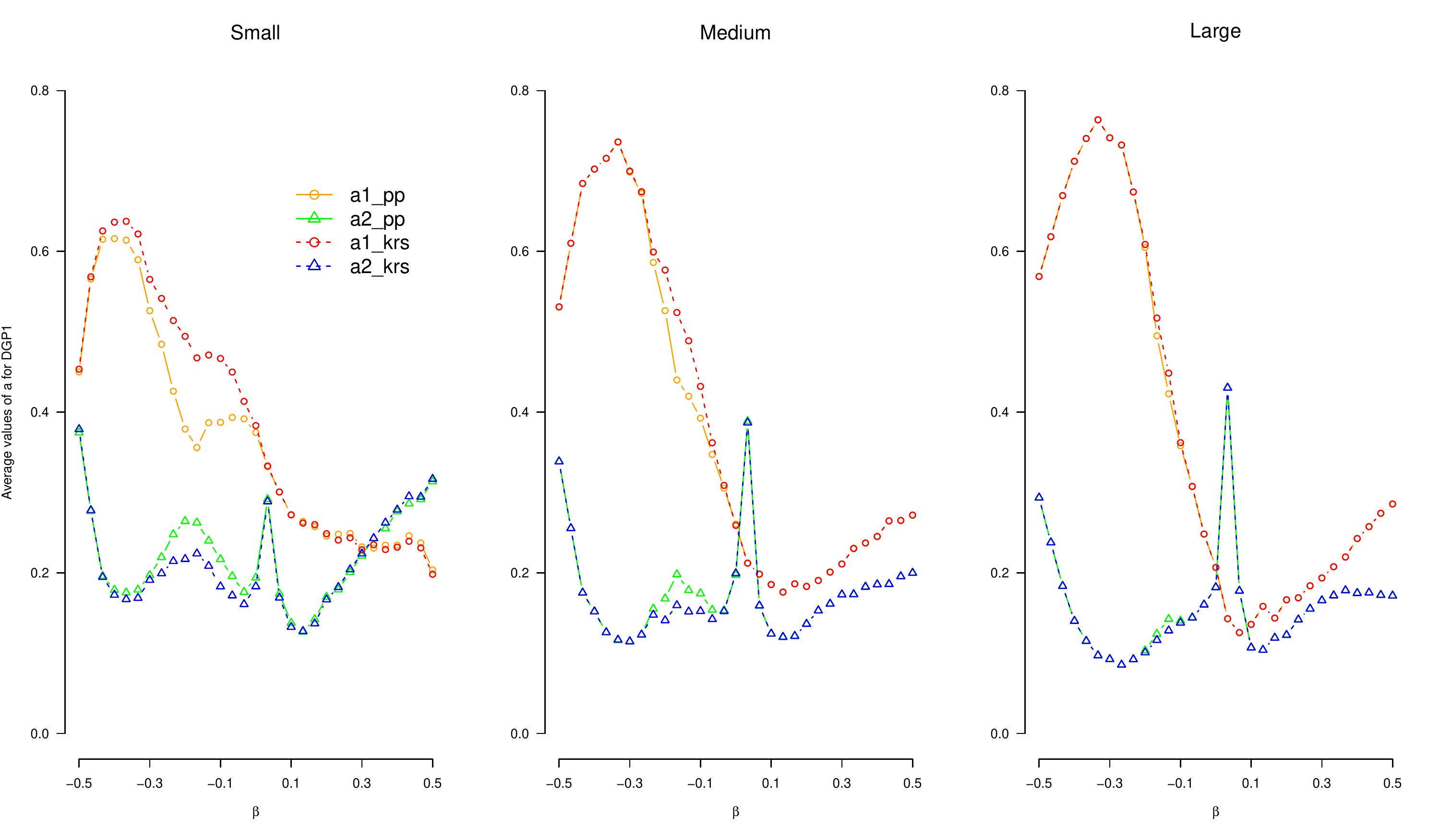}
\caption{Average Values of $a$ for DGP 1}
\label{fig3}
\end{figure}

\begin{figure}[h]
\centering
\includegraphics[width=1\textwidth,height = 5.85cm]{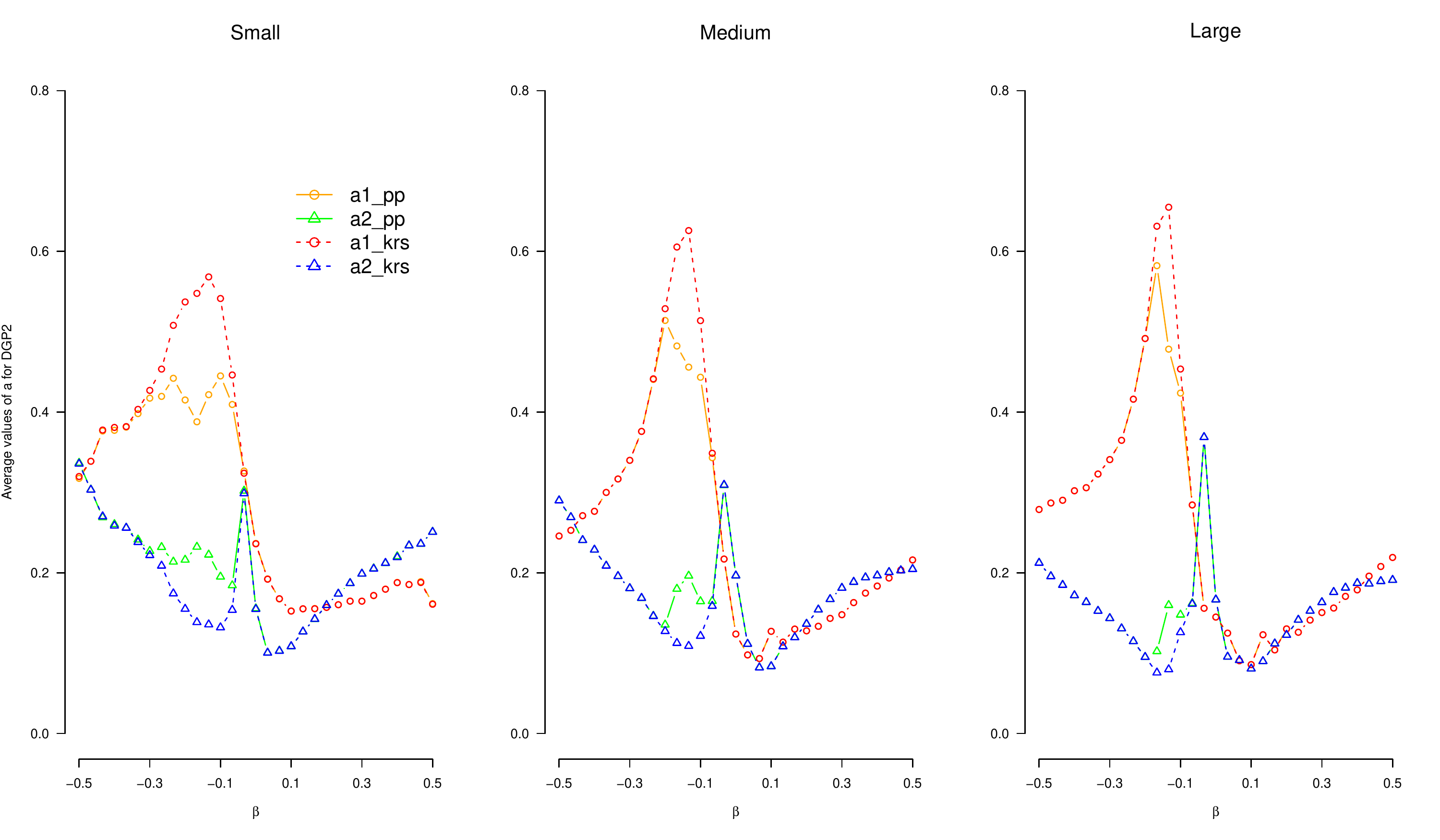}
\caption{Average Values of $a$ for DGP 2}
\label{fig4}
\end{figure}

\section{Empirical Application}\label{sec: empirical}

In this section, we consider the linear IV regressions with the specification underlying \citet[Table VII, column (6)]{Angrist-Krueger(1991)}, using the full original dataset.\footnote{The dataset can be downloaded from MIT Economics, Angrist Data Archive, https://economics.mit.edu/faculty/angrist/data1/data/angkru1991.} The outcome variable $Y$ and endogenous variable $X$ are log weekly wages and schooling, respectively. We follow \cite{Angrist-Krueger(1991)} and focus on two specifications with 180 and 1,530 instruments. The 180 instruments consist of 30 quarter and
year of birth interactions (QOB-YOB) and 150 quarter and place of birth interactions
(QOB-POB).  The second specification includes full
interactions among QOB-YOB-POB, resulting in 1,530 instruments. The exogenous control variables have been partialled out from the outcome, endogenous variables, and IVs. Further details on the empirical application can be found in Section \ref{sec:imp_app} in the Online Supplement. 
The considered tests are similar to those in the previous section. 
The jackknife AR test is defined in \eqref{eq:AR} with $\widehat{\Phi}_1$ being the cross-fit estimator in \cite{MS22}. The jackknife LM test is defined in (\ref{eq:LM}) with the cross-fit estimator for $\Psi(\beta_0)$. The $pp$ and $krs$ tests are our jackknife CLC tests. The two-step procedure is given by \citet[Section 5]{MS22}. Specifically, the researcher accepts the null if $\widetilde{F}>9.98$ and $Wald(\beta_0) < \mathbb{C}_{0.02}$\footnote{$\widetilde{F} = Q_{X,X}/\widehat{\Upsilon}$, where $\widehat{\Upsilon}$ is the cross-fit estimator. $Wald(\beta_0)$ is defined as $\left(\frac{\hat{\beta} - \beta_0}{\hat{V}}\right)^2$, where $\hat{\beta}$ is the JIVE estimator and $\hat{V}$ is a cross-fit estimator of the asymptotic variance of $\hat{\beta}$. We refer interested readers to \citet[Section 5]{MS22} for more details.} or if $\widetilde{F}\leq 9.98$ and $AR(\beta_0) < \emph{z}_{0.02}$. In the case of 180 instruments, because $\widetilde{F} = 13.42 > 9.98$, the lower and upper bounds of the 95\% confidence interval (CI) for the two-step procedure correspond respectively to the minimum and maximum of the set $\{\beta_0 \in \Re: Wald(\beta_0)<\mathbb{C}_{0.02}\}$; similarly, for the 1,530 instruments, as $\widetilde{F} = 6.32 \leq 9.98$, the lower and upper bounds of the CI for the two-step procedure correspond respectively to the minimum and maximum of the set $\{\beta_0 \in \Re: AR(\beta_0)<\emph{z}_{0.02} \}$. We also report the 95\% Wald test CI based on the JIVE estimator, denoted as JIVE-t. Table \ref{tab:CI} reports the 95\% CIs by inverting the corresponding 5\% tests mentioned above for the parameter space $\mathcal{B} = [-0.5,0.5]$. Note all CIs except JIVE-t are robust to weak identification. As $\widetilde{F}$'s are higher than $4.14$ in both cases, the JIVE-t (5\%) has the \citeauthor{Stock-Yogo(2005b)} (\citeyear{Stock-Yogo(2005b)})-type guarantee with at most a 5\% size distortion (i.e., the overall size is less than 10\%). We set $(p_1,p_2)$ in \eqref{eq:a_underline} as $(0.01,1.1)$. The empirical results with other choices of $(p_1,p_2)$ and $\mathcal{B}$ are reported in Section \ref{sec:add_app} in the Online Supplement. All of them are very close to what we report here.

\begin{table}[h]
\adjustbox{max width=\textwidth}{%
	\centering
	\begin{tabular}{c| c c c c c c c } 
		\hline &   jackknife AR & jackknife LM  & JIVE-t & Two-step & pp & krs  \\  
		& (5\%) & (5\%) &  (5\%) & (5\%) & (5\%) & (5\%)   \\ [0.5ex] 
		\hline
		180 IVs &   [0.008,0.201] & [0.067,0.135] & [0.066,0.132]  &  [0.059,0.139] & [0.067,0.128] & [0.067,0.128] \\ 
		\hline
		1530 IVs & [-0.035,0.22] & [0.036,0.138] & [0.035,0.133] &  [-0.051,0.242] & [0.037,0.133] & [0.037,0.133] \\ [1ex] 
		\hline
	\end{tabular}
}
\caption{\textbf{Confidence Intervals}}
\small Notes: The $\widetilde{F}$'s for 180 and 1,530 instruments are 13.42 and 6.32, respectively. The grid-search used for our confidence interval was over 10,000 equidistant grid-points for $\beta_0 \in [-0.5,0.5]$.  Our jackknife AR confidence interval for 1530 instruments differs from that in \cite{MS22} because they used year-of-birth 1930-1938 dummies for the QOB-YOB-POB interactions, whereas we used 1930-1939 dummies. 
More details are provided in Section \ref{sec:imp_app} in the Online Supplement.
\label{tab:CI}
\end{table}

Table \ref{tab:CI} highlights that the CIs generated by our jackknife CLC tests are the shortest among all the weak identification robust CIs (i.e., pp, krs, jackknife AR, jackknife LM, and two-step). Furthermore, the jackknife CLC CIs are $7.6\%$ and $2.0\%$ shorter than the non-robust JIVE-t CIs with 180 and 1,530 instruments, respectively, which is in line with our theoretical result that the CLC tests are adaptive to the identification strength and efficient under strong identification.

\newpage
\appendix

\section{Exogenous Control Variables}
\label{sec:W0}
Suppose we observe $\{\tilde Y_i, \tilde X_i, \tilde Z_i, W_i\}_{i \in [n]}$, where 
\begin{align*}
	\tilde Y_i & = \tilde X_i \beta+ W_i^\top \gamma + \tilde e_i, \quad   \tilde X_i = \tilde \Pi_i + \tilde V_i,
\end{align*}
$\tilde X_i \in \Re$, $\tilde Z_i \in \Re^K$, $W_i \in \Re^{d}$, $\tilde \Pi_i = \mathbb{E} \tilde{X}_i$, and $(\tilde Z_i, W_i)_{i \in [n]}$ are treated as fixed. We allow $K$ to diverge to infinity with $n$ while $d$ is fixed. We then have $\mathbb{E}\tilde e_i = \mathbb{E} \tilde V_i = 0$. Denote $P_W = W (W^\top W)^{-1} W^\top$ and $M_W = I_n - P_W$ be the projection and residual matrices based on $W$, respectively, where $I_n$ is the $n\times n$ identity matrix and $W = (W_1,W_2,\cdots,W_n)^\top \in \Re^{n \times d}$. Further denote $\tilde Y,\tilde X, \tilde e,\tilde \Pi, \tilde V$ as matrices with their $i$th row being $\tilde Y_i, \tilde X_i, \tilde e_i,\tilde \Pi_i, \tilde V_i$, respectively. Then, we have
\begin{align*}
	Y_i = X_i \beta+e_i, \quad X_i = \Pi_i + V_i,
\end{align*}
where $Y= M_W \tilde Y$, $X= M_W \tilde X$, $V= M_W \tilde V$, $e= M_W \tilde e$, $\Pi= M_W \tilde \Pi$, and $Z= M_W \tilde Z$. We still denote $P$ as the projection matrix constructed by $Z$. The next theorem shows Assumption \ref{ass:weak_convergence} holds. 
\begin{thm}
	Suppose 
	$\{\tilde V_i, \tilde e_i\}_{i \in [n]}$ are independent, $\max_i \mathbb{E}\tilde e_i^4 + \max_i \mathbb{E}\tilde V_i^4 \leq C<\infty$, $\max_i ||W_i||_2 \leq C<\infty$, $\Pi^\top \Pi/K = O(1)$, and $0<c \leq \text{mineig}(W^\top W/n) \leq \text{maxeig}(W^\top W/n) \leq C<\infty$, for some constants $c,C$. Then, Assumption \ref{ass:weak_convergence} holds and $Q_{e,e} = Q_{\tilde{e},\tilde{e}} + o_P(1)$. If in addition, $p_n^2 \frac{\Pi^\top \Pi}{K} = o(1)$ with $p_n = \max_{i}P_{ii}$, then we have $Q_{X,e} = Q_{\overline{X},\tilde{e}} + o_P(1)$ and $Q_{X,X} = Q_{\overline{X},\overline{X}}+ o_P(1)$, where $\overline{X}_i = \Pi_i+\tilde V_i$.
	\label{thm:W}
\end{thm}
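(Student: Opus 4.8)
The plan is to exploit a single algebraic simplification: the projection $P=Z(Z^\top Z)^{-1}Z^\top$ formed from the residualized instruments $Z=M_W\tilde Z$ satisfies $M_WP=PM_W=P$, because $Z^\top Z=\tilde Z^\top M_W\tilde Z$ and $M_W$ is idempotent. Hence $(M_W\tilde u)^\top P(M_W\tilde v)=\tilde u^\top P\tilde v$ for any $\tilde u,\tilde v$, so the \emph{off-diagonal} parts of the relevant quadratic forms are invariant under partialling out $W$. Writing $Q_{a,b}=\frac{1}{\sqrt K}\big(a^\top Pb-\sum_iP_{ii}a_ib_i\big)$ and using $e=M_W\tilde e$, $X=M_W\tilde X$, and $\overline X=M_W\tilde\Pi+\tilde V$, a direct expansion gives $e^\top Pe=\tilde e^\top P\tilde e$, $X^\top Pe=\overline X^\top P\tilde e$, and $X^\top PX=\overline X^\top P\overline X$. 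Therefore each target difference collapses to a purely diagonal discrepancy,
\begin{align*}
Q_{e,e}-Q_{\tilde e,\tilde e}=\frac{1}{\sqrt K}\sum_i P_{ii}(\tilde e_i^2-e_i^2),\qquad Q_{X,X}-Q_{\overline X,\overline X}=\frac{1}{\sqrt K}\sum_i P_{ii}(\overline X_i^2-X_i^2),
\end{align*}
and analogously for $Q_{X,e}-Q_{\overline X,\tilde e}$.

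Next I would bound these diagonal terms. The maintained conditions give $(P_W)_{ii}=W_i^\top(W^\top W)^{-1}W_i\le \|W_i\|_2^2/\mathrm{mineig}(W^\top W)=O(1/n)$, whence $\mathbb E (P_W\tilde e)_i^2=\sum_j(P_W)_{ij}^2\mathbb E\tilde e_j^2\le C(P_W)_{ii}=O(1/n)$ and $\tilde e^\top P_W\tilde e=O_P(1)$ (as $P_W$ has rank $d$), with the analogous bounds for $\tilde V$. For $Q_{e,e}$, set $u=P_W\tilde e$ so that $\tilde e_i^2-e_i^2=2\tilde e_iu_i-u_i^2$; then $\frac{1}{\sqrt K}\sum_iP_{ii}u_i^2\le \frac{1}{\sqrt K}\tilde e^\top P_W\tilde e=O_P(K^{-1/2})$, while $\sum_{i,j}P_{ii}(P_W)_{ij}\tilde e_i\tilde e_j$ has, by a routine moment calculation using $P_{ii}\le1$, $\sum_{i,j}(P_W)_{ij}^2=\mathrm{tr}(P_W)=d$, $\max_i(P_W)_{ii}=O(1/n)$, and the fourth-moment bound, mean $O(K/n)$ and variance $O(1)$, so after division by $\sqrt K$ it is $o_P(1)$. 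This establishes $Q_{e,e}=Q_{\tilde e,\tilde e}+o_P(1)$ and uses no assumption on $\Pi$.

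The differences $Q_{X,e}-Q_{\overline X,\tilde e}$ and $Q_{X,X}-Q_{\overline X,\overline X}$ produce, besides $\tilde V$-only diagonal terms treated exactly as above, cross terms of the form $\frac{1}{\sqrt K}\sum_iP_{ii}\Pi_i(P_W\tilde e)_i$ and $\frac{1}{\sqrt K}\sum_iP_{ii}\Pi_i(P_W\tilde V)_i$. Each is a linear form in the independent mean-zero errors, namely $\frac{1}{\sqrt K}\sum_j(P_Wc)_j\tilde e_j$ with $c_i=P_{ii}\Pi_i$, whose variance is at most $\frac{C}{K}c^\top P_Wc\le \frac{C}{K}\|c\|_2^2$; since $\|c\|_2^2=\sum_iP_{ii}^2\Pi_i^2\le p_n^2\,\Pi^\top\Pi$, this is bounded by $Cp_n^2\Pi^\top\Pi/K=o(1)$ \emph{precisely} under the extra assumption $p_n^2\Pi^\top\Pi/K=o(1)$. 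The remaining products $(P_W\tilde V)_i\tilde e_i$, $(P_W\tilde V)_i(P_W\tilde e)_i$, and $(P_W\tilde V)_i^2$ are $O_P(K^{-1/2})$ as before, giving $Q_{X,e}=Q_{\overline X,\tilde e}+o_P(1)$ and $Q_{X,X}=Q_{\overline X,\overline X}+o_P(1)$.

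It then remains to verify Assumption \ref{ass:weak_convergence} for the reduced vector $(Q_{\tilde e,\tilde e},Q_{\overline X,\tilde e},Q_{\overline X,\overline X}-\mathcal C)^\top$, whose entries are bilinear forms $\frac{1}{\sqrt K}\sum_{i\neq j}a_iP_{ij}b_j$ with $a_i,b_i\in\{\Pi_i,\tilde e_i,\tilde V_i\}$, the pairs $(\tilde e_i,\tilde V_i)$ independent across $i$ and $\Pi_i$ nonrandom. Here I would invoke the central limit theorem for such bilinear forms under many-instrument asymptotics (as in the cited many-IV literature and recorded in the Online Supplement), whose Lindeberg/Lyapunov conditions follow from the uniform fourth-moment bound, $\Pi^\top\Pi/K=O(1)$, and the leverage/eigenvalue conditions maintained throughout. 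I expect the third step to be the main obstacle: one must recognize that only the diagonal self-terms survive partialling out and that among them the $\Pi$-weighted cross terms are binding, so that controlling them is exactly what forces $p_n^2\Pi^\top\Pi/K=o(1)$; the identity $M_WP=P$ is what renders everything else routine.
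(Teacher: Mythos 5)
Your opening identity is correct and attractive: since $M_W Z = Z$, one has $M_W P = P M_W = P$, hence $e^\top P e=\tilde e^\top P\tilde e$, $X^\top Pe=\overline X^\top P\tilde e$, $X^\top PX=\overline X^\top P\overline X$, and every difference of $Q$'s collapses to a diagonal sum. This is a cleaner route to exactly the decomposition the paper obtains through Lemma \ref{lem:W} and the expansions in $\hat\gamma_e,\hat\gamma_V$: your term $\frac{1}{\sqrt K}\sum_i P_{ii}\Pi_i(P_W\tilde e)_i=\frac{1}{\sqrt K}(P_Wc)^\top\tilde e$ with $c_i=P_{ii}\Pi_i$ is precisely the paper's $\frac{1}{\sqrt K}\sum_j G_j\tilde e_j$. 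Your treatment of $Q_{e,e}-Q_{\tilde e,\tilde e}$ (mean $O(K/n)$, variance $O(1)$, hence $o_P(1)$ after dividing by $\sqrt K$) and of the two equivalences under the extra condition $p_n^2\Pi^\top\Pi/K=o(1)$ is sound.

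The genuine gap is in your final step. Theorem \ref{thm:W} asserts that Assumption \ref{ass:weak_convergence} holds under the \emph{first} set of conditions alone, where $p_n^2\Pi^\top\Pi/K$ need not vanish (e.g.\ $K\asymp n$ with $\Pi^\top\Pi\asymp K$). In that regime your own computation gives
\begin{align*}
Q_{X,e}-Q_{\overline X,\tilde e}=\frac{1}{\sqrt K}(P_Wc)^\top\tilde e+o_P(1),
\qquad
Q_{X,X}-Q_{\overline X,\overline X}=\frac{2}{\sqrt K}(P_Wc)^\top\tilde V+o_P(1),
\end{align*}
and these mean-zero linear forms have variance of order $p_n^2\Pi^\top\Pi/K$, which is $O(1)$ but not $o(1)$ in general. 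Consequently, proving a CLT "for the reduced vector $(Q_{\tilde e,\tilde e},Q_{\overline X,\tilde e},Q_{\overline X,\overline X}-\mathcal C)$" does not establish Assumption \ref{ass:weak_convergence} for the vector that actually appears there, namely $(Q_{e,e},Q_{X,e},Q_{X,X}-\mathcal C)$: the two differ by terms that are $O_P(1)$ rather than $o_P(1)$, and those terms alter the limiting covariance matrix. The repair is to keep the partialling-out contribution inside the statistic to which the CLT is applied, as the paper does: write $Q_{X,e}=Q_{\tilde V,\tilde e}+\frac{1}{\sqrt K}\sum_i(\omega_i+G_i)\tilde e_i+o_P(1)$ (and analogously for $Q_{X,X}$ with $\tilde V_i$), check $\mathrm{Var}\bigl(\frac{1}{\sqrt K}\sum_i(G_i+\omega_i)\tilde e_i\bigr)\le C\bigl[p_n^2\Pi^\top\Pi/K+\Pi^\top\Pi/K\bigr]=O(1)$, and invoke the bilinear-plus-linear-form CLT of \cite{Chao(2012)} for the combined statistic. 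Assumption \ref{ass:weak_convergence} only requires joint normality with \emph{some} covariance matrix, so including these terms costs nothing; dropping them, as your step 3 does, proves the right statement only under the additional hypothesis $p_n^2\Pi^\top\Pi/K=o(1)$, which is exactly the case the first claim of the theorem is not allowed to assume.
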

Theorem \ref{thm:W} shows Assumption \ref{ass:weak_convergence} still holds if $(Y_i,X_i,Z_i)$ are defined after partialing out the fixed dimensional control variables $W_i$. It further provides a sufficient condition under which the effect of partialling-out on the sampling error is asymptotically negligible, i.e., the asymptotic covariance matrix remains the same after partialing out $W_i$.
To interpret the sufficient condition, we consider the balanced design in which $p_n$ is of order $K/n$. If $K/n = o(1)$ and $\Pi^\top \Pi/n = O(1)$, then the sufficient condition holds because 
\begin{align*}
	p_n^2 \Pi^\top \Pi/K = O\left( \frac{\Pi^\top \Pi}{n} \frac{K}{n}\right) = o(1).
\end{align*}
On the other hand, if $K \asymp n$, the sufficient condition requires $\Pi^\top \Pi/K = o(1)$, which can hold under both weak identification ($\Pi^\top \Pi/\sqrt{K} = O(1)$) and strong identification $(\Pi^\top \Pi/\sqrt{K} \rightarrow \infty)$. We further emphasize that, even if $K \asymp n$ and $\Pi^\top \Pi/K \asymp 1$ so that the sufficient condition does not hold, Assumption \ref{ass:weak_convergence} still holds. It is just that partialing out the exogenous control variable will have a non-negligible effect on the asymptotic covariance of $(Q_{e,e},Q_{X,e}, Q_{X,X} - Q_{\Pi,\Pi})$.

\section{Verifying Assumption \ref{ass:variance_est}}
\label{sec:var}
\subsection{Standard Estimators}
\label{sec:var1}

In this section, we maintain Assumption \ref{ass:K}, which is stated below and just \citet[Assumption 1]{MS22}. 
\begin{ass}
	Suppose $\{ V_i, e_i\}_{i \in [n]}$ are independent and $\mathbb{E} e_i = \mathbb{E} V_i = 0$. 
	Suppose $P$ is an $n \times n$ projection matrix of rank $K$, $K\rightarrow \infty$ as $n \rightarrow \infty$ and there exists a constant $\delta$ such that $P_{ii}\leq \delta <1$.
	\label{ass:K}
\end{ass}

Following the results in \cite{Chao(2012)} and \cite{MS22}, we can show that under either weak or strong identification, Assumption \ref{ass:weak_convergence} in the paper holds:  
\begin{align}
	\begin{pmatrix}
		& Q_{e,e} \\
		& Q_{X,e} \\
		& Q_{X,X} - \mathcal{C}
	\end{pmatrix} \convD \N\left(\begin{pmatrix}
		0 \\
		0 \\
		0
	\end{pmatrix},\begin{pmatrix}
		\Phi_1 & \Phi_{12} & \Phi_{13} \\
		\Phi_{12} & \Psi & \tau \\
		\Phi_{13} & \tau & \Upsilon
	\end{pmatrix}\right), 
	\label{eq:limittrue'}
\end{align}
where  $\sigma_i^2 = \mathbb{E}e_i^2$, $\eta_i^2 = \mathbb{E}V_i^2$, $\gamma_i = \mathbb{E}e_iV_i$, $\omega_i = \sum_{j \neq i}P_{ij}\Pi_j$, 
\begin{align*}
	\Phi_1 & = \lim_{n \rightarrow \infty} \frac{2}{K}\sum_{i \in [n]}\sum_{j \neq i}P_{ij}^2 \sigma_i^2\sigma_j^2,\\
	\Phi_{12}& = \lim_{n \rightarrow \infty}\frac{1}{K}\sum_{i \in [n]}\sum_{j \neq i}P_{ij}^2 (\gamma_j \sigma_i^2 + \gamma_i\sigma_j^2), \\
	\Phi_{13} & = \lim_{n \rightarrow \infty}\frac{2}{K}\sum_{i \in [n]}\sum_{j \neq i}P_{ij}^2 \gamma_i\gamma_j, \\
	\Psi & = \lim_{n \rightarrow \infty}\left[\frac{1}{K}\sum_{i \in [n]}\sum_{j \neq i}P_{ij}^2 (\eta_i^2 \sigma_j^2 + \gamma_i \gamma_j) + \frac{1}{K}\sum_{i \in [n]} \omega_i^2 \sigma_i^2\right], \\
	\tau & = \lim_{n \rightarrow \infty} \left[ \frac{2}{K}\sum_{i \in [n]}\sum_{j \neq i}P_{ij}^2 \eta_i^2 \gamma_j + \frac{2}{K}\sum_{i \in [n]} \omega_i^2 \gamma_i \right], \quad \text{and}\\
	\Upsilon & = \lim_{n \rightarrow \infty} \left[ \frac{2}{K}\sum_{i \in [n]}\sum_{j \neq i}P_{ij}^2\eta_i^2 \eta_j^2 + \frac{4}{K}\sum_{i \in [n]} \omega_i^2 \eta_i^2 \right].
\end{align*}

We note that the standard estimators of the above variance components proposed by \cite{crudu2021} are equal to \citeauthor{Chao(2012)}'s (\citeyear{Chao(2012)}) 
estimators with their residual $\hat{e}_i$ replaced by $e_i(\beta_0)$. Specifically, let 
\begin{align*}
	\widehat{\Phi}_1(\beta_0) & =  \frac{2}{K}\sum_{i \in [n]}\sum_{j \neq i}P_{ij}^2 e_i^2(\beta_0)e_j^2(\beta_0),\\
	\widehat{\Phi}_{12}(\beta_0)& = \frac{1}{K}\sum_{i \in [n]}\sum_{j \neq i}P_{ij}^2 (X_je_j(\beta_0) e_i^2(\beta_0) + X_ie_i(\beta_0)e_j^2(\beta_0)), \\
	\widehat{\Phi}_{13}(\beta_0) & = \frac{2}{K}\sum_{i \in [n]}\sum_{j \neq i}P_{ij}^2 X_ie_i(\beta_0)X_je_j(\beta_0), \\
	\widehat{\Psi}(\beta_0) & = \frac{1}{K}\sum_{i \in [n]}(\sum_{j \neq i}P_{ij}X_j)^2 e_i^2(\beta_0) + \frac{1}{K} \sum_{i \in [n]}\sum_{j \neq i}P_{ij}^2 X_ie_i(\beta_0) X_je_j(\beta_0), \\
	\widehat{\tau}(\beta_0) & = \frac{1}{K}\sum_{i \in [n]}(\sum_{j \neq i}P_{ij}X_j)^2X_ie_i(\beta_0) + \frac{1}{K}\sum_{i \in [n]}\sum_{j \neq i}P_{ij}^2 X_i^2 X_je_j(\beta_0), \quad \text{and}\\
	\widehat{\Upsilon} & = \frac{2}{K}\sum_{i \in [n]}\sum_{j \neq i}P_{ij}^2X_i^2 X_j^2.
\end{align*}

\begin{ass}
	Suppose $\max_{i \in [n]}|\Pi_i| \leq C$, 
	$\frac{\Pi^\top \Pi}{K} = o(1)$,
	and  $\max_i \mathbb{E} e_i^6 + \max_i \mathbb{E} V_i^6 <\infty$.
	\label{ass:reg1}
\end{ass}

Two remarks on Assumption \ref{ass:reg1} are in order. First, $\max_{i \in [n]}|\Pi_i| \leq C$ is mild because $\Pi_i = \mathbb{E}X_i$. Second, Assumption \ref{ass:reg1} allows for weak identification when $\Pi^\top \Pi/\sqrt{K} \rightarrow c$ for a constant $c$. 
It also allows for strong identification when $\Pi^\top \Pi/\sqrt{K}\rightarrow \infty$
and $\Pi^\top \Pi/ K \rightarrow 0$. The restriction that $\Pi^\top \Pi/ K \rightarrow 0$
is needed because Assumption \ref{ass:variance_est} includes the case of fixed alternatives (i.e., fixed $\Delta \neq 0$), which is not considered in \cite{crudu2021} and \cite{Chao(2012)}.
Furthermore, our results include $\widehat{\tau}(\beta_0)$
and $\widehat{\Upsilon}$, which are not considered in
\cite{crudu2021} and \cite{Chao(2012)}, 
and the consistency of these terms require $\Pi^\top \Pi/ K \rightarrow 0$.

\begin{thm}
	Suppose Assumptions \ref{ass:K} and \ref{ass:reg1} hold. Then Assumption \ref{ass:variance_est} holds for \citeauthor{crudu2021}'s (\citeyear{crudu2021}) estimators defined above. 
	\label{thm:var1}
\end{thm}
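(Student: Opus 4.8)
The plan is to reduce the verification of Assumption \ref{ass:variance_est} to two tasks: (i) the nondegeneracy bounds $\inf_{\beta_0\in\mathcal B}\Phi_1(\beta_0)>0$, $\inf_{\beta_0\in\mathcal B}\Psi(\beta_0)>0$, and $\Upsilon>0$; and (ii) the consistency $\|\widehat\gamma(\beta_0)-\gamma(\beta_0)\|_2=\op(1)$. Task (i) is the easy part: using $P_{ii}\le\delta<1$ one has $\frac1K\sum_i\sum_{j\neq i}P_{ij}^2=\frac1K(K-\sum_iP_{ii}^2)\ge 1-\delta>0$, so each quadratic-form limit is bounded below once the error second moments are bounded away from zero, and the $\Delta$-polynomial structure in \eqref{eq:phipsi} transfers these bounds across $\beta_0\in\mathcal B$ by compactness. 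The bulk of the work is task (ii), which I would prove component by component through a second-moment (mean-plus-variance) argument combined with the Markov and Chebyshev inequalities.

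For the mean, I would substitute $e_i(\beta_0)=e_i+X_i\Delta$ and $X_i=\Pi_i+V_i$ into each estimator and expand into a finite sum of terms of the two generic shapes $\frac1K\sum_i\sum_{j\neq i}P_{ij}^2\,g_i h_j$ and $\frac1K\sum_i(\sum_{j\neq i}P_{ij}X_j)^2 g_i$, where $g$ and $h$ are monomials in $(e,V,\Pi)$. Taking expectations and using independence together with $\mathbb Ee_i=\mathbb EV_i=0$ annihilates all cross terms except those that reproduce $\sigma_i^2=\mathbb Ee_i^2$, $\eta_i^2=\mathbb EV_i^2$, $\gamma_i=\mathbb Ee_iV_i$, and $\omega_i=\sum_{j\neq i}P_{ij}\Pi_j$. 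Matching the survivors against the population definitions shows that $\mathbb E\widehat\gamma(\beta_0)$ equals $\gamma(\beta_0)$ up to terms that are $o(1)$; the discarded terms are exactly the ones carrying extra factors of $\Pi_i$, and they are controlled by $\Pi^\top\Pi/K=o(1)$ together with $\max_i|\Pi_i|\le C$ and the identity $\sum_i\sum_j P_{ij}^2=K$.

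For the variance, I would bound $\mathrm{Var}(\widehat\cdot)$ for each estimator and show it is $o(1)$. Expanding the square produces sums over up to four indices of products of $P$-entries times mixed moments; here the sixth-moment bounds $\max_i\mathbb Ee_i^6+\max_i\mathbb EV_i^6<\infty$ in Assumption \ref{ass:reg1} are precisely what is needed to control the relevant cross-moments, while the projection identities $\sum_jP_{ij}^2=P_{ii}\le\delta$, $P_{ij}^2\le P_{ii}P_{jj}$, and $\sum_i\sum_jP_{ij}^2=K$ reduce every such sum to $O(1/K)\to0$, with the $\Pi$-contributions again absorbed by $\Pi^\top\Pi/K=o(1)$. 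Combining the mean and variance bounds yields $\widehat\gamma_\ell(\beta_0)=\gamma_\ell(\beta_0)+\op(1)$ for each of the six variance components, and consistency of $\widehat\rho(\beta_0)=\widehat\Phi_{12}(\beta_0)/\sqrt{\widehat\Phi_1(\beta_0)\widehat\Psi(\beta_0)}$ then follows from the continuous mapping theorem and the positivity of $\Phi_1(\beta_0)$ and $\Psi(\beta_0)$ established in task (i).

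The main obstacle will be the estimators $\widehat\Psi(\beta_0)$, $\widehat\tau(\beta_0)$, and $\widehat\Upsilon$, for two reasons. First, they contain the leave-one-out factor $(\sum_{j\neq i}P_{ij}X_j)^2$, which upon expansion mixes the $\omega_i$ mean part with the $V$-fluctuations and generates triple and quadruple index sums that are delicate to bound in the variance step. Second, under fixed alternatives $\Delta\neq0$ the substitution $e_i(\beta_0)=e_i+X_i\Delta$ loads these estimators with high-order monomials in $(\Pi_i,V_i)$—in particular the pure-$\Pi$ terms entering $\widehat\Upsilon$ through $X_i^2X_j^2$—so the entire argument hinges on showing that $\Pi^\top\Pi/K=o(1)$ is strong enough to force every $\Pi$-heavy term to vanish in both the mean and the variance. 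Verifying this for the quartic $\Pi$-terms is the crux, and it is exactly where the fixed-alternative case, absent from \cite{crudu2021} and \cite{Chao(2012)}, necessitates the rate condition $\Pi^\top\Pi/K\to0$ rather than the weaker $\Pi^\top\Pi/\sqrt K=O(1)$.
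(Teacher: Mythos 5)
Your plan coincides with the paper's own proof: the paper likewise substitutes $e_i(\beta_0)=e_i+\Delta X_i$ (writing $V_i(\Delta)=e_i+\Delta V_i$), expands each estimator into a polynomial in $\Delta$, and proves consistency term by term via a mean-plus-variance (Chebyshev) argument, with the projection identities $\sum_j P_{ij}^2=P_{ii}\le\delta$ and $\sum_{i}\sum_{j}P_{ij}^2=K$, the moment bounds, and $\Pi^\top\Pi/K=o(1)$ together with $\max_i|\Pi_i|\le C$ doing exactly the work you assign them --- in particular annihilating the $\omega_i$-terms and the $\Pi$-heavy terms that arise under fixed alternatives, which is the crux you correctly identify. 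The only differences are cosmetic: the paper writes out the argument fully only for $\widehat{\Phi}_1(\beta_0)$ and $\widehat{\Psi}(\beta_0)$ (asserting the remaining components are analogous), and it does not separately address the positivity requirements of the assumption, which your task (i) handles modulo a lower bound on the error second moments.
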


\subsection{Cross-Fit Estimators }
\label{sec:var2}
Let $M = I-P$, $M_{ij}$ be the $(i,j)$ element of $M$, $M_i$ be the $i$th row of $M$, and $\widetilde{P}_{ij}^2 = \frac{P_{ij}^2}{M_{ii}M_{jj} + M_{ij}^2}$. Then, \cite{MS22} consider the cross-fit estimators for $\Phi_1(\beta_0)$, $\Psi(\beta_0)$, and $\Upsilon$ defined as 
\begin{align*}
	\widehat{\Phi}_1(\beta_0) & =  \frac{2}{K}\sum_{i \in [n]}\sum_{j \neq i} \widetilde{P}_{ij}^2 [e_i(\beta_0) M_i e(\beta_0)] [e_j(\beta_0) M_j e(\beta_0)], \\
	\widehat{\Psi}(\beta_0) & = \frac{1}{K}\left[\sum_{i \in [n]} (\sum_{j \neq i} P_{ij}X_j )^2\frac{e_i(\beta_0)M_i e(\beta_0)}{M_{ii}} + \sum_{i \in [n]}\sum_{j \neq i} \widetilde{P}_{ij}^2M_iX e_i(\beta_0) M_j X e_j(\beta_0)\right], \quad \text{and}\\
	\widehat{\Upsilon} & = \frac{2}{K}\sum_{i \in [n]}\sum_{j \neq i} \widetilde{P}_{ij}^2 [X_i(\beta_0) M_i X] [X_j(\beta_0) M_j X],
\end{align*}
where $X$ and $e(\beta_0)$ are the column vectors that collect all $X_i$ and $e_i(\beta_0)$, respectively. Following their lead, we can construct the cross-fit estimators for the rest three elements in $\gamma(\beta_0)$ as follows: 
\begin{align*}
	\widehat{\Phi}_{12}(\beta_0)& = \frac{1}{K}\sum_{i \in [n]}\sum_{j \neq i}\widetilde{P}_{ij}^2 (M_jX e_j(\beta_0) e_i(\beta_0)M_i e(\beta_0) + M_i X e_i(\beta_0)e_j(\beta_0) M_j e(\beta_0)), \\
	\widehat{\Phi}_{13}(\beta_0) & = \frac{2}{K}\sum_{i \in [n]}\sum_{j \neq i}\widetilde{P}_{ij}^2 M_i X e_i(\beta_0) M_j Xe_j(\beta_0), \quad \text{and}\\
	\widehat{\tau}(\beta_0) & = \frac{1}{K}\sum_{i \in [n]}\sum_{j \neq i} \widetilde{P}_{ij}^2 (X_i M_iX)(M_jX e_j(\beta_0)) + 
	\frac{1}{K}\sum_{i \in [n]} ( \sum_{j \neq i} P_{ij}X_j)^2\left(\frac{e_i(\beta_0)M_i X}{2M_{ii}}+\frac{X_iM_i e(\beta_0)}{2M_{ii}}\right),
\end{align*}

\begin{ass}
	Suppose Assumption \ref{ass:reg1} holds. 
	Further suppose that  $\Pi^\top M \Pi \leq \frac{C \Pi^\top \Pi}{K}$ for some constant $C>0$.
	\label{ass:reg2}
\end{ass}

Compared with the assumptions in \cite{MS22}, Assumption \ref{ass:reg2} further requires that 
$\max_{i \in [n]}|\Pi_i| \leq C$. 
However, for all the above cross-fit estimators to be consistent, we only need $\Pi^{\top}\Pi/K \rightarrow 0$, which is weaker than that assumed in \cite{MS22} (e.g., Theorems 5 in their paper require $\Pi^{\top}\Pi/K^{2/3} \rightarrow 0$).
\begin{lem}
	Suppose Assumptions \ref{ass:K} and \ref{ass:reg2} hold. 
	Then, Lemmas 2, 3, S3.1, S3.2 in \cite{MS22} hold.
	\label{lem:SA}
\end{lem}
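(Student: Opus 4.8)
The plan is to establish each of Lemmas 2, 3, S3.1, and S3.2 of \cite{MS22} by retracing their original proofs and confirming that every hypothesis those proofs rely on is furnished by Assumptions \ref{ass:K} and \ref{ass:reg2}. The crucial structural fact is that Assumption \ref{ass:K} is identical to \citeauthor{MS22}'s Assumption 1, so all of their conditions on the projection matrix $P$ (namely $\mathrm{rank}(P)=K\to\infty$ and $P_{ii}\le\delta<1$) and on the errors $(e_i,V_i)$ (independence and zero mean) are available verbatim. Consequently the verification collapses to a comparison of the conditions imposed on the signal $\Pi$, and to checking that nothing in the four proofs requires more than what Assumption \ref{ass:reg2} provides.

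First I would record the three signal conditions delivered by Assumption \ref{ass:reg2}: $\max_i|\Pi_i|\le C$ and $\Pi^\top\Pi/K=o(1)$ (inherited from Assumption \ref{ass:reg1}), together with $\Pi^\top M\Pi\le C\Pi^\top\Pi/K$. I would then enumerate the places in the proofs of the four lemmas where $\Pi$ enters, which in \citeauthor{MS22}'s analysis occurs through the quantities $\omega_i=\sum_{j\ne i}P_{ij}\Pi_j$, the aggregates $\frac1K\sum_i\omega_i^2$ appearing in $\Psi,\tau,\Upsilon$, and the residual form $\Pi^\top M\Pi$ that governs the bias of the cross-fit constructions. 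Using $|\Pi_i|\le C$ and $P_{ii}\le\delta<1$ to bound the $\omega$-type sums, and $\Pi^\top M\Pi\le C\Pi^\top\Pi/K$ to control the residual projection terms, I would check term by term that each signal contribution matches the order required in \citeauthor{MS22}'s decompositions.

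The main obstacle is the rate condition. \citeauthor{MS22} prove their consistency statements under $\Pi^\top\Pi/K^{2/3}\to0$, which is strictly stronger than the $\Pi^\top\Pi/K\to0$ available here, so I would isolate exactly the steps in Lemmas 2, 3, S3.1, S3.2 where the faster rate was invoked and re-examine the underlying moment computations. The key point to verify is that in these particular lemmas the signal enters only through second-order terms carrying a factor $1/K$, so that $\max_i|\Pi_i|\le C$ combined with $\Pi^\top M\Pi\le C\Pi^\top\Pi/K$ renders these terms $o_p(1)$ under $\Pi^\top\Pi/K\to0$ alone, without recourse to the sharper $K^{2/3}$ rate. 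Once each such step is confirmed, the conclusions of the four lemmas follow as in \cite{MS22}.
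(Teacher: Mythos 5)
Your proposal is correct and takes essentially the same route as the paper's own proof: the paper likewise retraces the MS22 arguments term by term, first establishing the bounds $\sum_i \omega_i^2 \leq C\,\Pi^\top\Pi$, $\max_i \omega_i^2 \leq p_n \Pi^\top\Pi$, $\sum_i \omega_i^4 \leq C p_n (\Pi^\top\Pi)^2$, and $\lambda^\top\lambda \leq C\,\Pi^\top\Pi/K$ (from Assumption \ref{ass:reg2}), and then verifying each moment computation in Lemmas S2.1, S2.2, 3, and S3.2 of MS22 under the weaker rate $\Pi^\top\Pi/K = o(1)$, exactly as you outline. Your diagnosis of the key issue is also the one the paper exploits: the boundedness $\max_i|\Pi_i|\leq C$ together with the residual-projection bound is what allows every signal term to be controlled at the $\Pi^\top\Pi/K$ rate, so the sharper $\Pi^\top\Pi/K^{2/3}\to 0$ condition of MS22 is never needed.
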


\begin{thm}
	Suppose Assumptions \ref{ass:K}
	and \ref{ass:reg2} hold. Then, Assumption \ref{ass:variance_est} holds for \citeauthor{MS22}'s (\citeyear{MS22}) cross-fit estimators defined above. 
	\label{thm:var2}
\end{thm}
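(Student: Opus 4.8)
The plan is to reduce Theorem \ref{thm:var2} to the building-block estimates collected in Lemma \ref{lem:SA} (i.e., \citeauthor{MS22}'s Lemmas 2, 3, S3.1, and S3.2 re-established under Assumptions \ref{ass:K} and \ref{ass:reg2}) and then to propagate them through a purely algebraic expansion of each cross-fit estimator. The starting observation is that $e_i(\beta_0) = e_i + \Delta X_i$ with $\Delta = \beta - \beta_0$ fixed, and $X_i = \Pi_i + V_i$. Substituting the first identity into each of $\widehat{\Phi}_1(\beta_0),\widehat{\Phi}_{12}(\beta_0),\widehat{\Phi}_{13}(\beta_0),\widehat{\Psi}(\beta_0),\widehat{\tau}(\beta_0)$ (while noting that $\widehat{\Upsilon}$ does not depend on $\beta_0$) turns each estimator into a polynomial in $\Delta$ of degree at most four, whose coefficients are cross-fit bilinear forms $\frac{1}{K}\sum_{i}\sum_{j\neq i}\widetilde{P}_{ij}^2 (a_i M_i b)(c_j M_j d)$ and JIVE-type forms $\frac{1}{K}\sum_i (\sum_{j\neq i}P_{ij}c_j)^2 \frac{a_i M_i b}{M_{ii}}$, with $a,b,c,d$ ranging over the vectors $e$ and $X$. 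The degrees of these polynomials match exactly the structure displayed in \eqref{eq:phipsi} (degree four for $\Phi_1(\beta_0)$, three for $\Phi_{12}(\beta_0)$, two for $\Phi_{13}(\beta_0)$ and $\Psi(\beta_0)$, one for $\tau(\beta_0)$).

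First I would establish the probability limit of each building block by repeating the mean--variance argument underlying Lemma \ref{lem:SA}: its expectation reproduces the corresponding noise moment (e.g. $\frac{2}{K}\sum_i\sum_{j\neq i}P_{ij}^2\gamma_i\gamma_j$ or $\frac{1}{K}\sum_i\sum_{j\neq i}P_{ij}^2\eta_i^2\sigma_j^2$), while its variance is $o(1)$; the cross-fit weight $\widetilde{P}_{ij}^2 = P_{ij}^2/(M_{ii}M_{jj}+M_{ij}^2)$ is precisely what strips out the own-observation bias so that each form is asymptotically unbiased for its target. Collecting the powers of $\Delta$ and comparing term-by-term with \eqref{eq:phipsi} then yields $\widehat{\Phi}_1(\beta_0)\convP\Phi_1(\beta_0)$, and analogously for $\Phi_{12}(\beta_0),\Phi_{13}(\beta_0),\Psi(\beta_0),\tau(\beta_0)$, with $\widehat{\Upsilon}\convP\Upsilon$ immediate. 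Consistency of $\widehat{\rho}(\beta_0)=\widehat{\Phi}_{12}(\beta_0)/(\widehat{\Phi}_1(\beta_0)\widehat{\Psi}(\beta_0))^{1/2}$ then follows from the continuous mapping theorem, once $\Phi_1(\beta_0)$ and $\Psi(\beta_0)$ are bounded away from zero. Positivity of $\Phi_1(\beta_0),\Psi(\beta_0),\Upsilon$ follows from the nondegeneracy of the noise moments together with $\frac{1}{K}\sum_i\sum_{j\neq i}P_{ij}^2 = 1 - \frac{1}{K}\sum_i P_{ii}^2 \geq 1-\delta > 0$ (using $P_{ii}\leq\delta$), noting in addition that for $\Delta\neq 0$ the leading term $\Delta^4\Upsilon$ keeps $\Phi_1(\beta_0)$ strictly positive; compactness of $\mathcal{B}$ makes the infima attained.

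The crux, and where I expect the real work to lie, is controlling the terms that carry the signal $\Pi$. Writing $X = \Pi + V$ generates, inside every building block, extra summands involving $\Pi_i$, the quantities $\omega_i = \sum_{j\neq i}P_{ij}\Pi_j$, and $\Pi^\top M\Pi$. I would show these are asymptotically negligible using the two inputs of Assumption \ref{ass:reg2}: the condition $\Pi^\top\Pi/K = o(1)$ (so that, for instance, $\frac{1}{K}\sum_i\omega_i^2 \leq \Pi^\top P\Pi/K \leq \Pi^\top\Pi/K \to 0$, which kills the $\omega$-contributions to the limits) and $\Pi^\top M\Pi \leq C\Pi^\top\Pi/K$ (which controls the leave-one-out residuals of the signal). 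The delicate point is that \cite{MS22} require the stronger rate $\Pi^\top\Pi/K^{2/3}\to 0$, whereas here only $\Pi^\top\Pi/K\to 0$ is available; since we target consistency rather than a distributional limit, it suffices to verify that the variances of the $\Pi$-laden forms vanish under this weaker rate, which is exactly the content re-derived in Lemma \ref{lem:SA}. The remaining effort is bookkeeping: tracking the degree-$\leq 4$ polynomial in $\Delta$ for each estimator and checking that the surviving noise coefficients line up with \eqref{eq:phipsi}, uniformly over $\beta_0 \in \mathcal{B}$ so that Assumption \ref{ass:variance_est} holds as stated.
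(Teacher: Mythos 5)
Your proposal is correct and follows essentially the same route as the paper's proof: both reduce the theorem to Lemma \ref{lem:SA}, expand each cross-fit estimator as a polynomial in $\Delta$ whose coefficients are cross-fit bilinear (and JIVE-type) forms, identify the probability limit of each form with the corresponding noise moment, and collect powers of $\Delta$ to match \eqref{eq:phipsi}, with the $\Pi$-laden terms killed by $\Pi^\top\Pi/K \to 0$ and $\Pi^\top M\Pi \leq C\Pi^\top\Pi/K$. Your observation that only consistency (not a distributional limit) is needed, so the weaker rate $\Pi^\top\Pi/K\to 0$ suffices via Lemma \ref{lem:SA}, is exactly the point the paper exploits.
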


\section{Details for Simulations Based on Calibrated Data}
\label{sec:imp_sim}
The DGP contains only the intercept as the control variable. Therefore, we implement our jackknife CLC test on the demeaned version of $(\tilde{y}_i, \tilde{s}_i, \tilde Z_i)$. The parameter space is $\mathcal{B} = [-0.5,0.5]$. We test the null hypothesis that $\beta = \beta_0$ for $\beta_0 =0.1$ while varying the true value $\beta$ over 31 equal-spaced grids over $\mathcal{B}$. The grids for $\delta$ is the grid for $\beta$ minus $\beta_0$. We generate grids of $(a_1,a_2)$ as $a_1 = \sin^2(t_1)$ and $a_2 = \cos^2(t_1)\sin^2(t_2)$ with $t_1$ taking values over 16 equal-spaced grids over $[\underline{a}^{1/2}(f_{s}(\widehat{D},\widehat{\gamma}(\beta_0)),\pi/2]$ and $t_2$ taking values over 16 equal-spaced grids over $[0,\pi/2]$. We gauge $\mathbb{E}^*\phi_{a_1,a_2,s}(\delta,\widehat{D},\widehat{\gamma}(\beta_0))$  via a Monte Carlo integration with $R=2000$ draws of independent standard normal random variables. In practice, it is rare but possible that  $\mathcal{A}_s(\widehat{D},\widehat{\gamma}(\beta_0))$ defined in \eqref{eq:afeasible} is not unique. To increase numerical stability, we follow I.\cite{Andrews(2016)} and allow for some slackness in the minimization. Let $\mathcal{G}_a$ be the grid of $(a_1,a_2)$ mentioned above,  $\widehat{Q}(a_1,a_2) = \sup_{\delta \in  \mathcal{D}(\beta_0)}(\mathcal{P}_{\delta,s}(\widehat{D},\widehat{\gamma}(\beta_0)) -     \mathbb{E}^*\phi_{a_1,a_2,s}(\delta,\widehat{D},\widehat{\gamma}(\beta_0)))$, $\widehat{Q}_{\min} = \min_{(a_1,a_2) \in \mathcal{G}_a}\widehat{Q}(a_1,a_2) + 1/n$, where $n$ is the sample size, and 
\begin{align*}
	\Xi = \{(a_1,a_2) \in \mathcal{G}_a: \widehat{Q}(a) \leq \widehat{Q}_{\min} + (\widehat{Q}_{\min}(1-\widehat{Q}_{\min}))^{1/2}(2\log(\log(R)))^{1/2}R^{-1/2}\}.    
\end{align*}
The slackness term in the definition of $\Xi$ is due to the law of the iterated logarithm for sum of Bernoulli random variables and captures the randomness of the Monte Carlo integration. Suppose there are $L$ elements in $\Xi$ with an ascending order w.r.t. $(t_1,t_2)$, which are denoted as $\{(a_{1,l},a_{2,l})\}_{l = 1}^L$. We then define  $\mathcal{A}_s(\widehat{D},\widehat{\gamma}(\beta_0))$ as $(a_{1,\lfloor L/2 \rfloor}, a_{2,\lfloor L/2 \rfloor})$. We use the cross-fit estimators defined in Section \ref{sec:var2} throughout the simulation. 

\section{Details for Empirical Application}\label{sec:imp_app}
We consider the 1980s census of 329,509 men born in 1930-1939 based on \citeauthor{Angrist-Krueger(1991)}'s (\citeyear{Angrist-Krueger(1991)}) dataset. The model for \textbf{180 instruments} follows \cite{MS22}, which can be written explicitly as
\begin{align*}
	ln \; W_i &= Constant + H_i^\top \zeta + \sum_{c=30}^{38} YOB_{i,c} \xi_c + \sum_{s\neq 56} POB_{i,s} \eta_s + \beta E_i +  \gamma_i \\
	E_i & = Constant + H_i^\top \lambda + \sum_{c=30}^{38} YOB_{i,c}\mu_c + \sum_{s\neq 56} POB_{i,s} \alpha_s \\
	& + \sum_{j = 1}^{3} \sum_{s \neq 56} QOB_{i,j} POB_{i,s} \delta_{c,s} 
	+ \sum_{j=1}^3 \sum_{c=30}^{39}  QOB_{i,j}YOB_{i,c} \theta_{j,c} + \epsilon_i,
\end{align*}
where $W_i$ is the weekly wage, $E_i$ is the education of the $i$-th individual, $H_i$ is a vector of  covariates,\footnote{The covariates we consider are: RACE, MARRIED, SMSA, NEWENG, MIDATL, ENOCENT, WNOCENT, SOATL, ESOCENT, WSOCENT, and MT.} $YOB_{i,c}$ is a dummy variable indicating whether the individual was born in year $c = \{30,31,...,39 \}$, while $QOB_{i,j}$ is a dummy variable indicating whether the individual was born in quarter-of-birth $j \in \{1,2,3,4\}$. $POB_{i,s}$ is the dummy variable indicating whether the individual was born in state $s \in \{51\: states\}$.\footnote{The state numbers are from 1 to 56, excluding (3,7,14,43,52), corresponding to U.S. state codes.} Both $\gamma_i$ and $\varepsilon_i$ are the error terms. The coefficient $\beta$ is the return to education. We vary this $\beta$ across 10,000 equidistant grid-points from -0.5 to 0.5 (i.e., $\beta \in \{-0.5,-4.9999,-4.9998,...,0,...,4.9999,0.5\}$) and solve for the range of $\beta$ where the null hypothesis cannot be rejected. Specifically, we can write the above model as 
\begin{align*}
	ln \; W_i &= C_i \Gamma + \beta E_i +  \gamma_i \\
	E_i & = C_i\tau + Z_i\Theta + \epsilon_i,
\end{align*}
where $C_i$ is a (329,509$\times$71)-matrix of controls containing the first four terms on the right-hand of the first equation, while $Z_i$ is the (329,509$\times$180)-matrix of instruments containing the first two terms in the third line. We can then partial out the controls $C_i$ by multiplying each equation by the residual matrix $I - C(C^\top C)^{-1}C^\top$ to obtain a form analogous to that in the main text:
\begin{align*}
	Y_i &= X_i \beta + e_i, \\
	X_i &= \Pi_i + v_i.
\end{align*}
Then, at each grid-point we take $\beta_0 = \beta$ and compute $AR(\beta_0),\; LM(\beta_0),\; Wald(\beta_0),\; \widehat{\phi}_{\mathcal{A}_{pp}(\widehat{D},\widehat{\gamma}(\beta_0))}$ and $\widehat{\phi}_{\mathcal{A}_{krs}(\widehat{D},\widehat{\gamma}(\beta_0))}$. We reject the chosen value of $\beta_0$ for $AR(\beta_0)$ if it exceeds the one-sided 5\%-quantile of the standard normal (i.e., reject if $AR(\beta_0)>z_{0.05}$). If $LM(\beta_0)^2 > \mathbb{C}_{0.05}$, we reject the chosen $\beta_0$ for Jackknife LM. If $Wald(\beta_0) > \mathbb{C}_{0.05}$, we reject for JIVE-t. If $\widehat{\phi}_{\mathcal{A}_{s}(\widehat{D},\widehat{\gamma}(\beta_0))} > \mathbb{C}_{0.05}(\mathcal{A}_{s}(\widehat{D},\widehat{\gamma}(\beta_0));\widehat{\rho}(\beta_0))$ for $s \in \{pp,krs\}$, we reject accordingly. The two-step procedure depends on the value of $\widetilde{F}$. If $\widetilde{F}>9.98$, we reject if $Wald(\beta_0)>\mathbb{C}_{0.02}$; otherwise if $\widetilde F \leq 9.98$, we reject if $AR(\beta_0) > z_{0.02}$.

\vspace{2mm}
The model for \textbf{1,530 instruments} can be written explicitly as
\begin{align*}
	ln \; W_i &= Constant + H_i^\top \zeta + \sum_{c=30}^{38} YOB_{i,c} \xi_c + \sum_{s\neq 56} POB_{i,s} \eta_s + \beta E_i +  \gamma_i. \\
	E_i & = Constant + H_i^\top \lambda + \sum_{c=30}^{38} YOB_{i,c}\mu_c + \sum_{s\neq 56} POB_{i,s} \alpha_s 
	\\
	&+ \sum_{j=1}^3 \sum_{c=30}^{39} \sum_{s\in \{51\; states \}} QOB_{i,j} YOB_{i,c} POB_{i,s} \delta_{j,c,s}. 
\end{align*}
The main difference between this 1,530-instrument specification and the 180-instrument one is that we now have QOB-YOB-POB interactions as our instruments, compared with QOB-YOB and QOB-POB interactions in the case of 180 instruments. Note that in both cases, only quarter-of-birth 1--3 are used; quarter 4 is omitted in order to avoid multicollinearity.

\section{Proof of Lemma \ref{lem:strongID}}
\label{sec:pf_lem_strongID}
Under strong identification, by \eqref{eq:limit} and Assumption \ref{ass:variance_est}, we have
\begin{align*}
	\begin{pmatrix}
		1 & 0 & 0 \\
		0 & 1 & 0 \\
		0 & 0 & d_n 
	\end{pmatrix}\begin{pmatrix}
		Q_{e,e} \\
		Q_{X,e} \\
		Q_{X,X} 
	\end{pmatrix} \convD \N\left(\begin{pmatrix}
		0 \\
		0 \\
		\widetilde{\mathcal{C}}
	\end{pmatrix},\begin{pmatrix}
		\Phi_1 & \Phi_{12} & 0 \\
		\Phi_{12} & \Psi & 0 \\
		0 & 0 & 0
	\end{pmatrix}\right), 
\end{align*}

In addition, we note that $e_i(\beta_0) = e_i + X_i \Delta$ with $\Delta = d_n  \widetilde{\Delta} \rightarrow 0$. Therefore, we have 
\begin{align*}
	Q_{e(\beta_0),e(\beta_0)} & = Q_{e,e} + 2\Delta Q_{X,e} + \Delta^2 Q_{X,X} = Q_{e,e} + o_p(1), \\
	Q_{X,e(\beta_0)} & = Q_{X,e} + \Delta Q_{X,X} = Q_{X,e} + \widetilde{\mathcal{C}} \widetilde{\Delta} + o_p(1), \\
	\widehat{\Phi}_1^{1/2}(\beta_0) & \convP \Phi_1^{1/2}, \quad \text{and} \quad \widehat{\Psi}^{1/2}(\beta_0) \convP \Psi^{1/2}. 
\end{align*}
This implies 
\begin{align*}
	\begin{pmatrix}
		AR(\beta_0) \\
		LM(\beta_0)
	\end{pmatrix} = \begin{pmatrix}
		Q_{e(\beta_0),e(\beta_0)}/\widehat{\Phi}_1^{1/2}(\beta_0) \\
		Q_{X,e(\beta_0)}/\widehat{\Psi}^{1/2}(\beta_0) \\
	\end{pmatrix}\convD \N\left(\begin{pmatrix}
		0 \\
		\frac{\widetilde{\mathcal{C}}\widetilde{\Delta}}{\Psi^{1/2}} 
	\end{pmatrix},\begin{pmatrix}
		1 & \rho  \\
		\rho & 1 
	\end{pmatrix}\right).  
\end{align*}

\section{Proof of Lemma \ref{lem:ump}}
Recall $\N^*_2 = (1-\rho^2)^{-1/2} (\N_2- \rho\N_1)$ and 
\begin{align*}
	\begin{pmatrix}
		\N_1 \\
		\N^*_2
	\end{pmatrix} \stackrel{d}{=} \N\left(\begin{pmatrix}
		0 \\
		\frac{\theta}{(1-\rho^2)^{1/2}} 
	\end{pmatrix},\begin{pmatrix}
		1 & 0 \\
		0 & 1 
	\end{pmatrix}\right).
\end{align*}
Because $\rho$ is known, it suffices to construct the uniformly most powerful invariant test based on observations $(\N_1,\N^*_2)$. As the null and alternative are invariant to sign changes, the maximum invariant is $(\N_1,\N^{*2}_2)$. Then, \citet[Theorem 6.2.1]{LR06} implies the invariant test should be based on the maximum invariant. Note $(\N_1,\N^{*2}_2)$ are independent, $\N_1$ follows a standard normal distribution, and $\N^*_2$ follows a noncentral chi-square distribution with one degree of freedom and noncentrality parameter $\lambda = \frac{\theta^2}{1-\rho^2}$.
Therefore, by the Neyman-Pearson's Lemma (\citet[Theorem 3.2.1]{LR06}), the most powerful test based on observations $(\N_1,\N^{*2}_2)$ is the likelihood ratio test where the likelihood ratio function evaluated at  $(\N_1= \ell_1,\N^{*2}_2 = \ell_2)$  depends on $\ell_2$ only and can be written as 
\begin{align*}
	LR\left( \ell_2;\lambda \right) = - \frac{\lambda}{2} + \log \left( \frac{\exp(\sqrt{\lambda \ell_2}) + \exp(-\sqrt{\lambda \ell_2}) }{2}\right) \end{align*}

In addition, we note that $LR\left( \ell_2;\lambda \right)$ is monotone increasing in $\ell_2$ for any $\lambda \geq 0$ and $\ell_2 \geq 0$. Therefore, \citet[Theorem 3.4.1]{LR06} implies the likelihood ratio test is equivalent to $1\{\N^{*2}_2 \geq \mathbb{C}_{\alpha}\}$, which is uniformly most powerful among tests for $\lambda = 0$ v.s. $\lambda>0$ and based on observations $(\N_1,\N^{*2}_2)$ only. This means it is also the uniformly most powerful test that is invariant to sign changes.  

In addition, the joint density of $(\N_1,\N_2)$ is 
\begin{align*}
	& (2\pi)^{-1} (1-\rho^2)^{-1/2} \exp\left(-\frac{1}{2}\left(\frac{\N_1^2}{1-\rho^2} - \frac{2\rho \N_1 \N_2}{1-\rho^2} + \frac{\N_2^2}{1-\rho^2} \right) \right) \exp\left(\theta \frac{\rho \N_1 - \N_2}{1-\rho^2}\right) \exp\left( \frac{\theta^2}{1-\rho^2} \right) \\
	& \equiv C(\theta) \exp(\theta \N_2^*) h(\N_1,\N_2),
\end{align*}
where $C(\theta) = (2\pi)^{-1} (1-\rho^2)^{-1/2}\exp\left( \frac{\theta^2}{1-\rho^2} \right)$ and $h(\N_1,\N_2) = \exp\left(-\frac{1}{2}\left(\frac{\N_1^2}{1-\rho^2} - \frac{2\rho \N_1 \N_2}{1-\rho^2} + \frac{\N_2^2}{1-\rho^2} \right) \right)$. Note that $\N_2^*$ is symmetric around $0$ under the null.  
By \citet[Section 4.2]{LR06}, $1\{\N_2^{*2} \geq \mathbb{C}_\alpha\}$ is the UMP unbiased level-$\alpha$ test. 

\section{Proof of Lemma \ref{lem:strongID2}}
Under strong identification and fixed alternatives, because $( Q_{e(\beta_0),e(\beta_0)} - \Delta^2 \mathcal{C}, Q_{X,e(\beta_0)} - \Delta \mathcal{C},Q_{X,X} - \mathcal{C})^\top
= O_p(1)$, we have
\begin{align*}
	\begin{pmatrix}
		d_n AR(\beta_0) \\
		d_n LM(\beta_0) \\
	\end{pmatrix} \convP 
	\begin{pmatrix}
		\frac{\Delta^2 \widetilde{\mathcal{C}}}{\Phi_1^{1/2}(\beta_0)}\\
		\frac{\Delta \widetilde{\mathcal{C}}}{\Psi^{1/2}(\beta_0)}
	\end{pmatrix}.
\end{align*}
This implies
\begin{align*}
	d_n LM^*(\beta_0)  \convP \frac{1}{(1-\rho^2(\beta_0))^{1/2}}\left(\frac{\Delta \widetilde{\mathcal{C}}}{\Psi^{1/2}(\beta_0)} - \frac{\rho(\beta_0)\Delta^2 \widetilde{\mathcal{C}}}{\Phi_1^{1/2}(\beta_0)}\right),
\end{align*}
which leads to the desired result.

\section{Proof of Lemma \ref{lem:weakID}}
Under weak identification, \eqref{eq:limit} implies  
\begin{align*}
	\begin{pmatrix}
		Q_{e(\beta_0),e(\beta_0)}\\
		Q_{X,e(\beta_0)}
	\end{pmatrix} = 
	\begin{pmatrix}
		Q_{e,e} + 2\Delta Q_{X,e} + \Delta^2 Q_{X,X}\\
		Q_{X,e} + \Delta Q_{X,X}
	\end{pmatrix} \convD \N\left(\begin{pmatrix}
		\Delta^2 \widetilde{\mathcal{C}} \\
		\Delta \widetilde{\mathcal{C}}
	\end{pmatrix}, \begin{pmatrix}
		\Phi_1(\beta_0) & \Phi_{12}(\beta_0) \\
		\Phi_{12}(\beta_0) & \Psi(\beta_0)
	\end{pmatrix}   \right),
\end{align*}
which leads to the first result. 

For the second result, it is obvious that $m_1(\Delta) \rightarrow \widetilde{\mathcal{C}}\Upsilon^{-1/2}$. In addition, we have

\begin{align*}
	m_2(\Delta) &= \frac{\widetilde{\mathcal{C}} \left(\Delta \Phi_1(\beta_0) -  \Delta^2 \Phi_{12}(\beta_0) \right)}{(\Phi_1(\beta_0)(\Phi_1(\beta_0) \Psi(\beta_0) - \Phi^2_{12}(\beta_0)))^{1/2}} \\
	& \rightarrow \frac{ \tau \widetilde{\mathcal{C}}}{(\Upsilon(\Upsilon \Psi - \tau^2))^{1/2}} = \frac{\widetilde{\mathcal{C}}}{\Upsilon^{1/2}} \frac{\rho_{23}}{(1-\rho_{23}^2)^{1/2}}, 
\end{align*}
where we use the fact that 
\begin{align*}
	& \Phi_1(\beta_0)/\Delta^4 \rightarrow \Upsilon, \\
	& (\Phi_1(\beta_0) \Psi(\beta_0) - \Phi^2_{12}(\beta_0))/\Delta^4 \rightarrow \Upsilon \Psi - \tau^2, \\
	&     \frac{\Phi_1(\beta_0) - \Delta \Phi_{12}(\beta_0)}{\Delta^3} \rightarrow \tau.
\end{align*}

\section{Proof of Theorem \ref{thm:admissible}}
The first statement in Theorem \ref{thm:admissible}(i) is a direct consequence of \citet[Theorem 2.1]{marden1982} because the acceptance region $\mathcal{A} = \{(A,B): \nu_1 A^2 + \nu_2 B^2 \leq \mathbb{C}_\alpha(a_1,a_2;\rho(\beta_0))\}$ is closed, convex, and monotone decreasing in the sense that if $(A,B) \in \mathcal{A}$ and $A' \leq A$, $B' \leq B$, then $(A',B') \in \mathcal{A}$. The second statement in Theorem \ref{thm:admissible}(i) follows \citet[Theorem 2.1]{Andrews(2016)}, which is a direct consequence of results in \cite{MS76} and \cite{KP78}. 

For Theorem \ref{thm:admissible}(ii), we note that $\tilde \rho = \rho$ under local alternatives and  
\begin{align*}
	\phi_{a_1,a_2,\infty} = 1\left\{ (a_1 + a_2  \rho^2)\N_1^2 + 2 a_2 \rho(1 - \rho^2)^{1/2}\N_1 \N_2^* + (1- a_1 - a_2 \rho^2)\N_2^{*2} \geq \mathbb{C}_{\alpha}(a_1,a_2; \rho) \right\}.    
\end{align*}
The ``if" part of Theorem \ref{thm:admissible}(ii) is a direct consequence of Lemma \ref{lem:ump}. The ``only if" part of Theorem  \ref{thm:admissible}(ii) is a direct consequence of the necessary part of \citet[Theorem 3.2.1]{LR06}. Specifically, given $\N_1$ and $\N_2^*$ are independent, the ``only if" part requires $a_1 + a_2\rho^2 = 0$, which implies $a_1=0$ and $a_2 \rho = 0$.  

For Theorem \ref{thm:admissible}(iii), we consider two cases of fixed alternatives: (1) $\Delta \neq \Phi_1^{1/2}(\beta_0)\Psi^{-1/2}(\beta_0)\rho^{-1}(\beta_0)$ and (2) $\Delta = \Phi_1^{1/2}(\beta_0)\Psi^{-1/2}(\beta_0)\rho^{-1}(\beta_0)$. In Case (1), by Lemma \ref{lem:strongID2}, the limits of $d_n^2AR^2(\beta_0)$, $d_n^2LM^{2}(\beta_0)$, $d_n^2LM^{*2}(\beta_0)$ are all positive, which implies that for all $(a_{1,n},a_{2,n}) \in \mathbb{A}_0$,
\begin{align*}
	1\{a_{1,n} AR^2(\beta_0) + a_{2,n}LM^2(\beta_0) + (1-a_{1,n}-a_{2,n})LM^{*2}(\beta_0)  \geq \mathbb{C}_{\alpha}(a_{1,n},a_{2,n};\widehat{\rho}(\beta_0))\} \convP 1.     
\end{align*}
In Case (2), we have 
\begin{align*}
	& \mathbb{P}\left(a_{1,n} AR^2(\beta_0) + a_{2,n}LM^2(\beta_0) + (1-a_{1,n}-a_{2,n})LM^{*2}(\beta_0)  \geq \mathbb{C}_{\alpha}(a_{1,n},a_{2,n};\widehat{\rho}(\beta_0)) \right) \\
	& \geq \mathbb{P}\left(\frac{\tilde{q}\Psi^{2}(\beta_0)\rho^{4}(\beta_0) }{\widetilde{\mathcal{C}}^2\Phi_1(\beta_0)} d_n^2 AR^2(\beta_0)   \geq \mathbb{C}_{\alpha}(a_{1,n},a_{2,n};\widehat{\rho}(\beta_0) \right) \\
	& \geq \mathbb{P}\left(  \tilde{q} + o_p(1)   \geq \mathbb{C}_{\alpha,\max}(\rho(\beta_0)) \right) \rightarrow 1,
\end{align*}
where the first inequality follows from the restriction on $a_{1.n}$ and the facts that $LM^2(\beta_0)\geq 0$ and $LM^{*2}(\beta_0) \geq  0$, the second inequality follows from $d_n^2 AR^2(\beta_0) \convP \Phi_1^{-1}(\beta_0)\Delta_*^4(\beta_0) \tilde{\mathcal{C}}^2$ (by Lemma \ref{lem:strongID2}) and $\widehat{\rho}(\beta_0) \convP \rho(\beta_0)$, and the last convergence follows from the fact that $\tilde{q}> \mathbb{C}_{\alpha,\max}(\rho(\beta_0))$. This concludes the proof.

\section{Proof of Theorem \ref{thm:weakid}}
We are under weak identification. By Lemma \ref{lem:weakID} and Assumption \ref{ass:variance_est}, we have
\begin{align*}
	\begin{pmatrix}
		AR(\beta_0) \\
		LM^*(\beta_0) \\
		\widehat{D} 
	\end{pmatrix} \convD \N\left(\begin{pmatrix}
		m_1(\Delta) \\
		m_2(\Delta) \\
		\mu_D
	\end{pmatrix}, \begin{pmatrix}
		1 & 0 & 0 \\
		0 & 1 & 0 \\
		0 & 0 & \sigma_D^2
	\end{pmatrix} \right).
\end{align*}
This implies $(AR(\beta_0),LM^*(\beta_0),\widehat{D} )$ are asymptotically independent. By Assumption \ref{ass:a}, we have
\begin{align*}
	(AR^2(\beta_0), LM^{*2}(\beta_0),\mathcal{A}_s(\widehat{D},\widehat{\gamma}(\beta_0))) \convD (\mathcal{Z}^2(m_1(\Delta)),\mathcal{Z}^2(m_2(\Delta)),\mathcal{A}_s(D,\gamma(\beta_0)))    
\end{align*}
where the two normal random variables are independent and independent of $D$, and by definition, $\mathcal{A}_s(D,\gamma(\beta_0))) = (a_1(f_s(D,\gamma(\beta_0)),\gamma(\beta_0)),a_2(f_s(D,\gamma(\beta_0)),\gamma(\beta_0)))$. In addition, we have $\widehat{\rho}(\beta_0) \convP \rho(\beta_0)$. By the bounded convergence theorem, this further implies
\begin{align}
	\mathbb{E}\widehat{\phi}_{\mathcal{A}_s(\widehat{D},\widehat{\gamma}(\beta_0))} \rightarrow \mathbb{E}\phi_{a_1(f_s(D,\gamma(\beta_0)),\gamma(\beta_0)),a_2(f_s(D,\gamma(\beta_0)),\gamma(\beta_0)),\infty}(\Delta,\mu_D,\gamma(\beta_0)).
	\label{eq:phihat_conv}    
\end{align}

In addition, suppose the null holds so that $\Delta = 0$. This implies $m_1(\Delta) = m_2(\Delta) = 0$. Then,  we have
\begin{align*}
	(\widehat{\phi}_{\mathcal{A}_s(\widehat{D},\widehat{\gamma}(\beta_0))}-\alpha)f(\widehat{D}) \convD (\phi_{a_1(f_s(D,\gamma(\beta_0)),\gamma(\beta_0)),a_2(f_s(D,\gamma(\beta_0)),\gamma(\beta_0)),\infty}(0,\mu_D,\gamma(\beta_0)) - \alpha)f(D),
\end{align*}
where
\begin{align*}
	& \phi_{a_1(f_s(D,\gamma(\beta_0)),\gamma(\beta_0)),a_2(f_s(D,\gamma(\beta_0)),\gamma(\beta_0)),\infty}(0,\mu_D,\gamma(\beta_0)) \\
	& = 1\begin{Bmatrix}
		a_1(f_s(D,\gamma(\beta_0)),\gamma(\beta_0)) \mathcal{Z}_1^2 + a_2(f_s(D,\gamma(\beta_0)),\gamma(\beta_0)) (\rho(\beta_0) \mathcal{Z}_1 + (1-\rho^2(\beta_0))^{1/2} \mathcal{Z}_2) \\
		(1 -a_1(f_s(D,\gamma(\beta_0)),\gamma(\beta_0))-a_2(f_s(D,\gamma(\beta_0)),\gamma(\beta_0)))\mathcal{Z}_2^2  \\
		\geq \mathbb{C}_\alpha(a_1(f_s(D,\gamma(\beta_0)),\gamma(\beta_0)),a_2(f_s(D,\gamma(\beta_0)),\gamma(\beta_0));\rho(\beta_0))
	\end{Bmatrix},
\end{align*}
$\mathcal{Z}_1$ and $\mathcal{Z}_2$ are independent standard normals, and they are independent of $D$. 
Then, by the definition of $\mathbb{C}_\alpha(\cdot)$, we have
\begin{align*}
	& \mathbb{E}(\widehat{\phi}_{\mathcal{A}_s(\widehat{D},\widehat{\gamma}(\beta_0))}-\alpha)h(\widehat{D}) \rightarrow \mathbb{E} \left[\mathbb{E} \left(\phi_{a(f_s(D,\gamma(\beta_0)),\gamma(\beta_0)),\infty}(0,\mu_D,\gamma(\beta_0)) - \alpha|D\right)h(D)\right] = 0. 
\end{align*}

\section{Proof of Corollary \ref{cor:weakid}}
By the continuous mapping theorem, we have 
\begin{align*}
	\lim_{n\rightarrow \infty}   \frac{\mathbb{E} \hat \phi_{\mathcal{A}_s(\widehat D, \widehat \gamma(\beta_0))}1\{ |\widehat D -d| \leq \eps \} }{\mathbb{E}  1\{ |\widehat D -d| \leq \eps \} } =  \frac{\mathbb{E}(\phi_{a_1(f_s(D,\gamma(\beta_0)),\gamma(\beta_0)),a_2(f_s(D,\gamma(\beta_0)),\gamma(\beta_0)),\infty}1\{|D-d|\leq \eps\})}{\mathbb{E} 1\{|D-d|\leq \eps)\}},
\end{align*}
and 
\begin{align*}
	& \lim_{\eps \rightarrow 0}\frac{\mathbb{E}(\phi_{a_1(f_s(D,\gamma(\beta_0)),\gamma(\beta_0)),a_2(f_s(D,\gamma(\beta_0)),\gamma(\beta_0)),\infty}1\{|D-d|\leq \eps\})}{\mathbb{E} 1\{|D-d|\leq \eps)\}} \\
	& = \mathbb{E}(\phi_{a_1(f_s(D,\gamma(\beta_0)),\gamma(\beta_0)),a_2(f_s(D,\gamma(\beta_0)),\gamma(\beta_0)),\infty}|D=d),
\end{align*}
where, by construction, we have 
\begin{align*}
	& \phi_{a_1(f_s(D,\gamma(\beta_0)),\gamma(\beta_0)),a_2(f_s(D,\gamma(\beta_0)),\gamma(\beta_0)),\infty} \\
	&= 1\{\nu_{1,s}(D,\gamma(\beta_0))\widetilde{\N}_1^2 + \nu_{2,s}(D,\gamma(\beta_0))\widetilde{\N}_2^2 \geq \widetilde{\mathbb{C}}_{\alpha}(\nu_{1,s}(D,\gamma(\beta_0)),\nu_{2,s}(D,\gamma(\beta_0)) )\}
\end{align*}
and
\begin{align*}
	(\widetilde{\N}_1, \widetilde{\N}_2) =  (\mathcal{Z}_1( m_1(\Delta)), \mathcal{Z}_2(m_2(\Delta)))\mathcal{U}_s(D,\gamma(\beta_0)).
\end{align*}

Similarly, we can show 
\begin{align*}
	\lim_{\eps \rightarrow 0} \lim_{n\rightarrow \infty}   \frac{\mathbb{E} \tilde \phi(\widetilde {AR}_s^2(\beta_0),
		\widetilde {LM}^{*2}_s(\beta_0),\widehat{D},\widehat{\gamma}(\beta_0)) 1\{ |\widehat D -d| \leq \eps \} }{\mathbb{E}  1\{ |\widehat D -d| \leq \eps \} } =  \mathbb{E}(\tilde \phi(\widetilde{\N}_1^2, \widetilde{\N}_2^2,D,\gamma(\beta_0))|D=d).
\end{align*}

Therefore, conditional on $D=d$, $\phi_{a_1(f_s(D,\gamma(\beta_0)),\gamma(\beta_0)),a_2(f_s(D,\gamma(\beta_0)),\gamma(\beta_0)),\infty}$ is a linear combination of 
$(\widetilde{\N}_1^2,\widetilde{\N}_2^2)$ with weights $(\nu_{1,s}(d,\gamma(\beta_0)),\nu_{2,s}(d,\gamma(\beta_0)))$, and $\widetilde \N_1$ and $\widetilde \N_2$ are two independent normal random variables with unit variance and expectations $\theta_1$ and $\theta_2$, respectively. Under the null, we have $(\theta_1,\theta_2) = (0,0)$, which, by definition of $\tilde \phi(\cdot)$, implies 
\begin{align*}
	\mathbb{E}(\tilde \phi(\widetilde{\N}_1^2, \widetilde{\N}_2^2,D,\gamma(\beta_0))|D=d) \leq \alpha.
\end{align*}
Therefore, $\tilde \phi(\widetilde{\N}_1^2, \widetilde{\N}_2^2,D,\gamma(\beta_0))$ is a level-$\alpha$ test. Then, the two optimality results follow Theorem \ref{thm:admissible}(i).

\section{Proof of Theorem \ref{thm:strongid}}

Denote $c_{\mathcal{B}}=c_{\mathcal{B}}(\beta)$ and $\Delta_* = \Delta_*(\beta)$. By Assumption \ref{ass:variance_est}, $\Phi_1>0$, which implies $|\Delta_*|>0$. Under strong identification and local alternatives, we have $\Delta \rightarrow 0$, $c_{\mathcal{B}}(\beta_0)\rightarrow c_{\mathcal{B}}$, $\Delta_*(\beta_0) \rightarrow \Delta_*$, $\mathbb{C}_{\alpha,\max}(\rho(\beta_0)) \rightarrow \mathbb{C}_{\alpha,\max}(\rho)$, and
\begin{align*}
	\begin{pmatrix}
		AR(\beta_0) \\
		LM^*(\beta_0) \\
		d_n  \widehat{D}
	\end{pmatrix} \convD \N\left(\begin{pmatrix}
		0 \\
		\frac{\widetilde{\Delta} \widetilde{\mathcal{C}}}{((1-\rho^2)\Psi)^{1/2}} \\
		\widetilde{\mathcal{C}}
	\end{pmatrix},\begin{pmatrix}
		1 & 0 & 0   \\
		0 & 1 & 0 \\
		0 & 0 & 0
	\end{pmatrix}\right).  
\end{align*}
This implies 
$d_n  \widehat{\sigma}_D \sqrt{\widehat{r}} = d_n  \widehat{D}  \convP \widetilde{\mathcal{C}}$, which further implies $d_n  f_{pp}(\widehat{D},\widehat{\gamma}(\beta_0)) \convP \widetilde{\mathcal{C}}$. For $f_{krs}(\widehat{D},\widehat{\gamma}(\beta_0))$, we note that 
\begin{align*}
	\max(\widehat{r} - 1,0)  \leq   \widehat{r}_{krs} \leq \widehat{r}.
\end{align*}
Therefore, we also have $f_{krs}(\widehat{D},\widehat{\gamma}(\beta_0)) d_n  \convP \widetilde{\mathcal{C}}$. Let $\mathcal{E}_n(\eps) = \{||\widehat{\gamma}(\beta_0) - \gamma(\beta_0)|| + |\delta_n \widehat{D}- \widetilde{\mathcal{C}}| \leq \eps\}$. Then, for an arbitrary $\eps>0$, we have $\mathbb{P}(\mathcal{E}_n(\eps)) \geq 1-\eps$ when $n$ is sufficiently large. 

Denote $\delta = d_n \widetilde{\delta}$. We have
\begin{align*}
	\mathcal{A}_s(\widehat{D},\widehat{\gamma}(\beta_0)) \in  \argmin_{(a_1,a_2) \in \mathbb{A}(f_{s}(\widehat{D},\widehat{\gamma}(\beta_0)),\widehat{\gamma}(\beta_0))} \sup_{\widetilde{\delta} \in  \widetilde{\mathcal{D}}_n}\left(\mathcal{P}_{d_n\widetilde{\delta},s}(\widehat{D},\widehat{\gamma}(\beta_0)) -     \mathbb{E}^* \phi_{a_1,a_2,s}(d_n\widetilde{\delta},\widehat{D},\widehat{\gamma}(\beta_0))\right),
\end{align*}
where 
$\widetilde{\mathcal{D}}_n = \{\widetilde{\delta}: d_n\widetilde{\delta}  \in \mathcal{D}(\beta_0)\}$. Let 
\begin{align*}
	& Q_n(a_1,a_2,\widetilde{\delta}) = \mathcal{P}_{d_n\widetilde{\delta},s}(\widehat{D},\widehat{\gamma}(\beta_0)) -     \mathbb{E}^*\phi_{a_1,a_2,s}(d_n\widetilde{\delta},\widehat{D},\widehat{\gamma}(\beta_0)) \quad \text{and}  \\
	& Q(a_1,a_2,\widetilde{\delta}) =
	\mathbb{E}1\{\mathcal{Z}_2^2(     (1-\rho^2)^{-1/2} \Psi^{-1/2} \widetilde{\delta} \widetilde{\mathcal{C}}) \geq \mathbb{C}_{\alpha}\} \notag \\
	& - \mathbb{E} 1\begin{Bmatrix}
		a_1 \mathcal{Z}_1^2 + a_2\left(\rho \mathcal{Z}_1 + (1-\rho^2)^{1/2}\mathcal{Z}_2 (     (1-\rho^2)^{-1/2} \Psi^{-1/2} \widetilde{\delta} \widetilde{\mathcal{C}}) \right)^2 \\
		+ (1-a_1 - a_2)\mathcal{Z}_2^2(     (1-\rho^2)^{-1/2} \Psi^{-1/2} \widetilde{\delta} \widetilde{\mathcal{C}}) \geq \mathbb{C}_{\alpha}(a_1,a_2;\rho) 
	\end{Bmatrix},
\end{align*}
where $\mathcal{Z}_1$ is standard normal, $\mathcal{Z}_2(     (1-\rho^2)^{-1/2} \Psi^{-1/2} \widetilde{\delta} \widetilde{\mathcal{C}})$ is normal with mean $(1-\rho^2)^{-1/2} \Psi^{-1/2} \widetilde{\delta} \widetilde{\mathcal{C}}$ and unit variance, and $\mathcal{Z}_1$ and $\mathcal{Z}_2(\cdot)$ are independent.
Then, we aim to show that 
\begin{align}
	& \sup_{(a_1,a_2) \in \mathbb{A}(f_{s}(\widehat{D},\widehat{\gamma}(\beta_0)),\widehat{\gamma}(\beta_0)), \widetilde{\delta} \in \widetilde{\mathcal{D}}_{n}} \biggl|Q_n(a_1,a_2,\widetilde{\delta}) - Q(a_1,a_2,\widetilde{\delta})  \biggr| \convP 0.
	\label{eq:Q0}
\end{align}

We divide $\widetilde{\mathcal{D}}_n$ into three parts: 
\begin{align*}
	\widetilde{\mathcal{D}}_{n,1}(\eps) & = \{\widetilde{\delta} \in \widetilde{\mathcal{D}}_n, |\widetilde{\delta}| \leq M_1(\eps)\}, \notag \\
	\widetilde{\mathcal{D}}_{n,2}(\eps) & = \left\{\widetilde{\delta} \in \widetilde{\mathcal{D}}_n, \left|\frac{d_n \widetilde{\delta}}{\widehat{\Delta}_*(\beta_0)}-1 \right| \leq \eps \right\}, \quad \text{and} \notag \\
	\widetilde{\mathcal{D}}_{n,3}(\eps) & = \widetilde{\mathcal{D}}_{n}\cap\widetilde{\mathcal{D}}_{n,1}^c(\eps) \cap \widetilde{\mathcal{D}}_{n,2}^c(\eps), 
\end{align*}
where $M_1(\eps)$ is a large constant so that 
\begin{align}\label{eq:def_M_1_eps}
	\mathbb{P}\left( (1-\overline{a}) \mathcal{Z}^2\left( \frac{M_1(\eps)\eps |\widetilde{\mathcal{C}}| }{ (2(1-\rho^2) \Psi c_{\mathcal{B}})^{1/2}}\right) \geq \mathbb{C}_{\alpha,\max}(\rho)+1\right) = 1-\eps.
\end{align}
When $n$ is sufficiently large and $\eps$ is sufficiently small, on $\mathcal{E}_n(\eps)$, there exists a constant $c$ such that 
\begin{align}
	& |\widehat{\Delta}_*(\beta_0)-\Delta_*| \leq  c\eps, \quad \inf_{\widetilde{\delta} \in \widetilde{\mathcal{D}}_{n,2}(\eps) }|d_n \widetilde{\delta}| \geq (1-\eps)(|\Delta_*| - c\eps), \notag  \\
	& |\widehat{\Phi}_1(\beta_0) - \Phi_1| \leq c\eps, \quad |d_n^2f^2_s(\widehat{D},\widehat{\gamma}(\beta_0)) - \widetilde{\mathcal{C}}^2|\leq  c\eps, \notag \\
	& \sup_{\widetilde{\delta} \in \widetilde{\mathcal{D}}_{n,2}(\eps) }\left[1- (d_n^2\widetilde{\delta}^2, d_n\widetilde{\delta}) \left(\begin{pmatrix}
		\widehat{\Phi}_1(\beta_0) & \widehat{\Phi}_{12}(\beta_0) \\
		\widehat{\Phi}_{12}(\beta_0) & \widehat{\Psi}(\beta_0)
	\end{pmatrix}^{-1} \begin{pmatrix}
		\widehat{\Phi}_{13}(\beta_0) \\
		\widehat{\tau}(\beta_0)
	\end{pmatrix}\right) \right]^{2} \notag \\
	& \leq   \left[1- (\Delta_*^2, \Delta_*) \left(\begin{pmatrix}
		\Phi_1 & \Phi_{12} \\
		\Phi_{12} & \Psi
	\end{pmatrix}^{-1} \begin{pmatrix}
		\Phi_{13} \\
		\tau
	\end{pmatrix}\right) \right]^2 + c\eps \leq c_{\mathcal{B}}+c\eps, \notag \\
	& |\widehat{c}_{\mathcal{B}}(\beta_0)-c_{\mathcal{B}}| \leq c\eps. 
	\label{eq:mathcalE_1}
\end{align}
This further implies 
\begin{align*}
	\widetilde{\mathcal{D}}_{n,1}(\eps) \cap \widetilde{\mathcal{D}}_{n,2}(\eps)  = \emptyset.  
\end{align*}

Recall $\phi_{a_1,a_2,s}(\delta,\widehat{D},\widehat{\gamma}(\beta_0))$ defined in \eqref{eq:phias}. With $\delta$ replaced by $d_n \widetilde{\delta}$ and when $\widetilde{\delta} \in \widetilde{\mathcal{D}}_{n,1}(\eps)$, we have 
\begin{align*}
	\begin{pmatrix}
		d_n^{-1}\widehat{C}_{1}(d_n \widetilde{\delta}) \\
		d_n^{-1}\widehat{C}_{2}(d_n \widetilde{\delta})
	\end{pmatrix} (d_n f_s(\widehat{D},\widehat{\gamma}(\beta_0))) \convP \begin{pmatrix}
		0 \\
		(1-\rho^2)^{-1/2} \Psi^{-1/2} \widetilde{\delta} \widetilde{\mathcal{C}} 
	\end{pmatrix},
\end{align*}
Therefore, uniformly over $(a_1,a_2) \in \mathbb{A}_0$ and $\widetilde{\delta} \in \widetilde{\mathcal{D}}_{n,1}(\eps)$ and conditional on data, we have 
\begin{align*}
	\phi_{a_1,a_2,s}(d_n \widetilde{\delta}, \widehat{D},\widehat{\gamma}(\beta_0)) \convD     1\begin{Bmatrix}
		a_1 \mathcal{Z}_1^2 + a_2\left(\rho \mathcal{Z}_1 + (1-\rho^2)^{1/2}\mathcal{Z}_2(     (1-\rho^2)^{-1/2} \Psi^{-1/2} \widetilde{\delta} \widetilde{\mathcal{C}}) \right)^2 \\
		+ (1-a_1 - a_2)\mathcal{Z}_2^2(     (1-\rho^2)^{-1/2} \Psi^{-1/2} \widetilde{\delta} \widetilde{\mathcal{C}}) \geq \mathbb{C}_{\alpha}(a_1,a_2;\rho) 
	\end{Bmatrix}.
\end{align*}
This implies
\begin{align*}
	& \sup_{(a_1,a_2) \in \mathbb{A}_0, \widetilde{\delta} \in \widetilde{\mathcal{D}}_{n,1}(\eps)}    \bigg|\mathbb{E}^*\phi_{a_1,a_2,s}(d_n \widetilde{\delta}, \widehat{D},\widehat{\gamma}(\beta_0))  \\
	& - \mathbb{E} 1\begin{Bmatrix}
		a_1 \mathcal{Z}_1^2 + a_2\left(\rho \mathcal{Z}_1 + (1-\rho^2)^{1/2}\mathcal{Z}_2(     (1-\rho^2)^{-1/2} \Psi^{-1/2} \widetilde{\delta} \widetilde{\mathcal{C}}) \right)^2 \\
		+ (1-a_1 - a_2)\mathcal{Z}_2^2(     (1-\rho^2)^{-1/2} \Psi^{-1/2} \widetilde{\delta} \widetilde{\mathcal{C}}) \geq \mathbb{C}_{\alpha}(a_1,a_2;\rho) 
	\end{Bmatrix} \bigg| \convP 0.
\end{align*}

In addition, by Lemma \ref{lem:ump}, for any $\widetilde{\delta}$, $\mathbb{E} 1\begin{Bmatrix}
	a_1 \mathcal{Z}_1^2 + a_2\left(\rho \mathcal{Z}_1 + (1-\rho^2)^{1/2}\mathcal{Z}_2(     (1-\rho^2)^{-1/2} \Psi^{-1/2} \widetilde{\delta} \widetilde{\mathcal{C}}) \right)^2 \\
	+ (1-a_1 - a_2)\mathcal{Z}_2^2(     (1-\rho^2)^{-1/2} \Psi^{-1/2} \widetilde{\delta} \widetilde{\mathcal{C}}) \geq \mathbb{C}_{\alpha}(a_1,a_2;\rho) 
\end{Bmatrix}$ is maximized at $a_1=0$ and $a_2\rho =0$. This implies
\begin{align*}
	&  \sup_{\widetilde{\delta} \in \widetilde{\mathcal{D}}_{n,1}(\eps)}  |\mathcal{P}_{d_n\widetilde{\delta},s}(\widehat{D},\widehat{\gamma}(\beta_0)) -\mathbb{E}1\{\mathcal{Z}_2^2(     (1-\rho^2)^{-1/2} \Psi^{-1/2} \widetilde{\delta} \widetilde{\mathcal{C}}) \geq \mathbb{C}_{\alpha}\}| \\
	&  = \sup_{\widetilde{\delta} \in \widetilde{\mathcal{D}}_{n,1}(\eps)}|\sup_{(a_1,a_2) \in \mathbb{A}(f_{s}(\widehat{D},\widehat{\gamma}(\beta_0)),\widehat{\gamma}(\beta_0))}\mathbb{E}^*\phi_{a_1,a_2,s}(d_n \widetilde{\delta}, \widehat{D},\widehat{\gamma}(\beta_0))  - \mathbb{E}1\{\mathcal{Z}_2^2(     (1-\rho^2)^{-1/2} \Psi^{-1/2} \widetilde{\delta} \widetilde{\mathcal{C}}) \geq \mathbb{C}_{\alpha}\}| \\
	& \leq \sup_{\widetilde{\delta} \in \widetilde{\mathcal{D}}_{n,1}(\eps)}\bigg|\sup_{(a_1,a_2) \in \mathbb{A}(f_{s}(\widehat{D},\widehat{\gamma}(\beta_0)),\widehat{\gamma}(\beta_0))}\mathbb{E} 1\begin{Bmatrix}
		a_1 \mathcal{Z}_1^2 + a_2\left(\rho \mathcal{Z}_1 + (1-\rho^2)^{1/2}\mathcal{Z}_2(     (1-\rho^2)^{-1/2} \Psi^{-1/2} \widetilde{\delta} \widetilde{\mathcal{C}}) \right)^2 \\
		+ (1-a_1 - a_2)\mathcal{Z}_2^2(     (1-\rho^2)^{-1/2} \Psi^{-1/2} \widetilde{\delta} \widetilde{\mathcal{C}}) \geq \mathbb{C}_{\alpha}(a_1,a_2;\rho) 
	\end{Bmatrix} \\
	& -  \mathbb{E}1\{\mathcal{Z}_2^2(     (1-\rho^2)^{-1/2} \Psi^{-1/2} \widetilde{\delta} \widetilde{\mathcal{C}}) \geq \mathbb{C}_{\alpha}\}\bigg| + o_p(1),\\
	& \leq \sup_{\widetilde{\delta} \in \widetilde{\mathcal{D}}_{n,1}(\eps)}\bigg|\sup_{(a_1,a_2) \in \mathbb{A}_0}\mathbb{E} 1\begin{Bmatrix}
		a_1 \mathcal{Z}_1^2 + a_2\left(\rho \mathcal{Z}_1 + (1-\rho^2)^{1/2}\mathcal{Z}_2(     (1-\rho^2)^{-1/2} \Psi^{-1/2} \widetilde{\delta} \widetilde{\mathcal{C}}) \right)^2 \\
		+ (1-a_1 - a_2)\mathcal{Z}_2^2(     (1-\rho^2)^{-1/2} \Psi^{-1/2} \widetilde{\delta} \widetilde{\mathcal{C}}) \geq \mathbb{C}_{\alpha}(a_1,a_2;\rho) 
	\end{Bmatrix} \\
	& -  \mathbb{E}1\{\mathcal{Z}_2^2(     (1-\rho^2)^{-1/2} \Psi^{-1/2} \widetilde{\delta} \widetilde{\mathcal{C}}) \geq \mathbb{C}_{\alpha}\}\bigg| + o_p(1)= o_p(1),
\end{align*}
where the second inequality is due to the facts that $\underline{a}( f_{s}(\widehat{D},\widehat{\gamma}(\beta_0)),\widehat{\gamma}(\beta_0)) = o_p(1)$ under strong identification  and $\mathbb{E} 1\begin{Bmatrix}
	a_1 \mathcal{Z}_1^2 + a_2\left(\rho \mathcal{Z}_1 + (1-\rho^2)^{1/2}\mathcal{Z}_2(     (1-\rho^2)^{-1/2} \Psi^{-1/2} \widetilde{\delta} \widetilde{\mathcal{C}}) \right)^2 \\
	+ (1-a_1 - a_2)\mathcal{Z}_2^2(     (1-\rho^2)^{-1/2} \Psi^{-1/2} \widetilde{\delta} \widetilde{\mathcal{C}}) \geq \mathbb{C}_{\alpha}(a_1,a_2;\rho) 
\end{Bmatrix}$ is continuous at $a_1 = 0$ uniformly over $|\widetilde{\delta}| \leq M_1(\eps)$. Therefore, we have 
\begin{align}
	& \sup_{(a_1,a_2) \in \mathbb{A}(f_{s}(\widehat{D},\widehat{\gamma}(\beta_0)),\widehat{\gamma}(\beta_0)), \widetilde{\delta} \in \widetilde{\mathcal{D}}_{n,1}(\eps)} \biggl|Q_n(a_1,a_2,\widetilde{\delta}) - Q(a_1,a_2,\widetilde{\delta})  \biggr| \convP 0.
	\label{eq:Q1}
\end{align}

Next, we consider the case in which  $\widetilde{\delta} \in \widetilde{\mathcal{D}}_{n,2}(\eps)$. We have
\begin{align*}
	& \phi_{a_1,a_2,s}(d_n\widetilde{\delta},\widehat{D},\widehat{\gamma}(\beta_0)) \\
	& = 1\begin{Bmatrix}
		a_1 \mathcal{Z}_1^2(\widehat{C}_{1}(d_n\widetilde{\delta})f_s(\widehat{D},\widehat{\gamma}(\beta_0))) \\
		+ a_2\left(\widehat{\rho}(\beta_0)\mathcal{Z}_1(\widehat{C}_{1}(d_n\widetilde{\delta})f_s(\widehat{D},\widehat{\gamma}(\beta_0))) + (1-\widehat{\rho}^2(\beta_0))^{1/2}\mathcal{Z}_2(\widehat{C}_{2}(d_n\widetilde{\delta})f_s(\widehat{D},\widehat{\gamma}(\beta_0)))  \right)^2 \\
		+ (1-a_1-a_2)\mathcal{Z}_2^2(\widehat{C}_{2}(d_n\widetilde{\delta})f_s(\widehat{D},\widehat{\gamma}(\beta_0)))  \geq \mathbb{C}_{\alpha}(a_1,a_2;\widehat{\rho}(\beta_0))
	\end{Bmatrix} \\
	& \geq  1\begin{Bmatrix}
		\underline{a}(f_{s}(\widehat{D},\widehat{\gamma}(\beta_0)),\widehat{\gamma}(\beta_0)) \mathcal{Z}_1^2(\widehat{C}_{1}(d_n\widetilde{\delta})f_s(\widehat{D},\widehat{\gamma}(\beta_0))) \geq \mathbb{C}_{\alpha,\max}(\widehat{\rho}(\beta_0))
	\end{Bmatrix}.
\end{align*}

By \eqref{eq:mathcalE_1}, on $\mathcal{E}_n(\eps)$, there exists a constant $c>0$ such that 
\begin{align*}
	& \widehat{C}_1^2(d_n \widetilde{\delta})(d_nf_s(\widehat{D},\widehat{\gamma}(\beta_0)))^2  \\
	& = \frac{\widehat{\Phi}_1^{-1}(\beta_0) (d_n \widetilde{\delta})^4(d_nf_s(\widehat{D},\widehat{\gamma}(\beta_0)))^2 }{\left[1- (d_n^2\widetilde{\delta}^2, d_n\widetilde{\delta}) \left(\begin{pmatrix}
			\widehat{\Phi}_1(\beta_0) & \widehat{\Phi}_{12}(\beta_0) \\
			\widehat{\Phi}_{12}(\beta_0) & \widehat{\Psi}(\beta_0)
		\end{pmatrix}^{-1} \begin{pmatrix}
			\widehat{\Phi}_{13}(\beta_0) \\
			\widehat{\tau}(\beta_0)
		\end{pmatrix}\right) \right]^{2}} \\
	& \geq \frac{(\Phi_1(\beta_0)+c\eps)^{-1}(1-\eps)^4(|\Delta_*|-c\eps)^4 (\widetilde{\mathcal{C}}^2 - c\eps )}{c_{\mathcal{B}} + c\eps } \geq c
\end{align*}
and 
\begin{align*}
	& \underline{a}(f_{s}(\widehat{D},\widehat{\gamma}(\beta_0)),\widehat{\gamma}(\beta_0)) \widehat{C}_1^2(d_n \widetilde{\delta})f_s^2(\widehat{D},\widehat{\gamma}(\beta_0)) \\
	& \geq \frac{1.1 \mathbb{C}_{\alpha,\max}(\widehat{\rho}(\beta_0)) \widehat{\Phi}_1(\beta_0) \widehat{c}_{\mathcal{B}}(\beta_0) }{ \widehat{\Delta}_*^4(\beta_0) d_n^2f_{s}^2(\widehat{D},\widehat{\gamma}(\beta_0))} \widehat{C}_1^2(d_n \widetilde{\delta})(d_nf_s(\widehat{D},\widehat{\gamma}(\beta_0)))^2 \\
	&\geq \frac{1.1\mathbb{C}_{\alpha,\max}(\widehat{\rho}(\beta_0)) (\Phi_1 -c\eps)(c_{\mathcal{B}} - c\eps) }{(|\Delta_*|+c\eps)^4 (\widetilde{\mathcal{C}}^2 + c\eps)}  \frac{(\Phi_1(\beta_0)+c\eps)^{-1}(1-\eps)^4(|\Delta_*|-c\eps)^4 (\widetilde{\mathcal{C}}^2 - c\eps )}{c_{\mathcal{B}} + c\eps }\\
	& \geq (1.1-c\eps)\mathbb{C}_{\alpha,\max}(\widehat{\rho}(\beta_0)),
\end{align*}
where the last inequality holds because $\eps$ can be arbitrarily small. This means, on $\mathcal{E}_n(\eps)$ and when $\widetilde{\delta} \in \widetilde{\mathcal{D}}_{n,2}(\eps)$, 
\begin{align*}
	\mathbb{E}^*\phi_{a_1,a_2,s}(d_n\widetilde{\delta},\widehat{D},\widehat{\gamma}(\beta_0))  \geq  \mathbb{P}^*(o_p(1) + (1.1-c\eps)\mathbb{C}_{\alpha,\max}(\widehat{\rho}(\beta_0)) \geq \mathbb{C}_{\alpha,\max}(\widehat{\rho}(\beta_0)) ) \rightarrow 1.
\end{align*}
As $\mathbb{P}(\mathcal{E}_n(\eps))\rightarrow 1$, we have 
\begin{align*}
	\sup_{(a_1,a_2) \in \mathbb{A}(f_{s}(\widehat{D},\widehat{\gamma}(\beta_0)),\widehat{\gamma}(\beta_0)),\widetilde{\delta} \in \widetilde{\mathcal{D}}_{n,2}(\eps) }\left[1-\mathbb{E}^*\phi_{a_1,a_2,s}(d_n\widetilde{\delta},\widehat{D},\widehat{\gamma}(\beta_0))\right]  \convP 0,
\end{align*}
and thus, 
\begin{align}
	&\sup_{(a_1,a_2) \in \mathbb{A}(f_{s}(\widehat{D},\widehat{\gamma}(\beta_0)),\widehat{\gamma}(\beta_0)), \widetilde{\delta} \in \widetilde{\mathcal{D}}_{n,2}(\eps)} \left[\mathcal{P}_{d_n\widetilde{\delta},s}(\widehat{D},\widehat{\gamma}(\beta_0)) -     \mathbb{E}^*\phi_{a_1,a_2,s}(d_n\widetilde{\delta},\widehat{D},\widehat{\gamma}(\beta_0))\right] \notag \\
	&\leq \sup_{(a_1,a_2) \in \mathbb{A}(f_{s}(\widehat{D},\widehat{\gamma}(\beta_0)),\widehat{\gamma}(\beta_0)),\widetilde{\delta} \in \widetilde{\mathcal{D}}_{n,2}(\eps) }\left[1-\mathbb{E}^*\phi_{a_1,a_2,s}(d_n\widetilde{\delta},\widehat{D},\widehat{\gamma}(\beta_0))\right] \convP 0.
	\label{eq:PD2}
\end{align}
Furthermore, note that $a_1+a_2 \leq \overline{a} <1$ and when $\widetilde{\delta} \in \widetilde{\mathcal{D}}_{n,2}(\eps)$, on $\mathcal{E}_n(\eps)$, \eqref{eq:mathcalE_1} implies $\widetilde{\delta}^2 \rightarrow \infty$. Therefore, we have  
\begin{align*}
	& a_1 \mathcal{Z}_1^2 + a_2\left(\rho \mathcal{Z}_1 + (1-\rho^2)^{1/2}\mathcal{Z}_2(     (1-\rho^2)^{-1/2} \Psi^{-1/2} \widetilde{\delta} \widetilde{\mathcal{C}}) \right)^2 
	+ (1-a_1 - a_2)\mathcal{Z}_2^2(     (1-\rho^2)^{-1/2} \Psi^{-1/2} \widetilde{\delta} \widetilde{\mathcal{C}}) \\
	& \geq (1 - \overline{a}) \mathcal{Z}_2^2(     (1-\rho^2)^{-1/2} \Psi^{-1/2} \widetilde{\delta} \widetilde{\mathcal{C}}) =\frac{(1-\overline{a})\widetilde{\delta}^2 \widetilde{\mathcal{C}}^2}{(1-\rho^2)\Psi}(1+o_p(1))  \rightarrow \infty, 
\end{align*}
which further implies 
\begin{align*}
	\sup_{(a_1,a_2) \in \mathbb{A}(f_{s}(\widehat{D},\widehat{\gamma}(\beta_0)),\widehat{\gamma}(\beta_0)),\widetilde{\delta} \in \widetilde{\mathcal{D}}_{n,2}(\eps) }\left[1- \mathbb{E} 1 \begin{Bmatrix}
		a_1 \mathcal{Z}_1^2 + a_2\left(\rho \mathcal{Z}_1 + (1-\rho^2)^{1/2}\mathcal{Z}_2(     (1-\rho^2)^{-1/2} \Psi^{-1/2} \widetilde{\delta} \widetilde{\mathcal{C}}) \right)^2 \\
		+ (1-a_1 - a_2)\mathcal{Z}_2^2(     (1-\rho^2)^{-1/2} \Psi^{-1/2} \widetilde{\delta} \widetilde{\mathcal{C}}) \geq \mathbb{C}_{\alpha}(a_1,a_2;\rho)
	\end{Bmatrix}\right] \convP 0
\end{align*}
and
\begin{align}
	& \sup_{(a_1,a_2) \in \mathbb{A}(f_{s}(\widehat{D},\widehat{\gamma}(\beta_0)),\widehat{\gamma}(\beta_0)),\widetilde{\delta} \in \widetilde{\mathcal{D}}_{n,2}(\eps) }\bigg[
	\mathbb{E}1\{\mathcal{Z}_2^2(     (1-\rho^2)^{-1/2} \Psi^{-1/2} \widetilde{\delta} \widetilde{\mathcal{C}}) \geq \mathbb{C}_{\alpha}\} \notag \\
	& - \mathbb{E} 1\begin{Bmatrix}
		a_1 \mathcal{Z}_1^2 + a_2\left(\rho \mathcal{Z}_1 + (1-\rho^2)^{1/2}\mathcal{Z}_2(     (1-\rho^2)^{-1/2} \Psi^{-1/2} \widetilde{\delta} \widetilde{\mathcal{C}}) \right)^2 \\
		+ (1-a_1 - a_2)\mathcal{Z}_2^2(     (1-\rho^2)^{-1/2} \Psi^{-1/2} \widetilde{\delta} \widetilde{\mathcal{C}}) \geq \mathbb{C}_{\alpha}(a_1,a_2;\rho) 
	\end{Bmatrix} \bigg] \convP 0.
	\label{eq:ED2}
\end{align}

Combining \eqref{eq:PD2} and \eqref{eq:ED2}, we have
\begin{align}
	& \sup_{(a_1,a_2) \in \mathbb{A}(f_{s}(\widehat{D},\widehat{\gamma}(\beta_0)),\widehat{\gamma}(\beta_0)), \widetilde{\delta} \in \widetilde{\mathcal{D}}_{n,2}(\eps)} \biggl|Q_n(a_1,a_2,\widetilde{\delta}) - Q(a_1,a_2,\widetilde{\delta}) \biggr| \rightarrow 0.
	\label{eq:Q2}
\end{align}

Last, we consider the case in which $\widetilde{\delta} \in \widetilde{\mathcal{D}}_{n,3}(\eps)$. On $\mathcal{E}_n(\eps)$, \eqref{eq:mathcalE_1} implies 
\begin{align*}
	& \widehat{C}_2^2(d_n \widetilde{\delta})f_s^2(\widehat{D},\widehat{\gamma}(\beta_0)) \\
	&  = \frac{ \widetilde{\delta}^2(1-\frac{d_n\widetilde{\delta} }{\widehat{\Delta}_*(\beta_0)})^2}{(1-\widehat{\rho}^2(\beta_0)) \widehat{\Psi}(\beta_0)}\frac{d^2_nf^2_s(\widehat{D},\widehat{\gamma}(\beta_0))}{\left[1- (d_n^2\widetilde{\delta}^2, d_n\widetilde{\delta}) \left(\begin{pmatrix}
			\widehat{\Phi}_1(\beta_0) & \widehat{\Phi}_{12}(\beta_0) \\
			\widehat{\Phi}_{12}(\beta_0) & \widehat{\Psi}(\beta_0)
		\end{pmatrix}^{-1} \begin{pmatrix}
			\widehat{\Phi}_{13}(\beta_0) \\
			\widehat{\tau}(\beta_0)
		\end{pmatrix}\right) \right]^{2}} \\
	& \geq \frac{(1-c\eps)M_1^2(\eps)\eps^2 (\widetilde{\mathcal{C}}^2-c\eps) }{(1-\rho^2) \Psi c_{\mathcal{B}}} \\ 
	& \geq \frac{M_1^2(\eps)\eps^2 \widetilde{\mathcal{C}}^2 }{2(1-\rho^2) \Psi c_{\mathcal{B}}}, 
\end{align*}
where the second inequality holds when $\eps$ is sufficiently small. In this case, 
\begin{align*}
	\mathbb{E}^*\phi_{a_1,a_2,s}(d_n\widetilde{\delta},\widehat{D},\widehat{\gamma}(\beta_0)) & \geq  \mathbb{P}^*( (1-\overline{a}) \mathcal{Z}_2^2(\widehat{C}_2(d_n \widetilde{\delta})f_s(\widehat{D},\widehat{\gamma}(\beta_0))) \geq \mathbb{C}_{\alpha,\max}(\widehat{\rho}(\beta_0)) ) \\
	& \geq \mathbb{P}^*\left( (1-\overline{a}) \mathcal{Z}_2^2\left(\frac{M_1(\eps)\eps |\widetilde{\mathcal{C}}| }{(2(1-\rho^2) \Psi c_{\mathcal{B}})^{1/2}}\right) \geq \mathbb{C}_{\alpha,\max}(\widehat{\rho}(\beta_0))\right) \\
	& \geq \mathbb{P}^*\left( (1-\overline{a}) \mathcal{Z}_2^2\left(\frac{M_1(\eps)\eps |\widetilde{\mathcal{C}}| }{(2(1-\rho^2) \Psi c_{\mathcal{B}})^{1/2}}\right) \geq \mathbb{C}_{\alpha,\max}(\rho)+c\eps\right) - \eps \geq 1-2\eps,    
\end{align*}
where the second inequality is by the fact that the CDF (survival function) of $\mathcal{Z}^2(\lambda)$ is monotone decreasing (increasing) in $|\lambda|$ and the last equality is by the definition of $M_1(\eps)$ in (\ref{eq:def_M_1_eps})
and the fact that $\mathbb{C}_{\alpha,\max}(\widehat{\rho}(\beta_0)) \convP \mathbb{C}_{\alpha,\max}(\rho)$ . This implies, on $\mathcal{E}_n(\eps)$, 
\begin{align}
	\sup_{(a_1,a_2) \in \mathbb{A}(f_{s}(\widehat{D},\widehat{\gamma}(\beta_0)),\widehat{\gamma}(\beta_0)), \widetilde{\delta} \in \widetilde{\mathcal{D}}_{n,3}(\eps)} \left[\mathcal{P}_{d_n\widetilde{\delta},s}(\widehat{D},\widehat{\gamma}(\beta_0)) -     \mathbb{E}^*\phi_{a_1,a_2,s}(d_n\widetilde{\delta},\widehat{D},\widehat{\gamma}(\beta_0))\right] \leq 2\eps.
	\label{eq:PD3}
\end{align}

In addition, we note that $(1-\rho^2)^{-1} \Psi^{-1} \widetilde{\delta}^2 \widetilde{\mathcal{C}}^2$ satisfies 
\begin{align*}
	(1-\rho^2)^{-1} \Psi^{-1} \widetilde{\delta}^2 \widetilde{\mathcal{C}}^2 \geq \frac{M_1^2(\eps)\eps^2 \widetilde{\mathcal{C}}^2 }{2(1-\rho^2) \Psi c_{\mathcal{B}}}, 
\end{align*}
where we use the facts that $\widetilde{\delta}^2 \geq M_1^2(\eps)$, $c_{\mathcal{B}} \geq 1$, and $\eps <1$. Therefore, by the same argument, we have 
\begin{align*}
	&\mathbb{E} 1\begin{Bmatrix}
		a_1 \mathcal{Z}_1^2 + a_2\left(\rho \mathcal{Z}_1 + (1-\rho^2)^{1/2}\mathcal{Z}_2(     (1-\rho^2)^{-1/2} \Psi^{-1/2} \widetilde{\delta} \widetilde{\mathcal{C}}) \right)^2 \\
		+ (1-a_1 - a_2)\mathcal{Z}_2^2(     (1-\rho^2)^{-1/2} \Psi^{-1/2} \widetilde{\delta} \widetilde{\mathcal{C}}) \geq \mathbb{C}_{\alpha}(a_1,a_2;\rho) 
	\end{Bmatrix} \geq 1-\eps
\end{align*}
and
\begin{align}
	& \sup_{(a_1,a_2) \in \mathbb{A}(f_{s}(\widehat{D},\widehat{\gamma}(\beta_0)),\widehat{\gamma}(\beta_0)),\widetilde{\delta} \in \widetilde{\mathcal{D}}_{n,3}(\eps) }\bigg[
	\mathbb{E}1\{\mathcal{Z}_2^2(     (1-\rho^2)^{-1/2} \Psi^{-1/2} \widetilde{\delta} \widetilde{\mathcal{C}}) \geq \mathbb{C}_{\alpha}\} \notag \\
	& - \mathbb{E} 1\begin{Bmatrix}
		a_1 \mathcal{Z}_1^2 + a_2\left(\rho \mathcal{Z}_1 + (1-\rho^2)^{1/2}\mathcal{Z}_2(     (1-\rho^2)^{-1/2} \Psi^{-1/2} \widetilde{\delta} \widetilde{\mathcal{C}}) \right)^2 \\
		+ (1-a_1 - a_2)\mathcal{Z}_2^2(     (1-\rho^2)^{-1/2} \Psi^{-1/2} \widetilde{\delta} \widetilde{\mathcal{C}}) \geq \mathbb{C}_{\alpha}(a_1,a_2;\rho) 
	\end{Bmatrix} \bigg]  \leq \eps.
	\label{eq:ED3}
\end{align}
Combining \eqref{eq:PD3} and \eqref{eq:ED3}, we have, on $\mathcal{E}_n(\eps)$, 
\begin{align}
	& \sup_{(a_1,a_2) \in \mathbb{A}(f_{s}(\widehat{D},\widehat{\gamma}(\beta_0)),\widehat{\gamma}(\beta_0)), \widetilde{\delta} \in \widetilde{\mathcal{D}}_{n,3}(\eps)} \biggl|Q_n(a_1,a_2,\widetilde{\delta}) - Q(a_1,a_2,\widetilde{\delta})  \biggr| \leq 3\eps.     
	\label{eq:Q3}
\end{align}

Combining \eqref{eq:Q1}, \eqref{eq:Q2}, and \eqref{eq:Q3}, we have
\begin{align*}
	& \mathbb{P}\left( \sup_{(a_1,a_2) \in \mathbb{A}(f_{s}(\widehat{D},\widehat{\gamma}(\beta_0)),\widehat{\gamma}(\beta_0)), \widetilde{\delta} \in \widetilde{\mathcal{D}}_{n}}|Q_n(a_1,a_2,\widetilde{\delta}) - Q(a_1,a_2,\widetilde{\delta})| > 5\eps\right) \\
	& \leq \mathbb{P}\left( \sup_{(a_1,a_2) \in \mathbb{A}(f_{s}(\widehat{D},\widehat{\gamma}(\beta_0)),\widehat{\gamma}(\beta_0)), \widetilde{\delta} \in \widetilde{\mathcal{D}}_{n,1}(\eps)}|Q_n(a_1,a_2,\widetilde{\delta}) - Q(a_1,a_2,\widetilde{\delta})| > \eps,\mathcal{E}_n(\eps)\right) \\
	& + \mathbb{P}\left( \sup_{(a_1,a_2) \in \mathbb{A}(f_{s}(\widehat{D},\widehat{\gamma}(\beta_0)),\widehat{\gamma}(\beta_0)), \widetilde{\delta} \in \widetilde{\mathcal{D}}_{n,2}(\eps)}|Q_n(a_1,a_2,\widetilde{\delta}) - Q(a_1,a_2,\widetilde{\delta})| > \eps,\mathcal{E}_n(\eps)\right) \\
	& + \mathbb{P}\left( \sup_{(a_1,a_2) \in \mathbb{A}(f_{s}(\widehat{D},\widehat{\gamma}(\beta_0)),\widehat{\gamma}(\beta_0)), \widetilde{\delta} \in \widetilde{\mathcal{D}}_{n,3}(\eps)}|Q_n(a_1,a_2,\widetilde{\delta}) - Q(a_1,a_2,\widetilde{\delta})| > 3\eps,\mathcal{E}_n(\eps)\right) + \mathbb{P}\left(\mathcal{E}_n^c(\eps)\right) \\ 
	& \leq o(1) + \eps.
\end{align*}
Since $\eps$ is arbitrary, we have 
\begin{align*}
	\omega_n \equiv \sup_{(a_1,a_2) \in \mathbb{A}(f_{s}(\widehat{D},\widehat{\gamma}(\beta_0)),\widehat{\gamma}(\beta_0)), \widetilde{\delta} \in \widetilde{\mathcal{D}}_{n}}|Q_n(a_1,a_2,\widetilde{\delta}) - Q(a_1,a_2,\widetilde{\delta})| \convP 0.     
\end{align*}

Then we have
\begin{align*}
	0 & \leq  \sup_{\widetilde{\delta} \in \widetilde{\mathcal{D}}_{n}}   Q_n(\underline{a}(f_{s}(\widehat{D},\widehat{\gamma}(\beta_0)),\widehat{\gamma}(\beta_0)),0,\widetilde{\delta})-\sup_{\widetilde{\delta} \in \widetilde{\mathcal{D}}_{n}}Q_n(\mathcal{A}_s(\widehat{D},\widehat{\gamma}(\beta_0)),\widetilde{\delta}) \\
	& \leq   \sup_{\widetilde{\delta} \in \widetilde{\mathcal{D}}_{n}}   Q(\underline{a}(f_{s}(\widehat{D},\widehat{\gamma}(\beta_0)),\widehat{\gamma}(\beta_0)),0,\widetilde{\delta})-\sup_{\widetilde{\delta} \in \widetilde{\mathcal{D}}_{n}}Q(\mathcal{A}_s(\widehat{D},\widehat{\gamma}(\beta_0)),\widetilde{\delta}) + 2\omega_n \\
	& = o_p(1) - \sup_{\widetilde{\delta} \in \widetilde{\mathcal{D}}_{n}}Q(\mathcal{A}_s(\widehat{D},\widehat{\gamma}(\beta_0)),\widetilde{\delta}) + 2\omega_n,
\end{align*}
where the equality holds because (1) $\sup_{\widetilde{\delta} \in \Re}   Q(a_1,0,\widetilde{\delta})$ is continuous at $a_1 = 0$ as shown in the proof of I.\citet[Theorem 5]{Andrews(2016)}, (2) $\underline{a}(f_{s}(\widehat{D},\widehat{\gamma}(\beta_0)),\widehat{\gamma}(\beta_0)) = o_p(1)$ under strong identification, and (3) $\sup_{\widetilde{\delta} \in \Re}   Q(0,0,\widetilde{\delta}) = 0$ by construction. 

Furthermore, we have  
\begin{align*}
	Q(a_1,a_2,\widetilde{\delta}) & =  \mathbb{E}1\{\mathcal{Z}_2^2(     (1-\rho^2)^{-1/2} \Psi^{-1/2} \widetilde{\delta} \widetilde{\mathcal{C}}) \geq \mathbb{C}_{\alpha}\} \notag \\
	& - \mathbb{E} 1\begin{Bmatrix}
		a_1 \mathcal{Z}_1^2 + a_2\left(\rho \mathcal{Z}_1 + (1-\rho^2)^{1/2}\mathcal{Z}_2(     (1-\rho^2)^{-1/2} \Psi^{-1/2} \widetilde{\delta} \widetilde{\mathcal{C}}) \right)^2 \\
		+ (1-a_1 - a_2)\mathcal{Z}_2^2(     (1-\rho^2)^{-1/2} \Psi^{-1/2} \widetilde{\delta} \widetilde{\mathcal{C}}) \geq \mathbb{C}_{\alpha}(a_1,a_2;\rho) 
	\end{Bmatrix} \\
	& =  \mathbb{E}1\{\mathcal{Z}_2^2(     (1-\rho^2)^{-1/2} \Psi^{-1/2} \widetilde{\delta} \widetilde{\mathcal{C}}) \geq \mathbb{C}_{\alpha}\} \notag \\
	& - \mathbb{E} 1\begin{Bmatrix}
		(a_1+a_2\rho^2) \mathcal{Z}_1^2 + a_2\rho (1-\rho^2)^{1/2}\mathcal{Z}_1\mathcal{Z}_2(     (1-\rho^2)^{-1/2} \Psi^{-1/2} \widetilde{\delta} \widetilde{\mathcal{C}})  \\
		+ (1-a_1 - a_2\rho^2)\mathcal{Z}_2^2(     (1-\rho^2)^{-1/2} \Psi^{-1/2} \widetilde{\delta} \widetilde{\mathcal{C}}) \geq \mathbb{C}_{\alpha}(a_1,a_2;\rho) 
	\end{Bmatrix} 
\end{align*}
Note that $a_1 = 0$ and $a_2\rho = 0$ if and only if $a_1 + a_2\rho^2 = 0$, given that $a_1$ and $a_2$ are nonnegative. Therefore, Theorem \ref{thm:admissible}(ii) implies, for any constant $C>0$, there exists a constant $c>0$ such that
$$ \inf_{(a_1,a_2) \in \mathbb{A}_0, a_1 + a_2\rho^2\geq C }\sup_{\widetilde{\delta} \in \widetilde{\mathcal{D}}_{n} }Q(a_1,a_2,\widetilde{\delta})\geq c>0.$$

Therefore, 
\begin{align*}
	\mathbb{P}\left(\mathcal{A}_{1,s}(\widehat{D},\widehat{\gamma}(\beta_0)) + \mathcal{A}_{2,s}(\widehat{D},\widehat{\gamma}(\beta_0))\rho^2  \geq C >0 \right) \leq \mathbb{P}\left( c \leq o_p(1) + 2\omega_n \right) \rightarrow 0.
\end{align*}
This implies $\mathcal{A}_{1,s}(\widehat{D},\widehat{\gamma}(\beta_0)) \convP 0$ and $\mathcal{A}_{2,s}(\widehat{D},\widehat{\gamma}(\beta_0))\rho \convP 0$. 

To see the optimality result, note that
\begin{align*}
	(\widehat{\phi}_{\mathcal{A}_s(\widehat{D},\widehat{\gamma}(\beta_0))},\phi(AR(\beta_0),LM(\beta_0))) \convD (1\{\N_2^{*2} \geq \mathbb{C}_{\alpha}\},\phi(\N_1,\N_2)),    
\end{align*}
where $(\N_1,\N_2)$ is defined above Theorem \ref{thm:strongid} and $\N_2^* = (1-\rho^2)^{-1/2}(\N_2 - \rho \N_1)$. Then, the result holds by Theorem \ref{thm:admissible}(ii).  

\section{Proof of Theorem \ref{thm:uniform_size}}
\label{sec:uniform_size_pf}

We prove the result that $ \limsup_{n \rightarrow \infty} \sup_{\lambda_n \in \Lambda_n} \mathbb{E}_{\lambda}(\widehat{\phi}_{\mathcal{A}_s(\widehat{D},\widehat{\gamma}(\beta_0))}) = \alpha$. The other one can be proved in the same manner. Throughout the proof, we are under the null, i.e., $\beta_0 = \beta$. We start by proving the result for the full sequence $\{ n\}$, rather than a subsequence $\{n_k\}$ of $\{ n \}$. Then, we note that the same proof goes through with $n_k$ in place of $n$. 

We consider two cases: sequences $\lambda_n$ for which 
$\mathcal{C}_n$ converges to a constant and those for which it diverges to infinity. 
First, let us consider the case where $\mathcal{C}_n \rightarrow \widetilde{\mathcal{C}}$ for some fixed constant $\widetilde{\mathcal{C}} \in \Re$. For this case, it is established in Theorem \ref{thm:weakid} that under $\beta_0 = \beta$, 
\begin{align*}
	(AR^2(\beta_0), LM^{*2}(\beta_0),\mathcal{A}_s(\widehat{D},\widehat{\gamma}(\beta_0))) \convD (\mathcal{Z}^2_1, \mathcal{Z}^2_2, \mathcal{A}_s(D,\gamma)),  
\end{align*}
where the two normal random variables are independent from each other and independent of $D$, 
and furthermore (by letting $h(\cdot)$ in Theorem \ref{thm:weakid} be an identity function),
\begin{align*}
	\lim_{n \rightarrow \infty }\mathbb{E}_{\lambda_{n}}(\widehat{\phi}_{\mathcal{A}_s(\widehat{D},\widehat{\gamma}(\beta_0))}) = \alpha.
\end{align*}
Second, let us consider the case where $\mathcal{C}_n$ diverges to infinity.
Then, by Theorem \ref{thm:strongid}, we have
\begin{align*}
	\lim_{n \rightarrow \infty }\mathbb{E}_{\lambda_{n}}(\widehat{\phi}_{\mathcal{A}_s(\widehat{D},\widehat{\gamma}(\beta_0))}) = \mathbb{P}(\mathcal{Z}^2_2 \geq \mathbb{C}_\alpha) = \alpha.
\end{align*}
To complete the proof, we note that the above argument verifies Assumption B$^*$ in \cite{ACG2020}
and then we can establish the result by using Corollary 2.1 in their paper.

\section{Proof of Theorem \ref{thm:strong_fixed}}
We consider strong identification with fixed alternatives. By construction, we have $\mathcal{A}_{1,s}(\widehat{D},\widehat{\gamma}(\beta_0)) \geq  \frac{1.1 \mathbb{C}_{\alpha,\max}(\widehat{\rho}(\beta_0)) \widehat{\Phi}_1(\beta_0) \widehat{c}_{\mathcal{B}}(\beta_0)}{ \widehat{\Delta}_*^4(\beta_0) f_{s}^2(\widehat{D},\widehat{\gamma}(\beta_0))}$. By Theorem \ref{thm:admissible}(iii), it suffices to show that, w.p.a.1,  
\begin{align*}
	\frac{1.1 \mathbb{C}_{\alpha,\max}(\widehat{\rho}(\beta_0)) \widehat{\Phi}_1(\beta_0) \widehat{c}_{\mathcal{B}}(\beta_0)}{ \widehat{\Delta}_*^4(\beta_0) f_{s}^2(\widehat{D},\widehat{\gamma}(\beta_0))}   \geq \frac{\tilde{q}\Psi^{2}(\beta_0)\rho^{4}(\beta_0) }{\mathcal{C}^2\Phi_1(\beta_0)},
\end{align*}
or equivalently,
\begin{align}
	&\frac{1.1 \mathbb{C}_{\alpha,\max}(\widehat{\rho}(\beta_0)) \widehat{\Phi}_1(\beta_0) \widehat{c}_{\mathcal{B}}(\beta_0)}{ \widehat{\Delta}_*^4(\beta_0)d_n^2 f_{s}^2(\widehat{D},\widehat{\gamma}(\beta_0))}    \geq \frac{\tilde{q}\Psi^{2}(\beta_0)\rho^{4}(\beta_0) }{\widetilde{\mathcal{C}}^2\Phi_1(\beta_0)} = \frac{\tilde{q}\Phi_1(\beta_0) }{\widetilde{\mathcal{C}}^2\Delta_*^4(\beta_0)},
	\label{eq:a_fixed}    
\end{align}
for some constant $\tilde{q} > \mathbb{C}_{\alpha,max}(\rho(\beta_0))$. Under strong identification and fixed alternatives, we have
\begin{align*}
	d_n  \widehat{D} = d_n   \left( Q_{X,X} - (Q_{e(\beta_0),e(\beta_0)},Q_{X,e(\beta_0)}) \begin{pmatrix}
		\widehat{\Phi}_1(\beta_0) & \widehat{\Phi}_{12}(\beta_0) \\
		\widehat{\Phi}_{12}(\beta_0) & \widehat{\Psi}(\beta_0)
	\end{pmatrix}^{-1} \begin{pmatrix}
		\widehat{\Phi}_{13}(\beta_0)\\
		\widehat{\tau}(\beta_0) 
	\end{pmatrix} \right) \\
	\convP \left[1- (\Delta^2, \Delta) \left(\begin{pmatrix}
		\Phi_1(\beta_0) & \Phi_{12}(\beta_0) \\
		\Phi_{12}(\beta_0) & \Psi(\beta_0)
	\end{pmatrix}^{-1} \begin{pmatrix}
		\Phi_{13}(\beta_0) \\
		\tau(\beta_0)
	\end{pmatrix}\right) \right] \widetilde{\mathcal{C}}.
\end{align*}
Therefore, we have
\begin{align*}
	d_n  f_{s}(\widehat{D},\widehat{\gamma}(\beta_0)) = d_n   \widehat{D} + o_p(1) \convP  \left[1- (\Delta^2, \Delta) \left(\begin{pmatrix}
		\Phi_1(\beta_0) & \Phi_{12}(\beta_0) \\
		\Phi_{12}(\beta_0) & \Psi(\beta_0)
	\end{pmatrix}^{-1} \begin{pmatrix}
		\Phi_{13}(\beta_0) \\
		\tau(\beta_0)
	\end{pmatrix}\right) \right] \widetilde{\mathcal{C}}
\end{align*}
for $s \in \{pp,krs\}$. This means for any $\eps>0$, w.p.a.1, 
\begin{align*}
	d^2_n  f^2_{s}(\widehat{D},\widehat{\gamma}(\beta_0))  \leq (c_{\mathcal{B}}(\beta_0) +\eps)\widetilde{\mathcal{C}}^2.  
\end{align*} 
In addition, we have $\widehat{c}_{\mathcal{B}}(\beta_0) \convP c_{\mathcal{B}}(\beta_0)\geq 1$, $\widehat{\Delta}_*(\beta_0) \convP \Delta_*(\beta_0)$, $\mathbb{C}_{\alpha,\max}(\widehat{\rho}(\beta_0)) \convP \mathbb{C}_{\alpha,\max}(\rho(\beta_0))$, and $\widehat{\Phi}_1(\beta_0) \convP \Phi_1(\beta_0)>0$, which imply 
$\widehat{c}_{\mathcal{B}}(\beta_0) \geq c_{\mathcal{B}}(\beta_0) -c\eps,$ $\widehat{\Phi}_1(\beta_0) \geq \Phi_1(\beta_0) - c\eps$, $\mathbb{C}_{\alpha,\max}(\widehat{\rho}(\beta_0)) \geq  \mathbb{C}_{\alpha,\max}(\rho(\beta_0)) - c\eps$, and $\widehat{\Delta}_*^4(\beta_0) \leq \Delta_*^4(\beta_0) + c\eps$,  w.p.a.1. Therefore, we have, w.p.a.1,  
\begin{align*}
	\frac{1.1 \mathbb{C}_{\alpha,\max}(\widehat{\rho}(\beta_0)) \widehat{\Phi}_1(\beta_0) \widehat{c}_{\mathcal{B}}(\beta_0)}{ \widehat{\Delta}_*^4(\beta_0)d_n^2 f_{s}^2(\widehat{D},\widehat{\gamma}(\beta_0))} & \geq \frac{1.1 (\mathbb{C}_{\alpha,\max}(\rho(\beta_0))-c\eps) ( c_{\mathcal{B}}(\beta_0) -c\eps)(\Phi_1(\beta_0) - c\eps)  }{(\Delta_*^4(\beta_0) + c\eps)(c_{\mathcal{B}}(\beta_0) +\eps)\widetilde{\mathcal{C}}^2} \\
	& \geq \frac{(1.1-c\eps)\mathbb{C}_{\alpha,\max}(\rho(\beta_0))\Phi_1(\beta_0)}{\Delta_*^4(\beta_0)\widetilde{\mathcal{C}}^2},
\end{align*}
where the second inequality holds because $\eps$ can be arbitrarily small. Then, we can let $\tilde{q}$ in \eqref{eq:a_fixed} be $(1.1-c\eps)\mathbb{C}_{\alpha,\max}(\rho(\beta_0))$ which is greater than $\mathbb{C}_{\alpha,\max}(\rho(\beta_0))$. This concludes the proof.

\section{Proof of Theorem \ref{thm:W}}
We first extend our notation. For $a_i \in \Re^{d_1 \times 1}$ and $b_j \in \Re^{d_2 \times 1}$, we write $Q_{a,b}$ as $\sum_{i \in [n]}\sum_{j \neq i}a_i P_{ij}b_j^\top/\sqrt{K}$. Let $\hat \gamma_e = (W^\top W)^{-1}(W^\top \tilde e)$ and $\hat \gamma_V = (W^\top W)^{-1}(W^\top \tilde V)$. Then, we have
$e_i = \tilde e_i - W_i^\top \hat \gamma_e$, $V_i = \tilde V_i - W_i^\top \hat \gamma_V$, and $X_i = \Pi_i + V_i = \Pi_i + \tilde V_i - W_i^\top \hat \gamma_V$. By Lemma \ref{lem:W}, we have 
\begin{align*}
	Q_{e,e} = Q_{\tilde e- W \hat \gamma_e, \tilde e-W \hat \gamma_e} = Q_{\tilde e,\tilde e}-2Q_{\tilde e,W} \hat \gamma_e + \hat \gamma_e^\top Q_{W,W} \hat \gamma_e = Q_{\tilde e,\tilde e}+o_P(1).
\end{align*}
In addition, let $\overline{X} = \Pi+ \tilde V$. Then,  we have $X = \overline{X} - W \hat \gamma_V$ and
\begin{align*}
	Q_{X,e} & = Q_{\overline{X}-W \hat \gamma_V, \tilde e - W \hat \gamma_e}\\
	&= Q_{\overline{X},\tilde e} - Q_{\tilde e, W} \hat \gamma_V - Q_{\overline{X},W} \hat \gamma_e + \hat \gamma_V^\top Q_{W,W} \hat \gamma_e \\
	&=Q_{\overline{X},\tilde e} - Q_{\overline{X},W} \hat \gamma_e +o_P(1) \\
	&=Q_{\overline{X},\tilde e} - Q_{\Pi,W} \hat \gamma_e +o_P(1) \\
	&=Q_{\overline{X},\tilde e} + \sum_{i \in [n]} \Pi_i P_{ii}W_i^\top \hat \gamma_e/\sqrt{K} +o_P(1),
\end{align*}
where the last equality holds because 
\begin{align*}
	Q_{\Pi,W} = \sum_{i \in [n]} \Pi_i (\sum_{j \neq i}P_{ij}W_j^\top)/\sqrt{K} = - \sum_{i \in [n]} \Pi_i P_{ii}W_i^\top/\sqrt{K}.
\end{align*}
Denote $G_i = (\sum_{i \in [n]} \Pi_i P_{ii}W_i^\top)(\sum_{i \in [n]}W_iW_i^\top)^{-1}W_i$. Then, we have
\begin{align*}
	Q_{X,e} & = Q_{\tilde V,\tilde e} + Q_{\Pi,\tilde e} + \sum_{i \in [n]} G_i \tilde e_i/\sqrt{K} +o_P(1) \\
	& = \frac{\sum_{i \in [n]}\sum_{j\neq i} \tilde V_i P_{ij}\tilde e_j}{\sqrt{K}} + \sum_{i \in [n]} \frac{(G_i+\omega_i)}{\sqrt{K}} \tilde e_i +o_P(1),
\end{align*}
where $\omega_i = \sum_{j \neq i}P_{ij}\Pi_j$. 

Similarly, we have 
\begin{align*}
	Q_{X,X} & = Q_{\overline{X}-W\hat \gamma_V, \overline{X}-W\hat \gamma_V} \\
	& = Q_{\overline{X}, \overline{X}}- 2 Q_{\overline{X}, W}\hat \gamma_V +  \hat \gamma_V^\top Q_{W,W}\hat \gamma_V \\
	& = Q_{\Pi,\Pi} + 2 Q_{\Pi, \tilde V} + Q_{\tilde V, \tilde V}- 2 Q_{\Pi, W}\hat \gamma_V +o_P(1) \\
	& = Q_{\Pi,\Pi} + \frac{\sum_{i \in [n]}\sum_{j \neq i} \tilde V_i P_{ij} \tilde V_j}{\sqrt{K}} + 2\sum_{i \in [n]}\frac{\omega_i + G_i}{\sqrt{K}} \tilde V_i+o_P(1).
\end{align*}

Given $\{\tilde e_i,\tilde V_i\}_{i \in [n]}$ are independent, we can follow the same argument in the proof of \citet[Lemma 2]{Chao(2012)} and show the joint asymptotic normality of 
\begin{align*}
	\left(\frac{\sum_{i \in [n]}\sum_{j\neq i} \tilde e_i P_{ij}\tilde e_j}{\sqrt{K}},\frac{\sum_{i \in [n]}\sum_{j\neq i} \tilde V_i P_{ij}\tilde e_j}{\sqrt{K}},\frac{\sum_{i \in [n]}\sum_{j\neq i} \tilde V_i P_{ij}\tilde V_j}{\sqrt{K}}, \sum_{i \in [n]} \frac{(G_i+\omega_i)}{\sqrt{K}} \tilde e_i,\sum_{i \in [n]} \frac{(G_i+\omega_i)}{\sqrt{K}} \tilde V_i\right).
\end{align*}
In particular, we see that 
\begin{align*}
	Var\left(\sum_{i \in [n]} \frac{(G_i + \omega_i)\tilde e_i}{\sqrt{K}}\right) & = \sum_{i \in [n]} \frac{(G_i + \omega_i)^2 \tilde \sigma_i^2} {K} \\
	& \leq C \sum_{i \in [n]} \frac{(G_i + \omega_i)^2} {K} \\
	& \leq C \left[\frac{(\sum_{i \in [n]} \Pi_i P_{ii}W_i^\top)(\sum_{i \in [n]}W_iW_i^\top)^{-1} (\sum_{i \in [n]} \Pi_i P_{ii}W_i)}{K} + \frac{\Pi^\top \Pi}{K} \right] \\
	& \leq C \left[p_n^2 \frac{\Pi^\top \Pi}{K} + \frac{\Pi^\top \Pi}{K} \right] = O(1) 
\end{align*}
and the same result for $    Var(\sum_{i \in [n]} \frac{(G_i + \omega_i)\tilde V_i}{\sqrt{K}})$. 
This implies the joint asymptotic normality of 
\begin{align*}
	(Q_{e,e},Q_{X,e},Q_{X,X} - Q_{\Pi,\Pi}),
\end{align*}
and thus, verifying Assumption \ref{ass:weak_convergence}.

To see the second result in Theorem \ref{thm:W}, we note that 
\begin{align*}
	\mathbb{E}\left( \sum_{i \in [n]}G_i \tilde e_i/\sqrt{K}\right)^2 & \leq C \sum_{i \in [n]}G_i^2/K \\
	& = C (\sum_{i \in [n]} \Pi_i P_{ii}W_i^\top)(\sum_{i \in [n]}W_iW_i^\top)^{-1}(\sum_{i \in [n]} \Pi_i P_{ii}W_i)/K \\
	& \leq C \sum_{i \in [n]} \Pi_i^2 P_{ii}^2/K \\
	& \leq C \Pi^\top \Pi p_n^2/K.
\end{align*}

If $\Pi^\top \Pi p_n^2/K=o(1)$, then we have $\sum_{i \in [n]}G_i \tilde e_i/\sqrt{K} = o_P(1)$. Similarly, we can show that, if $\Pi^\top \Pi p_n^2/K=o(1)$, $\sum_{i \in [n]}G_i \tilde V_i/\sqrt{K} = o_P(1)$. These imply $Q_{\overline{X},W}\hat \gamma_e = o_P(1)$ and $Q_{\overline{X},W}\hat \gamma_V = o_P(1)$, which further imply that 
\begin{align*}
	Q_{X,e} = Q_{\overline{X},\tilde e} +o_P(1) \quad \text{and}    \quad Q_{X,X} = Q_{\overline{X},\overline{X}} +o_P(1).
\end{align*}

\section{Proof of Theorem \ref{thm:var1}}
\label{sec:var1_pf}
We focus on the consistency of $\widehat{\Phi}_1(\beta_0)$ and $\widehat{\Psi}(\beta_0)$. The consistency of the rest four estimators can be established in the same manner. We have $e_i(\beta_0) = e_i + \Delta X_i = V_i(\Delta) + \Delta \Pi_i$, where $V_i(\Delta) = e_i + \Delta V_i.$ Therefore, 
\begin{align*}
	\widehat{\Phi}_1(\beta_0) & =   \frac{2}{K}\sum_{i \in [n]}\sum_{j \neq i}P_{ij}^2 e_i^2(\beta_0)e_j^2(\beta_0) \\
	& =  \frac{2}{K}\sum_{i \in [n]}\sum_{j \neq i}P_{ij}^2 (\Delta^2 \Pi_i^2 + 2\Delta \Pi_i V_i(\Delta) + V_i^2(\Delta))(\Delta^2 \Pi_j^2 + 2\Delta \Pi_j U_j(\Delta) + U_j^2(\Delta)) \\
	& = \frac{2}{K}\sum_{i \in [n]}\sum_{j \neq i}P_{ij}^2 V_i^2(\Delta)U_j^2(\Delta) + \Delta \frac{4}{K}\sum_{i \in [n]}\sum_{j \neq i}P_{ij}^2(\Pi_i V_i(\Delta) U_j^2(\Delta) + \Pi_j U_j(\Delta) V_i^2(\Delta)) \\
	& + \Delta^2 \frac{2}{K}\sum_{i \in [n]}\sum_{j \neq i}P_{ij}^2(\Pi_i^2 U_j^2(\Delta) + \Pi_j^2 V_i^2(\Delta) + 4\Pi_i\Pi_j V_i(\Delta)U_j(\Delta)) \\
	& + \Delta^3\frac{4}{K}\sum_{i \in [n]}\sum_{j \neq i}P_{ij}^2(\Pi_i^2 \Pi_j U_j(\Delta) + \Pi_j^2 \Pi_i V_i(\Delta)) + \Delta^4\frac{2}{K}\sum_{i \in [n]}\sum_{j \neq i}P_{ij}^2\Pi_i^2 \Pi_j^2 \\
	& \equiv \sum_{l=0}^4\Delta^l T_l.
\end{align*}
We first note that $\frac{1}{K}\sum_{i \in [n]}\omega_i^2 \sigma_i^2 = o(1)$, $\frac{1}{K}\sum_{i \in [n]}\omega_i^2 \gamma_i = o(1)$, and $\frac{1}{K}\sum_{i \in [n]}\omega_i^2 \eta_i^2 = o(1)$.

To see this, note that
\begin{align*}
	\frac{1}{K}\sum_{i \in [n]}\omega_i^2 \sigma_i^2 & \leq \frac{C}{K}\sum_{i \in [n]}\omega_i^2 = \frac{C}{K} \sum_{i \in [n]} (P_i\Pi - P_{ii}\Pi_i)^2
	\\
	& \leq \frac{C}{K} (2\Pi^\top P^2 \Pi + 2\sum_{i \in [n]}P_{ii}^2 \Pi_i^2) 
	\leq C \frac{\Pi^\top \Pi}{K} = o(1),
\end{align*}
where the second and third inequalities are shown in the Proof of \citet[Lemma S1.4]{MS22}. 
The results for $\frac{1}{K}\sum_{i \in [n]}\omega_i^2 \gamma_i = o(1)$ and $\frac{1}{K}\sum_{i \in [n]}\omega_i^2 \eta_i^2 = o(1)$ can be established in the same manner. 

We first consider $T_0$. Denote $\xi_{ij} = V_i^2(\Delta)U_j^2(\Delta) - \mathbb{E}V_i^2(\Delta)U_j^2(\Delta)$. We want to show that 
$$\frac{1}{K}\sum_{i \in [n]}\sum_{j \neq i}P_{ij}^2 \xi_{ij} = o_p(1).$$ 
Note that 
\begin{align*}
	\mathbb{E}\left[\frac{1}{K}\sum_{i \in [n]}\sum_{j \neq i}P_{ij}^2 \xi_{ij}\right]^2 = 
	\frac{1}{K^2}\sum_{i \in [n]}\sum_{j \neq i}P_{ij}^4 \mathbb{E}\xi_{ij}^2 + \frac{4}{K^2}\sum_{i \in [n]}\sum_{j \neq i}\sum_{i' \neq i,j}P_{ij}^2P_{ii'}^2 \mathbb{E}\xi_{ij}\xi_{ii'}. 
\end{align*}
As both $\mathbb{E}\xi_{ij}^2$ and $|\mathbb{E}\xi_{ij}\xi_{ii'}|$ are bounded, we have 
\begin{align*}
	\frac{1}{K^2}\sum_{i \in [n]}\sum_{j \neq i}P_{ij}^4 \mathbb{E}\xi_{ij}^2 \leq \frac{C}{K^2}    \sum_{i \in [n]}\sum_{j \neq i} P_{ij}^2 \leq \frac{C}{K} = o(1)
\end{align*}
and 
\begin{align*}
	\left\vert \frac{1}{K^2}\sum_{i \in [n]}\sum_{j \neq i}\sum_{i' \neq i,j}P_{ij}^2P_{ii'}^2 \mathbb{E}\xi_{ij}\xi_{ii'}    \right\vert \leq \frac{C}{K^2}\sum_{i \in [n]}\sum_{j \neq i}\sum_{i' \neq i,j}P_{ij}^2P_{ii'}^2 \leq \frac{C}{K^2}\sum_{i \in [n]}\sum_{j \neq i}P_{ij}^2P_{ii} = o(1).
\end{align*}
Therefore, we have
\begin{align*}
	T_0 & =  \frac{2}{K}\sum_{i \in [n]}\sum_{j \neq i}P_{ij}^2 \mathbb{E}(V_i^2(\Delta)U_j^2(\Delta)) + o_p(1) \\
	& = \Delta^4 \frac{2}{K}\sum_{i \in [n]}\sum_{j \neq i}P_{ij}^2\eta_i^2 \eta_j^2 + \Delta^3\frac{4}{K}\sum_{i \in [n]}\sum_{j \neq i}P_{ij}^2(\eta_i^2 \gamma_j + \eta_j^2 \gamma_i) + \Delta^2\frac{2}{K}\sum_{i \in [n]}\sum_{j \neq i}P_{ij}^2(\eta_i^2 \sigma_j^2 + \eta_j^2 \sigma_i^2 + 4\gamma_i \gamma_j) \\
	&+ \Delta\frac{4}{K}\sum_{i \in [n]}\sum_{j \neq i}P_{ij}^2 (\gamma_i\sigma_j^2 + \gamma_j\sigma_i^2) + \frac{2}{K}\sum_{i \in [n]}\sum_{j \neq i}P_{ij}^2 \sigma_i^2 \sigma_j^2 + o_p(1) \\
	& = \Phi_1(\beta_0) + o_p(1).
\end{align*}

By the same argument above, we have 
\begin{align*}
	T_1 = \mathbb{E}T_1 + o_p(1) =o_p(1)    
\end{align*}
because $\mathbb{E}T_1 = 0$. Similarly, we have $\mathbb{E}T_3 = 0$ and $T_3 = o_p(1)$. Next, we have
\begin{align*}
	T_2 = \mathbb{E}T_2 + o_P(1) \leq \frac{C}{K}\sum_{i \in [n]}\sum_{j \neq i} P_{ij}^2\Pi_i^2 +o_p(1)\leq \frac{C p_n\Pi^\top \Pi}{K} +o_p(1) = o_p(1).
\end{align*}
Last, we have 
\begin{align*}
	T_4 \leq \frac{C}{K}\sum_{i \in [n]}\sum_{j \neq i} P_{ij}^2\Pi_i^2 = o(1),
\end{align*}
where the first inequality is by $\max_{i\in [n]}|\Pi_i|<C$. This implies 
\begin{align*}
	\widehat{\Phi}_1(\beta_0) - \Phi_1(\beta_0) = o_p(1). 
\end{align*}

Next, we consider the consistency of $\widehat{\Psi}(\beta_0)$. By the similar argument above, we have 
\begin{align}
	& \frac{1}{K} \sum_{i \in [n]}\sum_{j \neq i}P_{ij}^2 X_ie_i(\beta_0) X_je_j(\beta_0)) \notag \\
	& = \frac{1}{K} \sum_{i \in [n]}\sum_{j \neq i}P_{ij}^2 \Pi_ie_i(\beta_0) \Pi_je_j(\beta_0)) + \frac{1}{K} \sum_{i \in [n]}\sum_{j \neq i}P_{ij}^2 \Pi_ie_i(\beta_0) V_je_j(\beta_0)) \notag \\
	& + \frac{1}{K} \sum_{i \in [n]}\sum_{j \neq i}P_{ij}^2 V_ie_i(\beta_0) \Pi_je_j(\beta_0)) + \frac{1}{K} \sum_{i \in [n]}\sum_{j \neq i}P_{ij}^2 V_ie_i(\beta_0) V_je_j(\beta_0)) \notag \\
	& = \frac{1}{K} \sum_{i \in [n]}\sum_{j \neq i}P_{ij}^2 (\gamma_i+\Delta \eta_i^2)(\gamma_j+\Delta \eta_j^2) + o_p(1).
	\label{eq:psihat_1}
\end{align}

In addition, we have 
\begin{align}
	& \frac{1}{K}\sum_{i \in [n]}(\sum_{j \neq i}P_{ij}X_j)^2 e_i^2(\beta_0) \notag \\
	& = \frac{1}{K}\sum_{i \in [n]}(\omega_i + \sum_{j \neq i}P_{ij}V_j)^2 e_i^2(\beta_0) \notag     \\
	& = \frac{1}{K}\sum_{i \in [n]} \omega_i^2 \mathbb{E}e_i^2(\beta_0) +  \frac{1}{K}\sum_{i \in [n]}\sum_{j \neq i}P_{ij}^2 \eta_j^2 \mathbb{E}e_i^2(\beta_0) + o_p(1) \notag \\
	& = \frac{1}{K}\sum_{i \in [n]}\sum_{j \neq i}P_{ij}^2 \eta_j^2 (\sigma_i^2 + 2 \gamma_i\Delta + \Delta^2 \eta_i^2) + o_p(1),
	\label{eq:psihat_2}
\end{align}
where the second equality is due to \citet[proof of statement (a) in Lemma S3.2]{MS22}, and the third equality is due to $\frac{1}{K}\sum_{i \in [n]}\omega_i^2\sigma_i^2=o(1)$. In the next section, we show the same results hold under Assumption \ref{ass:reg1}. Combining \eqref{eq:psihat_1} and \eqref{eq:psihat_2}, we have 
\begin{align*}
	\widehat{\Psi}(\beta_0) & =  \frac{1}{K} \sum_{i \in [n]}\sum_{j \neq i}P_{ij}^2 (\gamma_i+\Delta \eta_i^2)(\gamma_j+\Delta \eta_j^2) + \frac{1}{K}\sum_{i \in [n]}\sum_{j \neq i}P_{ij}^2 \eta_j^2 (\sigma_i^2 + 2 \gamma_i\Delta + \Delta^2 \eta_i^2) + o_p(1)   \\
	& =  \frac{1}{K} \sum_{i \in [n]}\sum_{j \neq i}P_{ij}^2(\gamma_i \gamma_j + \sigma_i^2 \eta_j^2) + \frac{4\Delta}{K} \sum_{i \in [n]}\sum_{j \neq i}P_{ij}^2 \eta_i^2 \gamma_j + \frac{2\Delta^2}{K} \sum_{i \in [n]}\sum_{j \neq i}P_{ij}^2 \eta_i^2\eta_j^2 + o_p(1) \\
	& = \Psi(\beta_0) + o_p(1). 
\end{align*}

\section{Proof of Theorem \ref{thm:var2}}
\label{sec:var2_pf}
Given Lemma \ref{lem:SA}, Lemmas 2 and 3 in \cite{MS22} hold under Assumptions \ref{ass:K} and \ref{ass:reg2}. Therefore, \citet[Theorem 3]{MS22} shows that 
\begin{align*}
	\widehat{\Phi}_1(\beta_0) - \frac{2}{K}\sum_{i \in [n]}\sum_{j \neq i}P_{ij}^2 \mathbb{E}V_i^2(\Delta) \mathbb{E}U_j^2(\Delta) = o_p(1).  
\end{align*}
In addition, the proof of Theorem \ref{thm:var1} shows that 
\begin{align*}
	\frac{2}{K}\sum_{i \in [n]}\sum_{j \neq i}P_{ij}^2 \mathbb{E}V_i^2(\Delta) \mathbb{E}U_j^2(\Delta) = \Phi_1(\beta_0) + o(1),    
\end{align*}
which implies the consistency of $\widehat{\Phi}_1(\beta_0)$.

Similarly, given Lemma \ref{lem:SA}, Lemma S3.1 in \cite{MS22} holds under Assumptions \ref{ass:K} and \ref{ass:reg2}, so that the consistency of $\widehat{\Upsilon}$ to $\Upsilon$ is also shown by using their argument. 
In addition, we use the same argument in the proof of \citet[Theorem 5]{MS22} to show that 
\begin{align*}
	\widehat{\Psi}(\beta_0) & =  \left\{ \frac{1}{K}\sum_{i\in [n]}(\sum_{j \neq i}P_{ij}X_j)^2 \frac{e_i M_i e}{M_{ii}} + \frac{1}{K}\sum_{i \in [n]}\sum_{j \neq i}\widetilde{P}_{ij}^2M_i X e_i M_jXe_j \right\}   \\
	& + \Delta \left\{ \frac{1}{K}\sum_{i\in [n]}(\sum_{j \neq i}P_{ij}X_j)^2 \left(\frac{e_i M_i X}{M_{ii}} + \frac{X_i M_i e}{M_{ii}}\right) + \frac{2}{K}\sum_{i \in [n]}\sum_{j \neq i}\widetilde{P}_{ij}^2M_i X e_i M_jX X_j \right\} \\
	& + \Delta^2  \left\{ \frac{1}{K}\sum_{i\in [n]}(\sum_{j \neq i}P_{ij}X_j)^2 \frac{X_i M_i X}{M_{ii}}  + \frac{1}{K}\sum_{i \in [n]}\sum_{j \neq i}\widetilde{P}_{ij}^2M_i X X_i M_jX X_j \right\} \\
	& = \Psi + 2\Delta \tau + \Delta^2 \Upsilon + o_p(1) = \Psi(\beta_0) + o_p(1),
\end{align*}
where the second equality also follows from Lemma S3.1 in \cite{MS22}.

Next for $\widehat{\Phi}_{12}(\beta_0)$, we have 
\begin{align*}
	& \frac{1}{K}\sum_{i \in [n]}\sum_{j \neq i}\widetilde{P}_{ij}^2 M_jX e_j(\beta_0) e_i(\beta_0)M_i e(\beta_0) \\
	& = \frac{1}{K}\sum_{i \in [n]}\sum_{j \neq i}\widetilde{P}_{ij}^2 M_jX e_j e_iM_i e \\
	& + \Delta  \frac{1}{K}\sum_{i \in [n]}\sum_{j \neq i}\widetilde{P}_{ij}^2\left(M_j X X_j e_i M_ie + M_jX e_j X_i M_i e + M_jX e_j e_i M_i X \right) \\
	& + \Delta^2 \frac{1}{K}\sum_{i \in [n]}\sum_{j \neq i}\widetilde{P}_{ij}^2\left(M_jX X_j X_i M_i e + M_jX X_j e_iM_i X + M_jX e_j X_i M_iX\right) \\
	& + \Delta^3 \frac{1}{K}\sum_{i \in [n]}\sum_{j \neq i}\widetilde{P}_{ij}^2 M_jX X_j X_i M_iX. 
\end{align*}

Note that $ \frac{1}{K}\sum_{i \in [n]}\sum_{j \neq i}\widetilde{P}_{ij}^2 M_jX e_j e_iM_i e =     \frac{1}{K}\sum_{i \in [n]}\sum_{j \neq i}\widetilde{P}_{ij}^2 (M_jV + \lambda_i) e_j e_i M_i e$, where 
$\lambda_i = M_i \Pi$. Then, by Lemma \ref{lem:SA} and Lemma 3 of \cite{MS22}, 
\begin{align*}
	\frac{1}{K}\sum_{i \in [n]}\sum_{j \neq i}\widetilde{P}_{ij}^2 M_jX e_j e_iM_i e 
	-     \frac{1}{K}\sum_{i \in [n]}\sum_{j \neq i}\widetilde{P}_{ij}^2 M_jV e_j e_i M_i e = o_p(1).  
\end{align*}
Furthermore, by Lemma \ref{lem:SA} and Lemma 2 of \cite{MS22}, 
\begin{align*}
	\frac{1}{K}\sum_{i \in [n]}\sum_{j \neq i}\widetilde{P}_{ij}^2 M_jV e_j e_i M_i e -  \frac{1}{K}\sum_{i \in [n]}\sum_{j \neq i} P_{ij}^2 \gamma_j \sigma_i^2  = o_p(1). 
\end{align*}
By using similar arguments, we find that 
\begin{align*}
	& \frac{1}{K}\sum_{i \in [n]}\sum_{j \neq i}\widetilde{P}_{ij}^2M_j X X_j e_i M_ie = \frac{1}{K}\sum_{i \in [n]}\sum_{j \neq i} P_{ij}^2 \eta_j^2 \sigma_i^2 + o_p(1), \\
	& \frac{1}{K}\sum_{i \in [n]}\sum_{j \neq i}\widetilde{P}_{ij}^2M_jX e_j X_i M_i e = 
	\frac{1}{K}\sum_{i \in [n]}\sum_{j \neq i} P_{ij}^2 \gamma_j \gamma_i + o_p(1), \\
	& \frac{1}{K}\sum_{i \in [n]}\sum_{j \neq i}\widetilde{P}_{ij}^2M_jX e_j e_i M_i X = 
	\frac{1}{K}\sum_{i \in [n]}\sum_{j \neq i} P_{ij}^2 \gamma_j \gamma_i + o_p(1), \\
	& \frac{1}{K}\sum_{i \in [n]}\sum_{j \neq i}\widetilde{P}_{ij}^2M_jX X_j X_i M_i e  = 
	\frac{1}{K}\sum_{i \in [n]}\sum_{j \neq i} P_{ij}^2 \eta_j^2 \gamma_i + o_p(1),\\
	& \frac{1}{K}\sum_{i \in [n]}\sum_{j \neq i}\widetilde{P}_{ij}^2 M_jX X_j e_iM_i X =  
	\frac{1}{K}\sum_{i \in [n]}\sum_{j \neq i} P_{ij}^2 \eta_j^2 \gamma_i + o_p(1),\\
	& \frac{1}{K}\sum_{i \in [n]}\sum_{j \neq i}\widetilde{P}_{ij}^2 M_jX e_j X_i M_iX =  
	\frac{1}{K}\sum_{i \in [n]}\sum_{j \neq i} P_{ij}^2 \gamma_j \eta_i^2 + o_p(1),\\
	& \frac{1}{K}\sum_{i \in [n]}\sum_{j \neq i}\widetilde{P}_{ij}^2  M_jX X_j X_i M_iX  = 
	\frac{1}{K}\sum_{i \in [n]}\sum_{j \neq i} P_{ij}^2 \eta_j^2 \eta_i^2 + o_p(1). 
\end{align*}
Putting these results together, we obtain 
\begin{align*}
	\widehat{\Phi}_{12}(\beta_0) = \Phi_{12} + \Delta(2\Psi + \Phi_{13}) + 3 \Delta^2 \tau + \Delta^3 \Upsilon + o_p(1) = \Phi_{12}(\beta_0) + o_p(1). 
\end{align*}

We use similar arguments to prove the results for $\widehat{\Psi}_{13}(\beta_0)$ and $\widehat{\tau}(\beta_0)$. 
For $\widehat{\Phi}_{13}(\beta_0)$, notice that 
\begin{align*}
	& \frac{1}{K}\sum_{i \in [n]}\sum_{j \neq i}\widetilde{P}_{ij}^2 M_i X e_i(\beta_0) M_j Xe_j(\beta_0) \\
	& = \frac{1}{K}\sum_{i \in [n]}\sum_{j \neq i}\widetilde{P}_{ij}^2 M_i X e_i M_j X e_j \\
	& + \Delta \frac{1}{K}\sum_{i \in [n]}\sum_{j \neq i}\widetilde{P}_{ij}^2 
	(M_i X e_i M_j X X_j + M_i X X_i M_j X e_j) \\
	& + \Delta^2 \frac{1}{K}\sum_{i \in [n]}\sum_{j \neq i}\widetilde{P}_{ij}^2 M_i X X_i M_j X X_j \\
	& = \frac{1}{K}\sum_{i \in [n]}\sum_{j \neq i}P_{ij}^2 \gamma_i \gamma_j 
	+ \Delta \frac{1}{K}\sum_{i \in [n]}\sum_{j \neq i}P_{ij}^2 ( \gamma_i \eta_j^2 + \eta_i^2 \gamma_j) + \Delta^2 \frac{1}{K}\sum_{i \in [n]}\sum_{j \neq i}P_{ij}^2 \eta_i^2 \eta_j^2 + o_p(1),  
\end{align*}
which implies that 
\begin{align*}
	\widehat{\Phi}_{13}(\beta_0) = \Phi_{13} + 2 \Delta \tau + \Delta^2 \Upsilon + o_p(1) = 
	\Phi_{13}(\beta_0) + o_p(1). 
\end{align*}

Finally, for $\widehat{\tau}(\beta_0)$, notice that 
\begin{align*}
	& \frac{1}{K}\sum_{i \in [n]}\sum_{j \neq i}\widetilde{P}_{ij}^2 X_i M_i X M_j Xe_j(\beta_0) = \frac{1}{K}\sum_{i \in [n]}\sum_{j \neq i} P_{ij}^2 \eta_i^2 \gamma_j + \frac{1}{K}\sum_{i \in [n]}\sum_{j \neq i} P_{ij}^2 \eta_i^2 \eta_j^2 \Delta + o_p(1), \\
	& \frac{1}{K}\sum_{i \in [n]} ( \sum_{j \neq i} P_{ij}X_j)^2\left(\frac{e_i(\beta_0)M_i X}{2M_{ii}}+\frac{X_iM_i e(\beta_0)}{2M_{ii}}\right) = \frac{1}{K}\sum_{i \in [n]}\sum_{j \neq i} P_{ij}^2 \eta_i^2 \gamma_j + \frac{1}{K}\sum_{i \in [n]}\sum_{j \neq i} P_{ij}^2 \eta_i^2 \eta_j^2 \Delta + o_p(1),
\end{align*}
which implies that 
\begin{align*}
	\widehat{\tau}(\beta_0) = \tau + \Delta \Upsilon + o_p(1) = \tau(\beta_0) + o_p(1).
\end{align*}
This completes the proof of the theorem.

\section{Proof of Lemma \ref{lem:SA}}
\label{sec:lem_SA}
Let $p_n = \max_i P_{ii}$. We first give some useful bounds, which is similar to Lemma S1.4 in \cite{MS22}:
\begin{align*}
	& \sum_{i \in [n]}\omega_i^2 = \sum_{i \in [n]} (P_i\Pi - P_{ii}\Pi_i)^2 
	\leq 2 \Pi'P^2\Pi + 2 \sum_{i \in [n]}P_{ii}^2\Pi^2
	\leq C \Pi^\top \Pi, \\
	& \max_{i \in [n] }\omega_i^2  = \max_{i \in [n]}(\sum_{j \neq i}P_{ij}\Pi_j)^2\leq     \max_{i \in [n]}(\sum_{j \neq i}P_{ij}^2) \Pi^\top \Pi \leq p_n \Pi^\top \Pi,
\end{align*}
which imply 
\begin{align*}
	\sum_{i \in [n]}\omega_i^4 \leq \max_{i \in [n] }\omega_i^2 (\sum_{i \in [n] }\omega_i^2) \leq C p_n (\Pi^\top \Pi)^2. 
\end{align*}






First, we show that \citet[Lemma S2.1]{MS22} hold under our conditions following the lines of argument in their proof. More specifically, we notice that to show 
$\Delta^2|\mathbb{E}A_2| = o(1)$, where $A_2$ is defined in the proof of \citet[Lemma S2.1]{MS22}, it suffices to show the following terms are o(1): 
\begin{align*}
	& \frac{C\Delta^2}{K}\sum_{i \in [n]}\sum_{j \neq i}P_{ij}^2 |\lambda_i||\Pi_j| \leq \frac{C\Delta^2}{K}\left(\sum_{i \in [n]}P_{ii}\lambda_i^2\right)^{1/2}\left(\sum_{j \in [n]}P_{jj}\Pi_j^2\right)^{1/2} 
	\leq \frac{C\Delta^2}{K}p_n\left(\lambda^{\top}\lambda\right)^{1/2}\left(\Pi^{\top}\Pi\right)^{1/2} \\
	& \leq \frac{C\Delta^2}{K^{3/2}}p_n\left(\Pi^{\top}\Pi\right) = o(1) \; \text{by $\lambda^{\top}\lambda \leq C\frac{\Pi^{\top}\Pi}{K}$,}
	\\
	& \frac{C\Delta^2}{K}\sum_{i \in [n]}\sum_{j \neq i}P_{ij}^2|\Pi_i||\Pi_j| 
	\leq \frac{C\Delta^2}{K} \left(\sum_{i \in [n]}P_{ii}\Pi_i^2\right)^{1/2}\left(\sum_{j \in [n]}P_{jj}\Pi_j^2\right)^{1/2} \leq \frac{C\Delta^2}{K} p_n \left(\Pi^{\top}\Pi\right) =o(1).  
\end{align*}
Then, we prove the variance of $\Delta^2A_2 = o(1)$ by showing that 
\begin{align*}
	& \frac{C \Delta^4}{K^2}\sum_{i \in [n]}\sum_{j \in [n]}P_{ij}^4 \lambda_i^2 \lambda_j^2 
	\leq \frac{C\Delta^4}{K^2} p_n^2 \left( \lambda^{\top}\lambda\right)^2 \leq \frac{C\Delta^4}{K^2} p_n^2 \left( \frac{\Pi^{\top}\Pi}{K}\right)^2 = o(1) \;
	\text{by $P_{ij}^2 \leq P_{ii}$}, \\
	& \frac{C \Delta^4}{K^2} \left( \sum_{i \in [n]}\lambda_i^2 \left( \sum_{j \in [n]}P_{ij}^2\right)\Pi^{\top}\Pi + \lambda^{\top}\lambda \left( \sum_{j \in [n]}P_{jj}|\Pi_j|\right)^2\right) 
	\leq \frac{C \Delta^4}{K^2} \left( p_n (\lambda^{\top}\lambda)(\Pi^{\top}\Pi) + (\lambda^{\top}\lambda)(p_nK)(\Pi^{\top}\Pi) \right) 
	\\
	& \leq \frac{C \Delta^4 }{K^3} \left( p_n (\Pi^{\top}\Pi)^2 + p_n K (\Pi^{\top}\Pi)^2 \right) = o(1) \; \text{by $\sum_{j \in [n]}P_{jj}^2 \leq p_nK$}, \\
	& \frac{C\Delta^4}{K^2} \left( \sum_{i \in [n]}\sum_{j \in [n]}P_{ij}^2|\Pi_i\Pi_j|\right)^2
	\leq \frac{C\Delta^4}{K^2} \left(\sum_{i \in [n]}P_{ii}\Pi_i^2\right)\left(\sum_{j \in [n]}P_{jj}\Pi_j^2\right) \leq \frac{C\Delta^4}{K^2}p_n^2(\Pi^{\top}\Pi)^2 =o(1),
\end{align*}
and 
\begin{align*}
	&\frac{C \Delta^4}{K^2} \sum_{j \in [n]}\sum_{k \in [n]} \left( \sum_{i \in [n]} P_{ij}^2 | \lambda_i \Pi_iM_{jk}|\right)^2 =     \frac{C \Delta^4}{K^2} \sum_{j \in [n]}\sum_{k \neq j} \left( \sum_{i \in [n]} P_{ij}^2 | \lambda_i \Pi_iM_{jk}|\right)^2 +     \frac{C \Delta^4}{K^2} \sum_{j \in [n]} \left( \sum_{i \in [n]} P_{ij}^2 | \lambda_i \Pi_iM_{jj}|\right)^2 \\
	& \leq     \frac{C\Delta^4}{K^2} \sum_{j \in [n]}\sum_{k \neq j} M_{jk}^2 \left(\sum_{i \in [n]}P_{ii}\lambda_i^2\right)\left(\sum_{i\in[n]}P_{ii}\Pi_i^2\right) +    \frac{C \Delta^4}{K^2} \sum_{j \in [n]} \left( \sum_{i \in [n]} P_{ij}^2 | \lambda_i|\right)^2 \\
	& \leq  \frac{C\Delta^4}{K^2} K p_n^2 (\lambda^{\top}\lambda)(\Pi^{\top}\Pi) +  \frac{C \Delta^4}{K^2} \sum_{j \in [n]} \left( \sum_{i \in [n]} P_{ij}^4\right) \lambda^\top \lambda \\
	& \leq \frac{C\Delta^4 K p_n^2(\Pi^\top \Pi)^2}{K^2} + \frac{C\Delta^4 p_n K \Pi^\top \Pi}{K^2} = o(1)~
	\text{by $\sum_{j \in [n]}\sum_{k \neq j} M_{jk}^2 = \sum_{j \in [n]}\sum_{k \neq j} P_{jk}^2 \leq K$ and $P_{ij}^2 \leq P_{ii} \leq p_n$}.
\end{align*}

Second, we show that \citet[ Lemma S2.2]{MS22} holds under our conditions. 
Notice that $|\Delta \mathbb{E} A_1| = o(1)$ by 
\begin{align*}
	& \frac{C|\Delta|}{K} \sum_{i \in [n]}\sum_{j \neq i} P_{ij}^2 |\Pi_i| 
	\leq \frac{C|\Delta|}{K} \left( \sum_{i \in [n]} P_{ii}^2 \right)^{1/2} (\Pi^{\top}\Pi)^{1/2}
	\leq \frac{C|\Delta|}{K} (p_n K)^{1/2} (\Pi^{\top}\Pi)^{1/2} = o(1), 
\end{align*}
Then, we show that the variance of $\Delta A_1$  is $o(1)$ by showing the following terms are $o(1)$:
\begin{align*}
	& \frac{C\Delta^2}{K^2} \left( \sum_{i \in [n]} \left(\sum_{j \in [n]}P_{ij}^2\right)\lambda_i^2 + \sum_{i \in [n]}\sum_{j \in [n]}P_{ij}^2 |\lambda_i||\lambda_j| \right) \leq 
	\frac{C\Delta^2}{K^2} \left( p_n (\lambda^{\top}\lambda) + p_n (\lambda^{\top}\lambda) \right) = o(1),\\
	& \frac{C\Delta^2}{K^2} \left( \sum_{i \in [n]}\sum_{j \in [n]} P_{ij}^4 (\lambda_i^2 + |\lambda_i||\lambda_j|) + \sum_{i \in [n]}\sum_{j \in [n]}P_{ij}^2\lambda_i^2 \right) 
	\leq \frac{C\Delta^2}{K^2} \left( p_n^2(\lambda^{\top}\lambda) + p_n^2(\lambda^{\top}\lambda) + p_n(\lambda^{\top}\lambda)\right) = o(1),\\
	& \frac{C\Delta^2}{K^2} \left(\sum_{i \in [n]}\sum_{k \in [n]}P_{ik}^2|\lambda_i||\lambda_k| + \sum_{i \in [n]}\sum_{j \in [n]}P_{ij}^2|\lambda_i||\lambda_j| \right) 
	\leq \frac{C\Delta^2}{K^2} \left( p_n (\lambda^{\top}\lambda) + p_n (\lambda^{\top}\lambda) \right)=o(1), \\
	& \frac{C\Delta^2}{K^2} \sum_{j \in [n]} \left(\sum_{i \in [n]} P_{ij}^2|\lambda_i|\right)^2 \leq \frac{C\Delta^2}{K^2} \sum_{j \in [n]} \left( \sum_{i \in [n]}P_{ij}^4\right)(\lambda^{\top}\lambda)
	\leq \frac{C\Delta^2}{K^2} (p_n K) (\lambda^{\top}\lambda) =o(1), \\
	& \frac{C\Delta^2}{K^2}\sum_{j \in [n]}\left( \sum_{i \in [n]}P_{ij}^2|\Pi_i|\right)^2 \leq
	\frac{C\Delta^2}{K^2} \sum_{j \in [n]} \left( \sum_{i \in [n]}P_{ij}^4\right)(\Pi^{\top}\Pi) \leq \frac{C\Delta^2}{K^2} (p_n K) (\Pi^{\top}\Pi) = o(1), \\
	& \frac{C\Delta^2}{K^2} \sum_{j \in [n]}\sum_{k \in [n]} \left( \sum_{i \in [n]}P_{ij}^2|\Pi_iM_{ik}M_{jk}|\right)^2 \\
	& =  \frac{C\Delta^2}{K^2} \sum_{j \in [n]}\sum_{k \neq j} \left( \sum_{i \in [n]}P_{ij}^2|\Pi_iM_{ik}M_{jk}|\right)^2 + \frac{C\Delta^2}{K^2} \sum_{j \in [n]} \left( \sum_{i \in [n]}P_{ij}^2|\Pi_iM_{ij}M_{jj}|\right)^2 \\
	& \leq \frac{C\Delta^2}{K^2} \sum_{j \in [n]}\sum_{k \neq j} M_{jk}^2 \left(\sum_{i \in [n]}P_{ij}^4\right)\Pi^\top \Pi + \frac{C\Delta^2}{K^2} \left(\sum_{j \in [n]}\sum_{i \in [n]}P_{ij}^4\right)\Pi^\top \Pi \\
	& \leq \frac{C\Delta^2}{K^2} K p_n^2 (\Pi^{\top}\Pi) + \frac{C\Delta^2}{K^2} K p_n (\Pi^{\top}\Pi) =o(1), \\
	& \frac{C\Delta^2}{K^2} \sum_{j \in [n]}\sum_{k \in [n]} \left( \sum_{i \in [n]}P_{ij}^2|\Pi_iM_{ik}M_{jk}| \right) \left( \sum_{i \in [n]}P_{ik}^2|\Pi_iM_{ij}M_{jk}| \right) \leq \frac{C\Delta^2}{K^2} K p_n (\Pi^{\top}\Pi) =o(1), \\
	& \frac{C\Delta^2}{K^2}\left( \sum_{i \in [n]}\sum_{j \in [n]} P_{ij}^2 |\Pi_i| \right)^2 \leq 
	\frac{C\Delta^2}{K^2} (p_nK) (\Pi^{\top}\Pi) =o(1). 
\end{align*}

Then, to show that \citet[Lemma 3]{MS22} holds under our conditions, we show the following terms are o(1):
\begin{align*}
	& \frac{C}{K}\sum_{i \in [n]}\sum_{j \neq i} P_{ij}^2 | \Pi_i \lambda_i \Pi_j \lambda_j | \leq 
	\frac{C}{K} \left( \sum_{i \in [n]}\sum_{j \in [n]} P_{ij}^2 \Pi_i^2\Pi_j^2\right)^{1/2}     \left( \sum_{i \in [n]}\sum_{j \in [n]} P_{ij}^2 \lambda_i^2\lambda_j^2\right)^{1/2} \\
	& \leq \frac{C}{K} p_n \left(\Pi^{\top}\Pi\right) \left( \lambda^{\top}\lambda\right) \leq \frac{C}{K^2} p_n \left(\Pi^{\top}\Pi\right)^2 = o(1), \\
	& \frac{C}{K^2} \sum_{j \in [n]} \left(\sum_{i \in [n]}P_{ij}^2|\Pi_i||\lambda_i|\right)^2\lambda_j^2  \leq 
	\frac{C}{K^2} \sum_{j \in [n]}\left(p_n \sum_{i \in [n]}|\Pi_i||\lambda_j|\right)^2\lambda_j^2 \leq \frac{C}{K^2}p_n^2 \left(\Pi^{\top}\Pi\right)\left(\frac{\Pi^{\top}\Pi}{K}\right)^2 = o(1), \\
	& \frac{C}{K^2} \sum_{i \in [n]}\sum_{i' \in [n]}\sum_{j \in [n]}\sum_{j' \in [n]}P_{ij}^2|\Pi_i\lambda_i\Pi_j|P_{i'j'}^2|\Pi_{i'}\lambda_{i'}\Pi_{j'}|\sum_{k \in [n]}|M_{jk}M_{j'k}| \\
	& \leq \frac{C}{K^2}\left(\sum_{i\in[n]}\sum_{j\in[n]}P_{ij}^2\Pi_i^2\lambda_i^2 \right)\left(\sum_{i\in[n]}\sum_{j\in[n]}P_{ij}^2 \Pi_j^2\right) \leq \frac{C}{K^2} p^2_n (\Pi^{\top}\Pi) (\lambda^{\top}\lambda) 
	\leq \frac{C}{K^3}p_n^2\left(\Pi^{\top}\Pi\right)^2 = o(1), 
\end{align*}
where $\sum_{k \in [n]}|M_{jk}M_{j'k}|\leq 1$ by \citet[Lemma S1.1(ii)]{MS22}.

Now we show that \citet[Lemma S3.2 ]{MS22} holds under our conditions, i.e.,
\begin{align*}
	& (a) \quad \frac{1}{K} \sum_{i=1}^n (\omega_i + \sum_{j \neq i}P_{ij}V_j)^2V_i - \left( \frac{1}{K} \sum_{i=1}^n \omega_i^2 \mathbb{E}[V_i] + \frac{1}{K} \sum_{i,j\neq i} P_{ij}^2 \mathbb{E}[V_i]\eta_j^2\right) \convP 0, \\
	& (b) \quad \frac{1}{K}  \sum_{i=1}^n (\omega_i + \sum_{j \neq i}P_{ij}V_j)^2 \frac{\xi_{1,i}}{M_{ii}}\sum_{k \neq j} P_{ik} \xi_{2,k} \convP 0, \\
	& (c) \quad \frac{1}{K}  \sum_{i=1}^n (\omega_i + \sum_{j \neq i}P_{ij}V_j)^2 a_i \xi_{1,i} \convP 0, \\
	& (d) \quad \frac{1}{K}  \sum_{i=1}^n (\omega_i + \sum_{j \neq i}P_{ij}V_j)^2 \frac{a_i}{M_{ii}}\sum_{k \neq i}P_{ik}\xi_{1,k} - \frac{2}{K} \sum_{i=1}^n\sum_{j \neq i} P_{ij}^2 \omega_i \frac{a_i}{M_{ii}}\mathbb{E}[V_j\xi_{1,j}] \convP 0, \\
	& (e) \quad \frac{1}{K}  \sum_{i=1}^n (\omega_i + \sum_{j \neq i}P_{ij}V_j)^2 \Pi_i \frac{\lambda_i}{M_{ii}} \convP 0,
\end{align*}
where $\xi_{1,i}, \xi_{2,i}$ stay for either $e_i$ or $V_i$, $V_i$ stay for $e_i^2, e_iV_i,$ or $V_i^2$, and $a_i$ stay for either $\Pi_i$ or $\frac{\lambda_i}{M_{ii}}$. 

To prove statement (a), following the arguments in \cite{MS22}, we just need to show the following terms are $o(1)$:
\begin{align*}
	& \mathbb{E} \left[ \frac{1}{K} \sum_{i \in [n]} \omega_i^2 V_i \right]^2 \leq \frac{C}{K^2} \sum_{i \in [n]} \omega_i^4 \leq \frac{C}{K^2} \max_{i \in [n] }\omega_i^2 \left(\sum_{i \in [n] }\omega_i^2\right)
	\leq \frac{C}{K^2} 
	p_n
	\left( \Pi^{\top}\Pi \right)^2 = o(1), \\
	& \frac{C}{K} \sum_{i \in [n]} \sum_{j \neq i} P_{ij}^2 (\omega_i^2 + |\omega_i||\omega_j|) 
	\leq \frac{C}{K}\left( \sum_{i \in [n]}P_{ii}\omega_i^2 +  \left(\sum_{i \in [n]}P_{ii}\omega_i^2\right)^{1/2} \left(\sum_{j \in [n]}P_{jj}\omega_j^2\right)^{1/2}\right)
	\leq \frac{C}{K} 
	p_n
	\Pi^{\top}\Pi = o(1), 
\end{align*}
where we have used $\max_{i\in[n]}\omega_i^2 \leq p_n\Pi^{\top}\Pi$, 
$\sum_{i \in [n]} \omega_i^2 \leq C \Pi^{\top}\Pi$, 
and \citet[Lemma S1.3(b)]{MS22}. 

To prove statement (b), we show that 
\begin{align*}
	& \frac{C}{K^2} \sum_{i \in [n]} \sum_{j \neq i}(P_{ij}^2 \omega_i^4 + P_{ij}^2 w_i^2 w_j^2 + P_{ij}^4 w_i^2 + P_{ij}^4 |\omega_i\omega_j|) \\
	& \leq \frac{C}{K^2} \left( p_n \sum_{i \in [n]}\omega_i^4 + \left(\sum_{i \in [n]} P_{ii} \omega_i^4\right)^{1/2}\left(\sum_{j \in [n]} P_{jj} \omega_j^4\right)^{1/2} + \sum_{i \in [n]} P_{ii} \omega_i^2 p_n + 
	p_n\left(\sum_{i \in [n]} P_{ii} \omega_i^2\right)^{1/2}\left(\sum_{j \in [n]} P_{jj} \omega_j^2\right)^{1/2}
	\right) \\
	& \leq \frac{C}{K^2} \left(p^{2}_n (\Pi^{\top}\Pi)^2 + p^{2}_n (\Pi^{\top}\Pi)^2 + p^{2}_n 
	(\Pi^{\top}\Pi) + p^{2}_n 
	(\Pi^{\top}\Pi) \right) = o(1), \\
	& \frac{C}{K^2} \left( \sum_{i \in [n]}\omega_i^2 + \sum_{i \in [n]}\sum_{j \in [n]} P_{ij}^2 |\omega_i \omega_j| \right)  
	\leq \frac{C}{K^2} \left( 
	\Pi^{\top}\Pi + 
	p_n
	\Pi^{\top}\Pi \right) = o(1), 
\end{align*}
where we have used 
$\sum_{i\in[n]}\omega_i^2 \leq C \Pi^{\top}\Pi$ 
and $\sum_{i\in[n]} \omega_i^4 \leq C p_n (\Pi^{\top}\Pi)^2$. 

To prove statement (c), we show that, for $a_i = \Pi_i$ or $\lambda_i/M_{ii}$, 
\begin{align*}
	& \frac{C}{K^2}\left( \sum_{i \in [n]} P^2_{ii} a_i^2 + \sum_{i \in [n]}\sum_{j \in [n]} P_{ij}^2 \left|a_ia_j\right| \right) 
	\leq \frac{C}{K^2} \left( p_n^2 a^{\top}a + p_n a^{\top}a \right) = o(1), \\
	& \frac{C}{K^2} \sum_{i \in [n]} \omega_i^4 \frac{\lambda^2_i}{M^2_{ii}} 
	\leq \frac{C}{K^2} \left(\max_{i \in [n]}\omega_i^2\right)^2 \sum_{i \in [n]}\lambda_i^2
	\leq C p_n^2 \left( \frac{\Pi^{\top}\Pi}{K} \right)^3 = o(1), \\
	& \frac{C}{K^2} \sum_{i\in[n]}\omega_i^4\Pi_i^2 \leq \frac{C}{K^2}\sum_{i\in[n]}\omega_i^4
	\leq \frac{C}{K^2}p_n\left(\Pi^{\top}\Pi\right)^2 = o(1), \; \text{where we have used $\max_{i\in[n]}|\Pi_i|\leq C$},
	\\
	&  \frac{C}{K^2} \sum_{i \in [n]}\sum_{j \neq i} P_{ij}^4 \left( a_i^2 + \left|a_i\right|\left|a_j\right|\right) 
	\leq \frac{C}{K^2} \left( p^2_n a^{\top}a + p^2_n a^{\top}a \right) = o(1), \\
	& \frac{C}{K^2} \sum_{i\in[n]}\sum_{j \neq i} P_{ij}^2(\omega_i^2a_i^2 + |\omega_ia_i||\omega_ja_j|) \leq \frac{C}{K^2} \left(p_n^2(\Pi^{\top}\Pi)(a^{\top}a) + p_n(\Pi^{\top}\Pi)(a^{\top}a)\right) =o(1).
\end{align*}
To prove statement (d), we first show that 
\begin{align*}
	& \frac{C}{K^2} \left( \left(\sum_{i \in [n]}\omega_i^2\left|a_i\right|\right)^2 + \left(\sum_{i \in [n]} \left|\omega_i a_i \right|\right)^2\right) = o(1).
\end{align*}
In particular, when $a_i = \Pi_i$, we have
\begin{align*}
	& \frac{C}{K^2} \left( \left(\sum_{i \in [n]}\omega_i^2\left|\Pi_i\right|\right)^2 + \left(\sum_{i \in [n]} \left|\omega_i \Pi_i \right|\right)^2\right) 
	\leq \frac{C}{K^2} \left( \left(\sum_{i \in [n]}\omega_i^2\right)^2 + \left(\sum_{i \in [n]} \left|\omega_i \Pi_i\right|\right)^2\right) \\
	&  \leq \frac{C}{K^2} \left( \left(\Pi^\top \Pi \right)^2 + \left(\sum_{i \in [n]} \omega_i^2\right)\left(\Pi^{\top}\Pi\right)\right) 
	\leq \frac{C}{K^2} \left( (\Pi^{\top}\Pi)^2  +  (\Pi^{\top}\Pi)^2 \right) = o(1),
\end{align*}
When $a_i = \frac{\lambda_i}{M_{ii}}$, we have 
\begin{align*}
	& \frac{C}{K^2} \left( \left(\sum_{i \in [n]}\omega_i^2\left| \frac{\lambda_i}{M_{ii}} \right|\right)^2 + \left(\sum_{i \in [n]} \left|\omega_i \frac{\lambda_i}{M_{ii}} \right|\right)^2\right) 
	\leq \frac{C}{K^2} \left( \left(\sum_{i \in [n]}\omega_i^4\right)(\lambda^{\top}\lambda) + \left(\sum_{i \in [n]}\omega_i^2\right)(\lambda^{\top}\lambda) \right)
	\\
	& \leq \frac{C}{K^2} \left( p_n (\Pi^{\top}\Pi)^2(\lambda^{\top}\lambda)
	+  (\Pi^{\top}\Pi)(\lambda^{\top}\lambda)\right) = o(1).
\end{align*}

Furthermore, we can show that 
\begin{align*}
	& \frac{C}{K^2} \left( \sum_{i \in [n]}|\omega_ia_i| \right)^2 \leq 
	\frac{C}{K^2} (\Pi^{\top}\Pi)(a^{\top}a) = o(1), \\
	& \frac{C}{K} \sum_{i \in [n]} P_{ii} \left| a_i \right| 
	\leq \frac{C}{K} \left( \sum_{i \in [n]} P_{ii}^2 \right)^{1/2} \left( a^{\top}a \right)^{1/2}
	\leq \frac{C}{K} (p_n K)^{1/2} \left( a^{\top}a \right)^{1/2} = o(1),\\
	& \frac{C}{K^2} \left( \sum_{i \in [n]} P_{ii} \left| a_i \right| \right)^2 
	\leq \frac{C}{K^2} \left(\sum_{i \in [n]}P_{ii}^2\right) \left( a^{\top}a \right)
	\leq \frac{C}{K^2} p_n K \left( a^{\top}a \right) = o(1).
\end{align*}

To prove statement (e), we show that 
\begin{align*}
	&     \left| \frac{C}{K} \sum_{i \in [n]} \omega_i^2 \Pi_i \frac{\lambda_i}{M_{ii}} \right|
	\leq  \frac{C}{K} \sum_{i \in [n]} \omega_i^2 \left| \frac{\lambda_i}{M_{ii}} \right|
	\leq \frac{C}{K} \left(\sum_{i\in[n]} \omega_i^4 \right)^{1/2} \left( \lambda^{\top} \lambda \right)^{1/2} 
	\leq \frac{C}{K} p_n^{1/2} (\Pi^\top \Pi) (\lambda^{\top} \lambda)^{1/2} = o(1), \\
	& \frac{C}{K^2} \sum_{j \in [n]} \left( \sum_{i \neq j} P_{ij} \omega_i \Pi_i \frac{\lambda_i}{M_{ii}} \right)^2 
	\leq \frac{C}{K^2} \sum_{j \in [n]} \left( \sum_{i \neq j} |P_{ij}| |\omega_i| |\lambda_i| \right)^2 \leq \frac{C}{K^2} \sum_{j \in [n]} \left(\sum_{i \neq j}\omega_i^2\right)\left( \sum_{i \neq j} P_{ij}^2\lambda_i^2 \right) \\
	& \leq \frac{C K p_n^{1/2} \Pi^\top \Pi \lambda^\top \lambda}{K^2} = o(1), \\
	& \frac{C}{K^2} \sum_{j \in [n]} \left( \sum_{i \neq j} P_{ij}^2 \Pi_i \frac{\lambda_i}{M_{ii}} \right)^2 \leq \frac{C}{K^2} \sum_{j \in [n]} \left( \sum_{i \neq j} P_{ij}^2  |\lambda_i| \right)^2 \leq \frac{C K p_n \lambda^\top \lambda}{K^2} = o(1), \\
	& \frac{C}{K} \sum_{j \in [n]} \sum_{i \neq j} P_{ij}^2 \left|\Pi_i \frac{\lambda_i}{M_{ii}} \right| \leq \frac{C}{K}\sum_{i \in n}\sum_{j \in [n]}P_{ij}^2|\Pi_i\lambda_i|
	\leq \frac{C}{K} p_n (\Pi^{\top}\Pi)^{1/2}(\lambda^{\top}\lambda)^{1/2} = o(1), 
	\\
	& \frac{C}{K^2}\sum_{j \in [n]}\sum_{k \neq j} \left(\sum_{i \neq j,k} P^2_{ij}P^2_{ik} \Pi_i \frac{\lambda_i}{M_{ii}}\right)^2  
	\leq \frac{C}{K^2}\sum_{j \in [n]}\sum_{k \neq j} \left(\sum_{i \neq j,k} P^2_{ij}P^2_{ik} |\lambda_i|\right)^2 \\
	& \leq  \frac{C}{K^2}\left(\sum_{j \in [n]}\sum_{k \neq j} \sum_{i \neq j,k} P^4_{ij}P^4_{ik} \right) \lambda^\top \lambda \leq \frac{C p_n^3 K \lambda^\top \lambda}{K^2} = o(1),
\end{align*}
where we have used \citet[Lemma S1.1(ii)]{MS22}. 

Finally, we can show that \citet[Lemma S3.1]{MS22} also holds under our conditions by using similar arguments. We omit the details for brevity.

\section{Lemma \ref{lem:W} and Its Proof}
\label{sec:lem_W}
\begin{lem}
	Suppose assumptions in Theorem \ref{thm:W} hold. Then, we have
	\begin{align*}
		&\hat \gamma_{e} = O_P(n^{-1/2}), \quad     
		\hat \gamma_{V} = O_P(n^{-1/2}),
		\quad Q_{\tilde e, W} = O_P(1), \quad Q_{\tilde V, W} = O_P(1), \\
		& \hat \gamma_V^\top Q_{W,W^\top} \hat \gamma_V = o_P(1), \quad
		\hat \gamma_e^\top Q_{W,W^\top} \hat \gamma_e = o_P(1),
		\quad \text{and} \quad \hat \gamma_e^\top Q_{W,W^\top} \hat \gamma_V = o_P(1).
	\end{align*}
	\label{lem:W}
\end{lem}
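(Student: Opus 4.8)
The plan is to exploit the orthogonality created by partialling out $W$, which collapses every cross term involving $W$ into a single diagonal sum. First I would record the structural fact that $PW = 0$. Since $Z = M_W \tilde Z$ and $M_W W = 0$, we get $W^\top Z = (M_W W)^\top \tilde Z = 0$, hence $Z^\top W = 0$ and therefore $PW = Z(Z^\top Z)^{-1}(Z^\top W) = 0$. Consequently the $i$-th row of $PW$ vanishes, so $\sum_{j \neq i} P_{ij} W_j = (PW)_i - P_{ii}W_i = -P_{ii}W_i$. This identity immediately reduces, for any sequence $a$, the term $Q_{a,W} = \frac{1}{\sqrt K}\sum_i a_i \bigl(\sum_{j\neq i}P_{ij}W_j^\top\bigr) = -\frac{1}{\sqrt K}\sum_i P_{ii} a_i W_i^\top$ to a diagonal sum, which is the engine behind all the remaining bounds.

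Second, I would bound $\hat \gamma_e$ and $\hat \gamma_V$. By the assumption $\text{mineig}(W^\top W/n) \ge c > 0$, the operator norm $\|(W^\top W)^{-1}\|$ is $O(1/n)$. Writing $W^\top \tilde e = \sum_i W_i \tilde e_i$ and using independence and mean-zero of the $\tilde e_i$, together with $\|W_i\|_2\le C$ and bounded fourth (hence second) moments, we get $\mathbb{E}\|W^\top \tilde e\|_2^2 = \sum_i \|W_i\|_2^2\, \mathbb{E}\tilde e_i^2 = O(n)$, so $W^\top\tilde e = O_P(n^{1/2})$ and thus $\hat \gamma_e = O_P(n^{-1/2})$; the identical argument gives $\hat \gamma_V = O_P(n^{-1/2})$.

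Third, I would use the diagonal representation from the first step to control $Q_{\tilde e, W}$ and $Q_{\tilde V, W}$. Each equals $-\frac{1}{\sqrt K}\sum_i P_{ii}\tilde e_i W_i^\top$, a weighted sum of independent mean-zero terms, so the variance of its $\ell$-th component is $\frac{1}{K}\sum_i P_{ii}^2 W_{i,\ell}^2\, \mathbb{E}\tilde e_i^2 \le \frac{C}{K}\sum_i P_{ii}^2 \le \frac{C}{K}\sum_i P_{ii} = O(1)$, where I used $P_{ii}^2 \le P_{ii}$ and $\sum_i P_{ii} = \text{trace}(P) = K$. Hence $Q_{\tilde e, W} = O_P(1)$ and likewise $Q_{\tilde V, W} = O_P(1)$. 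The same identity also yields the deterministic bound $\|Q_{W,W}\| = \bigl\|\frac{1}{\sqrt K}\sum_i P_{ii} W_i W_i^\top\bigr\| \le \frac{C^2}{\sqrt K}\sum_i P_{ii} = C^2\sqrt K = O(\sqrt K)$.

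Finally, the three quadratic forms follow by combining rates: $|\hat \gamma_e^\top Q_{W,W}\hat \gamma_e| \le \|\hat \gamma_e\|_2^2\, \|Q_{W,W}\| = O_P(n^{-1})\,O(\sqrt K) = O_P(\sqrt K/n)$, and since $K = \text{trace}(P)\le n$ for a projection we have $\sqrt K/n \le n^{-1/2}\to 0$, so the term is $o_P(1)$; the bounds for $\hat \gamma_V^\top Q_{W,W}\hat \gamma_V$ and the cross term $\hat \gamma_e^\top Q_{W,W}\hat \gamma_V$ are obtained in exactly the same way via Cauchy–Schwarz. The only genuine obstacle is spotting the orthogonality $PW = 0$: without it the naive second-moment bound on $Q_{\tilde e, W}$ would be $O(n/K)$, which diverges when $K/n\to 0$, whereas the diagonal reduction makes every $W$-cross term trivially controlled by $\text{trace}(P) = K$.
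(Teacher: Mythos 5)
Your proposal is correct and follows essentially the same route as the paper's proof: the key orthogonality $Z^\top W = 0$ giving $\sum_{j\neq i}P_{ij}W_j = -P_{ii}W_i$, the same second-moment bounds yielding $\hat\gamma_e,\hat\gamma_V = O_P(n^{-1/2})$ and $Q_{\tilde e,W}, Q_{\tilde V,W} = O_P(1)$, and the same diagonal reduction of the quadratic forms. The only (immaterial) difference is the final step, where you bound $\|Q_{W,W^\top}\|$ by $C\sqrt{K}$ and invoke $K = \text{trace}(P) \le n$ to get $O_P(\sqrt{K}/n) = o_P(1)$, whereas the paper bounds the quadratic form directly by $\sum_{i}(W_i^\top\hat\gamma)^2/\sqrt{K} = O_P(1/\sqrt{K})$ and uses $K \to \infty$; both are valid under the stated assumptions.
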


\begin{proof}
	We have $\hat \gamma_{e} = O_P(n^{-1/2})$ because $\mathbb{E}\tilde e_i = 0$ and $\text{mineig}(W^\top W/n) \geq c>0$. 
	Similarly, we have $\hat \gamma_{V} = O_P(n^{-1/2})$. 
	To see that $Q_{\tilde e, W} = O_P(1)$, 
	we note that $\mathbb{E}Q_{\tilde e, W} = 0$ and 
	\begin{align*}
		\mathbb{E} Q_{\tilde e, W} Q_{\tilde e, W}^{\top} 
		\leq C \sum_{i \in [n]}(\sum_{j \neq i}P_{ij}W_j)^\top(\sum_{j \neq i}P_{ij}W_j)/K = C \sum_{i \in [n]}P_{ii}^2 W_i^\top W_i/K \leq C,
	\end{align*}
	where we use the fact that $\sum_{j \neq i}P_{ij}W_j = - P_{ii}W_i$ since $P_{ij}$ is the $ij$-th element of $P=Z(Z^{\top}Z)^{-1}Z^{\top}$. Similarly, we have $Q_{\tilde V, W} = O_P(1).$
	
	To see $\hat \gamma_V^\top Q_{W,W^\top} \hat \gamma_V = o_P(1)$, we note that 
	\begin{align*}
		\left|\hat \gamma_V^\top Q_{W,W^\top} \hat \gamma_V\right| \leq \sum_{i \in [n]} (W_i^\top \hat \gamma_V)^2/\sqrt{K} = o_P(1),
	\end{align*}
	where we use the fact that $\sum_{i \in [n]}W_iW_i^{\top}/n = O_P(1)$ and $\hat{\gamma}_V = O_P(n^{-1/2})$, so that
	\begin{align*}
		\sum_{i \in [n]} (W_i^\top \hat \gamma_V)^2 = O_P(1).
	\end{align*}
	Similarly, we can show that 
	$$\hat \gamma_e^\top Q_{W,W^\top} \hat \gamma_e = o_P(1), \quad \text{and} \quad \hat \gamma_e^\top Q_{W,W^\top} \hat \gamma_V = o_P(1).$$
\end{proof}

\section{Comparison with HLIM Estimator under Strong Identification}
\label{sec:HLIM}
We consider the model in Section \ref{sec:W0} and the HLIM estimator proposed by \cite{Haus2012}. Specifically, \cite{Haus2012} estimate $(\beta,\gamma)$ by $(\hat \beta^{HLIM},\hat \gamma^{HLIM})$ defined as
\begin{align*}
	(\hat \beta^{HLIM},\hat \gamma^{HLIM}) = \argmin_{b,r} \mathcal{Q}(b,r),\quad     \mathcal{Q}(b,r) = \frac{\sum_{i \in [n]} \sum_{j \neq i} (\tilde Y_i - \tilde X_i b - W_i^\top r) \tilde P_{ij}(\tilde Y_i - \tilde X_i b - W_i^\top r) }{\sum_{i \in [n]}(\tilde Y_i - \tilde X_i b - W_i^\top r)^2},
\end{align*}
where $\tilde P_{ij}$ is the projection matrix constructed by $(W_i^\top,\tilde Z_i^\top)^\top$. Following \cite{Haus2012}, we let $\tilde \Pi_i = \mu_n \tilde \pi_i/\sqrt{n}$ such that $\sum_{i \in [n]} \tilde \pi_i^2/n \geq c>0$ for some constant $c$. As explained in the paper, under strong identification, we have $\mu_n^2/\sqrt{K} \rightarrow \infty$. In both cases considered in \citet[Assumption 6]{Haus2012}, the convergence rate can be unified as 
$\sqrt{K}/ \mu_n^2$.
Then, the Wald statistic can be written as 
\begin{align*}
	W_h(\beta_0) = \frac{\mu_n^2 (\hat \beta^{HLIM} - \beta_0)/\sqrt{K}}{\hat \Phi_h^{1/2}},
\end{align*}
where $\hat \Phi_h$ is a consistent estimator of $\Phi_h$, and $\Phi_h$ is the asymptotic variance of $\hat \beta^{HLIM}$. 
To study the behaviour of $W_h(\beta_0)$ under strong identification and local alternatives, we let $\beta_0$ denote the local alternative in the sense that $\beta_0 = \beta + \frac{\widetilde \Delta}{\mu^2_n /\sqrt{K}}$. We will provide the expression for $\Phi_h$ later. We also note that the notation in \cite{Haus2012} and our paper is different. Specifically, their $\delta_0$ is our $(\gamma^\top,\beta_0)^\top$, their $\hat \delta$ is our $((\hat \gamma^{HLIM})^\top, \hat \beta^{HLIM})^\top$, their $X_i$ is our $(W_i^\top,\tilde X_i)^\top$, their $Z_i$ is our $(W_i^\top,\tilde Z_i^\top)^\top$, and thus their projection matrix $P$ is our $\tilde P$, which is the one based on $(W_i^\top,\tilde Z_i^\top)^\top$. We use $P$ and $P_W$ to denote the projection matrices based on our $Z_i$ and $W_i$, respectively, where $Z_i = ([M_W]_{i\cdot} \tilde Z)^\top$, $[M_W]_{i\cdot}$ is the $i$th row of $M_W$, and $M_W = I_n - P_W$.  

Further denote $L$ as a matrix that selects the last element of 
$\hat \delta = ((\hat \gamma^{HLIM})^\top, \hat \beta^{HLIM})^\top$ 
and 
\begin{align*}
	S_n = \begin{pmatrix}
		I_{d} & 0\\
		\pi_x^\top & 1
	\end{pmatrix} \diag(\sqrt{n},\cdots,\sqrt{n},\mu_n),
\end{align*}
where $\pi_x = (W^\top W)^{-1} W^\top \tilde \Pi$ is the projection coefficient of $\tilde \Pi$ on $W$. 
Then, the corresponding definition of $\hat{D}(\delta_0)$ in \citet[p.235]{Haus2012} under our notation is as follows:
\begin{align*}
	\hat D(\delta_0) = \frac{\sum_{i \in [n]}\sum_{j \neq i}\left[\mathbb W_i \tilde P_{ij} \overline{e}_j(\beta_0) - \overline{e}_i(\beta_0)\tilde P_{ij} \overline{e}_j(\beta_0) \frac{\mathbb W^\top \overline{e}(\beta_0)}{\overline{e}^\top(\beta_0) \overline{e}(\beta_0)}\right]}{\sqrt{K}},
\end{align*}
where $\mathbb W_i = (W_i^\top, \tilde X_i)^\top$, $\mathbb W$ is a $n \times (d+1)$ matrix with its $i$th row being $\mathbb W_i^\top$ where $d$ is the dimension of $W_i$, and $\overline{e}_i(\beta_0) = \tilde e_j- \tilde X_j(\beta_0- \beta)$. In addition, we note that  $\overline{X}_i = \tilde X_i - W_i^\top \pi_x = \Pi_i +\tilde V_i$ as defined in Theorem \ref{thm:W}, 
$X_i = \overline{X}_i - W_i^\top \hat \gamma_V$, 
$\overline{e}_i(\beta_0) = e_i(\beta_0)+W_i^\top \hat \gamma_e - W_i^\top \hat \pi_x (\beta_0-\beta)$, where $ \pi_x = (W^\top W)^{-1}(W^\top \tilde \Pi)$, $\hat \pi_x = (W^\top W)^{-1}(W^\top \tilde X) = \pi_x + \hat \gamma_V$, $\hat \gamma_V = (W^{\top}W)^{-1}(W^{\top}\tilde V)$, and
$\hat \gamma_e = (W^{\top}W)^{-1}(W^{\top}\tilde e)$.
Further let $\overline{\delta}$ be between $\delta = (\gamma^\top,\beta)^\top$ and $\delta_0$.

Then, following the argument in the proof of \citet[Theorem 2]{Haus2012}, we have
\begin{align*}
	& (\mu^2_n/\sqrt{K})(\hat \beta^{HLIM} - \beta_0) \\
	& = (\mu^2_n/\sqrt{K})    L(\hat \delta - \delta_0) \\
	& = -(\mu^2_n/\sqrt{K})    L \left( \frac{\partial \hat D (\overline{\delta})}{\partial \delta}\right)^{-1} \hat D(\delta_0) \\
	& = -(\mu^2_n/\sqrt{K})    L (S_n^\top )^{-1}\left(S_n^{-1} \frac{\partial \hat D (\overline{\delta})}{\partial \delta} (S_n^\top )^{-1}\right)^{-1} S_n^{-1} \hat D(\delta_0) \\
	& = -(\mu^2_n/\sqrt{K}) (0,1/\mu_n)(H^{-1} + o_P(1)) \diag(1/\sqrt{n},\cdots,1/\sqrt{n},1/\mu_n) \begin{pmatrix}
		I_{d} & 0\\
		-\pi_x^\top & 1
	\end{pmatrix} \hat D(\delta_0) \\
	& = -\frac{\mu_n}{\sqrt{K}}\left(\begin{pmatrix}
		(H^{21}   + o_P(1))/\sqrt{n} - \pi_x^\top (H^{22}   + o_P(1))/\mu_n, (H^{22}   + o_P(1))/\mu_n 
	\end{pmatrix} \right) \hat D (\delta_0) \\
	& = (H^{22}+o_P(1))(-\pi_x^\top, 1)\hat D (\delta_0)/\sqrt{K} \\
	& = (H^{22}+o_P(1)) \frac{\sum_{i \in [n]}\sum_{j \neq i}\left[\overline{X}_i\tilde P_{ij} \overline{e}_j(\beta_0) - \overline{e}_i(\beta_0)\tilde P_{ij} \overline{e}_j(\beta_0) \frac{\overline{X}^\top \overline{e}(\beta_0)}{\overline{e}^\top(\beta_0) \overline{e}(\beta_0)}\right]}{\sqrt{K}},   
\end{align*}
where by \citet[Lemma A7]{Haus2012}, $S_n^{-1} \frac{\partial \hat D (\overline{\delta})}{\partial \delta} (S_n^\top )^{-1} \convP H$, and we denote $H^{-1} = \begin{pmatrix}
	H^{11} & H^{12} \\
	H^{21} & H^{22}
\end{pmatrix}$.

Following the same argument in the proof of Lemma \ref{lem:W}, we can show that 
\begin{align*}
	& \frac{\sum_{i \in [n]}\sum_{j \neq i} \hat {\gamma}_V^\top 
		W_i\tilde P_{ij} \overline{e}_j(\beta_0)}{\sqrt{K}} = o_P(1), \quad \frac{\sum_{i \in [n]}\sum_{j \neq i} \overline{X}_i\tilde P_{ij} W_i^\top (\hat \gamma_e - \hat \pi_x (\beta_0-\beta))}{\sqrt{K}} = o_P(1) \\
	& \frac{\sum_{i \in [n]}\sum_{j \neq i} \overline{e}_i(\beta_0)\tilde P_{ij} W_i^\top (\hat \gamma_e - \hat \pi_x (\beta_0-\beta))}{\sqrt{K}} = o_P(1), \quad \text{and} \\
	&\frac{\sum_{i \in [n]}\sum_{j \neq i} (\hat \gamma_e - \hat \pi_x (\beta_0-\beta))^\top W_i\tilde P_{ij} W_i^\top (\hat \gamma_e - \hat \pi_x (\beta_0-\beta))}{\sqrt{K}} = o_P(1).
\end{align*}
In addition, we have $\overline{X}^\top \overline{e}(\beta_0)/\overline{e}^\top(\beta_0)\overline{e}(\beta_0) \convP \tilde \rho$. Then, we have
\begin{align*}
	\mu_n^2 (\hat \beta^{HLIM} - \beta_0)/\sqrt{K} = H^{22} \frac{\sum_{i \in [n]}\sum_{j \neq i}\left[X_i\tilde P_{ij} e_j(\beta_0) - e_i(\beta_0)\tilde P_{ij} e_j(\beta_0) \tilde \rho\right]}{\sqrt{K}} + o_P(1).  
\end{align*}
Because $X^\top W = 0$ and $e^\top W = 0$, we have $X^\top \tilde P e(\beta_0) = X^\top P e(\beta_0)$ and $e(\beta_0)^\top \tilde P e(\beta_0) = e(\beta_0)^\top P e(\beta_0)$. Therefore, we have
\begin{align*}
	\frac{\sum_{i \in [n]}\sum_{j \neq i}X_i\tilde P_{ij} e_j(\beta_0)}{\sqrt{K}} & = \frac{X^\top P e(\beta_0) - \sum_{i \in [n]}X_i\tilde P_{ii} e_i(\beta_0) }{\sqrt{K}} \\
	& = \frac{\sum_{i \in [n]} \sum_{j \neq i} X_i P_{ij} e_j(\beta_0) + \sum_{i \in [n]} X_i e_i(\beta_0) (P_{ii} - \tilde P_{ii})}{\sqrt{K}} \\
	& = Q_{X,e(\beta_0)} - \frac{\sum_{i \in [n]} X_i e_i(\beta_0) P_{W,ii}}{\sqrt{K}} \\
	& = Q_{X,e(\beta_0)} +o_P(1),
\end{align*}
where we use the facts that $\tilde P_{ii} = P_{ii} + P_{W,ii}$ and 
\begin{align*}
	\sum_{i \in [n]} X_i e_i(\beta_0) P_{W,ii} = \frac{1}{n}     \sum_{i \in [n]} X_i e_i(\beta_0) W_i^\top \left(W^\top W/n\right)^{-1}W_i = O_P(1).
\end{align*}
Similarly, we have
\begin{align*}
	\frac{\sum_{i \in [n]}\sum_{j \neq i}e_i(\beta_0)\tilde P_{ij} e_j(\beta_0)}{\sqrt{K}}  = Q_{e(\beta_0),e(\beta_0)} +o_P(1),
\end{align*}
and thus, 
\begin{align*}
	\mu_n^2 (\hat \beta^{HLIM} - \beta_0)/\sqrt{K} = H^{22}(Q_{X,e(\beta_0)} - \tilde \rho Q_{e(\beta_0),e(\beta_0)}) + o_P(1). 
\end{align*}
In order for the HLIM based Wald test to have a pivotal standard normal distribution in the limit, the asymptotic variance $\Phi_h$ must be 
\begin{align*}
	\Phi_h = (H^{22})^2 (\Psi- 2\tilde \rho \Phi_{12} + \tilde \rho^2 \Phi_1),
\end{align*}
which means the Wald statistic satisfies $W_h(\beta) = \frac{Q_{X,e(\beta_0)} - \tilde \rho Q_{e(\beta_0),e(\beta_0)}}{(\Psi- 2\tilde \rho \Phi_{12} + \tilde \rho^2 \Phi_1)^{1/2}} + o_P(1)$.

\section{Additional Simulation Results}
\label{sec:add_sim}
\subsection{Additional Simulation Results Based on the Limit Problem}\label{sec: further_simu_limit}
In this section, we present further simulation results for the power behavior of tests under the limit problem described in Section \ref{sec:limit}. 

For Figures \ref{further_limit_fig1}--\ref{fig_01_2_rho09}, all the settings remain the same as those in Section \ref{sec:sim1} in the main paper except we use alternative values of the tuning parameters
for (\ref{eq:a_underline}). Specifically, for the values of $p_1$ and $p_2$ in 
\begin{align*}
	\underline{a}(\mu_D,\gamma(\beta_0)) = \min\left(p_1, \frac{p_2 \mathbb{C}_{\alpha,\max}(\rho(\beta_0)) \Phi_1(\beta_0) c_{\mathcal{B}}(\beta_0) }{ \Delta_*^4(\beta_0) \mu_D^2} \right),
\end{align*}
we use 
$(p_1, p_2) = (0.01, 1.5), (0.01, 2)$, 
$(0.001,1.1)$, $(0.001, 1.5)$, $(0.001, 2)$,
$(0.1,1.1)$, $(0.1, 1.5)$, or $(0.1, 2)$,
instead of $(0.01, 1.1)$ in Section \ref{sec:sim}. 
Specifically, 
Figures \ref{further_limit_fig1}--\ref{further_limit_fig4} report the results for $(0.01, 1.5)$, 
Figures \ref{further_limit_fig5}--\ref{further_limit_fig8} report those for $(0.01, 2)$, 
Figures \ref{fig_0001_11_rho02}--\ref{fig_0001_11_rho09} report those for $(0.001, 1.1)$,
Figures \ref{fig_0001_15_rho02}--\ref{fig_0001_15_rho09} report those for $(0.001, 1.5)$,
Figures \ref{fig_0001_2_rho02}--\ref{fig_0001_2_rho09} report those for $(0.001, 2)$,
Figures \ref{fig_01_11_rho02}--\ref{fig_01_11_rho09} report those for $(0.1, 1.1)$,
Figures \ref{further_limit_fig9}--\ref{further_limit_fig12} report those for $(0.1,1.5)$,  
and Figures \ref{fig_01_2_rho02}--\ref{fig_01_2_rho09} report those for $(0.1, 2)$,
respectively. 
We find the results are very similar to those
reported in the main paper. 

Furthermore, Figures \ref{further_limit_fig13}--\ref{further_limit_fig16} present the power curves in the cases with stronger identification ($\mathcal C=9$ or $12$). The overall patterns are very similar to those for $\mathcal C=6$. For Figures \ref{further_limit_fig13}--\ref{further_limit_fig16}, the tuning parameters are set as $(p_1, p_2) = (0.01, 1.1)$, which are same as those in Section \ref{sec:sim} of the main text.
The results for other values of $p_1$ and $p_2$ remain very similar and thus are omitted for brevity.

\begin{figure}[H]
	\centering
	\includegraphics[width=0.9\textwidth,height = 5.85cm]{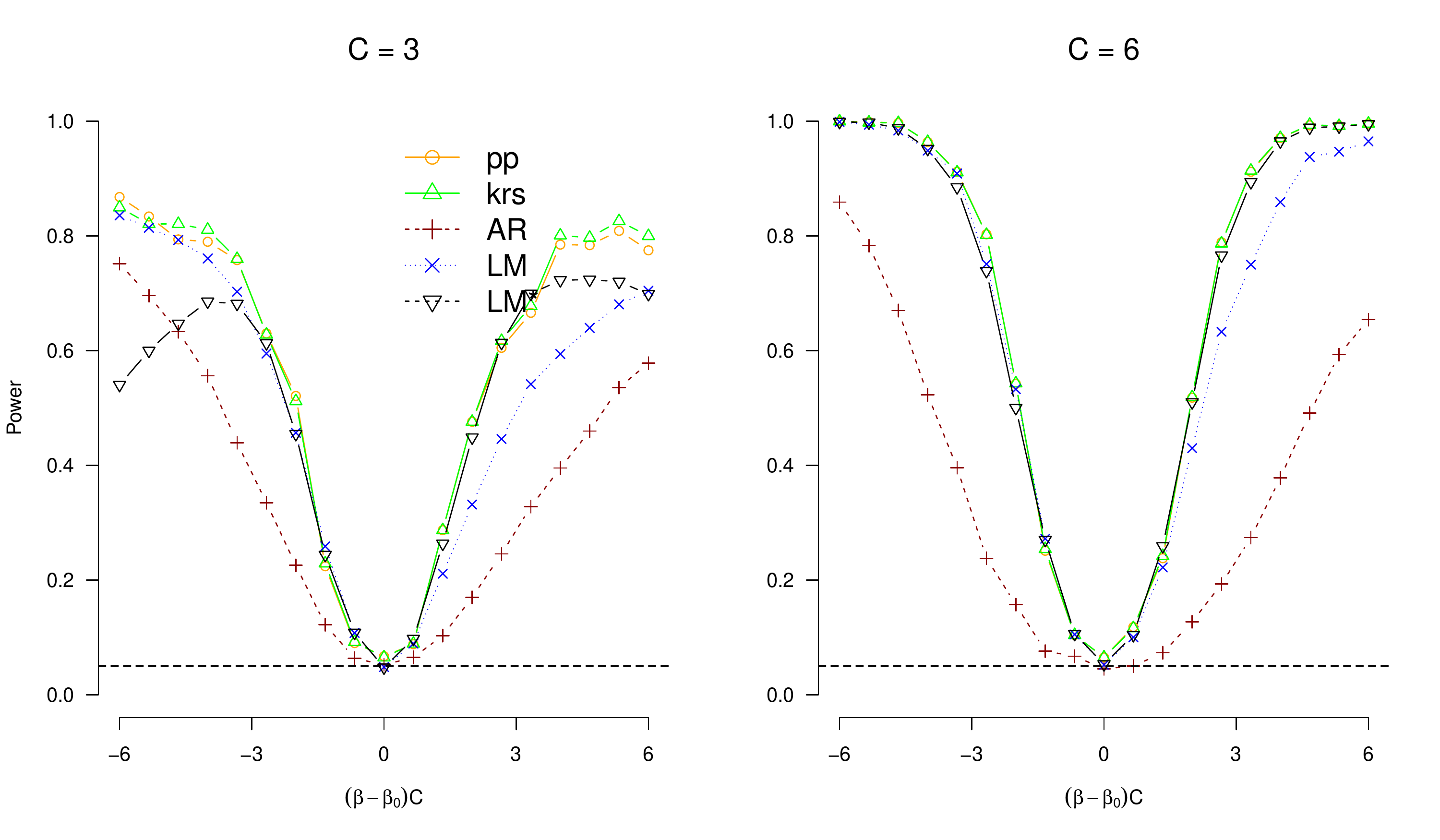}
	\caption{Power Curve for $\rho = 0.2$, $p_1=0.01$, and $p_2=1.5$}
	\label{further_limit_fig1}
\end{figure}

\begin{figure}[H]
	\centering
	\includegraphics[width=0.9\textwidth,height = 5.85cm]{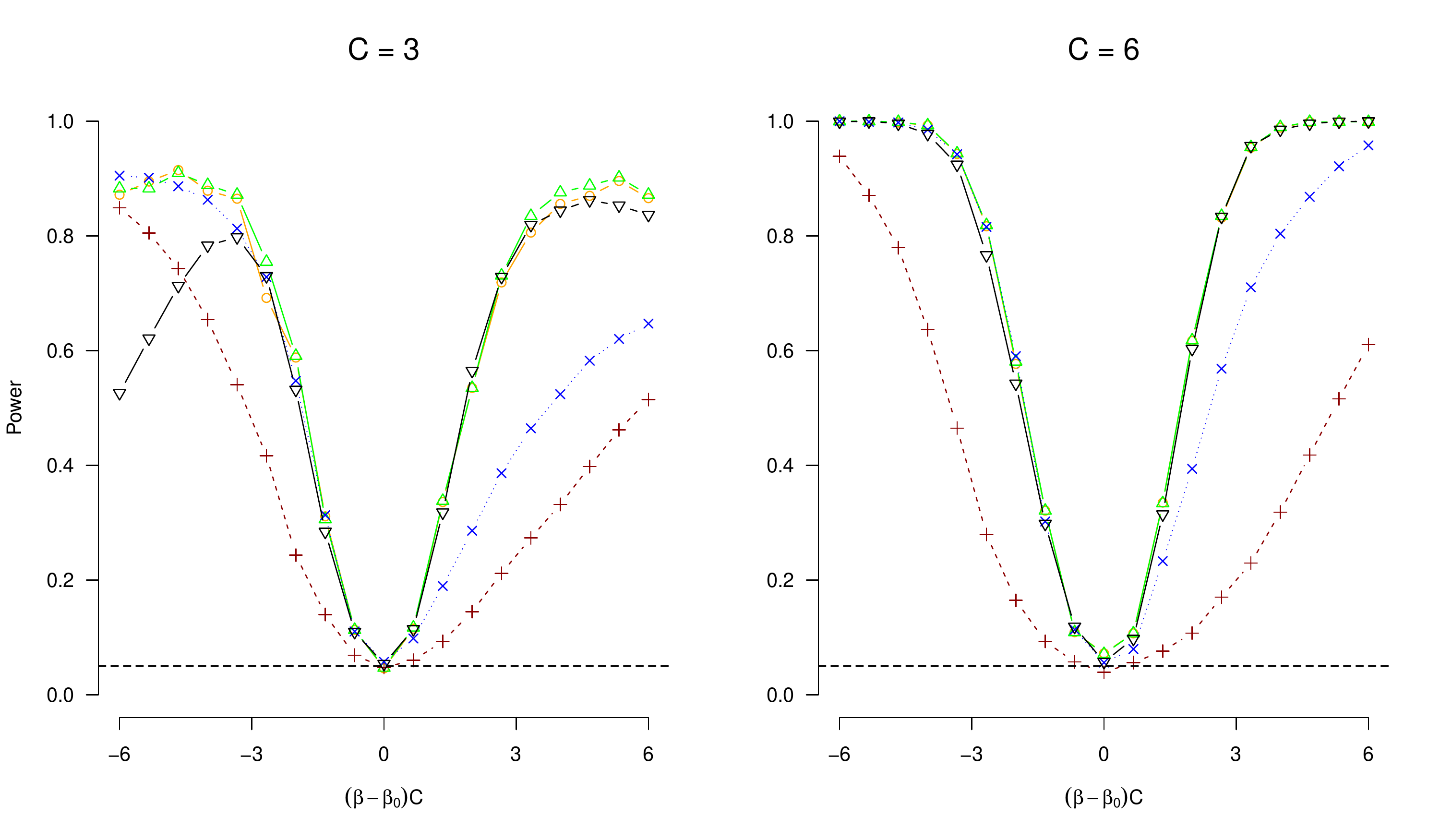}
	\caption{Power Curve for $\rho=0.4$, $p_1=0.01$, and $p_2=1.5$}
	\label{further_limit_fig2}
\end{figure}

\begin{figure}[H]
	\centering
	\includegraphics[width=0.9\textwidth,height = 5.85cm]{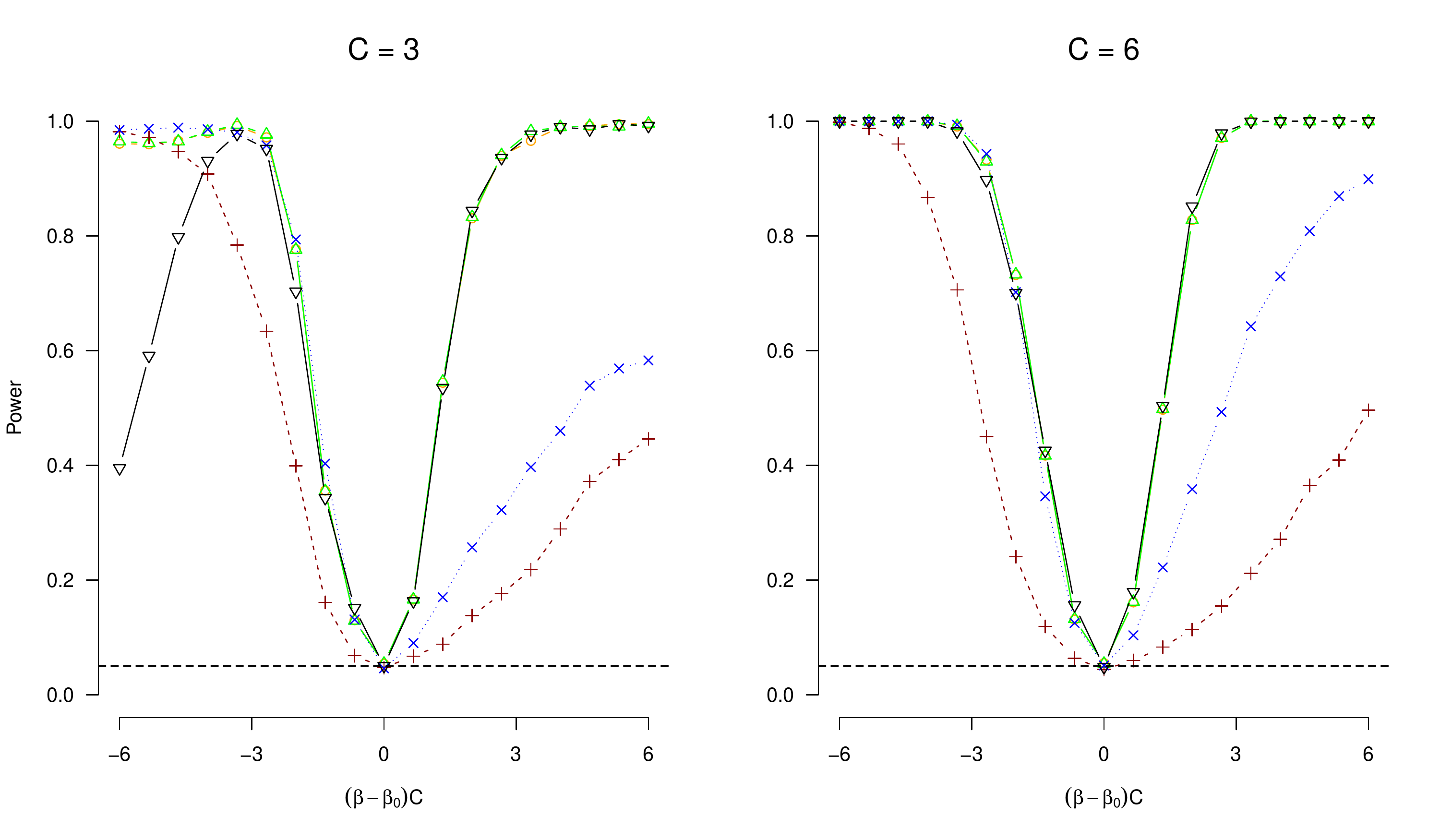}
	\caption{Power Curve for $\rho=0.7$, $p_1=0.01$, and $p_2=1.5$}
	\label{further_limit_fig3}
\end{figure}

\begin{figure}[H]
	\centering
	\includegraphics[width=0.9\textwidth,height = 5.85cm]{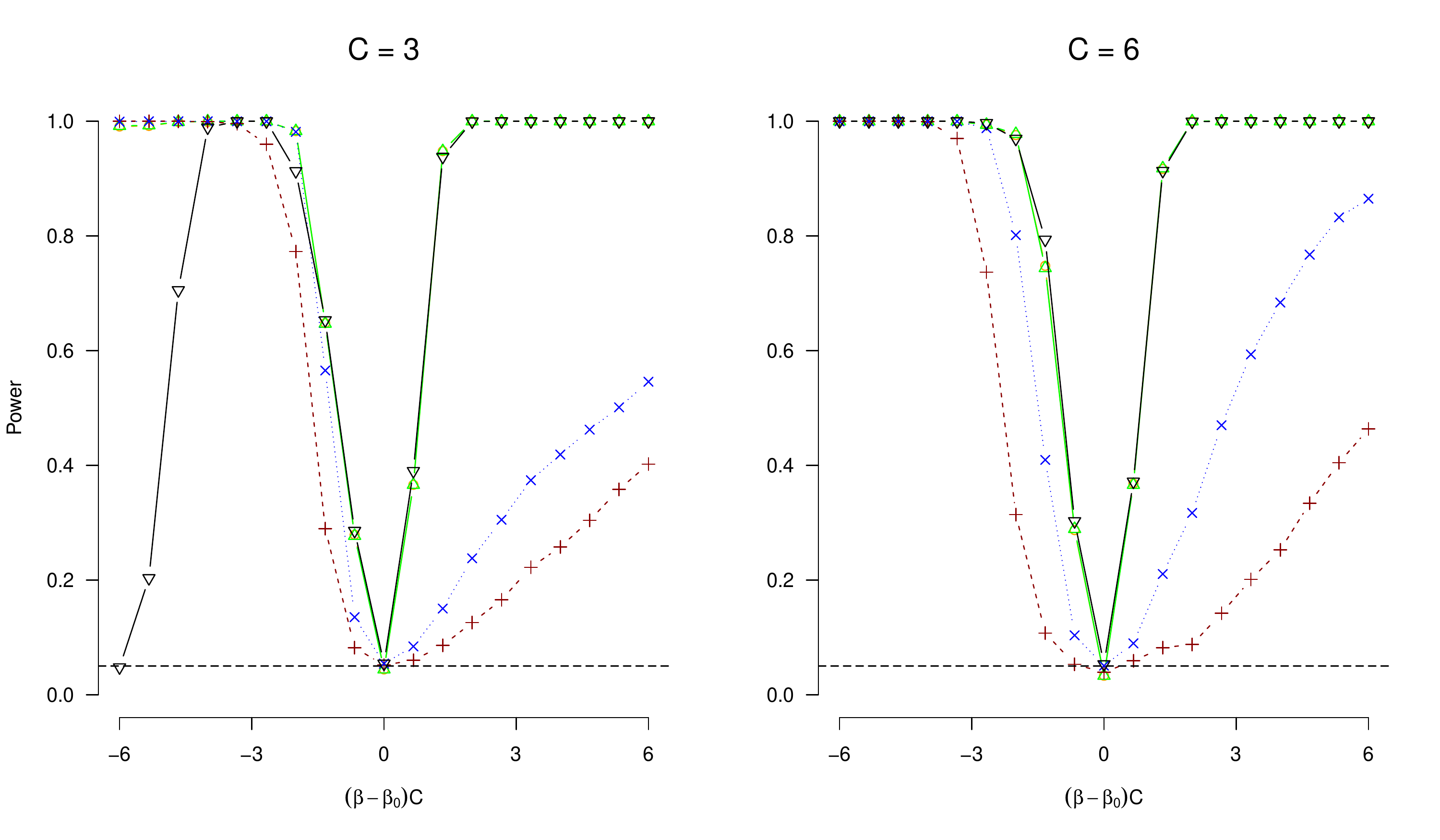}
	\caption{Power Curve for $\rho=0.9$, $p_1=0.01$, and $p_2=1.5$}
	\label{further_limit_fig4}
\end{figure}

\begin{figure}[H]
	\centering
	\includegraphics[width=0.9\textwidth,height = 5.85cm]{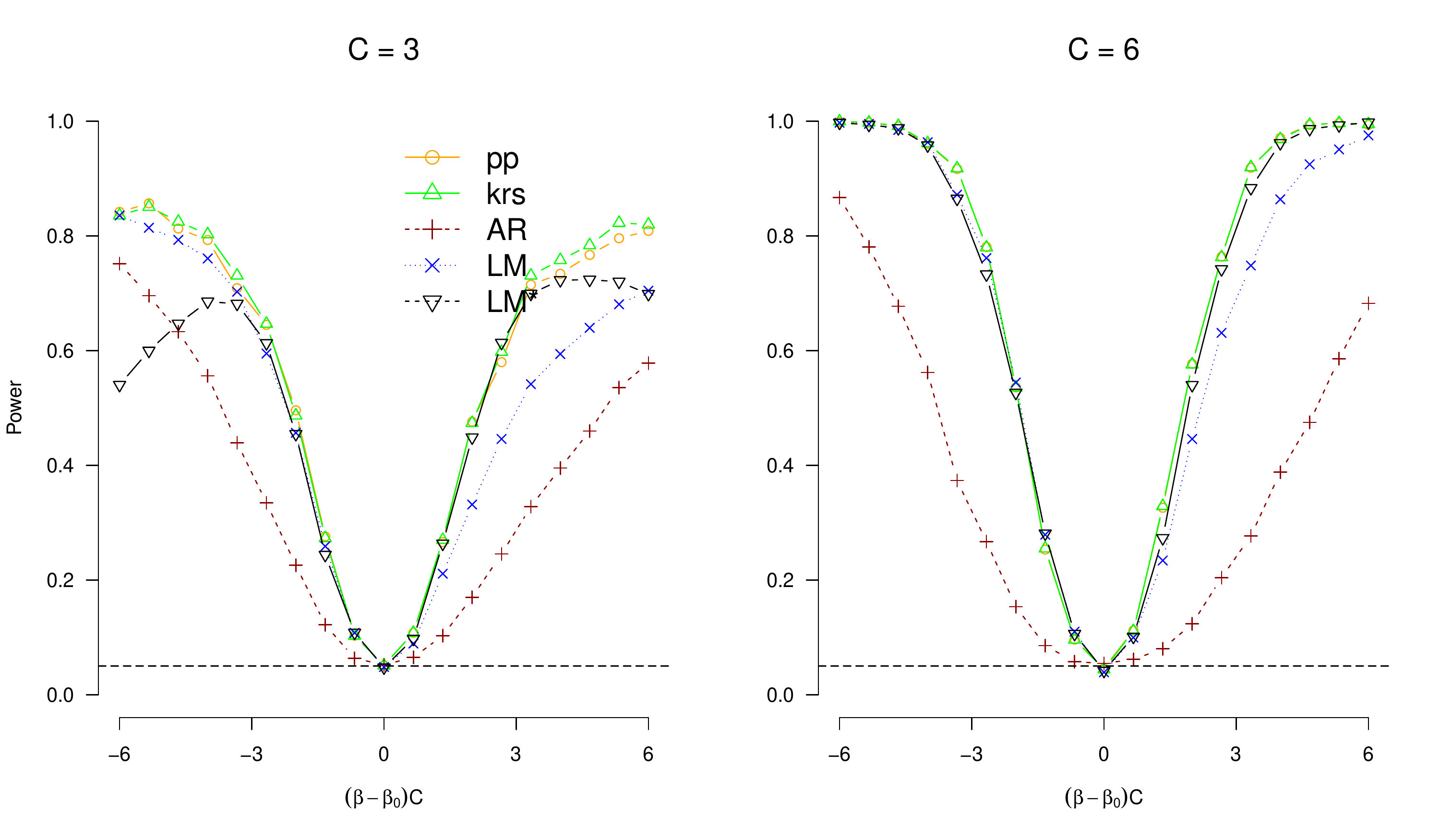}
	\caption{Power Curve for $\rho = 0.2$, $p_1=0.01$, and $p_2=2$}
	\label{further_limit_fig5}
\end{figure}

\begin{figure}[H]
	\centering
	\includegraphics[width=0.9\textwidth,height = 5.85cm]{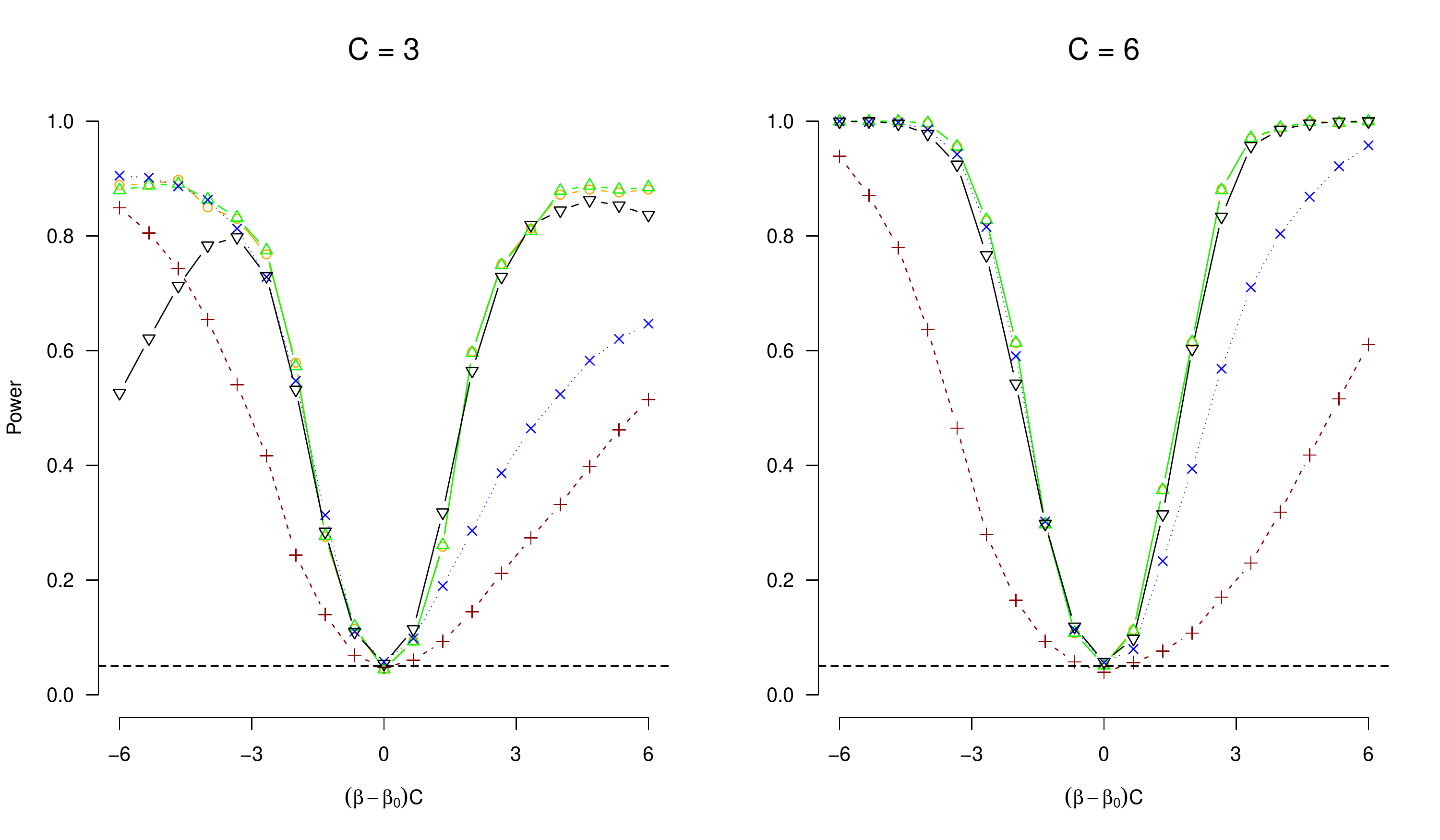}
	\caption{Power Curve for $\rho=0.4$, $p_1=0.01$, and $p_2=2$}
	\label{further_limit_fig6}
\end{figure}

\begin{figure}[H]
	\centering
	\includegraphics[width=0.9\textwidth,height = 5.85cm]{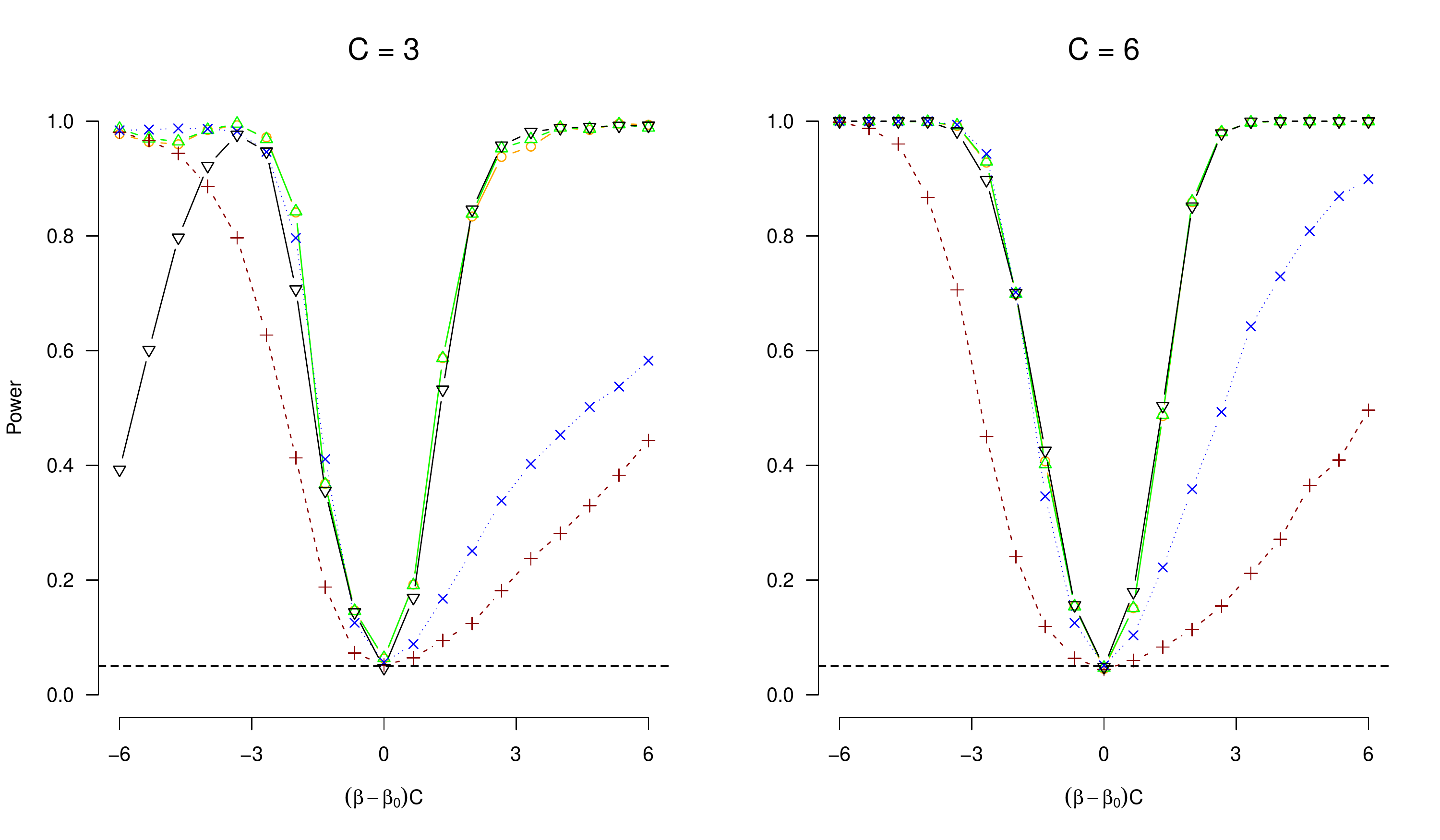}
	\caption{Power Curve for $\rho=0.7$, $p_1=0.01$, and $p_2=2$}
	\label{further_limit_fig7}
\end{figure}

\begin{figure}[H]
	\centering
	\includegraphics[width=0.9\textwidth,height = 5.85cm]{powercurve_rho_09_001_2.pdf}
	\caption{Power Curve for $\rho=0.9$, $p_1=0.01$, and $p_2=2$}
	\label{further_limit_fig8}
\end{figure}

\begin{figure}[H]
	\centering
	\includegraphics[width=0.9\textwidth,height = 5.85cm]{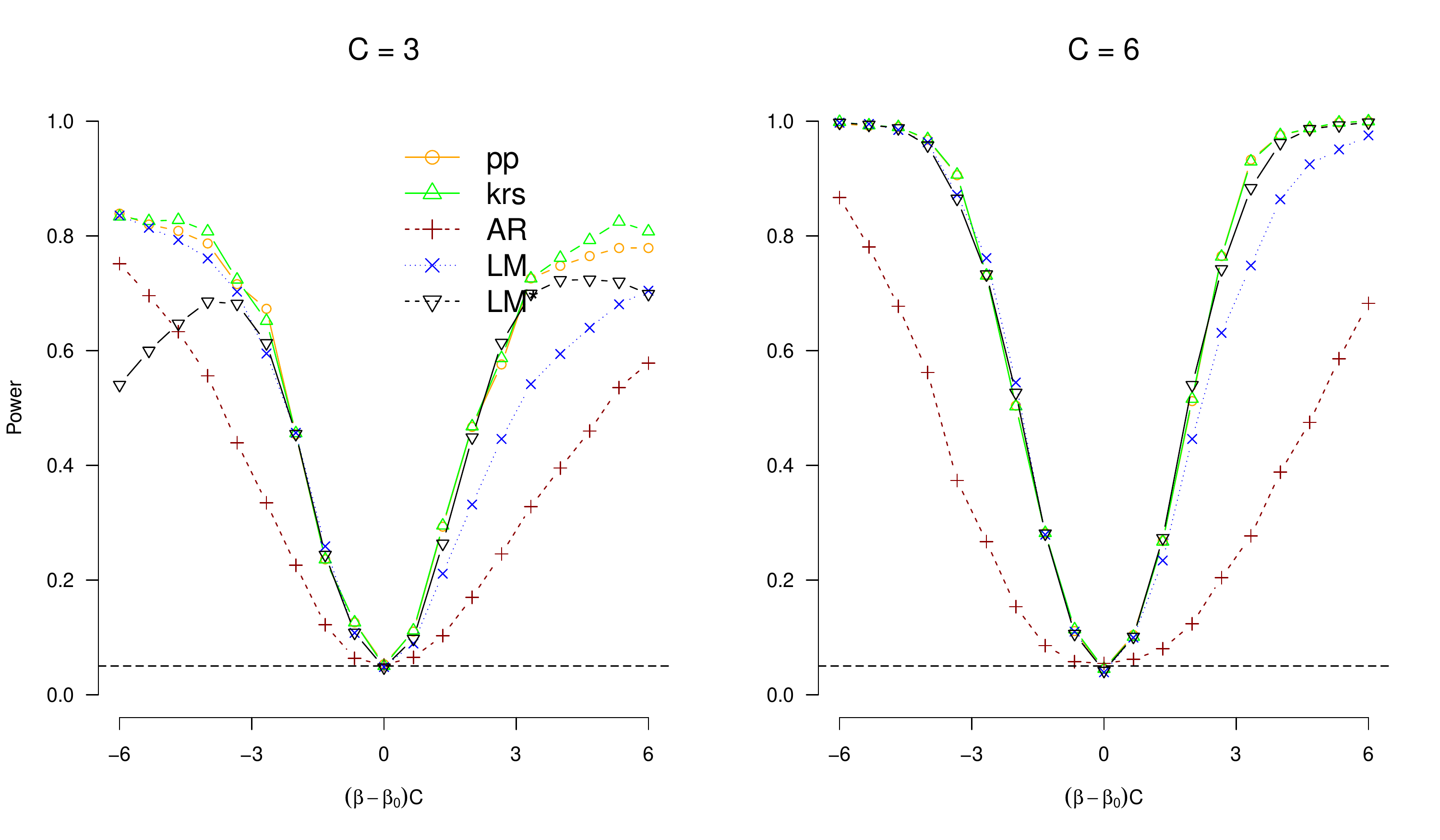}
	\caption{Power Curve for $\rho = 0.2$, $p_1=0.001$, and $p_2=1.1$}
	\label{fig_0001_11_rho02}
\end{figure}

\begin{figure}[H]
	\centering
	\includegraphics[width=0.9\textwidth,height = 5.85cm]{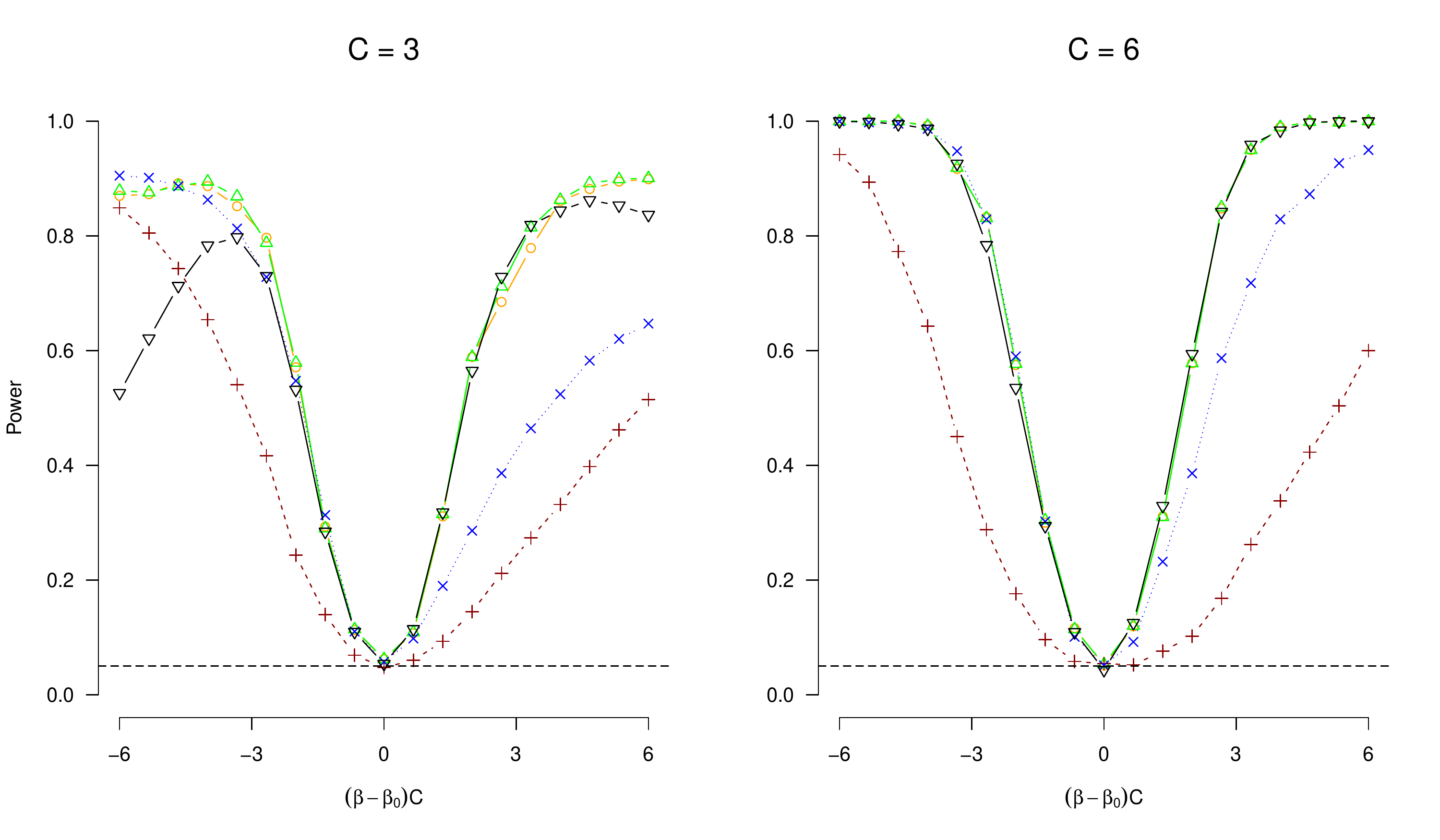}
	\caption{Power Curve for $\rho=0.4$, $p_1=0.001$, and $p_2=1.1$}
	\label{fig_0001_11_rho04}
\end{figure}

\begin{figure}[H]
	\centering
	\includegraphics[width=0.9\textwidth,height = 5.85cm]{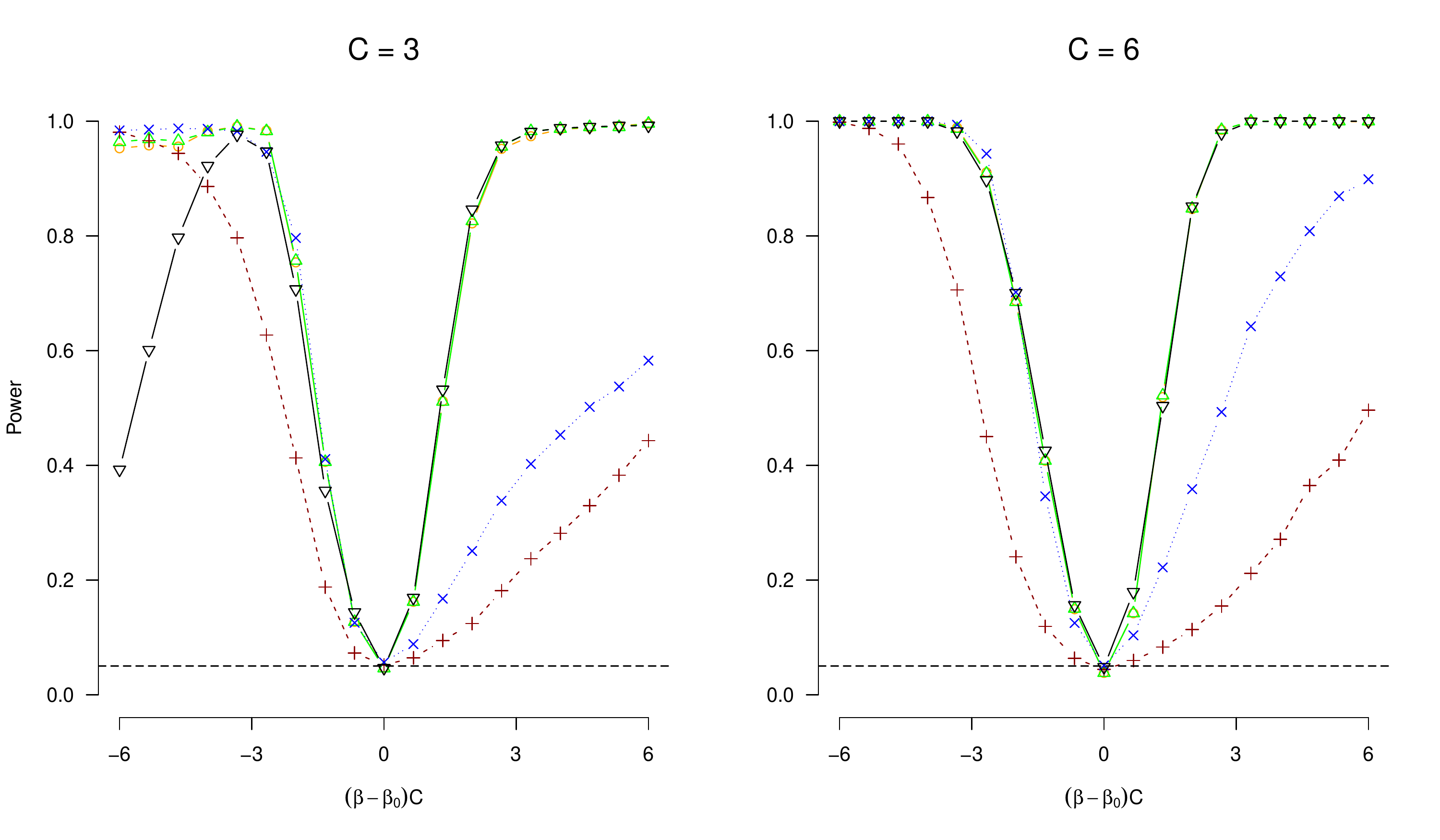}
	\caption{Power Curve for $\rho=0.7$, $p_1=0.001$, and $p_2=1.1$}
	\label{fig_0001_11_rho07}
\end{figure}

\begin{figure}[H]
	\centering
	\includegraphics[width=0.9\textwidth,height = 5.85cm]{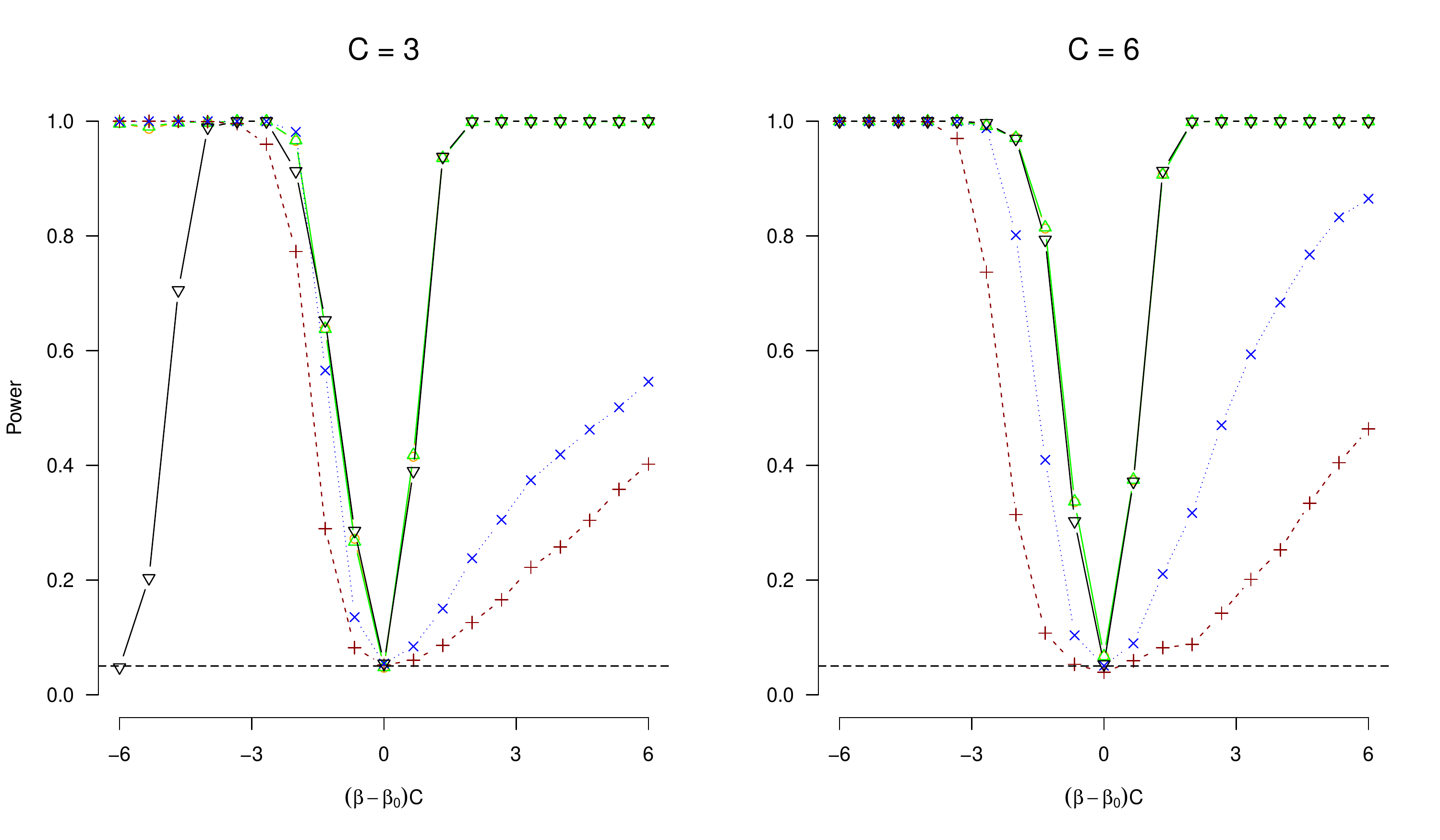}
	\caption{Power Curve for $\rho=0.9$, $p_1=0.001$, and $p_2=1.1$}
	\label{fig_0001_11_rho09}
\end{figure}

\begin{figure}[H]
	\centering
	\includegraphics[width=0.9\textwidth,height = 5.85cm]{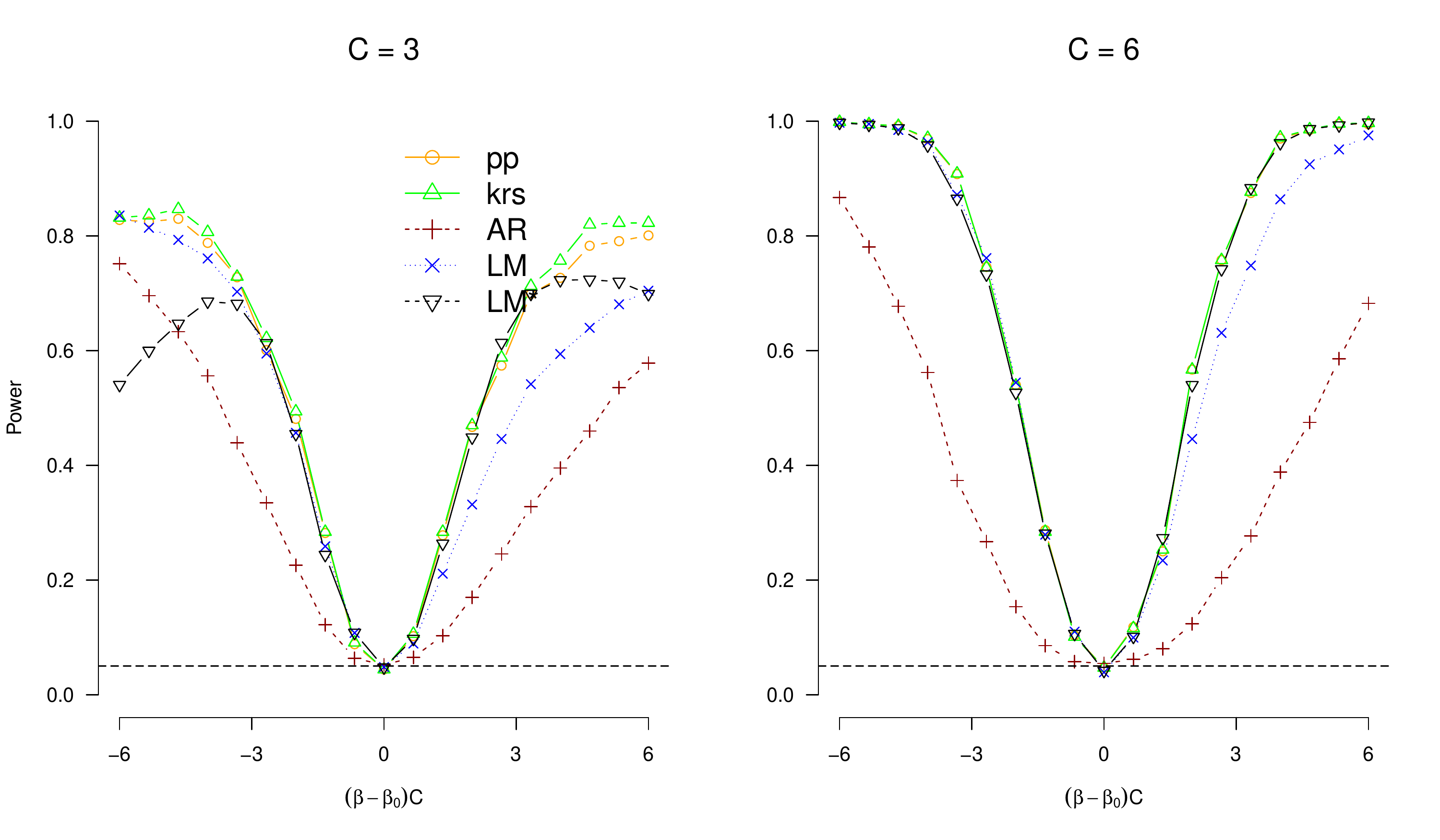}
	\caption{Power Curve for $\rho = 0.2$, $p_1=0.001$, and $p_2=1.5$}
	\label{fig_0001_15_rho02}
\end{figure}

\begin{figure}[H]
	\centering
	\includegraphics[width=0.9\textwidth,height = 5.85cm]{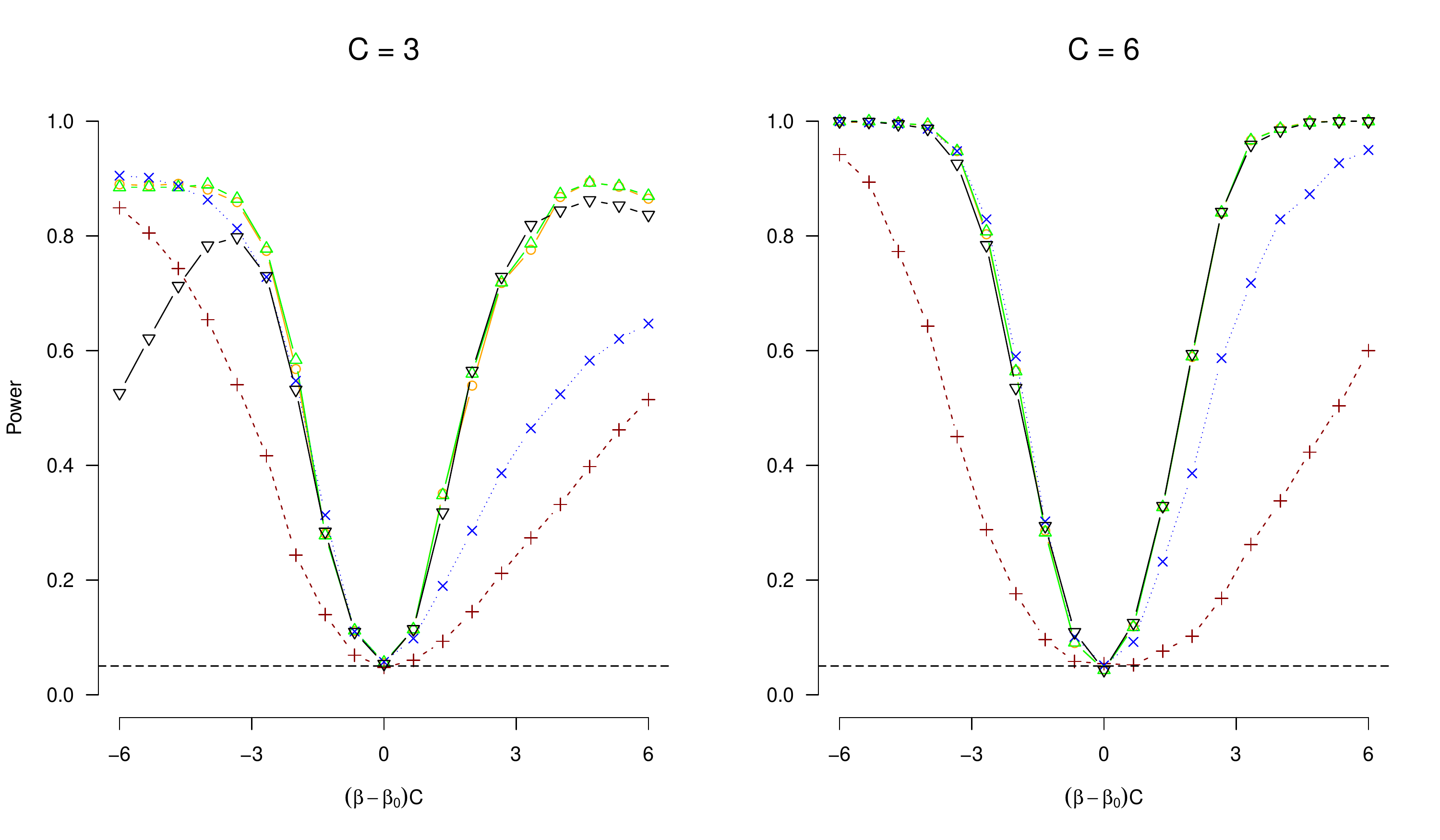}
	\caption{Power Curve for $\rho=0.4$, $p_1=0.001$, and $p_2=1.5$}
	\label{fig_0001_15_rho04}
\end{figure}

\begin{figure}[H]
	\centering
	\includegraphics[width=0.9\textwidth,height = 5.85cm]{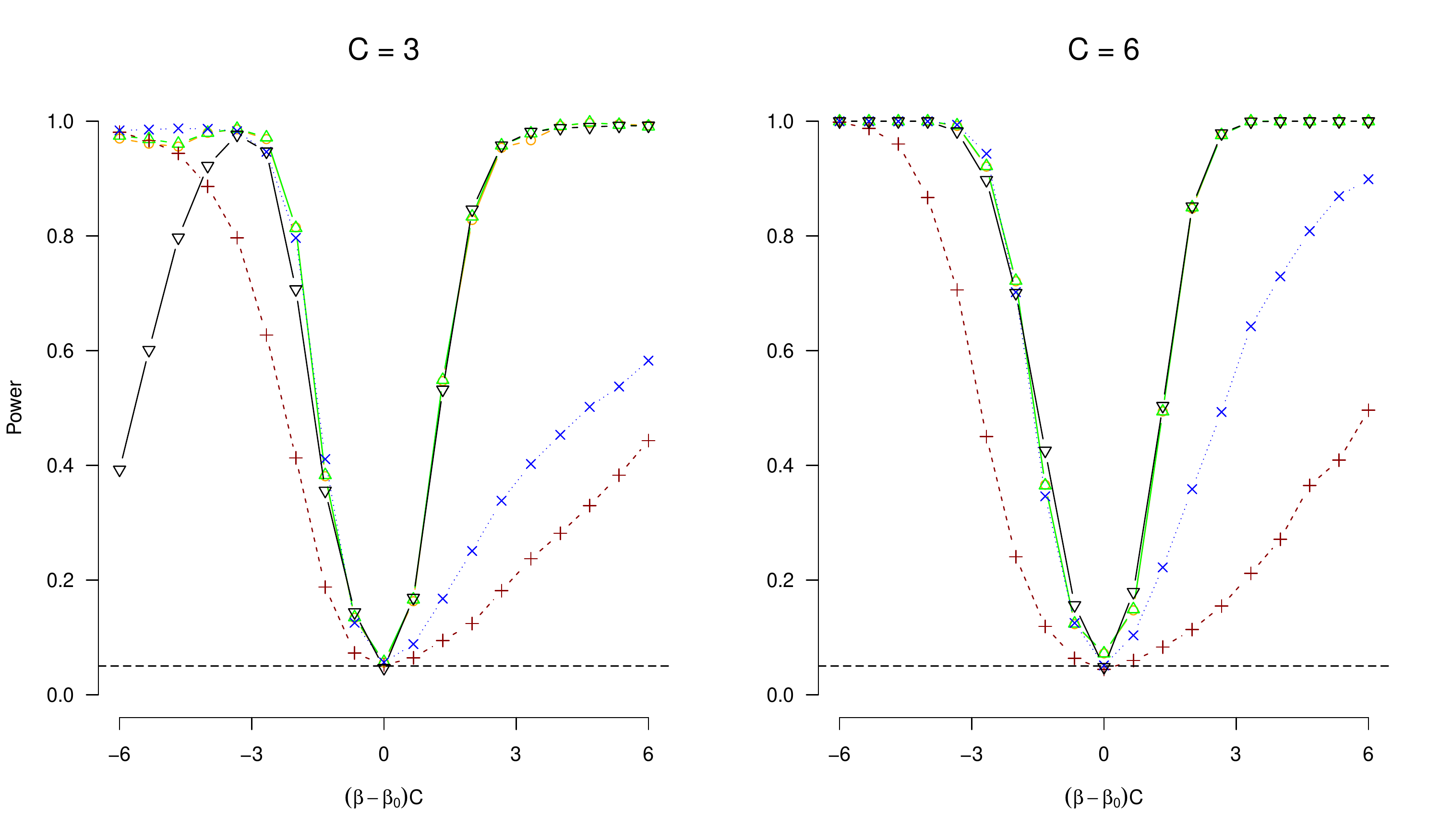}
	\caption{Power Curve for $\rho=0.7$, $p_1=0.001$, and $p_2=1.5$}
	\label{fig_0001_15_rho07}
\end{figure}

\begin{figure}[H]
	\centering
	\includegraphics[width=0.9\textwidth,height = 5.85cm]{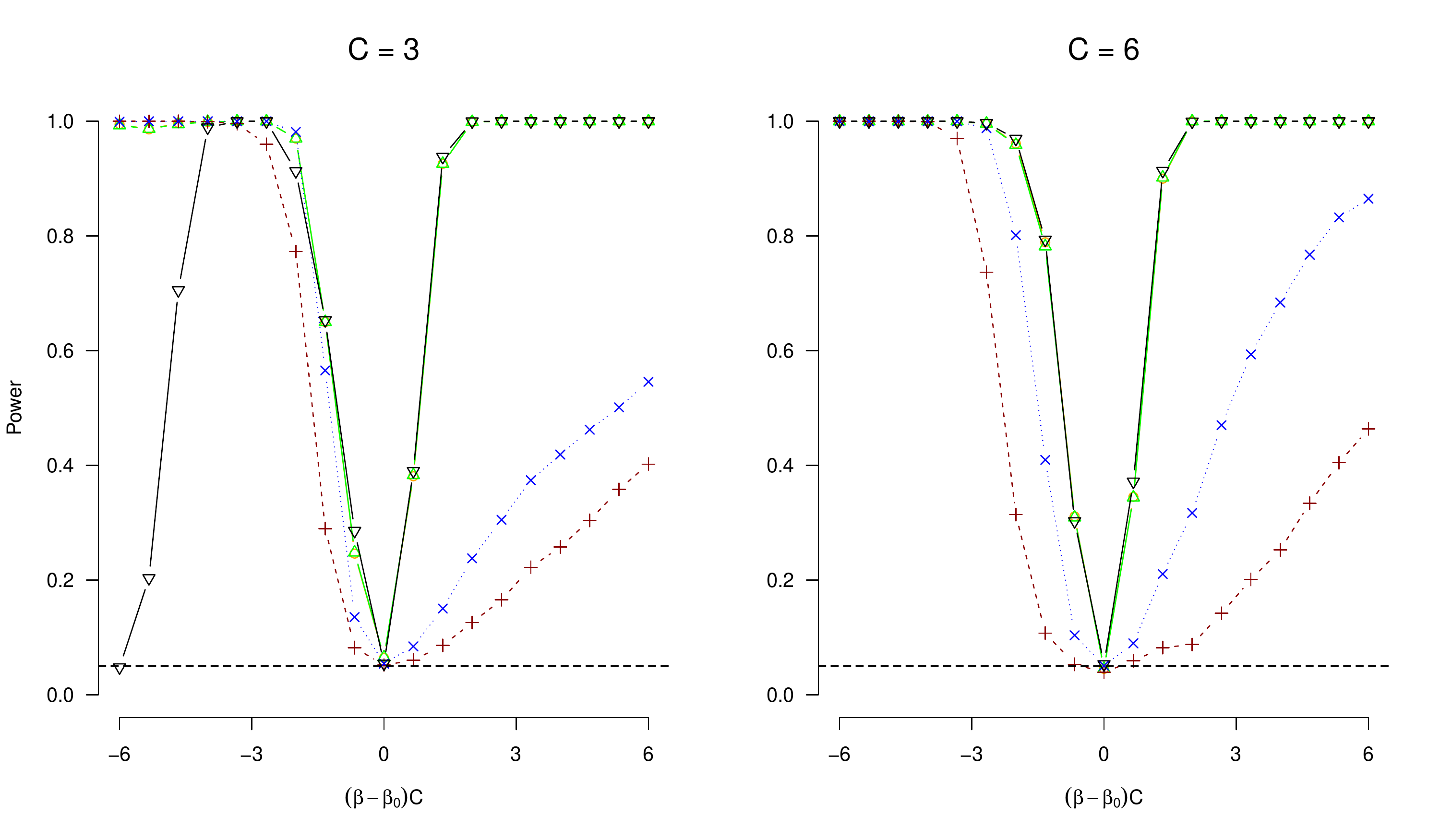}
	\caption{Power Curve for $\rho=0.9$, $p_1=0.001$, and $p_2=1.5$}
	\label{fig_0001_15_rho09}
\end{figure}

\begin{figure}[H]
	\centering
	\includegraphics[width=0.9\textwidth,height = 5.85cm]{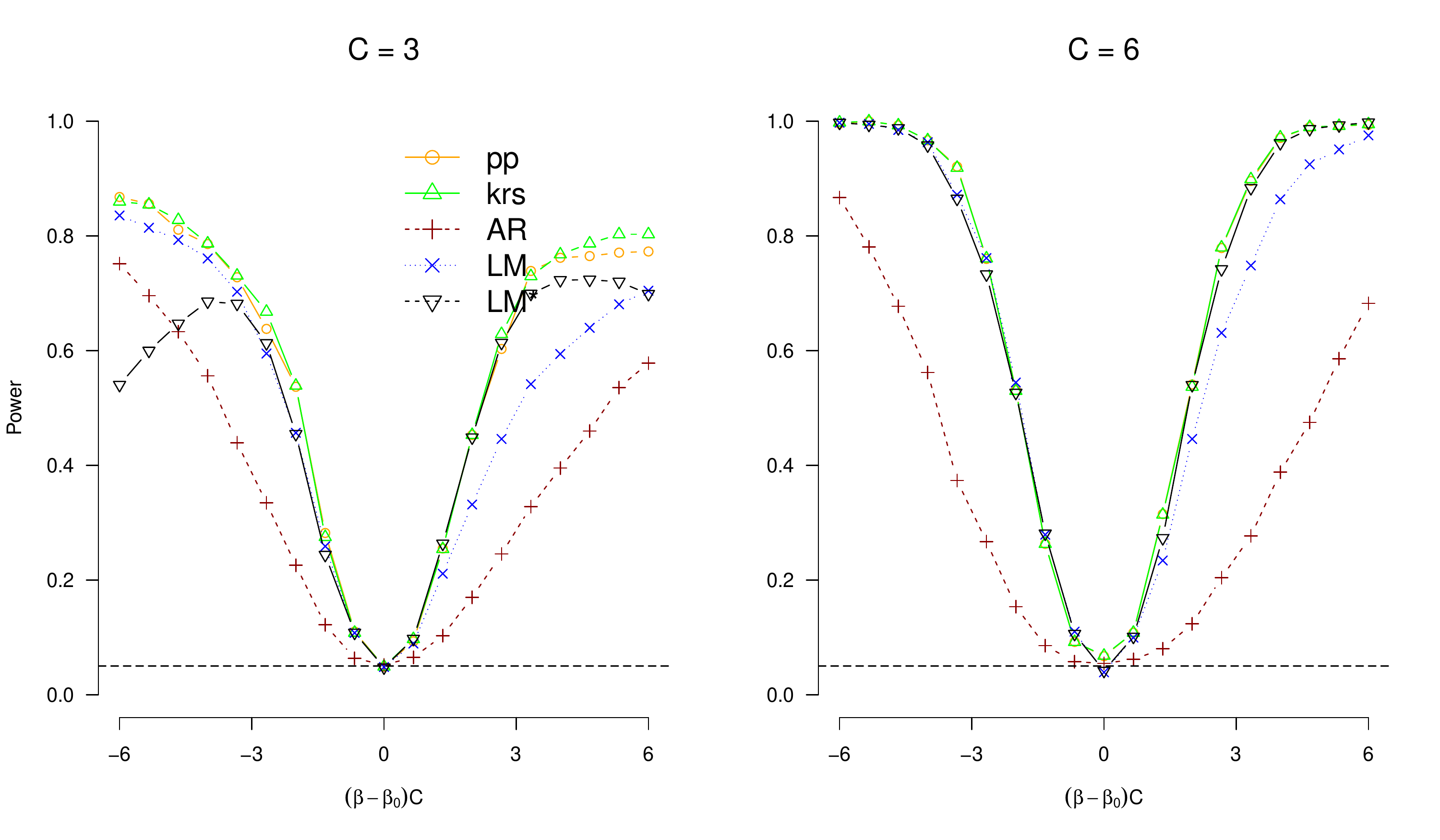}
	\caption{Power Curve for $\rho = 0.2$, $p_1=0.001$, and $p_2=2$}
	\label{fig_0001_2_rho02}
\end{figure}

\begin{figure}[H]
	\centering
	\includegraphics[width=0.9\textwidth,height = 5.85cm]{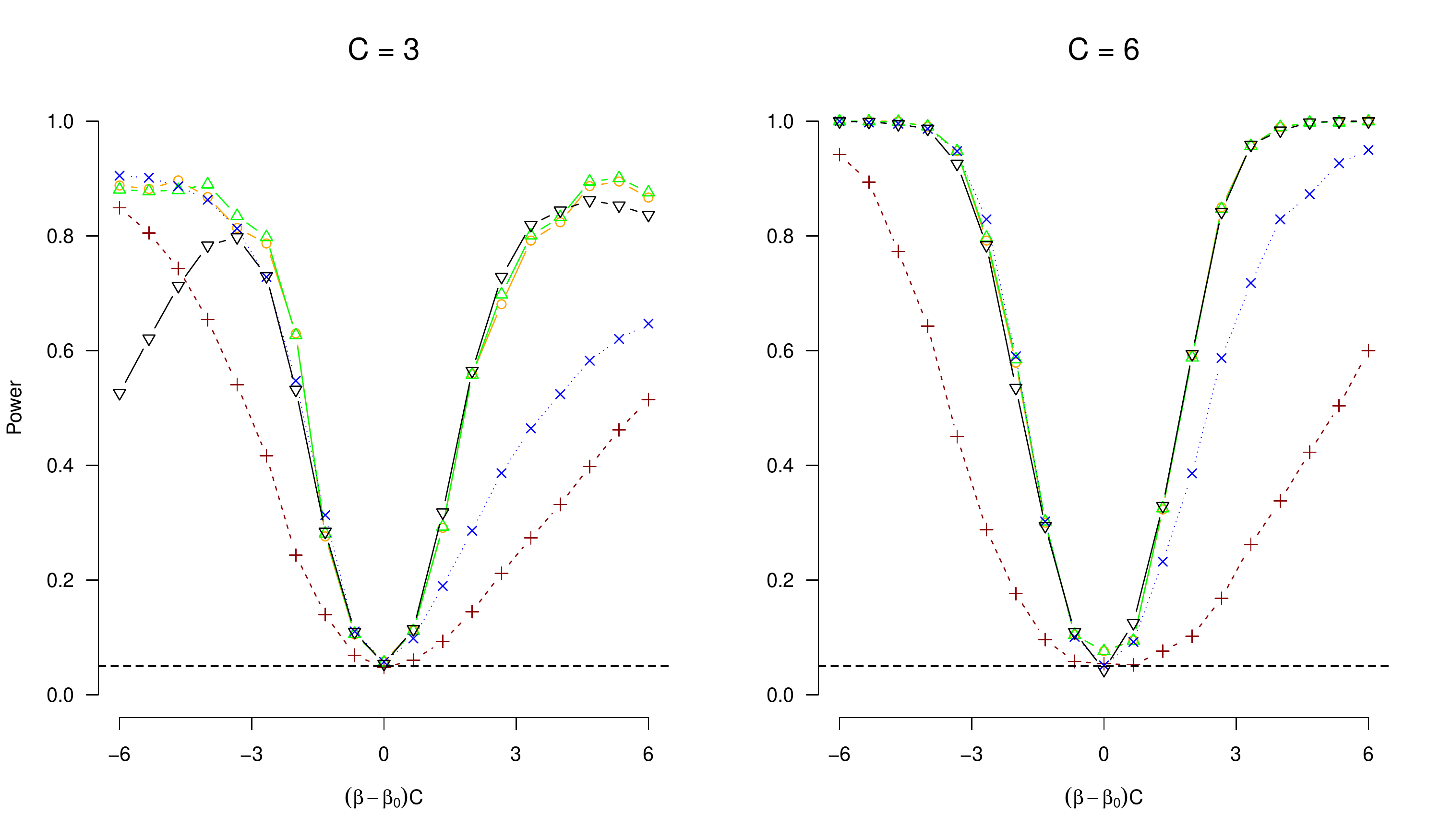}
	\caption{Power Curve for $\rho=0.4$, $p_1=0.001$, and $p_2=2$}
	\label{fig_0001_2_rho04}
\end{figure}

\begin{figure}[H]
	\centering
	\includegraphics[width=0.9\textwidth,height = 5.85cm]{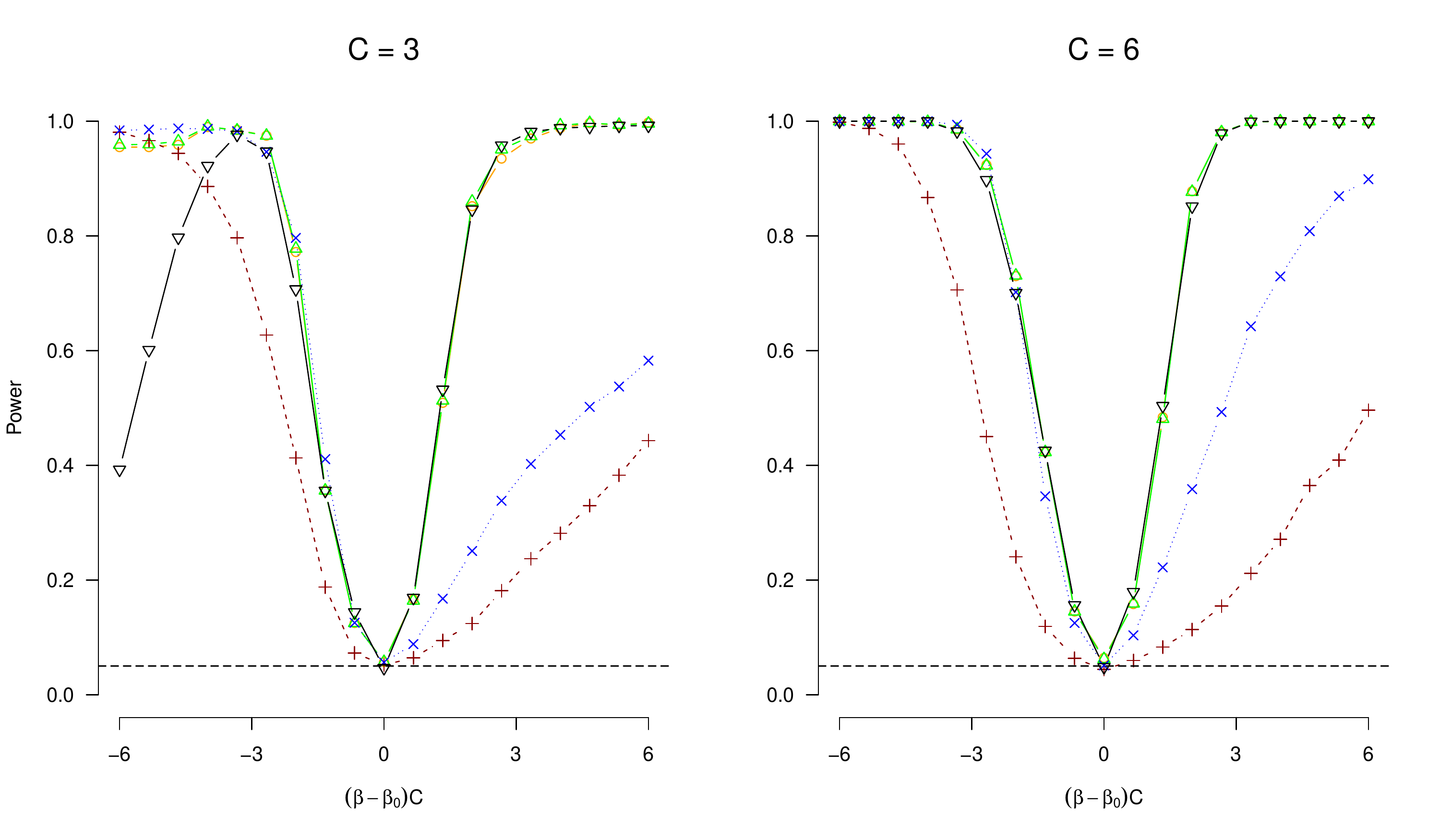}
	\caption{Power Curve for $\rho=0.7$, $p_1=0.001$, and $p_2=2$}
	\label{fig_0001_2_rho07}
\end{figure}

\begin{figure}[H]
	\centering
	\includegraphics[width=0.9\textwidth,height = 5.85cm]{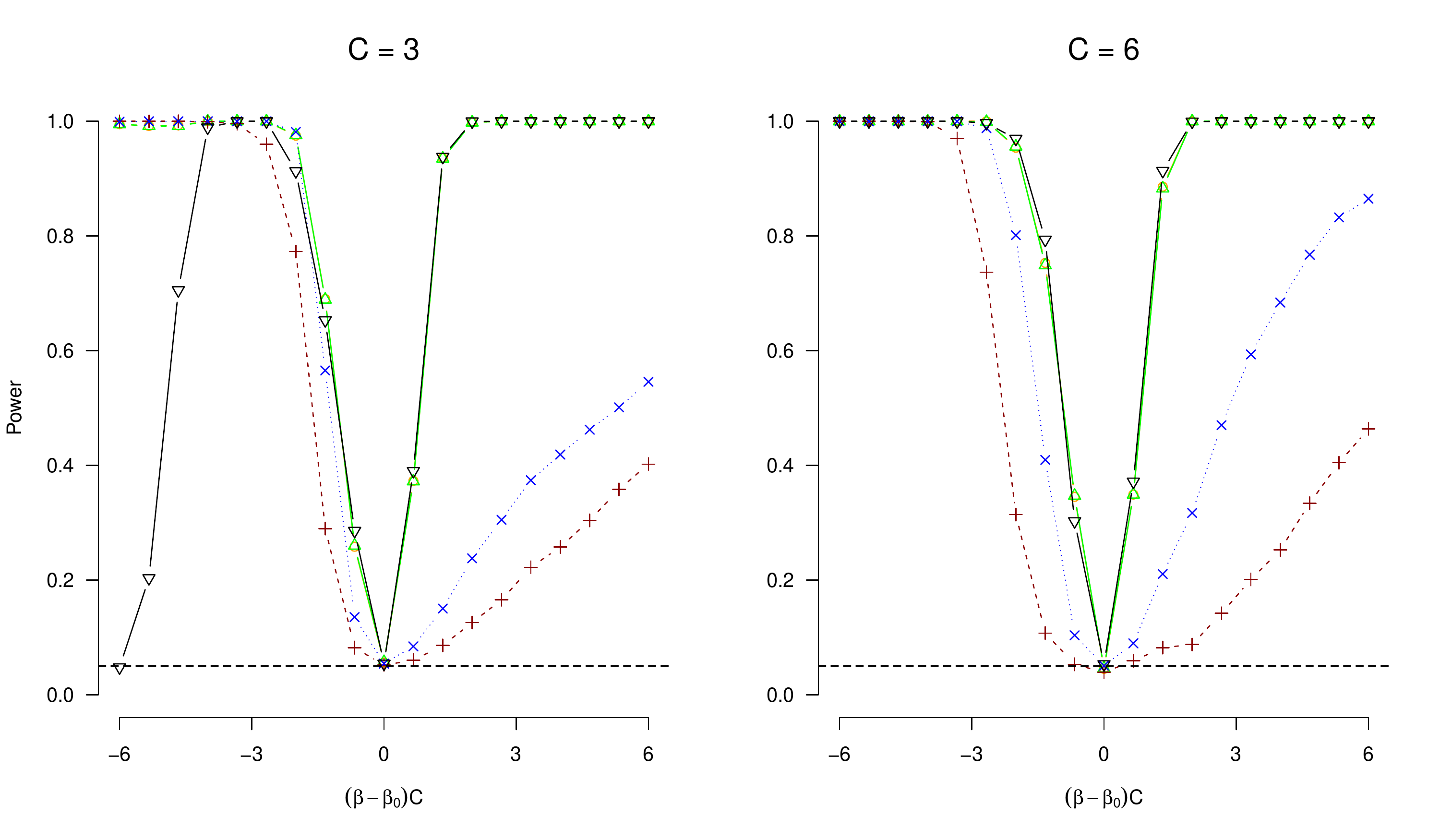}
	\caption{Power Curve for $\rho=0.9$, $p_1=0.001$, and $p_2=2$}
	\label{fig_0001_2_rho09}
\end{figure}

\begin{figure}[H]
	\centering
	\includegraphics[width=0.9\textwidth,height = 5.85cm]{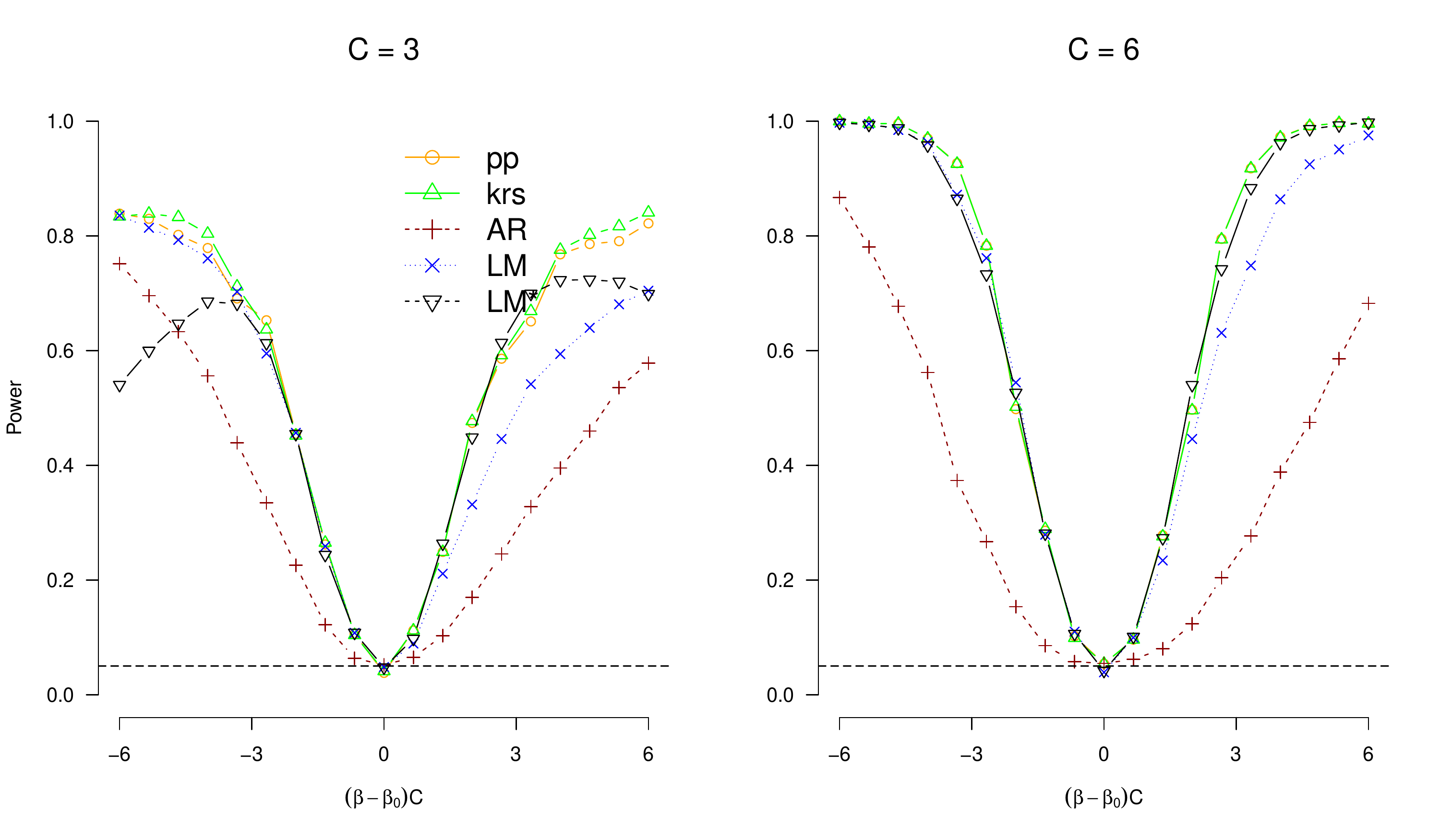}
	\caption{Power Curve for $\rho = 0.2$, $p_1=0.1$, and $p_2=1.1$}
	\label{fig_01_11_rho02}
\end{figure}

\begin{figure}[H]
	\centering
	\includegraphics[width=0.9\textwidth,height = 5.85cm]{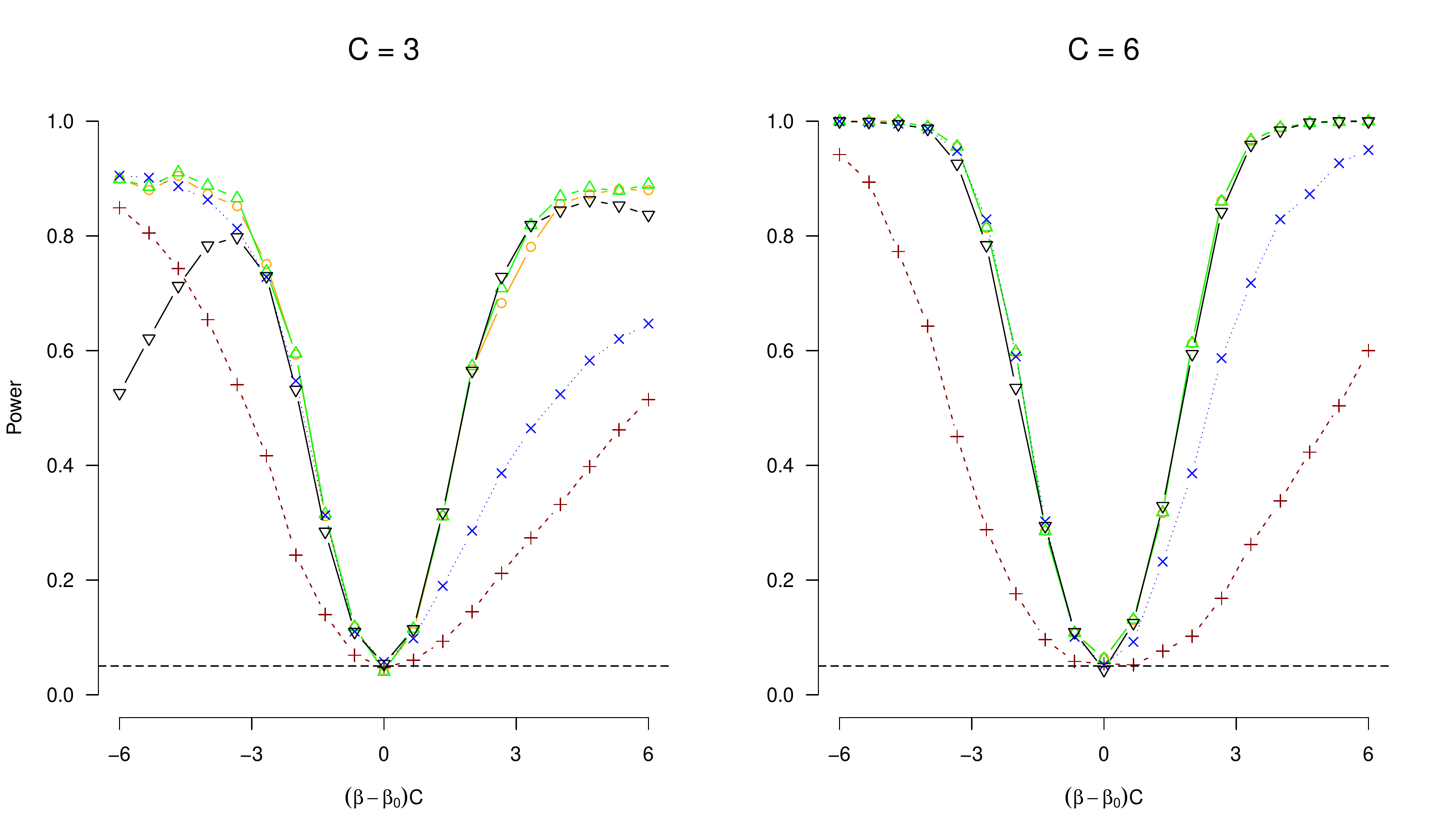}
	\caption{Power Curve for $\rho=0.4$, $p_1=0.1$, and $p_2=1.1$}
	\label{fig_01_11_rho04}
\end{figure}

\begin{figure}[H]
	\centering
	\includegraphics[width=0.9\textwidth,height = 5.85cm]{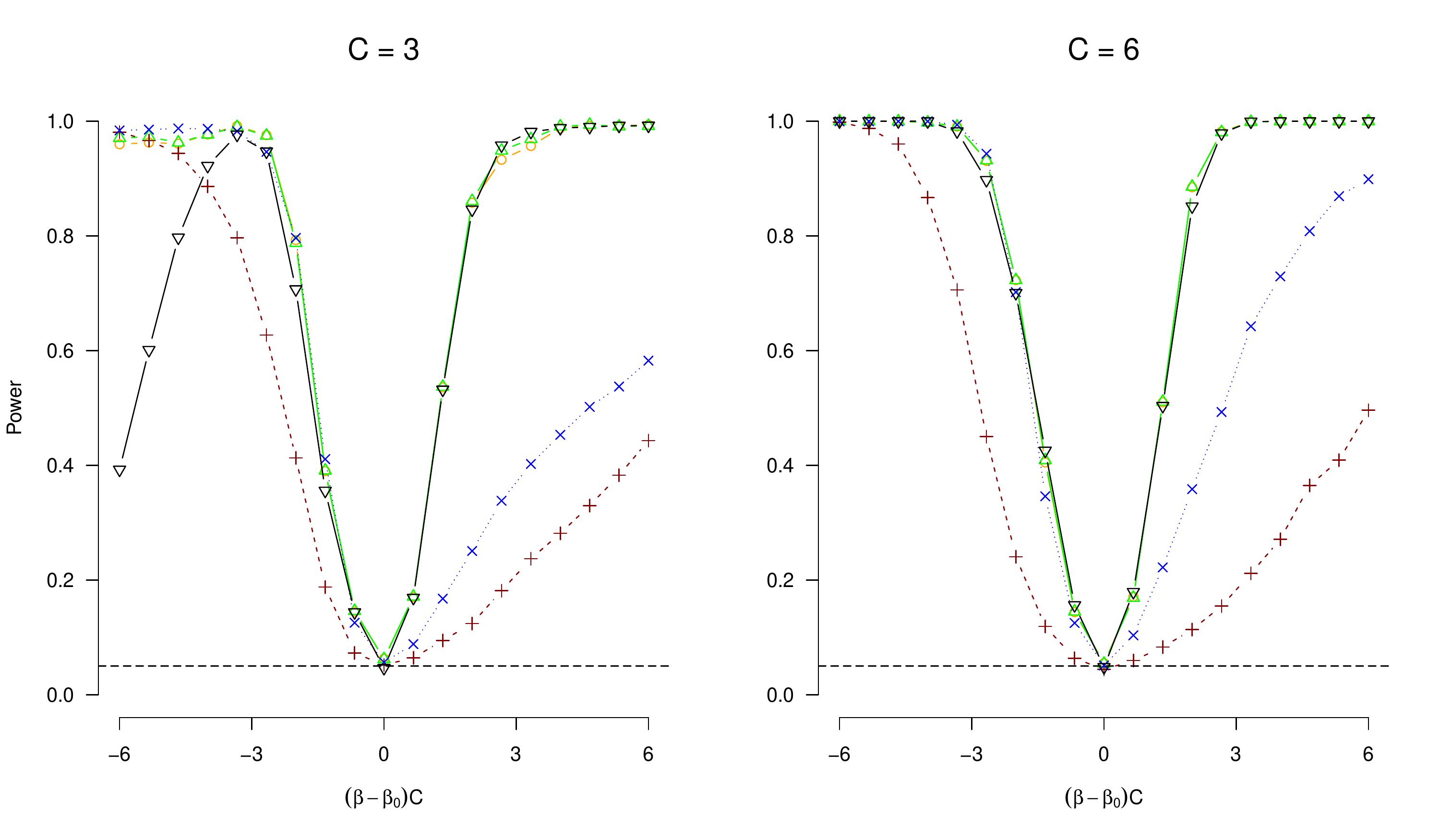}
	\caption{Power Curve for $\rho=0.7$, $p_1=0.1$, and $p_2=1.1$}
	\label{fig_01_11_rho07}
\end{figure}

\begin{figure}[H]
	\centering
	\includegraphics[width=0.9\textwidth,height = 5.85cm]{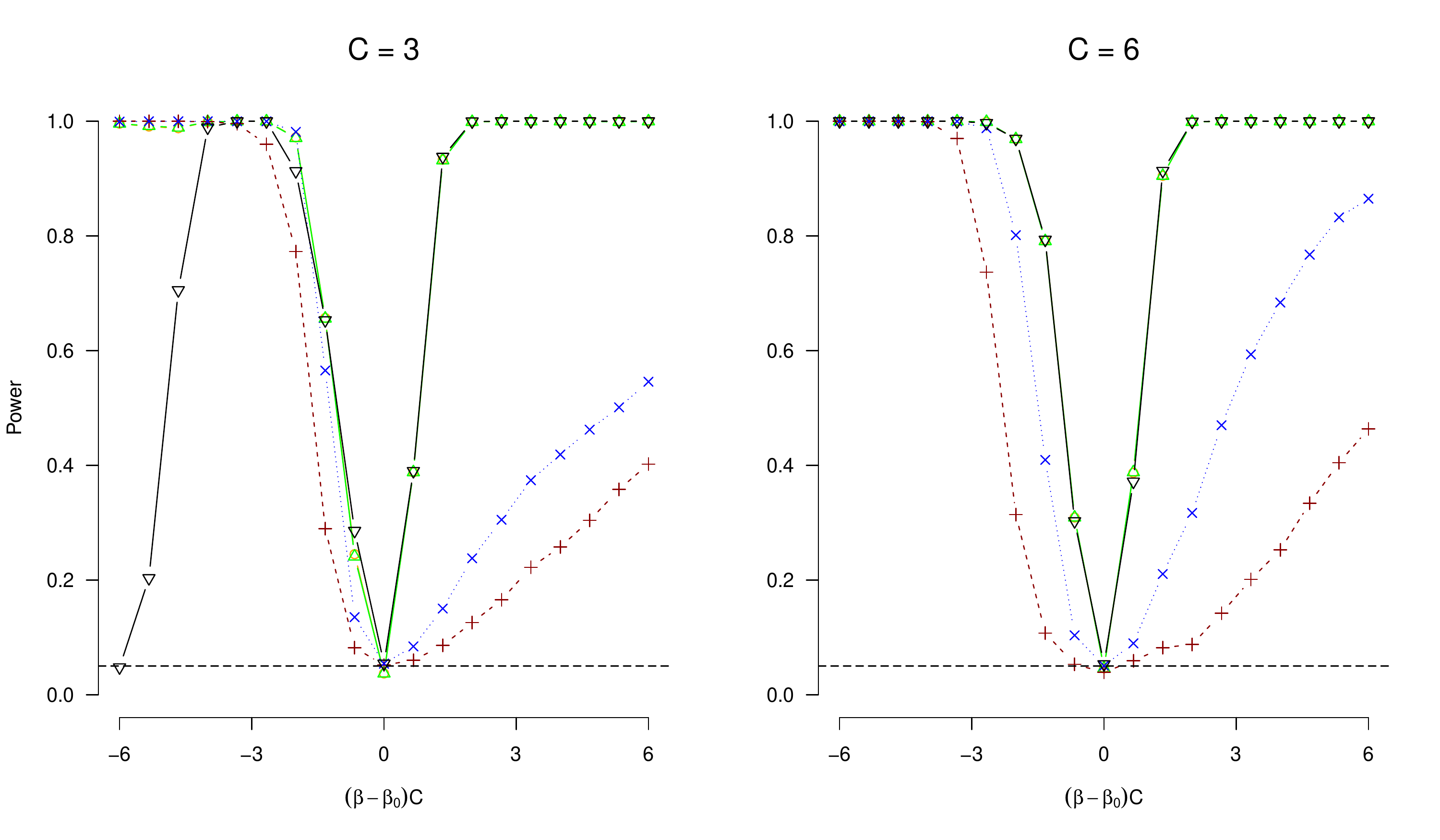}
	\caption{Power Curve for $\rho=0.9$, $p_1=0.1$, and $p_2=1.1$}
	\label{fig_01_11_rho09}
\end{figure}

\begin{figure}[H]
	\centering
	\includegraphics[width=0.9\textwidth,height = 5.85cm]{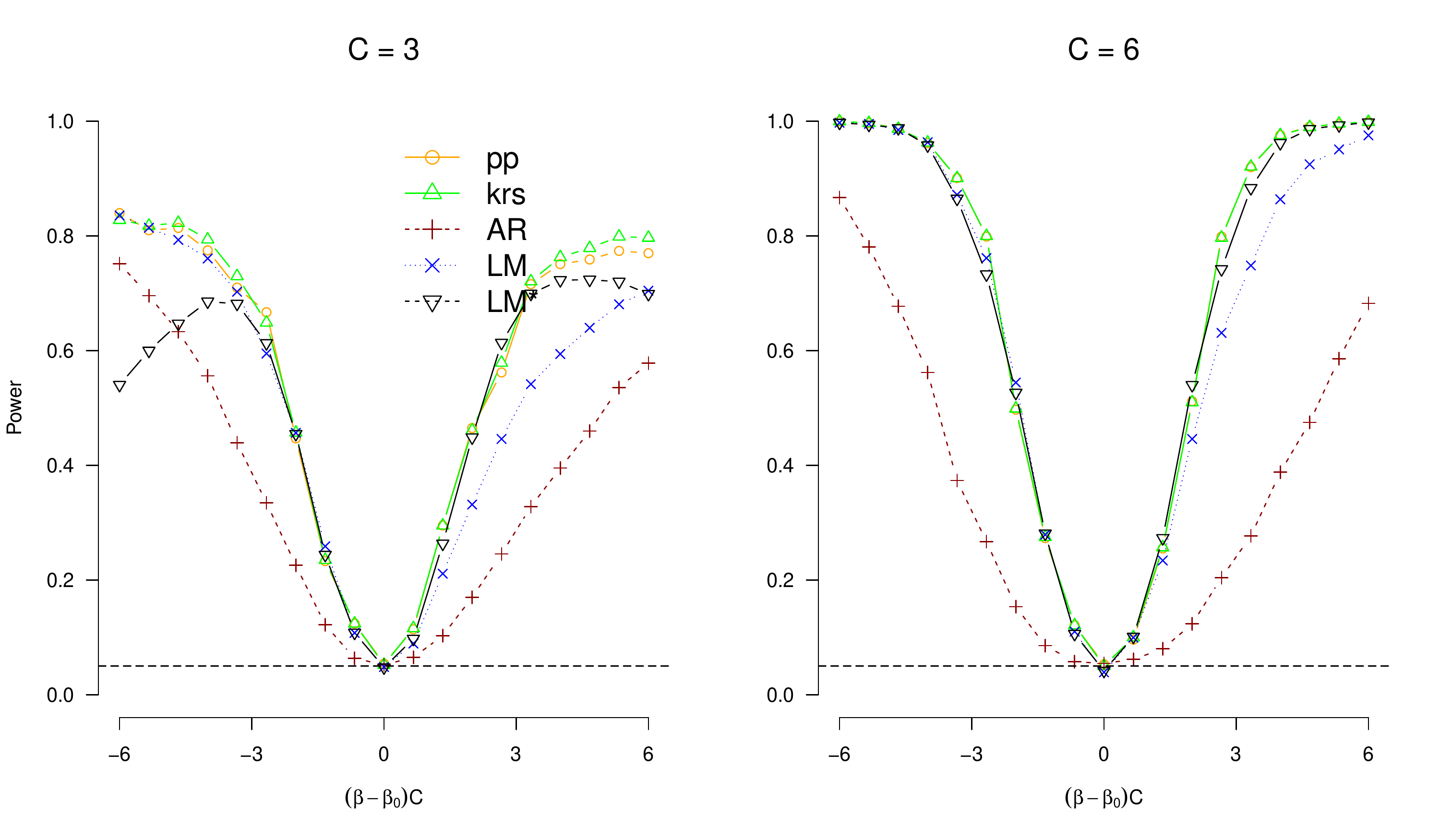}
	\caption{Power Curve for $\rho = 0.2$, $p_1=0.1$, and $p_2=1.5$}
	\label{further_limit_fig9}
\end{figure}

\begin{figure}[H]
	\centering
	\includegraphics[width=0.9\textwidth,height = 5.85cm]{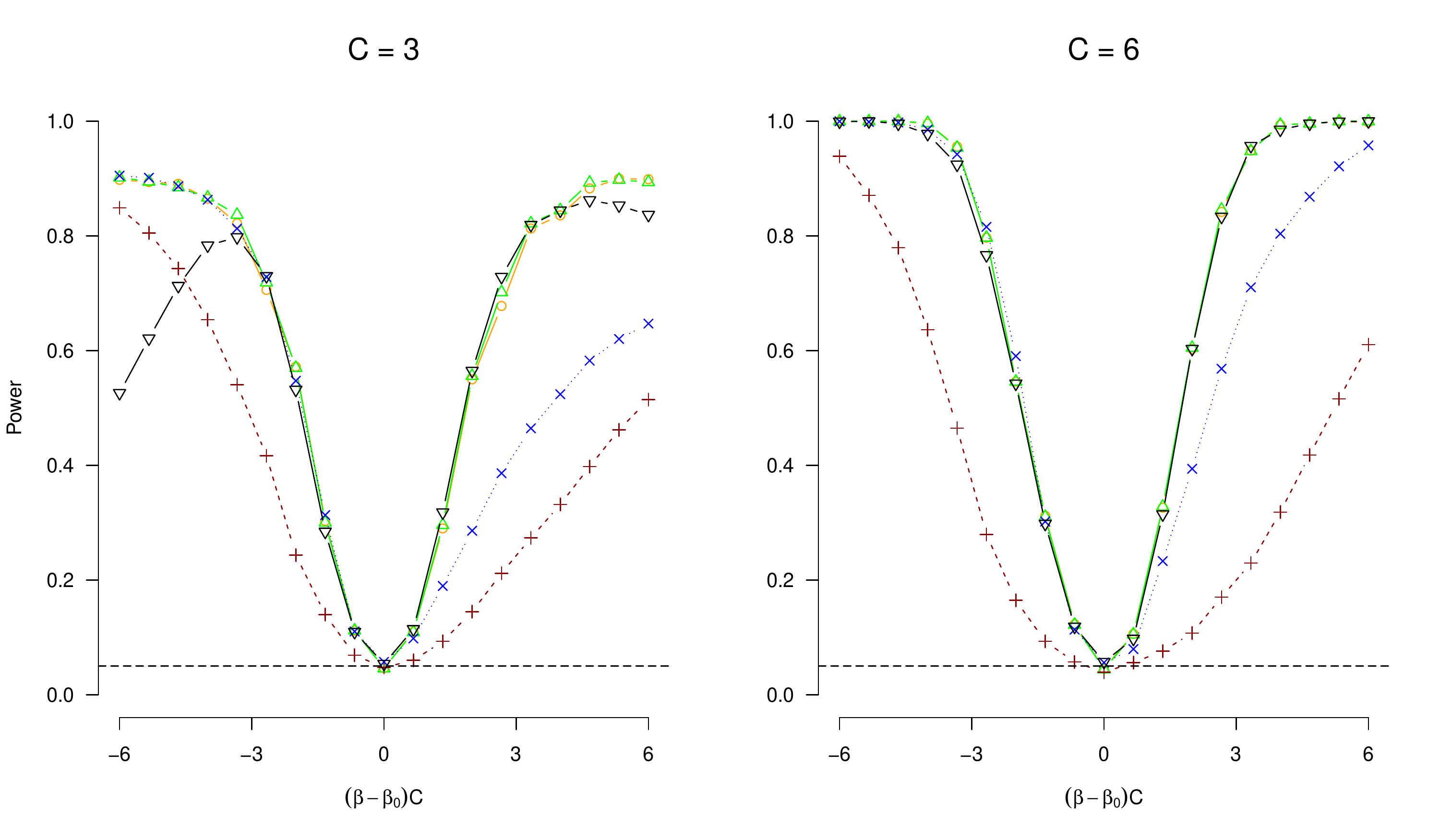}
	\caption{Power Curve for $\rho=0.4$, $p_1=0.1$, and $p_2=1.5$}
	\label{further_limit_fig10}
\end{figure}

\begin{figure}[H]
	\centering
	\includegraphics[width=0.9\textwidth,height = 5.85cm]{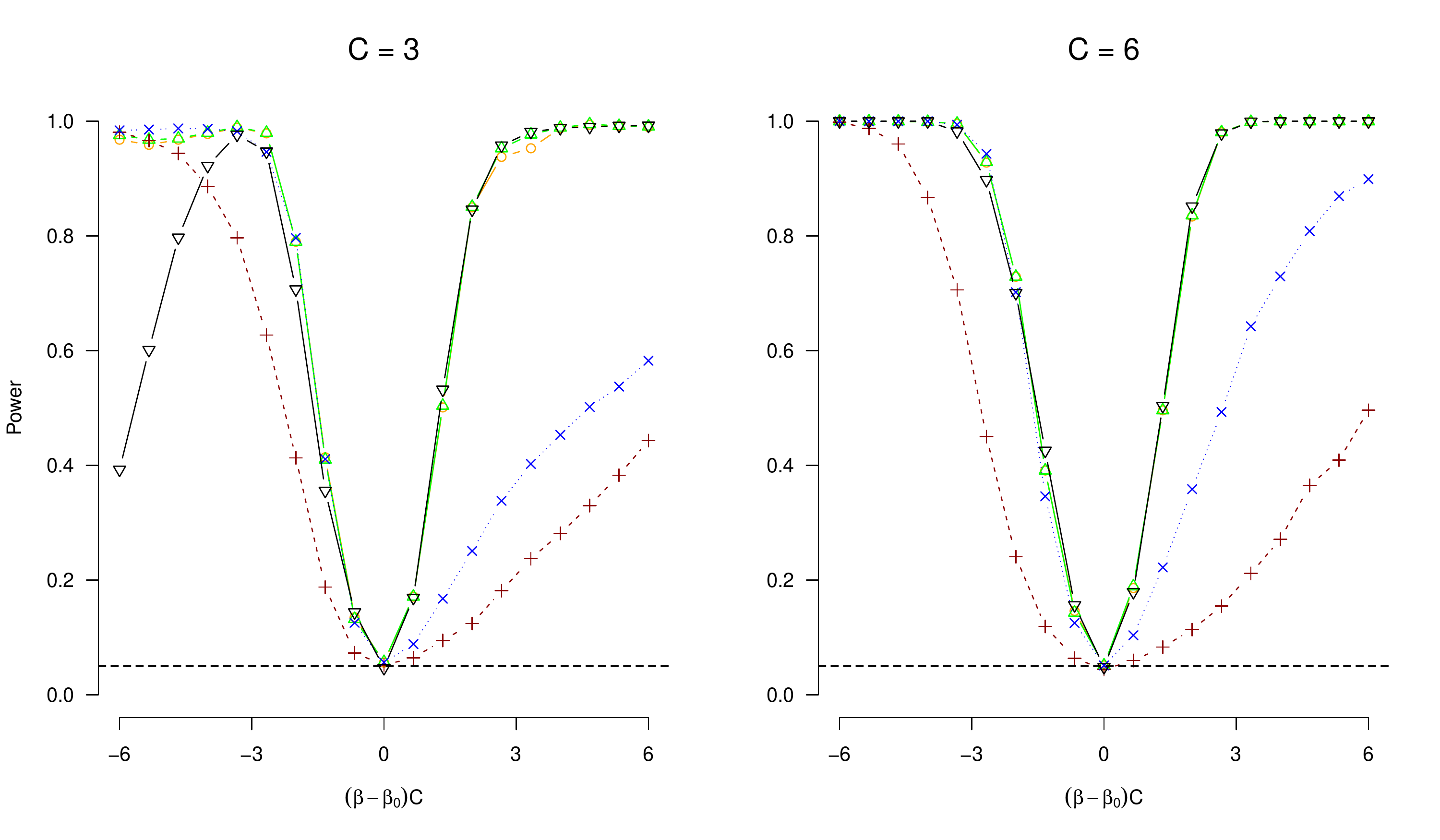}
	\caption{Power Curve for $\rho=0.7$, $p_1=0.1$, and $p_2=1.5$}
	\label{further_limit_fig11}
\end{figure}

\begin{figure}[H]
	\centering
	\includegraphics[width=0.9\textwidth,height = 5.85cm]{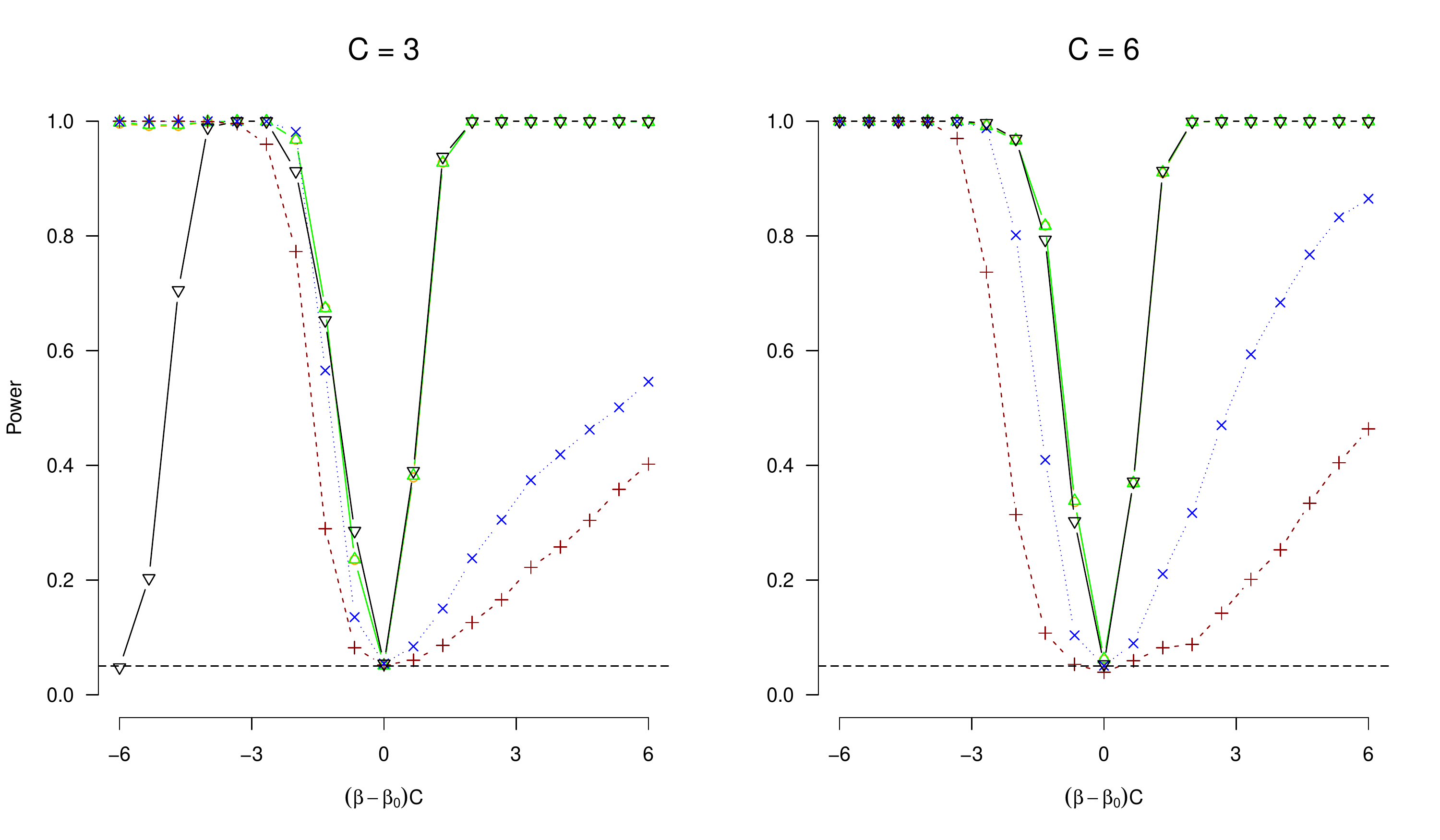}
	\caption{Power Curve for $\rho=0.9$, $p_1=0.1$, and $p_2=1.5$}
	\label{further_limit_fig12}
\end{figure}

\begin{figure}[H]
	\centering
	\includegraphics[width=0.9\textwidth,height = 5.85cm]{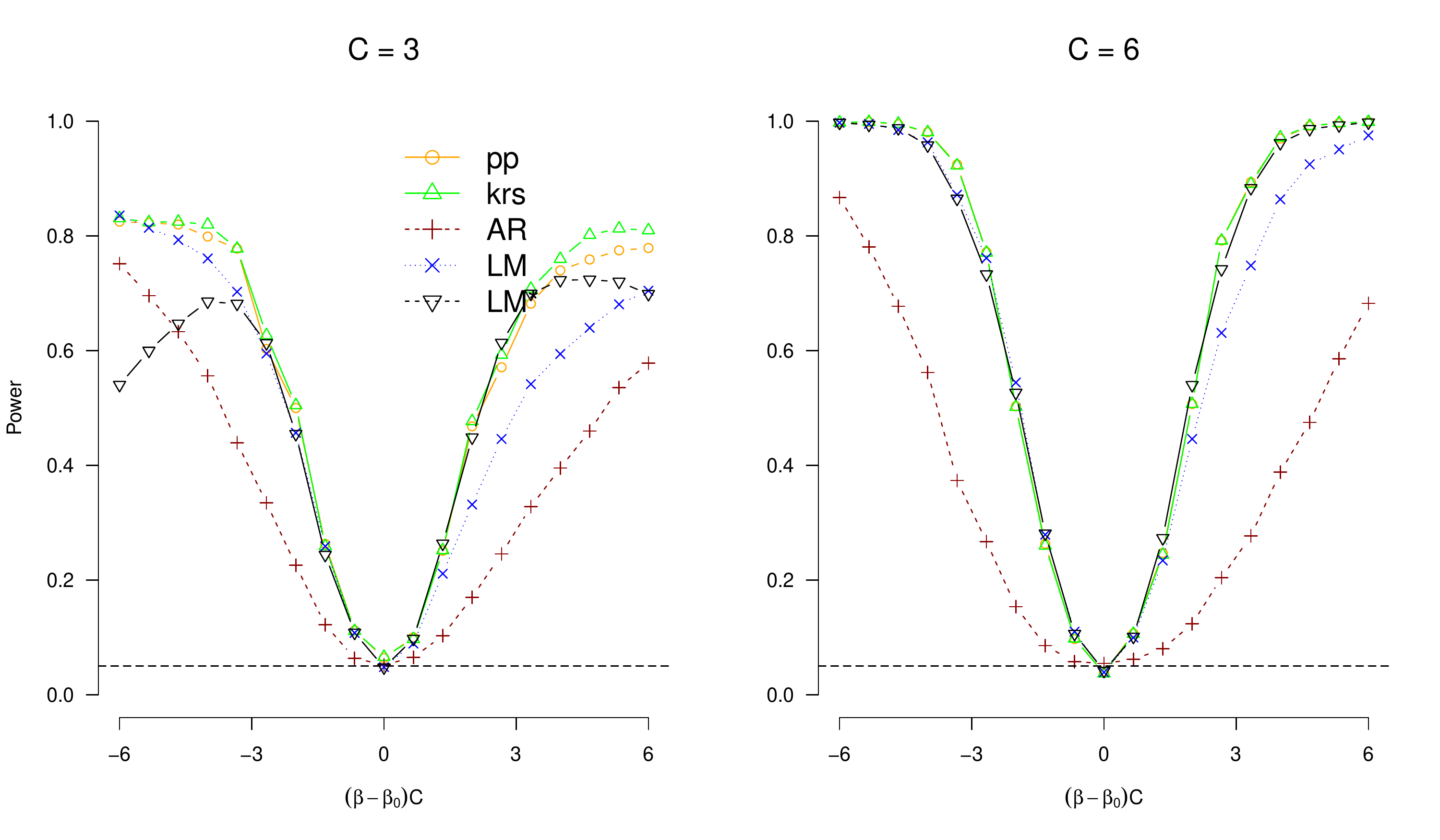}
	\caption{Power Curve for $\rho = 0.2$, $p_1=0.1$, and $p_2=2$}
	\label{fig_01_2_rho02}
\end{figure}

\begin{figure}[H]
	\centering
	\includegraphics[width=0.9\textwidth,height = 5.85cm]{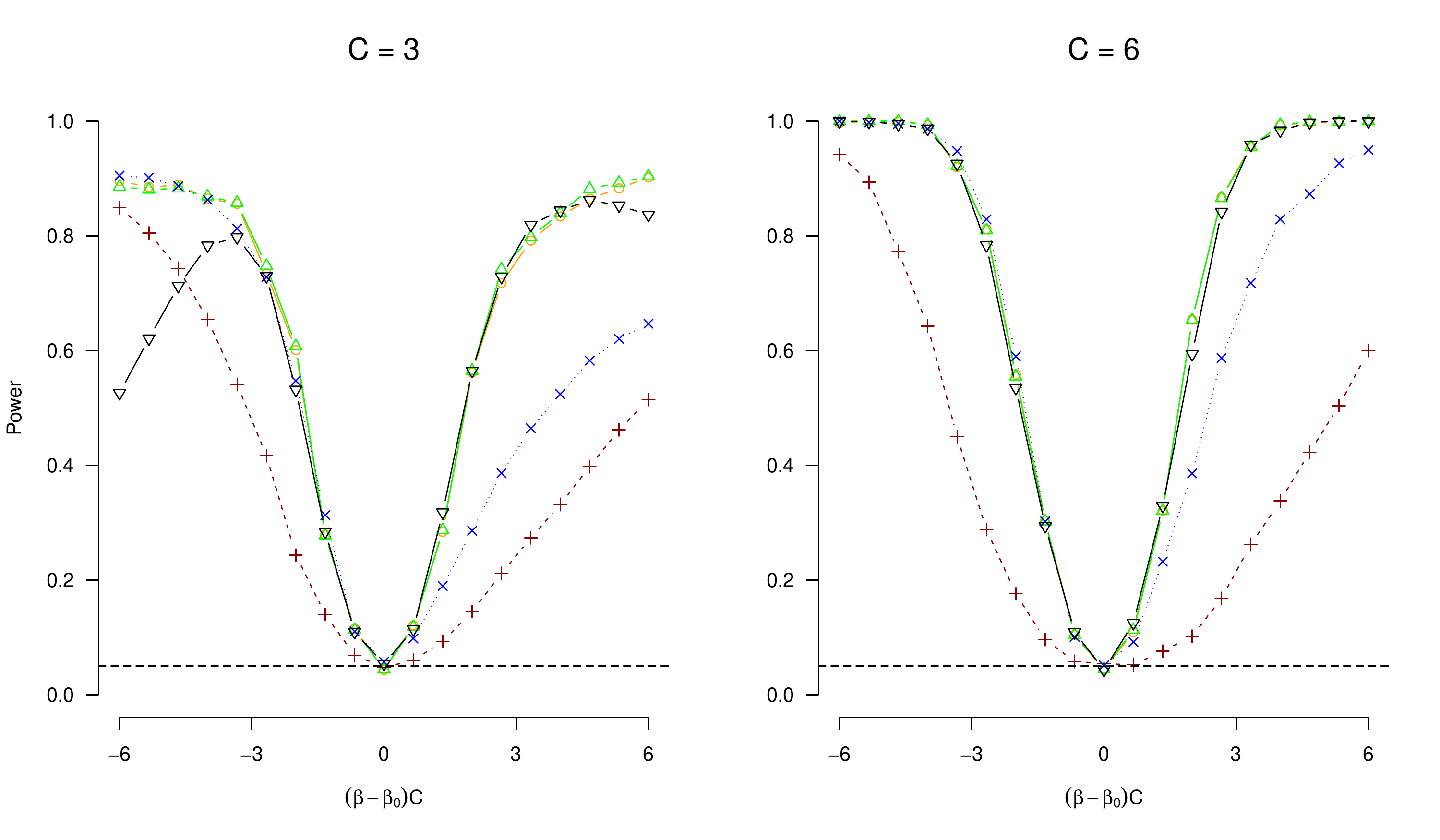}
	\caption{Power Curve for $\rho=0.4$, $p_1=0.1$, and $p_2=2$}
	\label{fig_01_2_rho04}
\end{figure}

\begin{figure}[H]
	\centering
	\includegraphics[width=0.9\textwidth,height = 5.85cm]{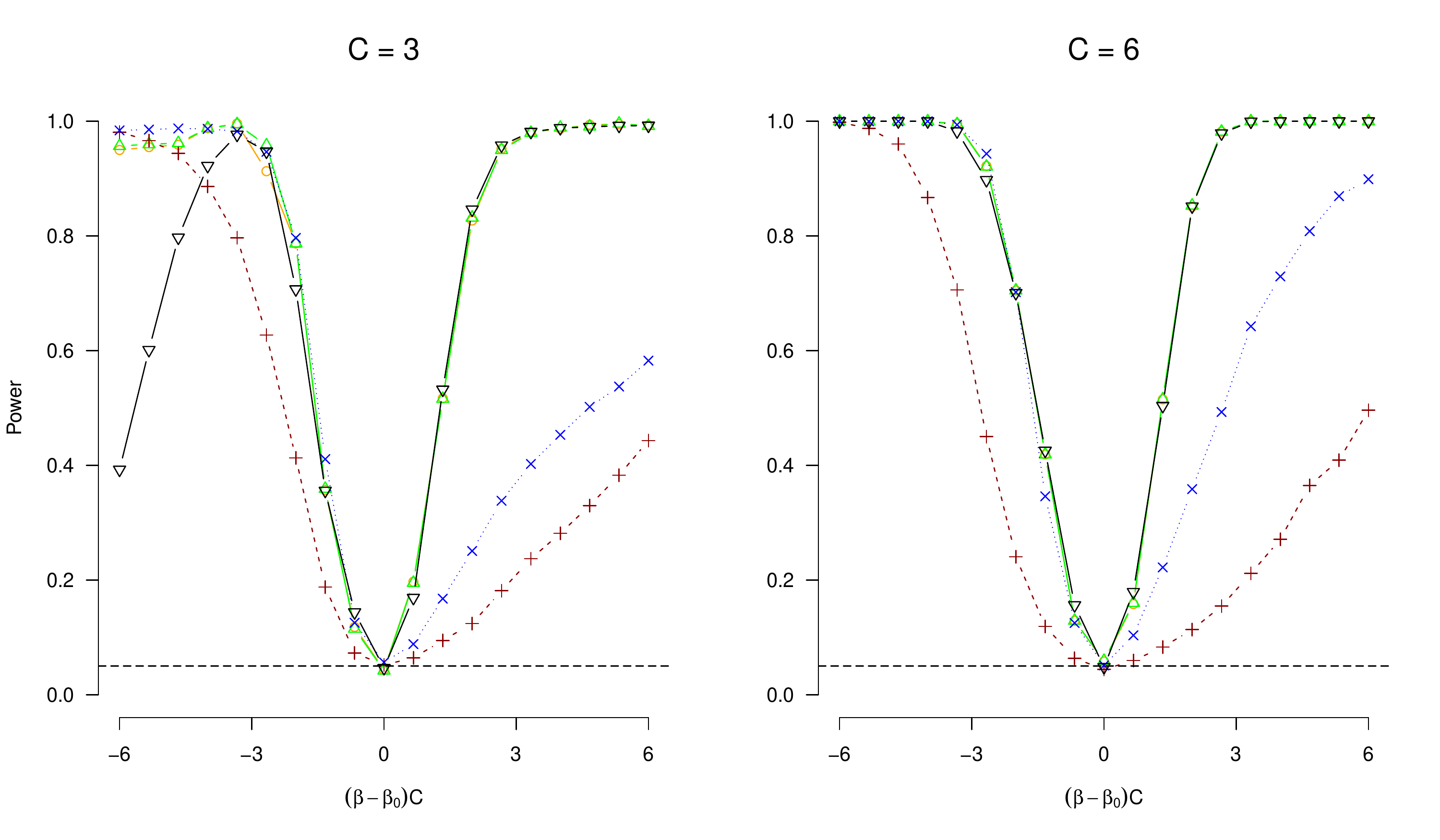}
	\caption{Power Curve for $\rho=0.7$, $p_1=0.1$, and $p_2=2$}
	\label{fig_01_2_rho07}
\end{figure}

\begin{figure}[H]
	\centering
	\includegraphics[width=0.9\textwidth,height = 5.85cm]{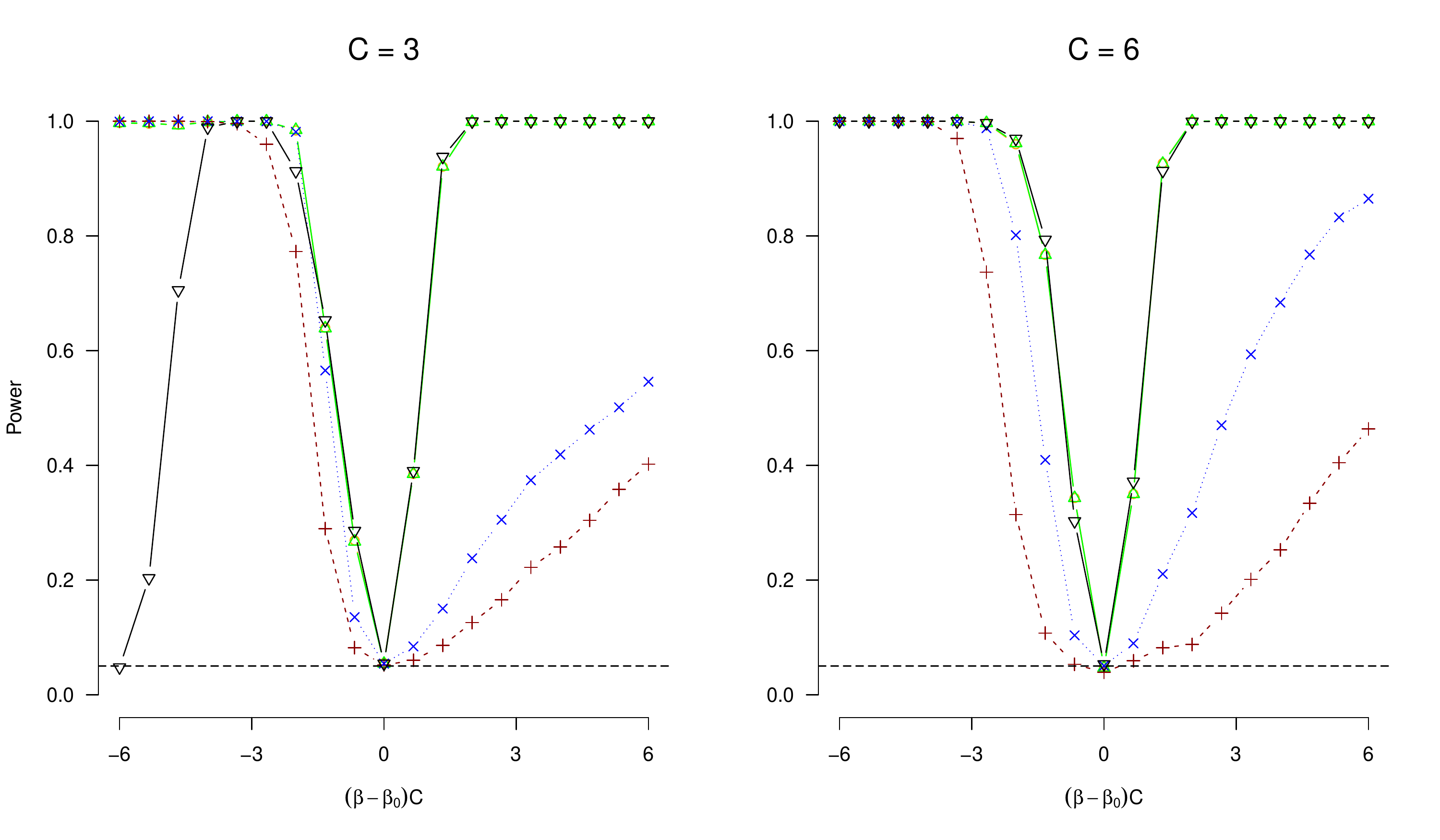}
	\caption{Power Curve for $\rho=0.9$, $p_1=0.1$, and $p_2=2$}
	\label{fig_01_2_rho09}
\end{figure}

\begin{figure}[H]
	\centering
	\includegraphics[width=0.9\textwidth,height = 5.85cm]{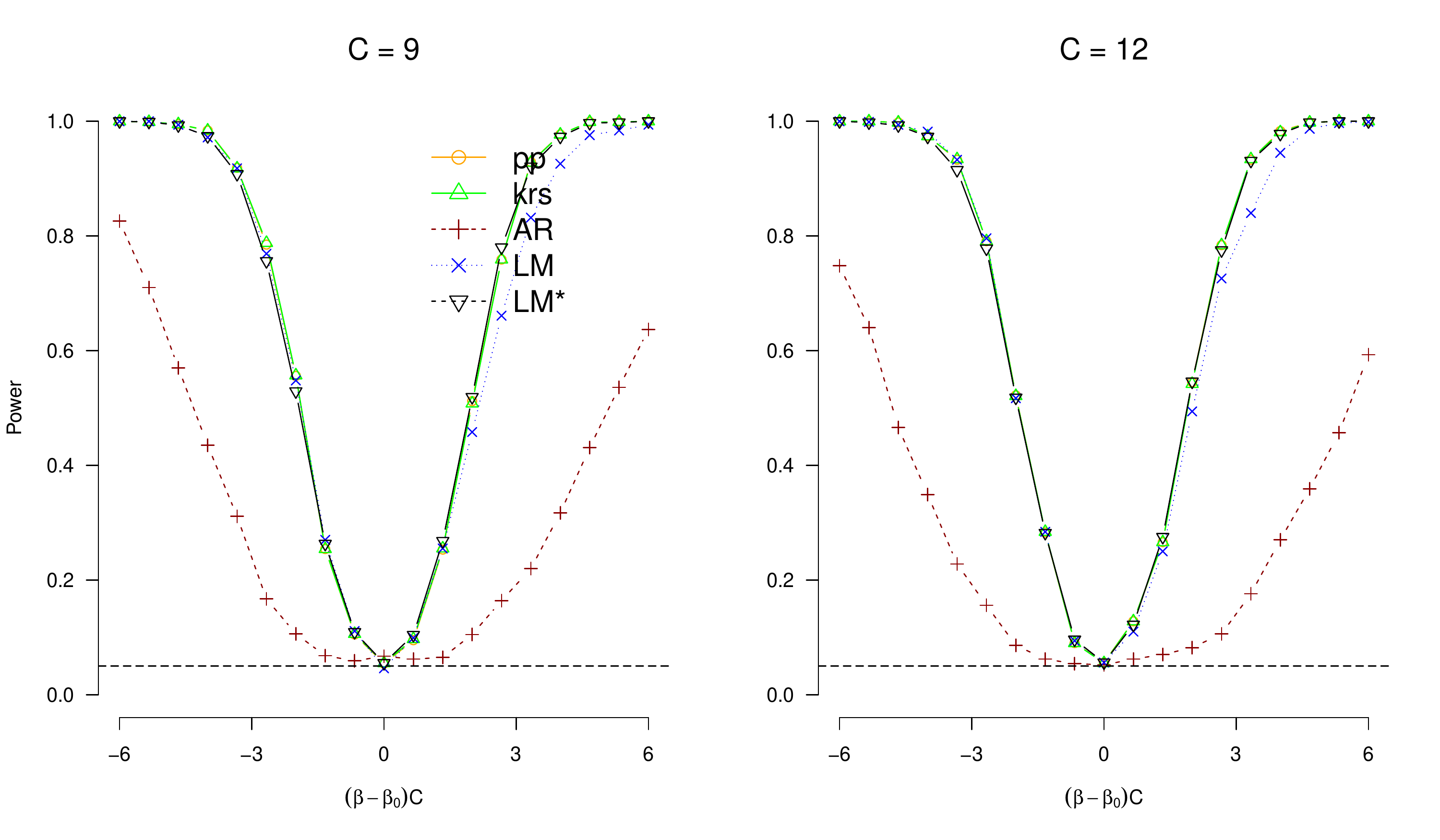}
	\caption{Power Curve for $\rho = 0.2$ with $C=9$ or $12$}
	\label{further_limit_fig13}
\end{figure}

\begin{figure}[H]
	\centering
	\includegraphics[width=0.9\textwidth,height = 5.85cm]{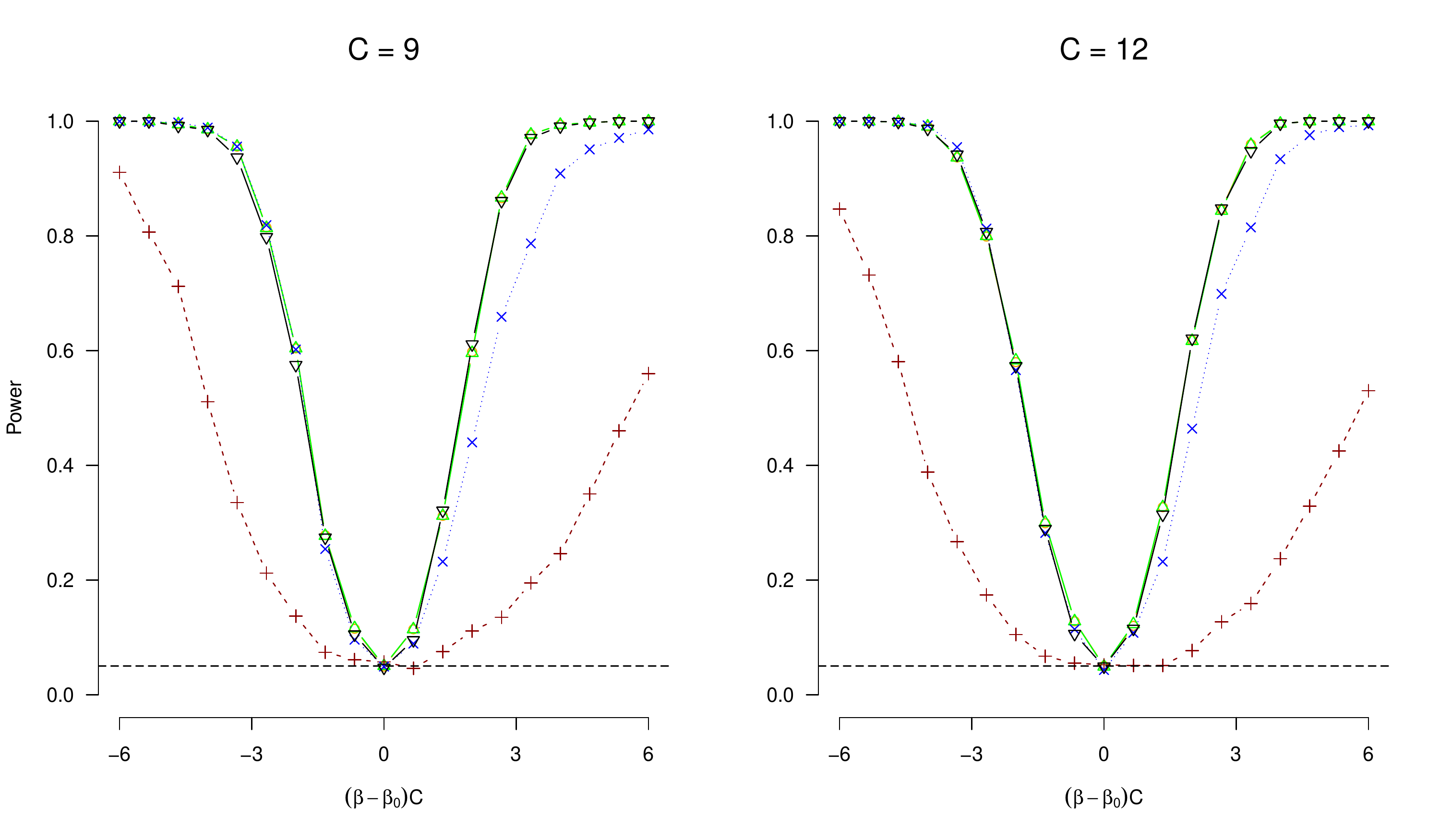}
	\caption{Power Curve for $\rho=0.4$ with $C=9$ or $12$}
	\label{further_limit_fig14}
\end{figure}

\begin{figure}[H]
	\centering
	\includegraphics[width=0.9\textwidth,height = 5.85cm]{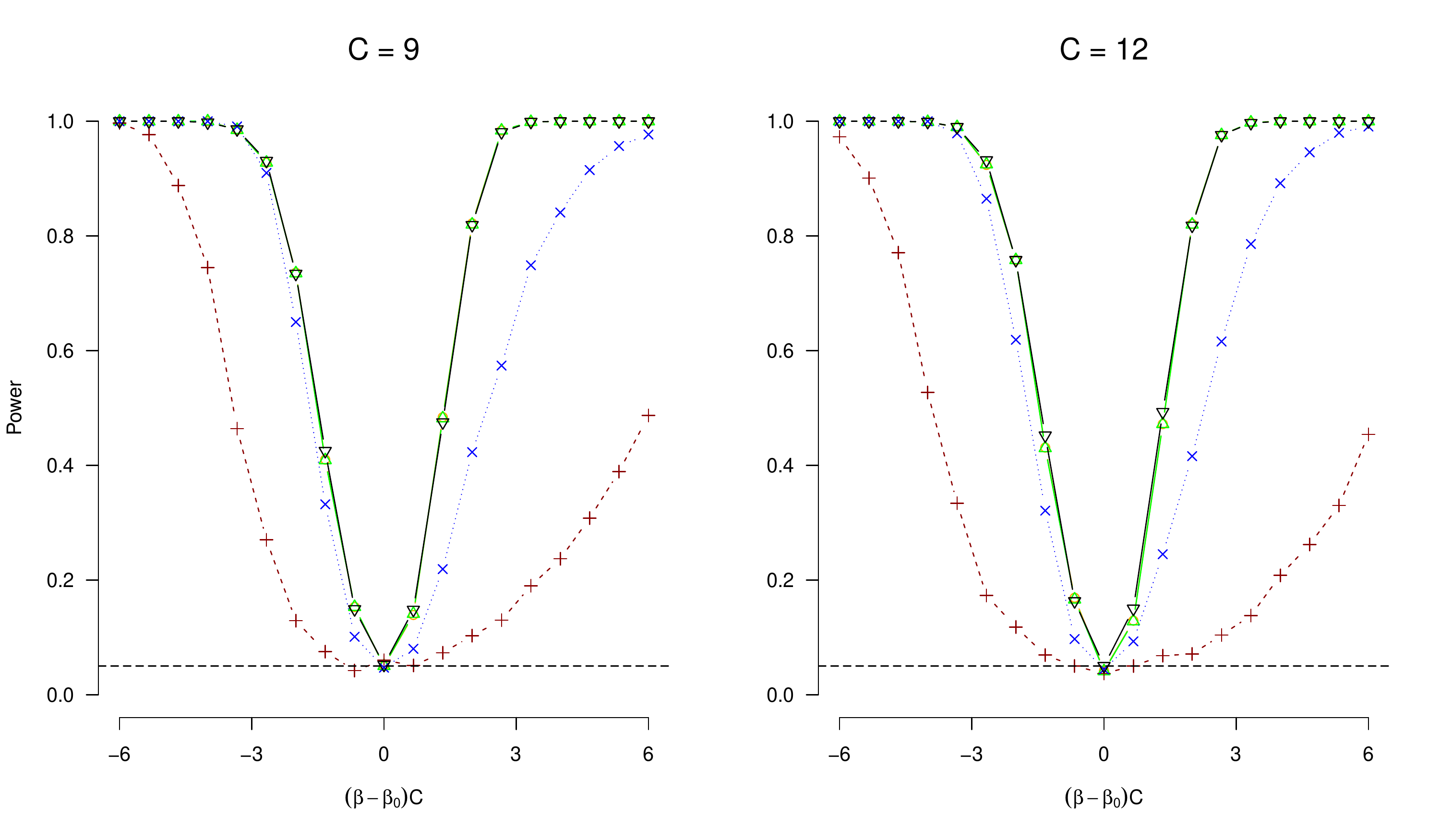}
	\caption{Power Curve for $\rho=0.7$ with $C=9$ or $12$}
	\label{further_limit_fig15}
\end{figure}

\begin{figure}[H]
	\centering
	\includegraphics[width=0.9\textwidth,height = 5.85cm]{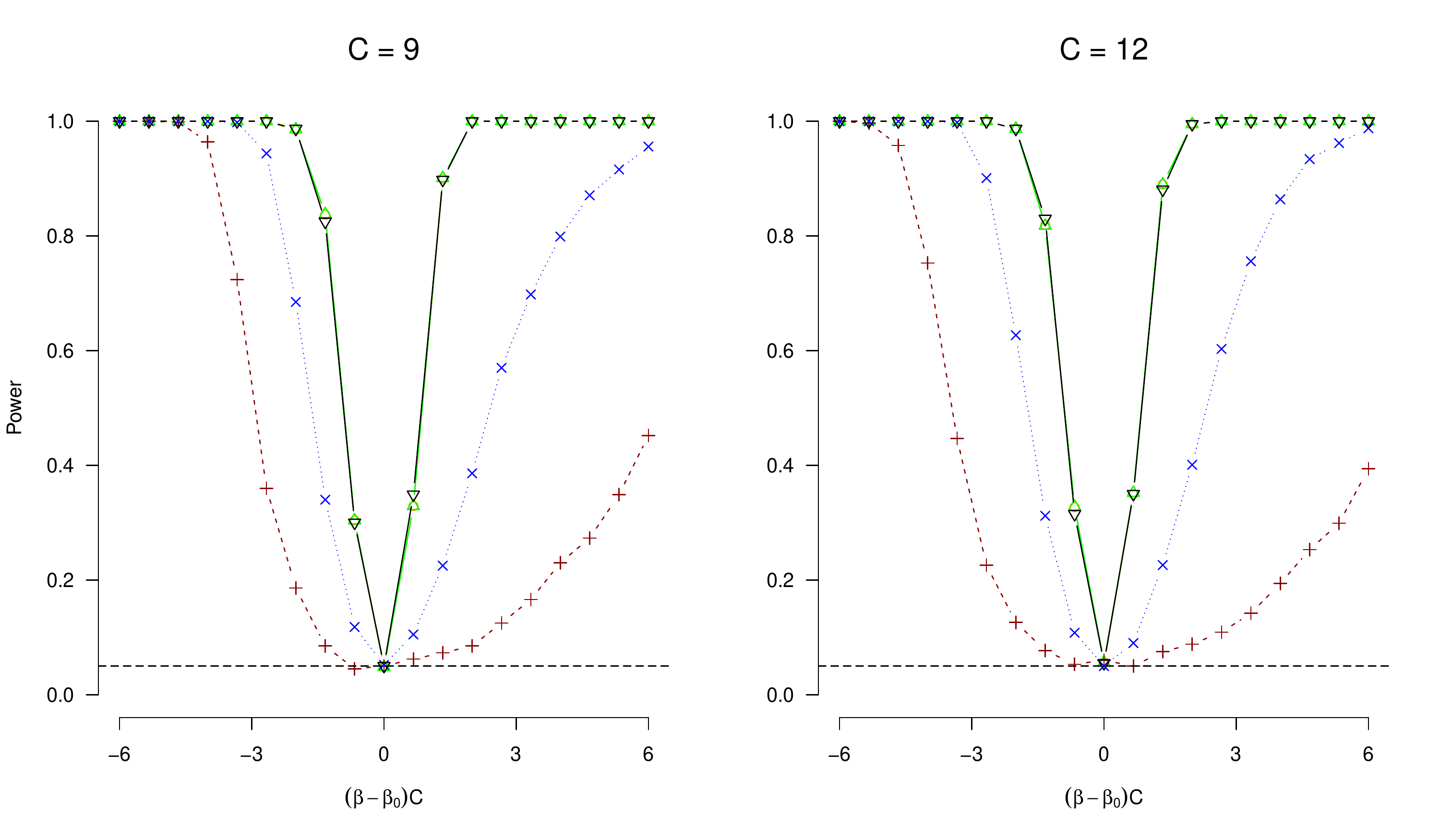}
	\caption{Power Curve for $\rho=0.9$ with $C=9$ or $12$}
	\label{further_limit_fig16}
\end{figure}

\subsection{Additional Simulation Results Based on the Calibrated Data}
\label{sec:add_sim_2}
We run two sets of robustness checks. For the first set, we retained the parameter space of $\mathcal{B} = [-0.5,0.5]$ and used 16 grid-points in total over this space instead of 31 grid-points used in the main text. As in the previous section, we vary over $(p_1,p_2)$ equals $(0.001,1.1)$, $(0.001,1.5)$, $(0.001,2)$, $(0.01,1.5)$, $(0.01,2)$, $(0.1,1.1)$, $(0.1,1.5)$, and $(0.1,2)$. Figures \ref{further_limit_DGP1_fig1}--\ref{further_limit_DGP1_fig8} are results for DGP 1, while Figures \ref{further_limit_DGP2_fig1}--\ref{further_limit_DGP2_fig8} are results for DGP 2. We find that our results are very similar to the main text's specification, i.e. $(p_1,p_2) = (0.01,1.1)$.

\vspace{5mm}
For the second set of robustness checks, we fix $(p_1,p_2) = (0.01,1.1)$ as in the main text and vary the parameter space as $\mathcal{B}_2 = [-0.25,0.25]$ and $\mathcal{B}_3 = [-1,1]$ over 21 equally-sized grid-points. This is done in order to capture the null of $H_0: \beta = 0.1$. DGP 1 is reported in Figures \ref{further_limit_DGP1_fig9} and \ref{further_limit_DGP1_fig10}, while DGP 2 is reported in Figures \ref{further_limit_DGP2_fig9} and \ref{further_limit_DGP2_fig10}.

\begin{figure}[H]
	\centering
	\includegraphics[width=0.9\textwidth,height = 5.85cm]{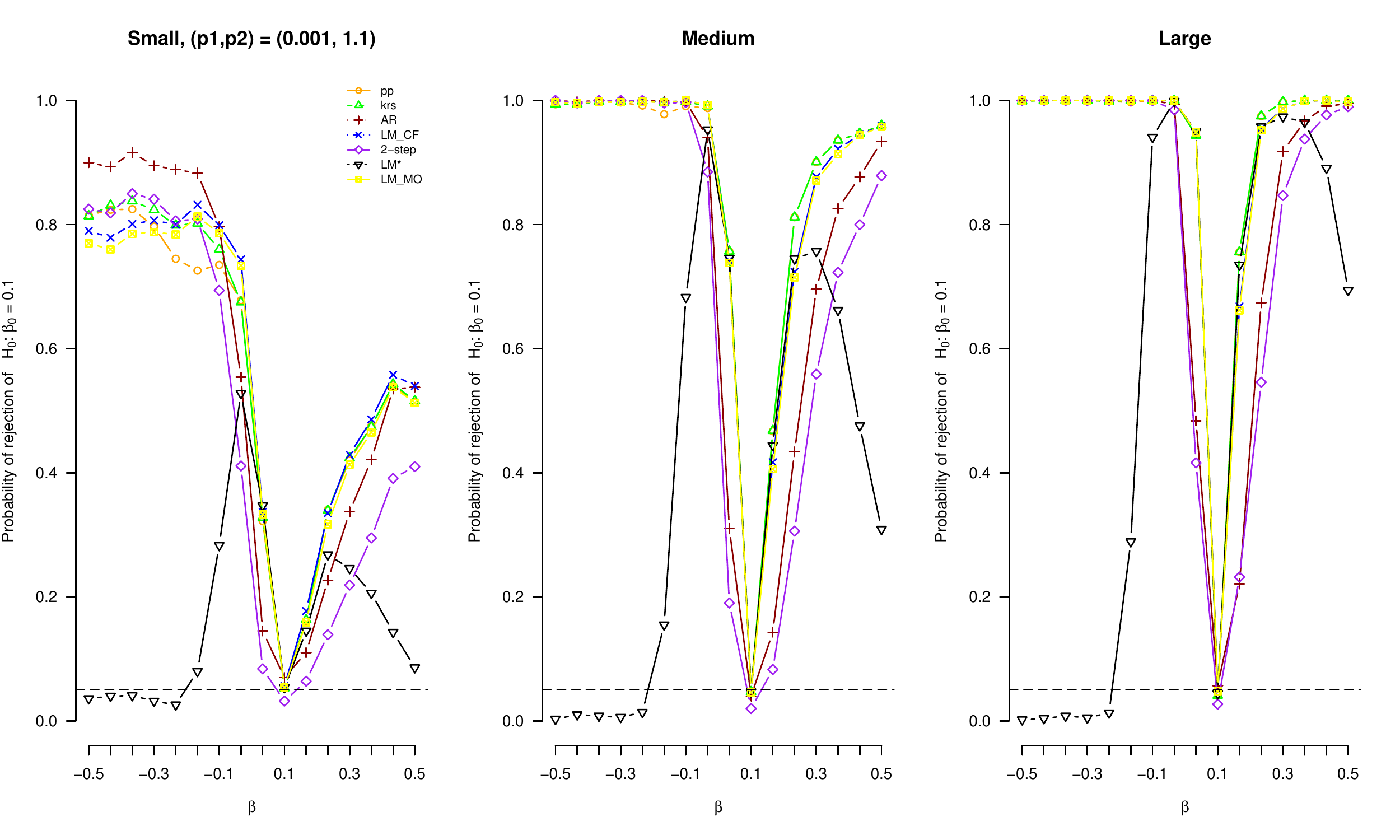}
	\caption{Power Curve for DGP 1 with $(p_1,p_2) = (0.001,1.1)$}
	\label{further_limit_DGP1_fig1}
\end{figure}

\begin{figure}[H]
	\centering
	\includegraphics[width=0.9\textwidth,height = 5.85cm]{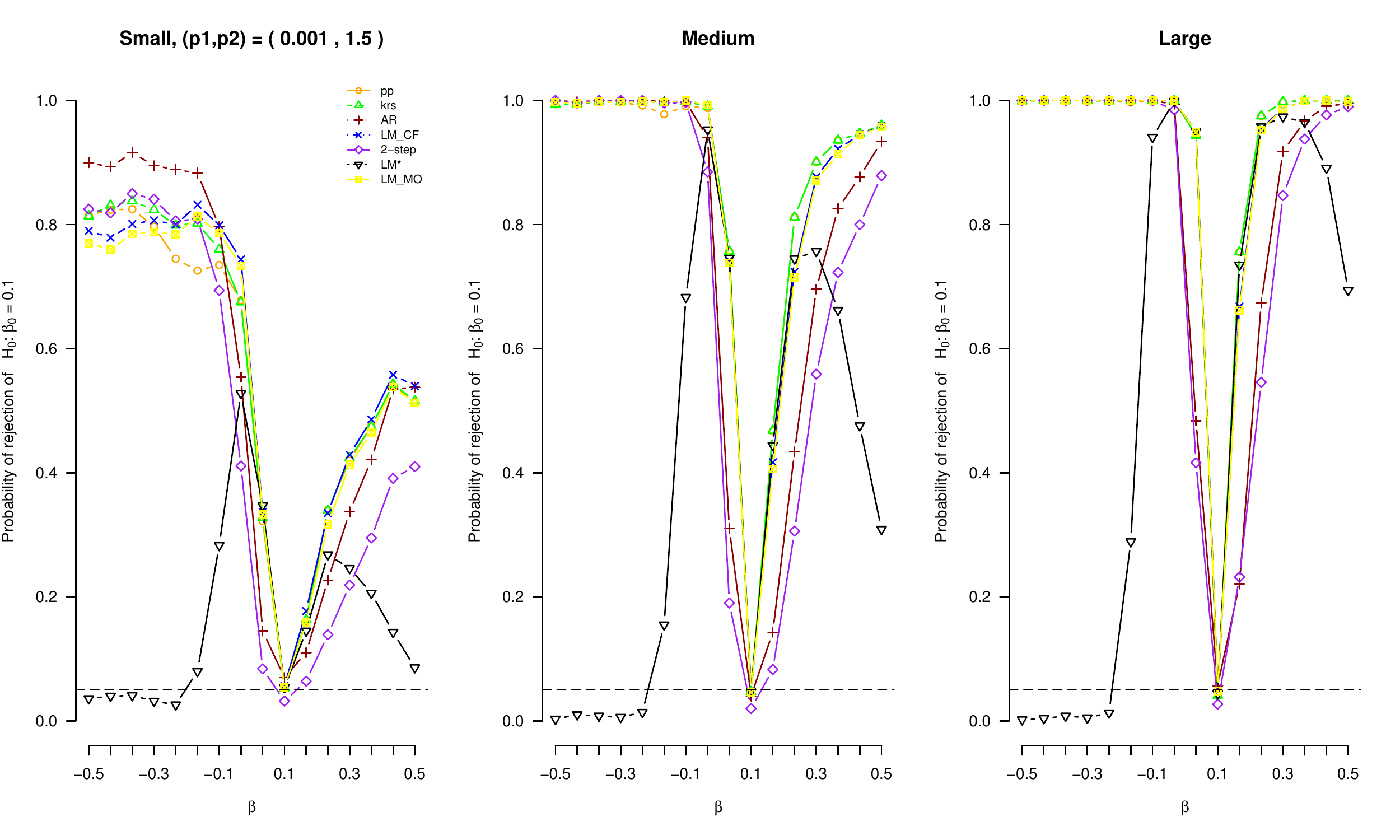}
	\caption{Power Curve for DGP 1 with $(p_1,p_2) = (0.001,1.5)$}
	\label{further_limit_DGP1_fig2}
\end{figure}

\begin{figure}[H]
	\centering
	\includegraphics[width=0.9\textwidth,height = 5.85cm]{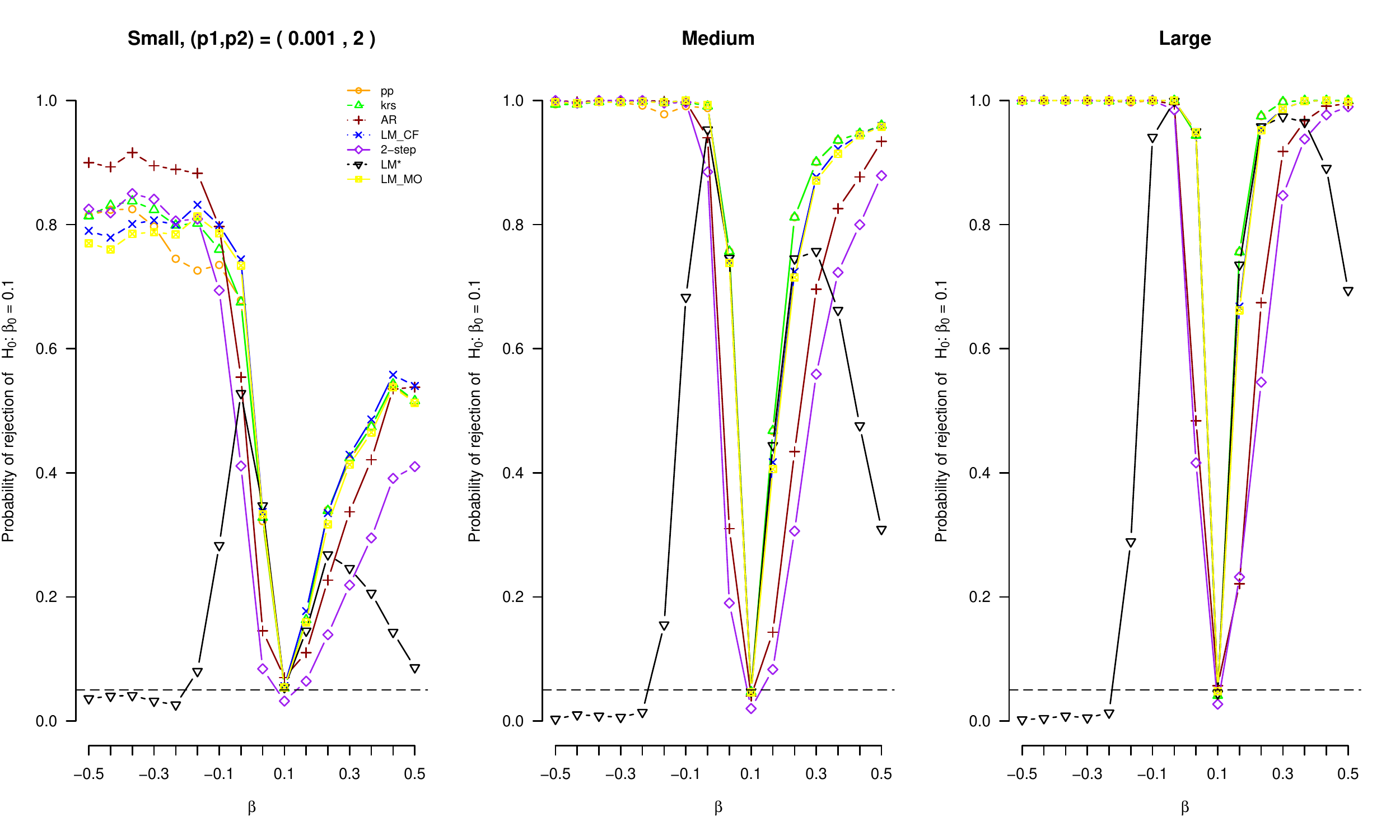}
	\caption{Power Curve for DGP 1 with $(p_1,p_2) = (0.001,2)$}
	\label{further_limit_DGP1_fig3}
\end{figure}

\begin{figure}[H]
	\centering
	\includegraphics[width=0.9\textwidth,height = 5.85cm]{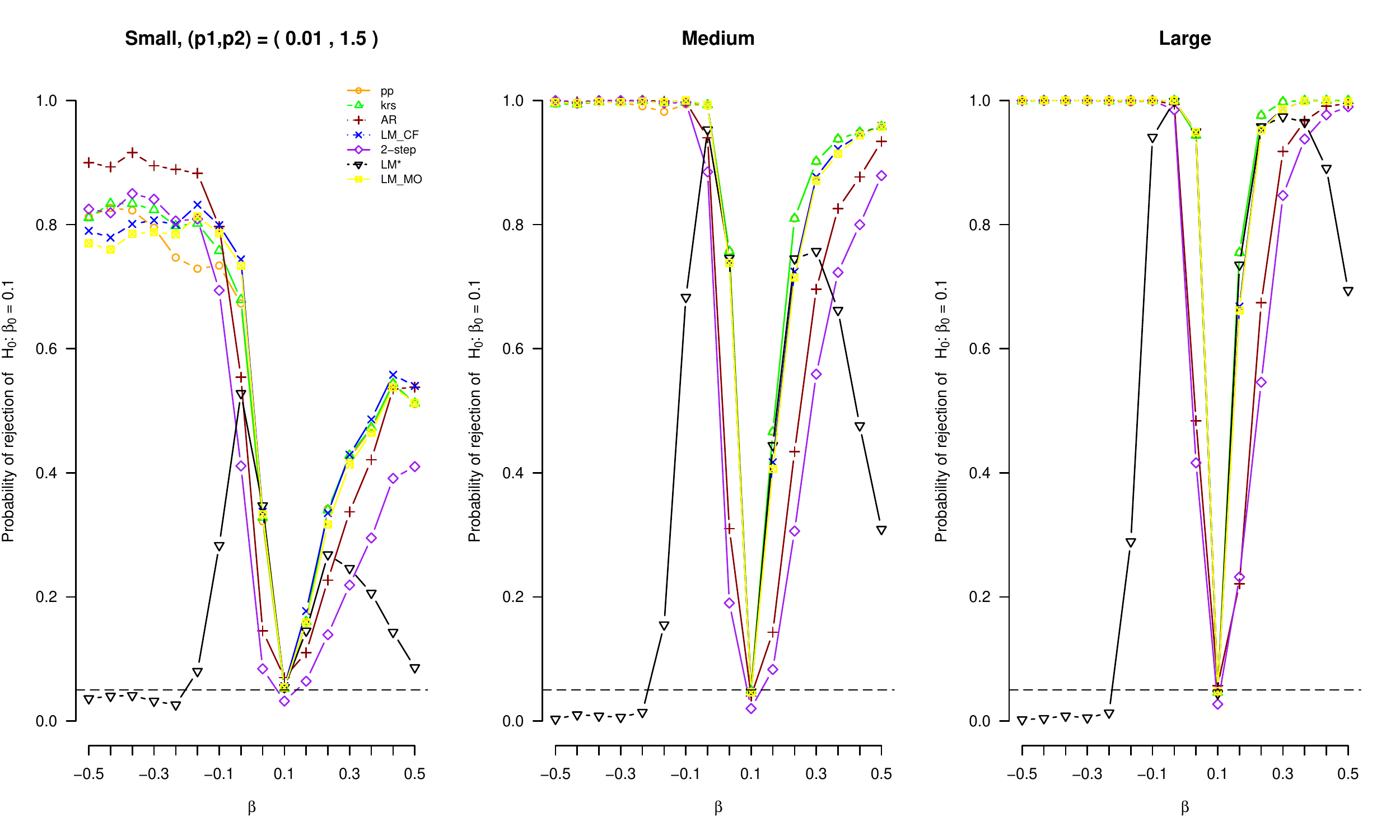}
	\caption{Power Curve for DGP 1 with $(p_1,p_2) = (0.01,1.5)$}
	\label{further_limit_DGP1_fig4}
\end{figure}

\begin{figure}[H]
	\centering
	\includegraphics[width=0.9\textwidth,height = 5.85cm]{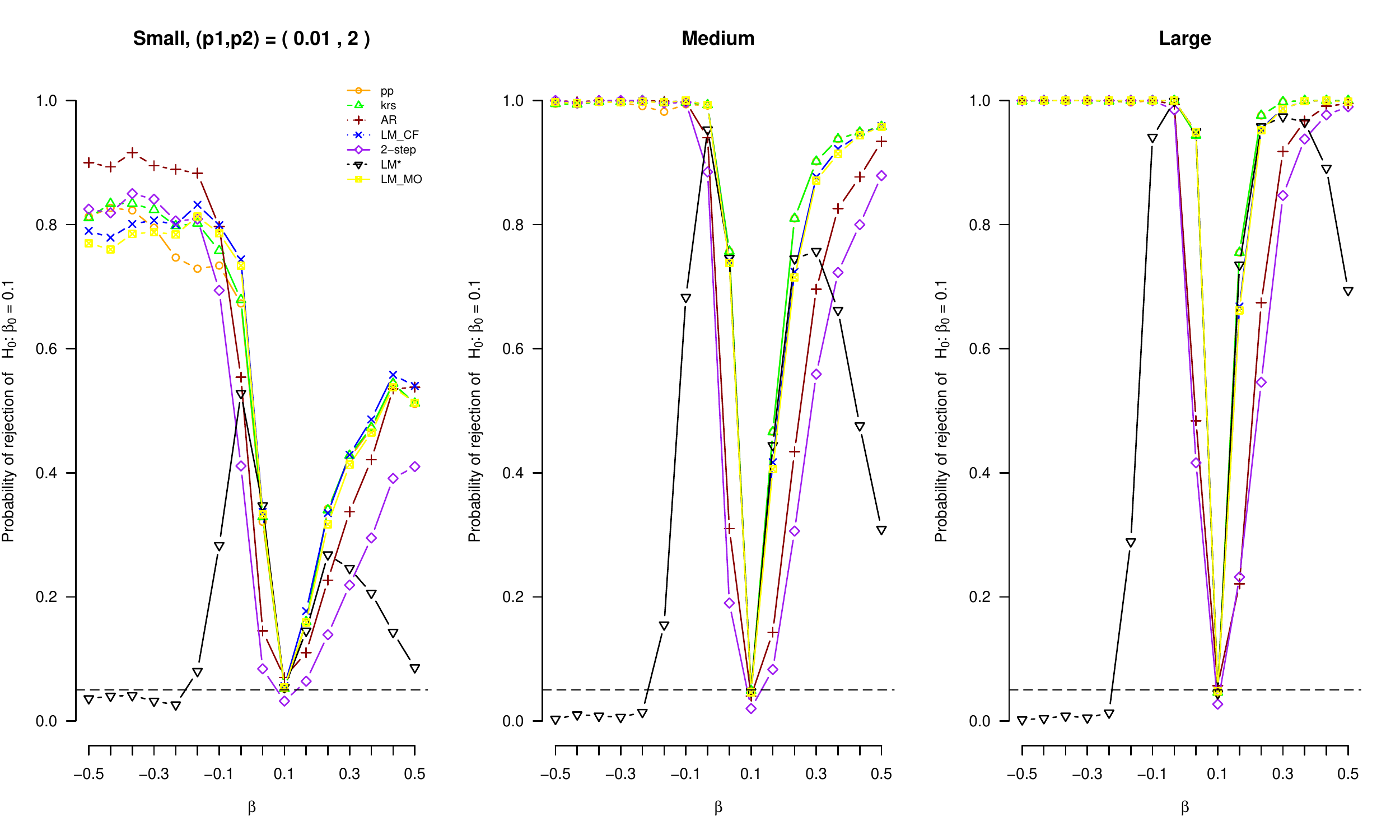}
	\caption{Power Curve for DGP 1 with $(p_1,p_2) = (0.01,2)$}
	\label{further_limit_DGP1_fig5}
\end{figure}

\begin{figure}[H]
	\centering
	\includegraphics[width=0.9\textwidth,height = 5.85cm]{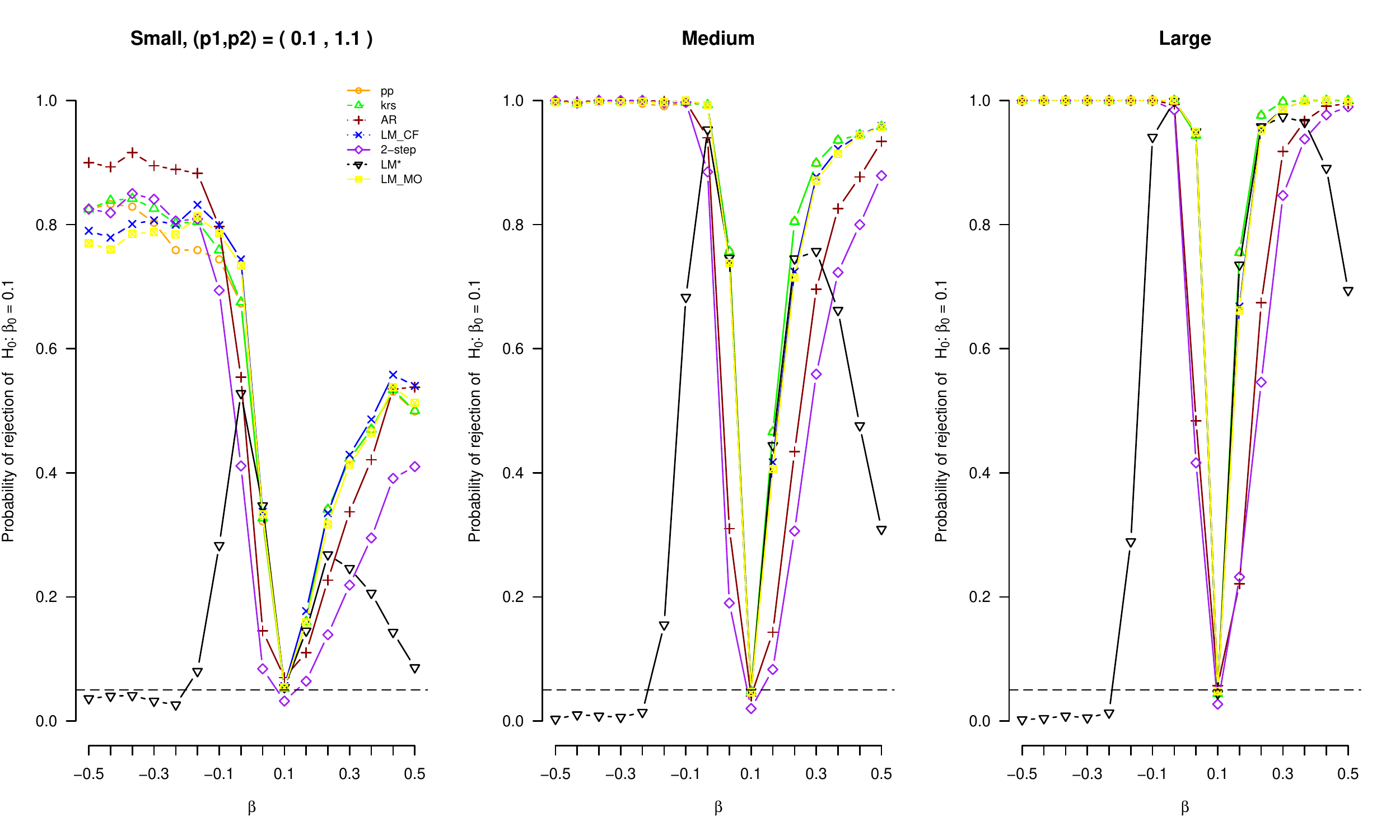}
	\caption{Power Curve for DGP 1 with $(p_1,p_2) = (0.1,1.1)$}
	\label{further_limit_DGP1_fig6}
\end{figure}

\begin{figure}[H]
	\centering
	\includegraphics[width=0.9\textwidth,height = 5.85cm]{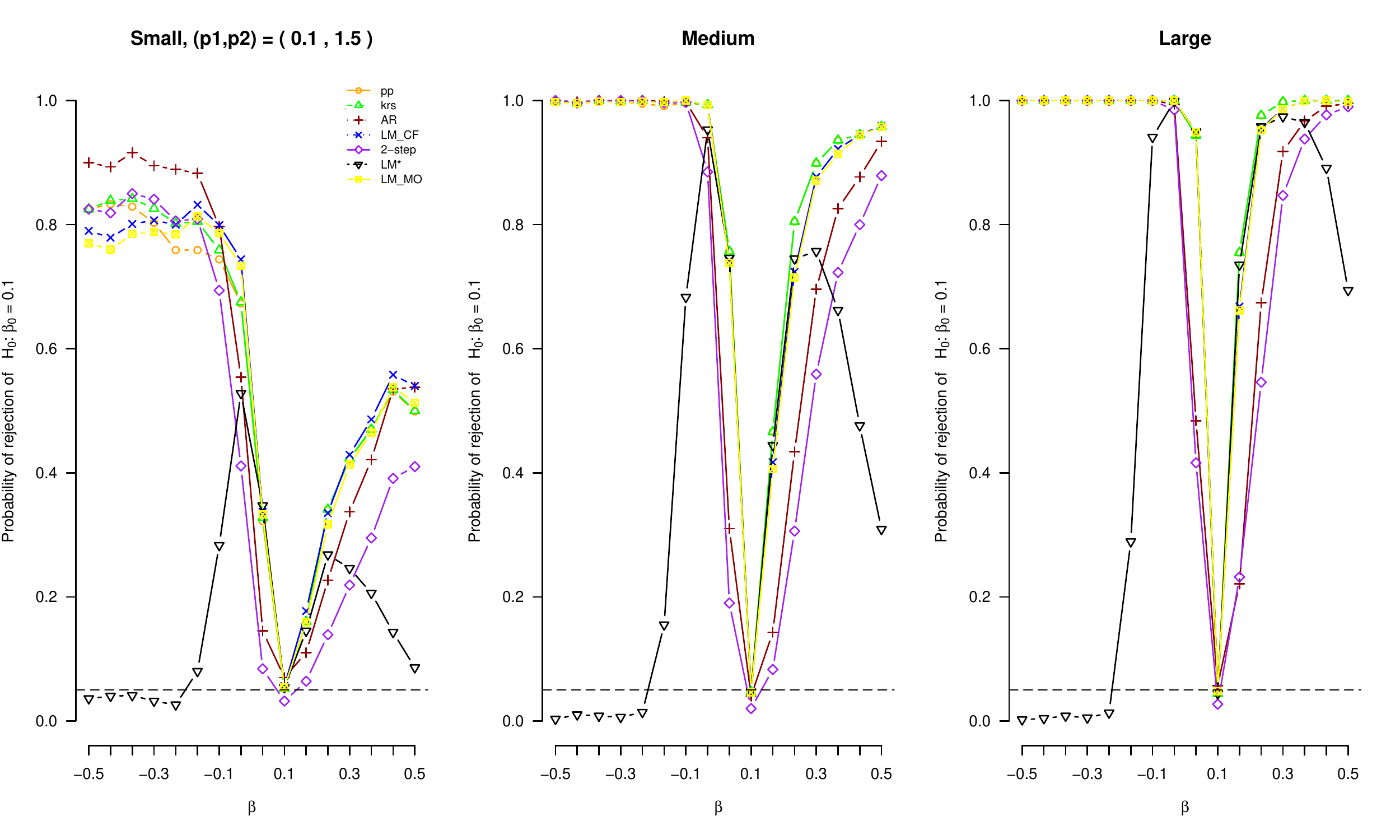}
	\caption{Power Curve for DGP 1 with $(p_1,p_2) = (0.1,1.5)$}
	\label{further_limit_DGP1_fig7}
\end{figure}

\begin{figure}[H]
	\centering
	\includegraphics[width=0.9\textwidth,height = 5.85cm]{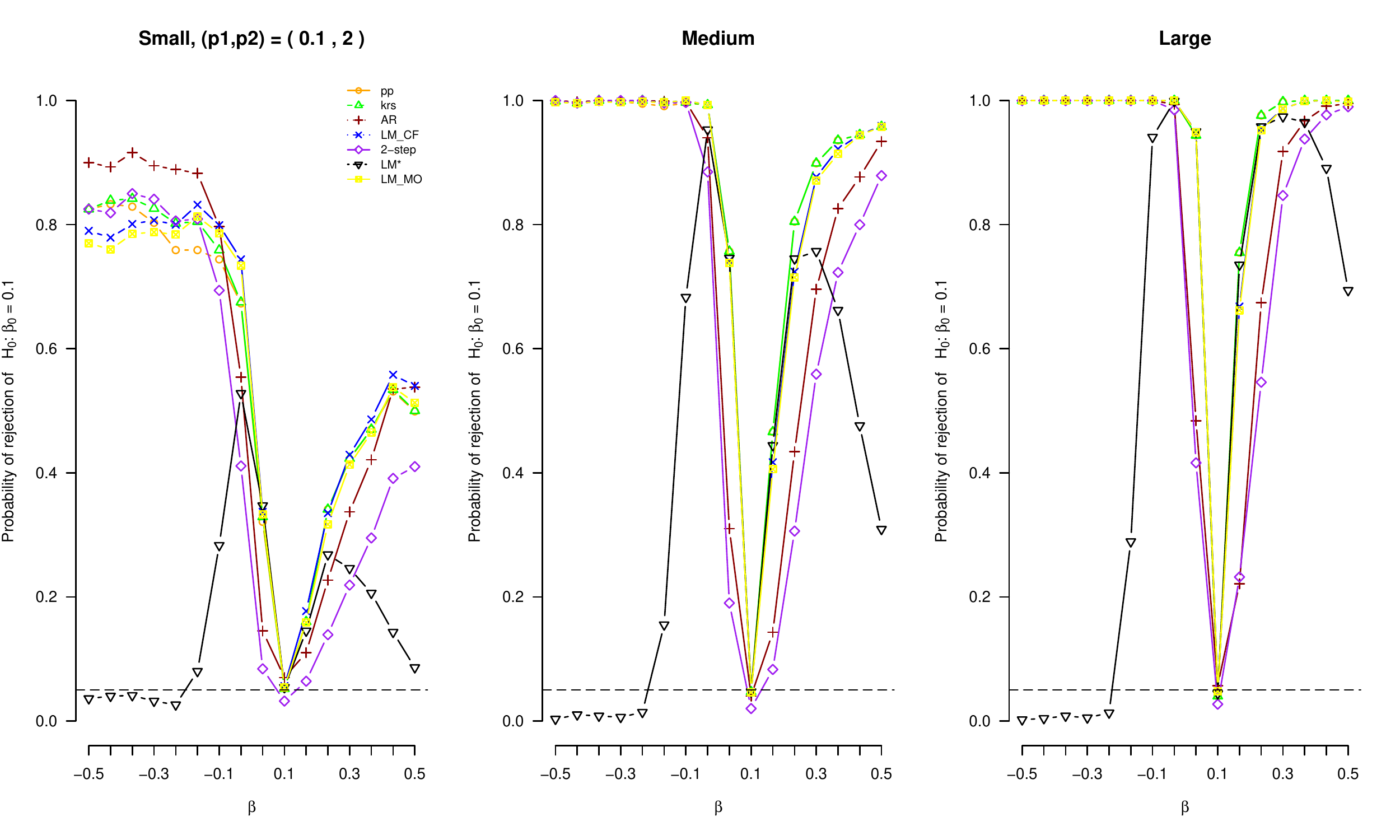}
	\caption{Power Curve for DGP 1 with $(p_1,p_2) = (0.1,2)$}
	\label{further_limit_DGP1_fig8}
\end{figure}

\begin{figure}[H]
	\centering
	\includegraphics[width=0.9\textwidth,height = 5.85cm]{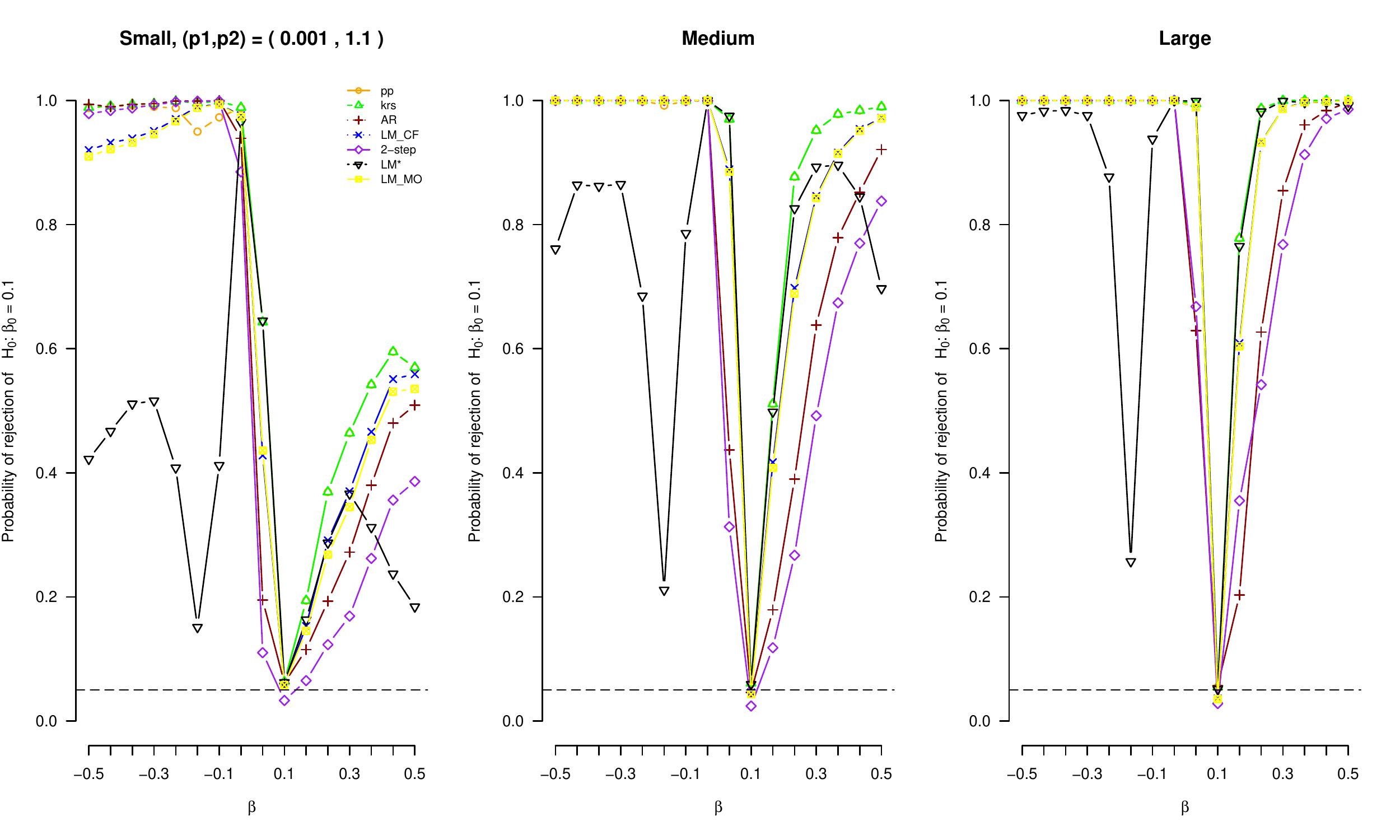}
	\caption{Power Curve for DGP 2 with $(p_1,p_2) = (0.001,1.1)$}
	\label{further_limit_DGP2_fig1}
\end{figure}

\begin{figure}[H]
	\centering
	\includegraphics[width=0.9\textwidth,height = 5.85cm]{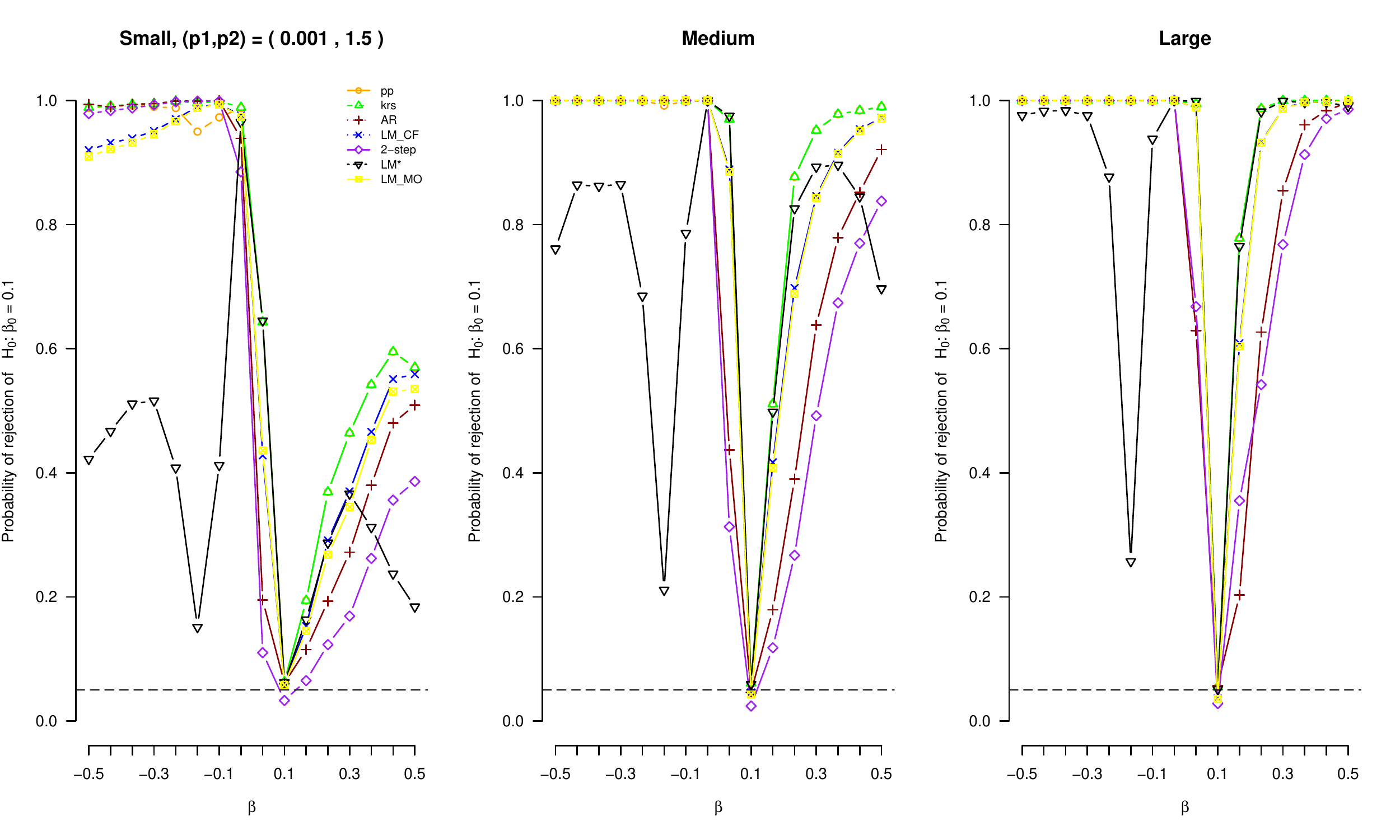}
	\caption{Power Curve for DGP 2 with $(p_1,p_2) = (0.001,1.5)$}
	\label{further_limit_DGP2_fig2}
\end{figure}

\begin{figure}[H]
	\centering
	\includegraphics[width=0.9\textwidth,height = 5.85cm]{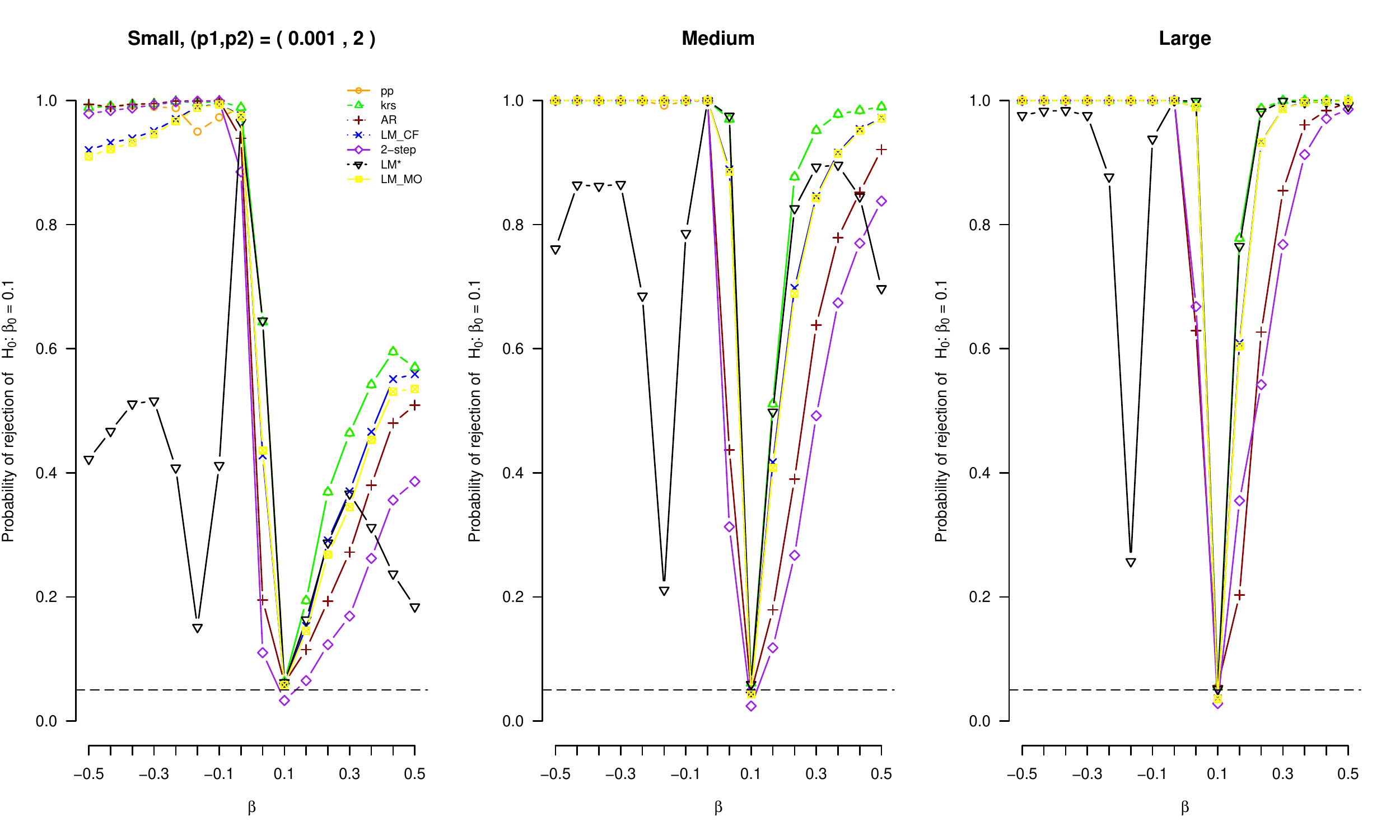}
	\caption{Power Curve for DGP 2 with $(p_1,p_2) = (0.001,2)$}
	\label{further_limit_DGP2_fig3}
\end{figure}

\begin{figure}[H]
	\centering
	\includegraphics[width=0.9\textwidth,height = 5.85cm]{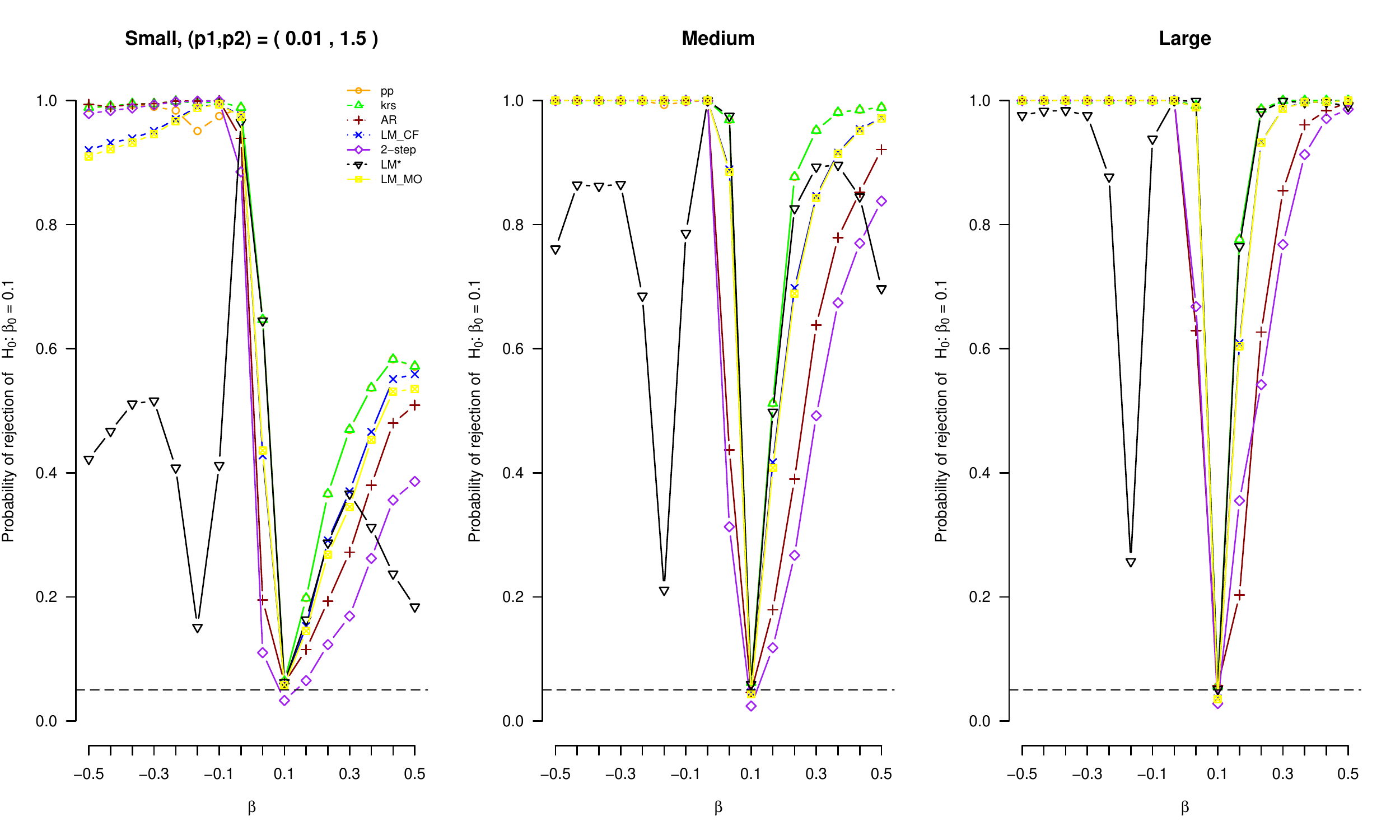}
	\caption{Power Curve for DGP 2 with $(p_1,p_2) = (0.01,1.5)$}
	\label{further_limit_DGP2_fig4}
\end{figure}

\begin{figure}[H]
	\centering
	\includegraphics[width=0.9\textwidth,height = 5.85cm]{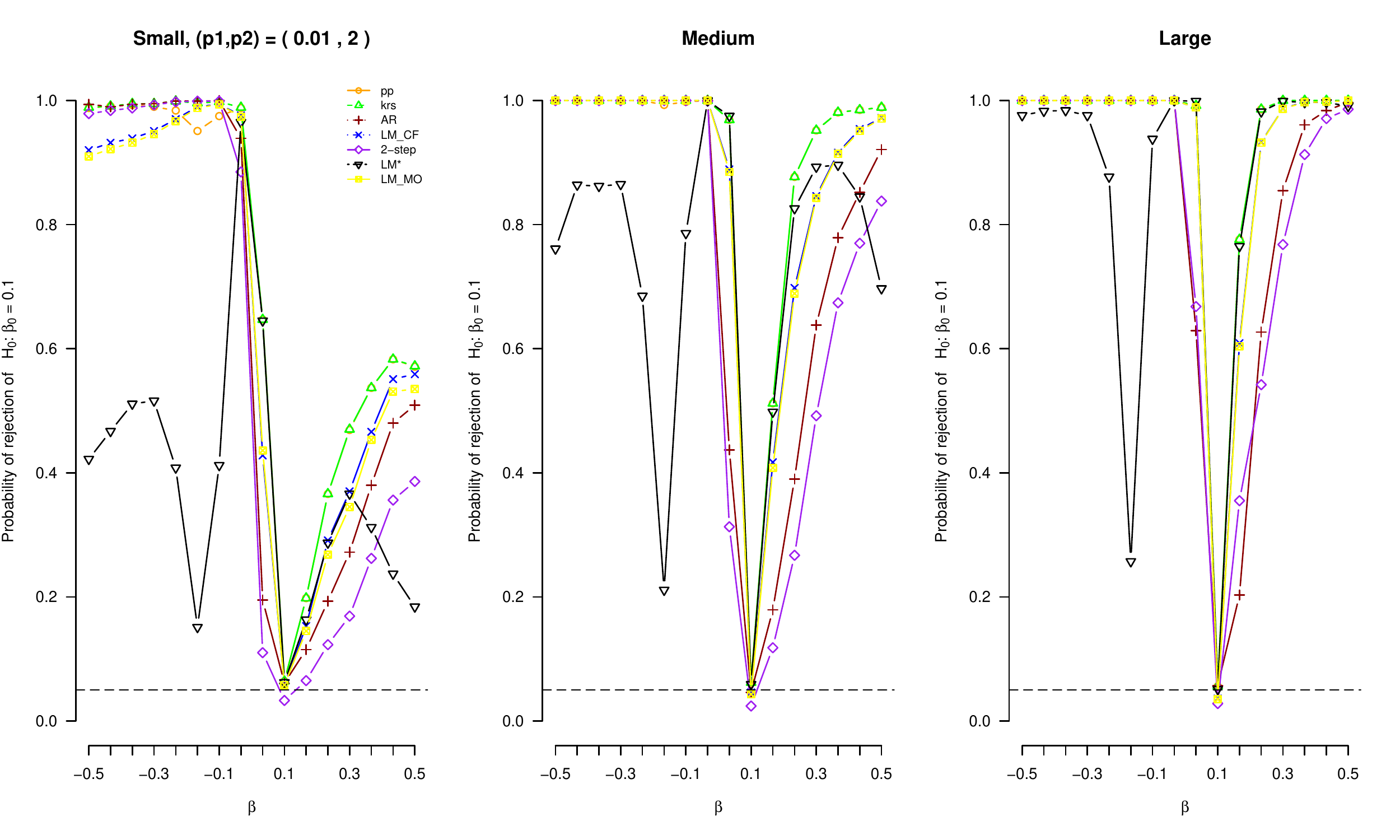}
	\caption{Power Curve for DGP 2 with $(p_1,p_2) = (0.01,2)$}
	\label{further_limit_DGP2_fig5}
\end{figure}

\begin{figure}[H]
	\centering
	\includegraphics[width=0.9\textwidth,height = 5.85cm]{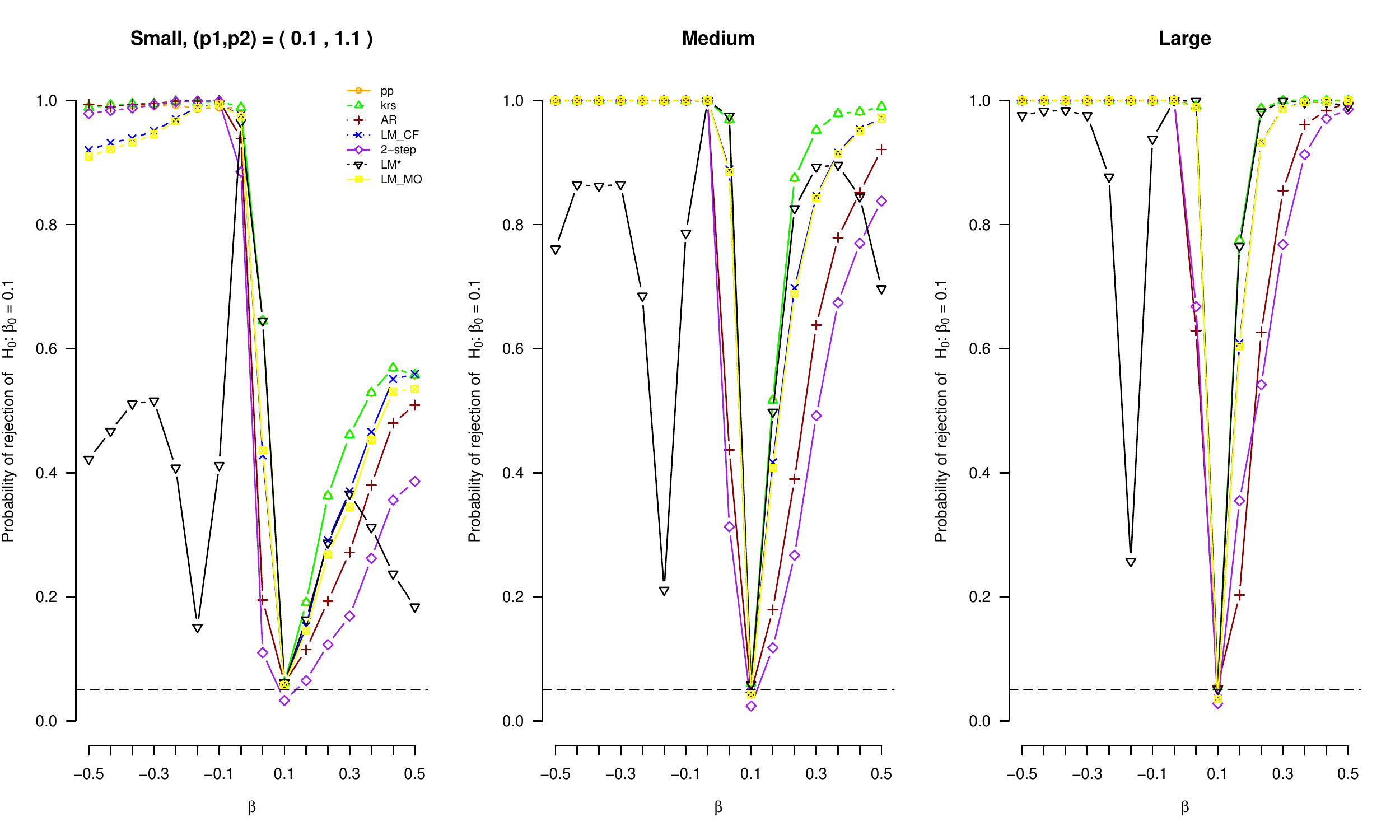}
	\caption{Power Curve for DGP 2 with $(p_1,p_2) = (0.1,1.1)$}
	\label{further_limit_DGP2_fig6}
\end{figure}

\begin{figure}[H]
	\centering
	\includegraphics[width=0.9\textwidth,height = 5.85cm]{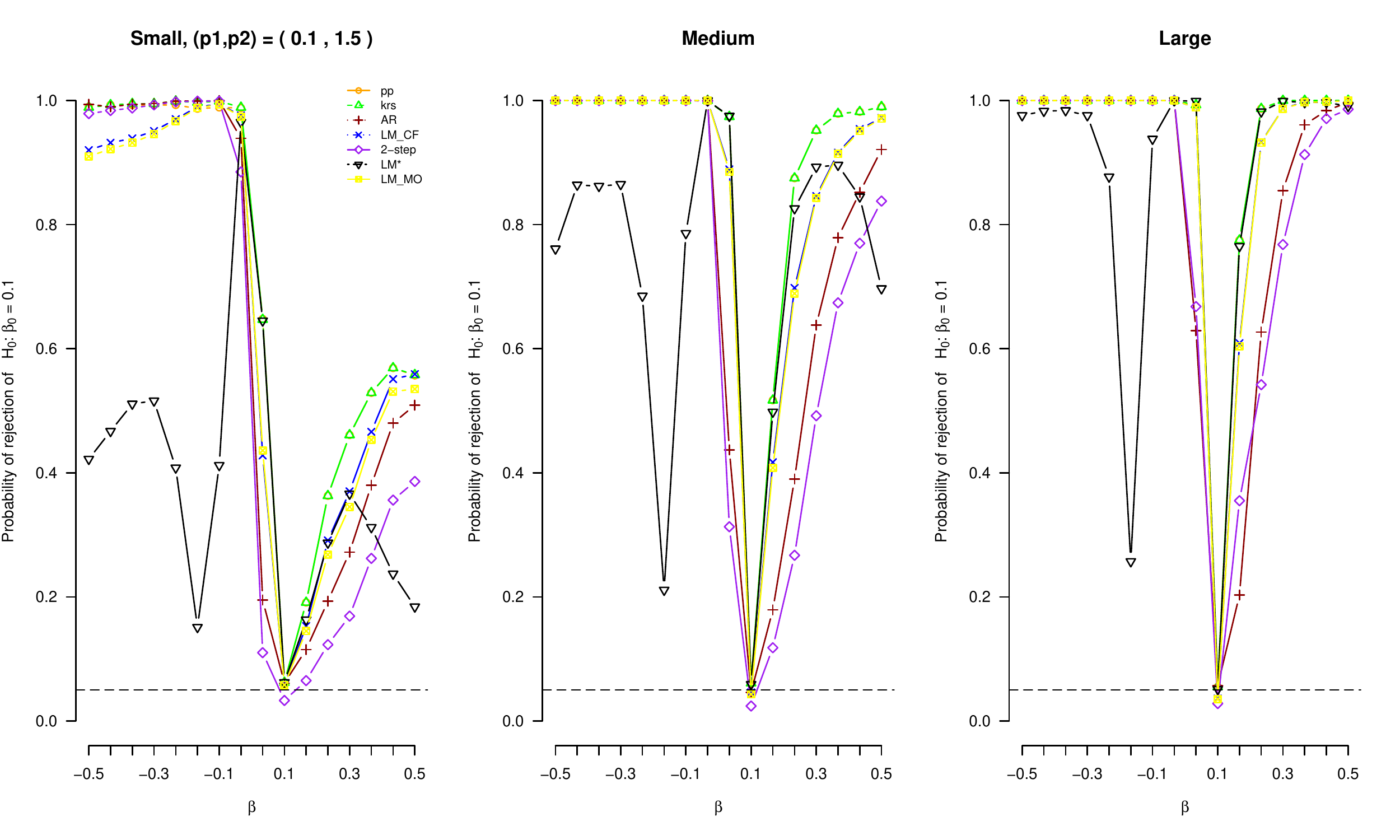}
	\caption{Power Curve for DGP 2 with $(p_1,p_2) = (0.1,1.5)$}
	\label{further_limit_DGP2_fig7}
\end{figure}

\begin{figure}[H]
	\centering
	\includegraphics[width=0.9\textwidth,height = 5.85cm]{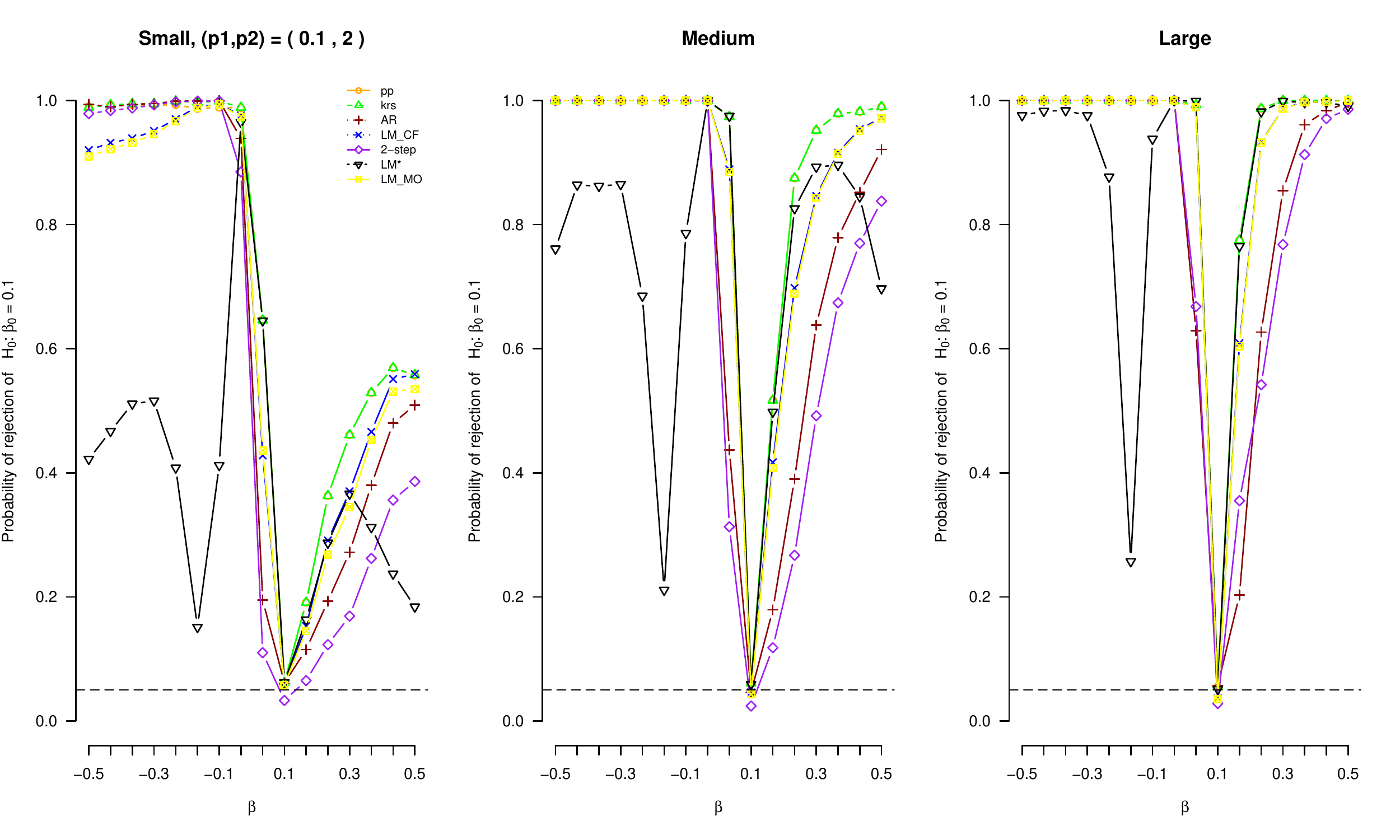}
	\caption{Power Curve for DGP 2 with $(p_1,p_2) = (0.1,2)$}
	\label{further_limit_DGP2_fig8}
\end{figure}

\begin{figure}[H]
	\centering
	\includegraphics[width=0.9\textwidth,height = 5.85cm]{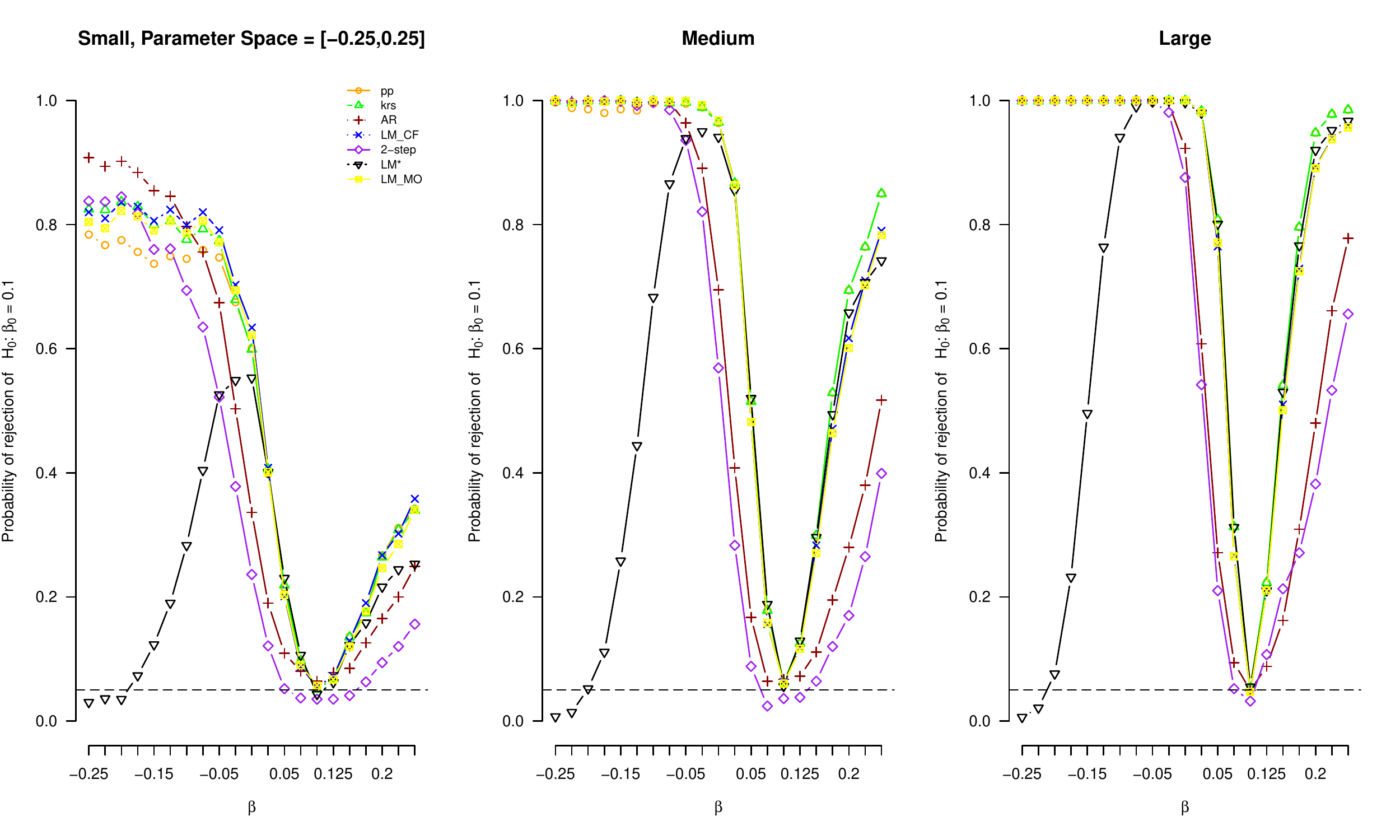}
	\caption{Power Curve for DGP 1 with Parameter Space = $[-0.25,0.25]$}
	\label{further_limit_DGP1_fig9}
\end{figure}

\begin{figure}[H]
	\centering
	\includegraphics[width=0.9\textwidth,height = 5.85cm]{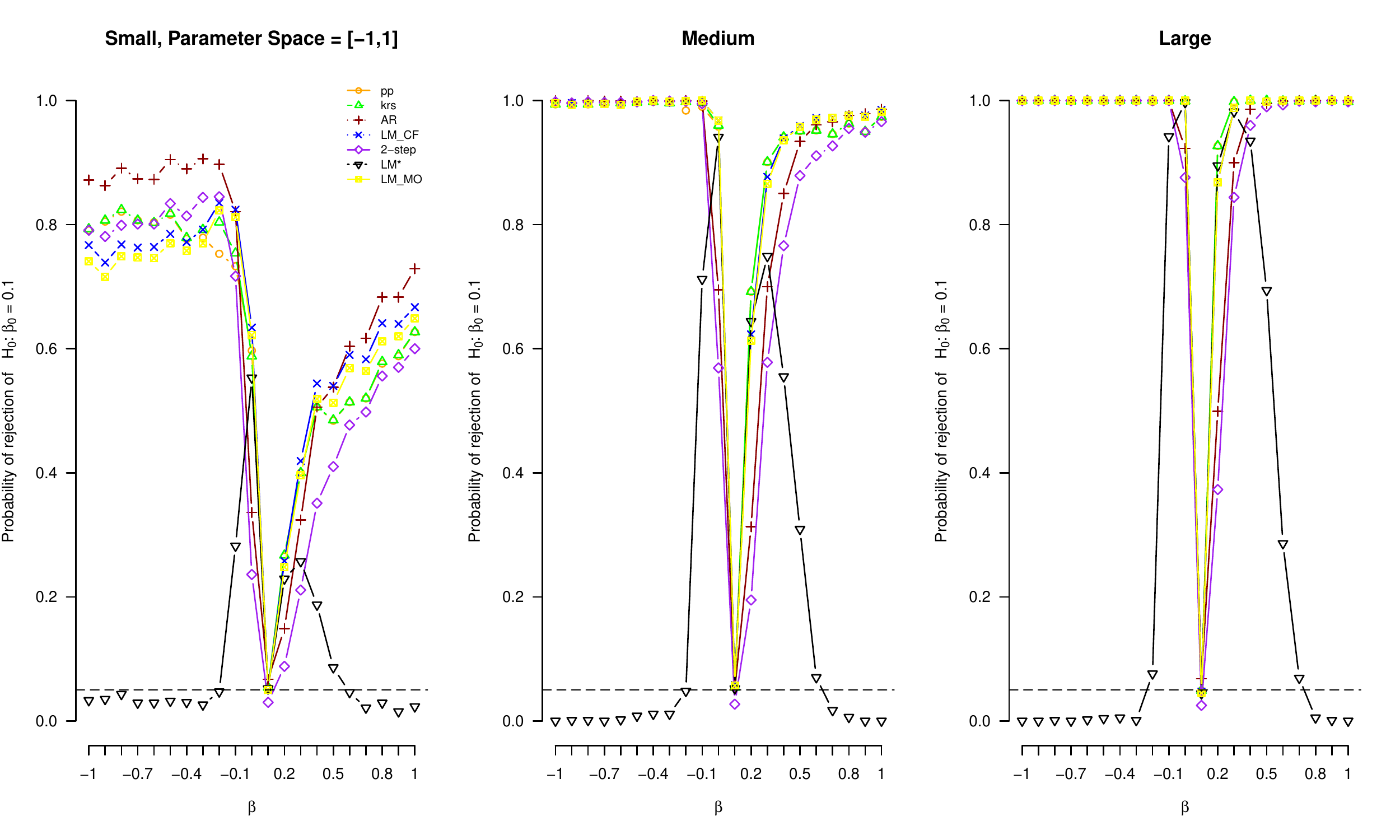}
	\caption{Power Curve for DGP 1 with Parameter Space = $[-1,1]$}
	\label{further_limit_DGP1_fig10}
\end{figure}

\begin{figure}[H]
	\centering
	\includegraphics[width=0.9\textwidth,height = 5.85cm]{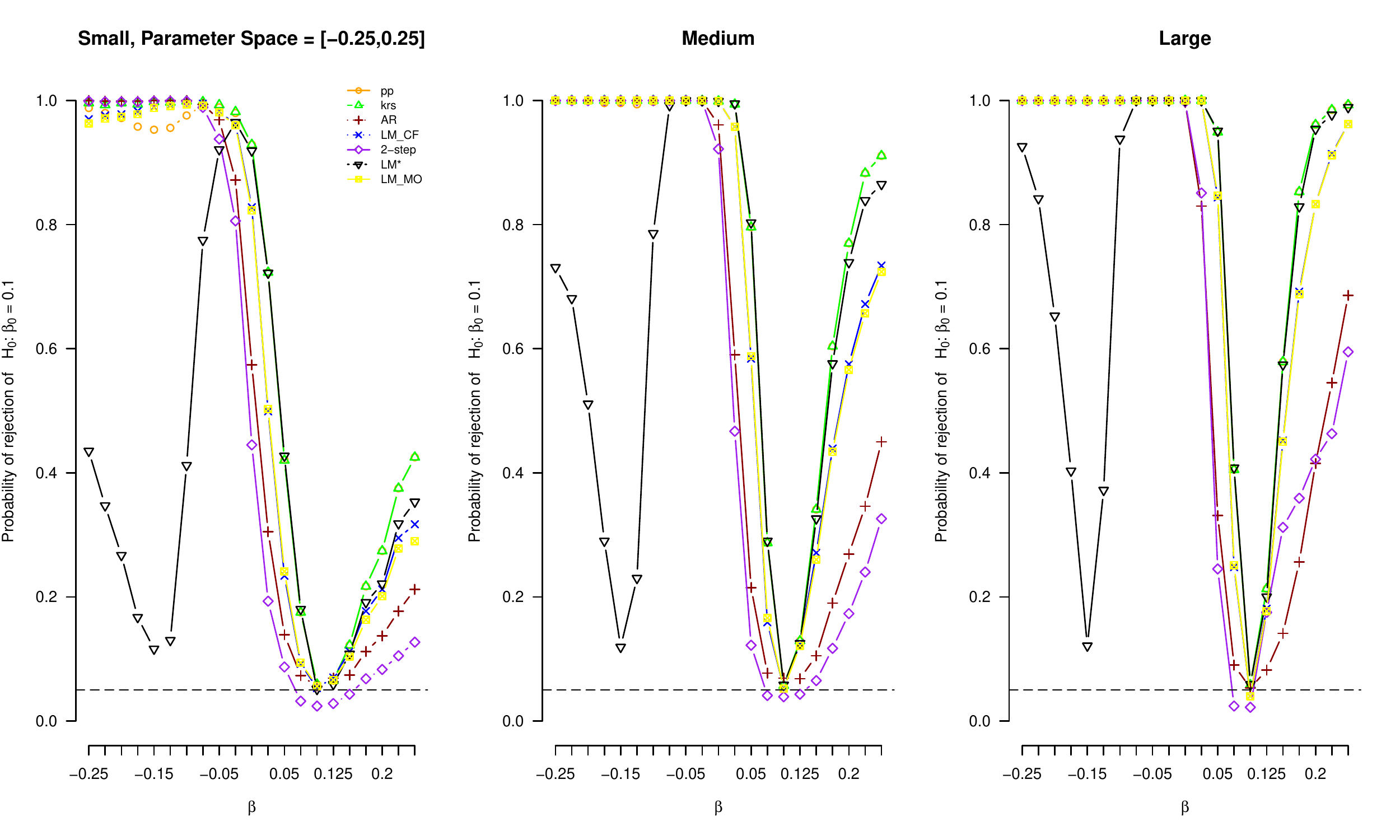}
	\caption{Power Curve for DGP 2 with Parameter Space = $[-0.25,0.25]$}
	\label{further_limit_DGP2_fig9}
\end{figure}

\begin{figure}[H]
	\centering
	\includegraphics[width=0.9\textwidth,height = 5.85cm]{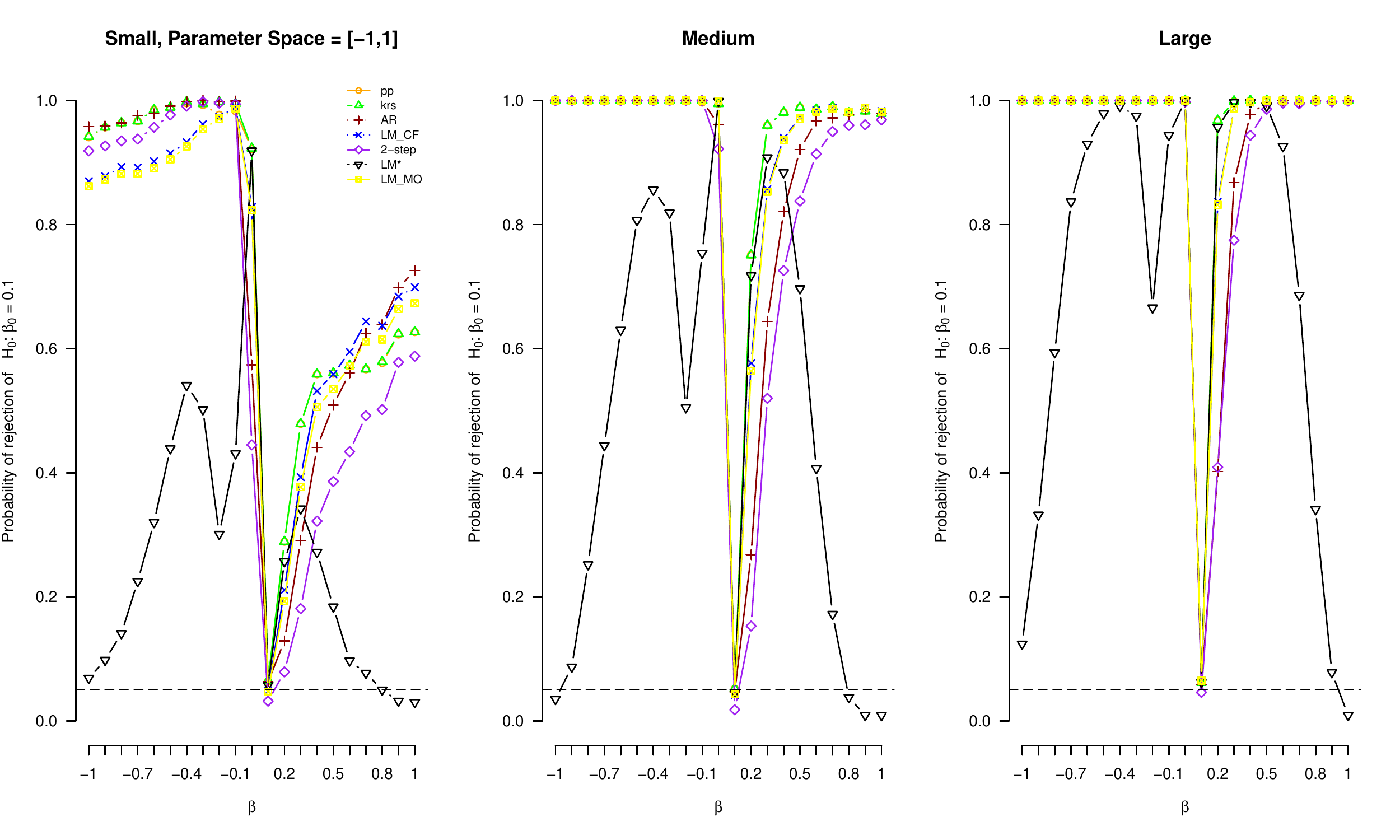}
	\caption{Power Curve for DGP 2 with Parameter Space = $[-1,1]$}
	\label{further_limit_DGP2_fig10}
\end{figure}

\section{Additional Results for the Empirical Application}
\label{sec:add_app}
For the first set of robustness check, we ran 1001 equal-spaced grid-points from parameter space $\mathcal{B} = [-0.5,0.5]$ (step size $=0.001$) over the 9 different variations of $(p_1,p_2)$, which we furnish in Table \ref{add:table1}. The first row is the specification used in the main text, $(p_1,p_2) = (0.01,1.1)$. We do not include `jackknife AR', `jackknife LM', `JIVE-t' and `Two-step' since variations of $(p_1,p_2)$ will not affect the result of those methods. We find that our results are similar to the main text.

\begin{table}[H]
	\adjustbox{max width=\textwidth}{%
		\centering
		\begin{tabular}{c| c c c c c c c } 
			$(p_1,p_2)$-values &   pp with 180 IVs & krs with 180 IVs & pp with 1530 IVs  & krs with 1530 IVs  \\  
			& (5\%) & (5\%) &  (5\%) & (5\%)    \\ [0.5ex] 
			\hline
			(0.01,1.1) & [0.067,0.128] & [0.067,0.128] & [0.037,0.133] & [0.037,0.133] \\
			\hline
			(0.001,1.1) &  [0.072,0.127] & [0.072,0.127] & [0.041,0.132] & [0.041,0.132]  \\ 
			\hline
			(0.001,1.5) & [0.067,0.127] & [0.067,0.127] & [0.038,0.132] & [0.038,0.132]  \\ 
			\hline
			(0.001,2) & [0.066,0.128] & [0.066,0.128] & [0.039,0.133] & [0.039,0.133] \\  
			\hline
			(0.01,1.5) & [0.067,0.127] & [0.067,0.127] & [0.04,0.134] & [0.04,0.134]  \\ 
			\hline
			(0.01,2) & [0.071,0.125] & [0.071,0.125] & [0.041,0.133] & [0.041,0.133]  \\ 
			\hline
			(0.1,1.1) & [0.069,0.126] & [0.069,0.126] & [0.037,0.132] & [0.037,0.132]  \\ 
			\hline
			(0.1,1.5) & [0.072,0.126] & [0.072,0.126] & [0.044,0.132] & [0.044,0.132]  \\ 
			\hline
			(0.1,2) & [0.069,0.127] & [0.069,0.127] & [0.035,0.132] & [0.035,0.132]  \\ 
			\hline
		\end{tabular}
	}
	\caption{Confidence Intervals under different values of $(p_1,p_2)$ with Parameter Space $\mathcal{B}$}
	\label{add:table1}
\end{table}

For the second set of robustness checks, we consider two different parameter spaces, namely $\mathcal{B}_2=[-1,1]$ and $\mathcal{B}_3=[-0.25,0.25]$. Both parameter spaces have 1001 equal-spaced grid-points, and we have retained the values $(p_1,p_2) = (0.01,1.1)$ as in our main text. Table \ref{add:table2} reports the results. Overall, these additional robustness checks show that the results reported in our main text are reliable and hold for different parameter spaces.

\begin{table}[H]
	\adjustbox{max width=\textwidth}{%
		\centering
		\begin{tabular}{c| c c c c c c c } 
			Parameter Space &  pp with 180 IVs & krs with 180 IVs  & pp with 1530 IVs & krs with 1530 IVs \\  
			& (5\%) & (5\%) &  (5\%) & (5\%) &    \\ [0.5ex] 
			\hline
			$\mathcal{B}$ &  [0.067,0.128] & [0.067,0.128] & [0.037,0.133] & [0.037,0.133]  \\ 
			\hline
			$\mathcal{B}_2$ &   [0.068,0.124] & [0.068,0.124] & [0.042,0.134] & [0.042,0.134]  \\ 
			\hline
			$\mathcal{B}_3$ & [0.07,0.1275] & [0.07,0.1275] & [0.037,0.1335] & [0.037,0.1335]\\ [1ex] 
			\hline
		\end{tabular}
	}
	\caption{Confidence Intervals under $(p_1,p_2) = (0.01,1.1)$ with varying Parameter Space $\mathcal{B}_2$ and $\mathcal{B}_3$}
	\label{add:table2}
\end{table}


	\bibliographystyle{chicago}
	\bibliography{Biblio_boot_few_clusters}

\end{document}